\newif\iffullversion
\tikzset{ snake/.style={decorate, decoration={zigzag, pre length =
      10mm, post length = 10mm,} }, }
\newcommand{\trans}[1]{\ensuremath{{\xrightarrow{#1}}}\xspace}
\newcommand{\set}[1]{\{\,#1\,\}}
\DeclareMathOperator\E E
\newcommand{\inlineapip}[1]{
\lstinline[columns=fullflexible,breaklines=true]{#1}
}
\newcommand{\code}[1]{\textsf{#1}\xspace} %How we want code to look like
\newcommand{\apip}[1]{\textsf{#1}\xspace} %How we want code to look like
\newtheorem{definition}{Definition}
\newtheorem{lemma}{Lemma}
\newtheorem{theorem}{Theorem}
\newtheorem{proposition}{Proposition}
\newtheorem{remark}{Remark}
\newtheoremstyle{mycase}{}{}{}{}{\bf}{.}{.5em}{\thmnote{#1:}~#3}
\theoremstyle{mycase}
\newtheorem{mycase}{Case}
\numberwithin{subcase}{mycase}
\newtheoremstyle{component}{}{}{}{}{\itshape}{.}{.5em}{\thmnote{#3}#1}
\theoremstyle{component}
\newtheorem*{component}{}
\DeclareMathOperator*{\dom}{\Dom}
\newcommand\setN{\mathbb N}
\newcommand{\mset}[1]{\ensuremath{\{\,#1\,\}^\#\xspace}}
\newcommand{\msetminus}{\ensuremath{\setminus^{\#}\xspace}}
\newcommand{\mcup}{\ensuremath{\cup^{\#}\xspace}}
\newcommand{\msubset}{\ensuremath{\subset^{\#}\xspace}}
\newcommand{\msubseteq}{\ensuremath{\subseteq^{\#}\xspace}}
\newcommand{\melem}{\ensuremath{\in^{\#}\xspace}}
\newcommand{\ie}{i.\,e.}
\newcommand{\eg}{e.\,g.}
\newcommand{\calE}{\ensuremath{\mathcal{E}}\xspace}
\newcommand{\calF}{\ensuremath{\mathcal{F}}\xspace}
\newcommand{\calL}{\ensuremath{\mathcal{L}}\xspace}
\newcommand{\calM}{\ensuremath{\mathcal{M}}\xspace}
\newcommand\calP{\ensuremath{\mathcal{P}}\xspace}
\newcommand\calS{\ensuremath{\mathcal{S}}\xspace}
\newcommand{\pair}[2]{\ensuremath{\langle #1, #2 \rangle}}
\newcommand{\fst}{\ensuremath{\mathsf{fst}}}
\newcommand{\snd}{\ensuremath{\mathsf{snd}}}
\newcommand{\msgsort}{\ensuremath{\mathit{msg}}\xspace}
\newcommand{\freshsort}{\ensuremath{\mathit{fresh}}\xspace}
\newcommand{\pubsort}{\ensuremath{\mathit{pub}}\xspace}
\newcommand{\tempsort}{\ensuremath{\mathit{temp}}\xspace}
\newcommand{\PN}{\ensuremath{\mathit{PN}}\xspace}
\newcommand{\FN}{\ensuremath{\mathit{FN}}\xspace}
\newcommand{\Vars}{\ensuremath{\mathcal{V}}\xspace}
\newcommand{\names}{\ensuremath{\mathit{names}}\xspace}
\newcommand{\Sign}{\ensuremath{\Sigma}\xspace}
\newcommand{\ET}{\ensuremath{E}\xspace}
\newcommand{\Terms}{\ensuremath{\mathcal{T}}\xspace}
\newcommand{\Mess}{\ensuremath{\mathcal{M}}\xspace}
\newcommand{\FSign}{\ensuremath{\Sigma_\mathit{fact}}\xspace}
\newcommand{\GroundFacts}{\ensuremath{\mathcal{G}}\xspace}
\newcommand{\Subst}{\ensuremath{\sigma}\xspace}
\newcommand{\Names}{\ensuremath{\calE}\xspace}
\newcommand{\StoreA}{\ensuremath{\calS}\xspace}
\newcommand{\StoreB}{\ensuremath{\calS^\mathrm{MS}}\xspace}
\newcommand{\Processes}{\ensuremath{\calP}\xspace}
\newcommand{\ActiveLocks}{\ensuremath{\calL}\xspace}
\newcommand{\piout}{\ensuremath{\mathrm{out}}\xspace}
\newcommand{\piin}{\ensuremath{\mathrm{in}}\xspace}
\newcommand{\apipstate}[1]{\ensuremath{
(\Names_{#1},\allowbreak\StoreA_{#1},\allowbreak\StoreB_{#1},\allowbreak\Processes_{#1},\allowbreak\Subst_{#1},\allowbreak\calL_{#1}) }}
\newcommand{\msrewrite}[1]{\mathrel{-\hspace{-2pt}[#1]\hspace{-4pt}\to}}
\newcommand{\emptyrule}{\ensuremath{[]}\xspace}
\newcommand{\msr}[3]{\ensuremath{[#1] \msrewrite{#2} [#3]}}
\newcommand{\ri}{\ensuremath{\mathit{ri}}\xspace}
\newcommand{\prems}{\ensuremath{\mathit{prems}}\xspace}
\newcommand{\actions}{\ensuremath{\mathit{actions}}\xspace}
\newcommand{\conclusions}{\ensuremath{\mathit{conclusions}}\xspace}
\newcommand{\ginsts}{\ensuremath{\mathit{ginsts}}\xspace}
\newcommand{\pfacts}{\ensuremath{\mathit{pfacts}}\xspace}
\newcommand{\lfacts}{\ensuremath{\mathit{lfacts}}\xspace}
\newcommand{\tracesmsr}{\ensuremath{\mathit{traces}^\mathit{msr}}\xspace}
\newcommand{\execmsr}{\ensuremath{\mathit{exec}^\mathit{msr}}\xspace}
\newcommand{\tr}{\ensuremath{\mathit{tr}}\xspace}
\newcommand{\Tr}{\ensuremath{\mathit{Tr}}\xspace}
\newcommand{\temp}{\ensuremath{\mathit{temp}}\xspace}
\newcommand{\vars}{\ensuremath{\mathit{vars}}\xspace}
\newcommand{\st}{\ensuremath{\mathit{st}}\xspace}
\newcommand{\tracespi}{\ensuremath{\mathit{traces}^\mathit{pi}}\xspace}
\newcommand{\Dom}{\ensuremath{\mathbf{D}\xspace}}
\newcommand\sem[1]{{\llbracket {#1} \rrbracket}}
\newcommand{\state}{{\sf state}}
\newcommand{\semistate}{{\sf state}^{\sf semi}}
\newcommand{\dummy}{{\sf !Dum}}
\newcommand{\RepNonce}{\ensuremath{\mathit{RepNonce}}\xspace}
\newcommand{\ProtoNonce}{\ensuremath{\mathit{ProtoNonce}}\xspace}
\newcommand{\Eq}{\ensuremath{\mathrm{Eq}}\xspace}
\newcommand{\Fr}{\ensuremath{\mathsf{Fr}}\xspace}
\newcommand{\K}{\ensuremath{!\mathsf{K}}\xspace}
\newcommand{\In}{\ensuremath{\mathsf{In}}\xspace}
\newcommand{\Out}{\ensuremath{\mathsf{Out}}\xspace}
\newcommand{\Msg}{\ensuremath{\mathsf{Msg}}\xspace}
\newcommand{\Ack}{\ensuremath{\mathsf{Ack}}\xspace}
\newcommand{\Event}{\ensuremath{\mathrm{Event}}\xspace}
\newcommand{\ChannelInEvent}{\ensuremath{\mathrm{InEvent}}\xspace}
\newcommand{\ReservedFacts}{\ensuremath{\mathcal{F}_{\mathit{res}}}\xspace}
\newcommand{\hide}{\ensuremath{\mathit{hide}}\xspace}
\newcommand{\filter}{\ensuremath{\mathit{filter}}\xspace}
\newcommand{\bijp}{\ensuremath{\leftrightsquigarrow_{P}\xspace}}
\newcommand{\Ass}{\ensuremath{\alpha}\xspace}
\newcommand{\AssInit}{\ensuremath{\alpha_\mathit{init}}\xspace}
\newcommand{\AssEq}{\ensuremath{\alpha_\mathit{eq}}\xspace}
\newcommand{\AssNotEq}{\ensuremath{\alpha_\mathit{noteq}}\xspace}
\newcommand{\AssSetIn}{\ensuremath{\alpha_\mathit{in}}\xspace}
\newcommand{\AssSetNotIn}{\ensuremath{\alpha_\mathit{notin}}\xspace}
\newcommand{\AssLock}{\ensuremath{\alpha_\mathit{lock}}\xspace}
\newcommand{\AssIn}{\ensuremath{\alpha_\mathit{inev}}\xspace}
\newcommand{\theactualrule}[1]{\text{Please redefine the command
theactualrule.}}
\newcommand{\underscorethingy}[1]{\text{Please redefine the command
underscorethingy.}}
\newcommand{\Fresh}{\textsc{Fresh}\xspace}
\newcommand{\MD}{\textsc{MD}\xspace}
\newcommand{\MDOutPubFreshAppl}{\ensuremath{\set{ \textsc{MDOut},\allowbreak \textsc{MDPub},\allowbreak \textsc{MDFresh},\allowbreak \textsc{MDAppl}}}}
\newcommand{\MDOutPubFreshApplFresh}{\ensuremath{\set{ \textsc{MDOut},\allowbreak \textsc{MDPub},\allowbreak \textsc{MDFresh},
\allowbreak \textsc{MDAppl},
\allowbreak \textsc{Fresh}
}}}
\lstdefinelanguage{apip}{
  morekeywords=[1]{
    %%% KEYWORDS
	out, in, if, then, else, event, insert, delete, lookup, as, in, lock, unlock
    },
  sensitive=true,
  morecomment=[l]{\#},
  morecomment=[l]{//},
  morestring=[b]',
  morestring=[s]{`}{'},
  % morestring=[s]{[}{]},
}
\lstdefinestyle{apip}{
    language={apip}, 
	stringstyle=\rm\textup,
	keywordstyle=[1]\sf,
 	keywordstyle=[2]\rm\upshape,
	literate={||}{{$\mid$}}1
	{new}{{$\nu$}}1
	{<}{{$\langle$}}1
	{>}{{$\rangle$}}1
	{(}{{$($}}1
	{)}{{$)$}}1
}
\newtheorem{example}{Example}
\begin{document}

% \date{May 8, 2013}
\iffullversion
\author{Steve Kremer} 
\affil{INRIA~Nancy - Grand'Est \& Loria, France}

\author{Robert K\"unnemann\thanks{Most of this work was carried out
    when the author was affiliated to INRIA~Paris - Rocquencourt,
    France}%
} \affil{Department of Computer Science, TU Darmstadt, Germany}%

\else
\author{%
  \IEEEauthorblockN{Steve Kremer}
  \IEEEauthorblockA{INRIA~Nancy - Grand'Est \& Loria, France}
  \and 
  \IEEEauthorblockN{Robert K\"unnemann}
  \IEEEauthorblockA{Department of Computer
    Science, TU Darmstadt, Germany}%
  \IEEEauthorblockA{INRIA~Paris - Rocquencourt, France}%
}
\fi

\title{
Automated analysis of security protocols with global state%
\iffullversion \\ (Full version) %
\else
\thanks{The full version of this paper including all proofs is available
at \url{\downloadlink}} %
\fi }

\iffullversion \date{} \fi
\maketitle

%TODO put something here.
% A category with the (minimum) three required fields
% \category{H.4}{Information Systems Applications}{Miscellaneous}
% %A category including the fourth, optional field follows...
% \category{D.2.8}{Software Engineering}{Metrics}[complexity measures, performance measures]
% \terms{Theory}
% \keywords{ACM proceedings, \LaTeX, text tagging}
% \author{Steve Kremer \and Robert K\"{u}nnemann}

\begin{abstract} 

  Security APIs, key servers and protocols that need to keep the
  status of transactions, require to maintain a global, non-monotonic
  state, e.g., in the form of a database or register. However, most
  existing automated verification tools do not support the analysis of
  such stateful security protocols -- sometimes because of fundamental
  reasons, such as the encoding of the protocol as Horn clauses, which
  are inherently monotonic.  A notable exception is the recent tamarin
  prover which allows specifying protocols as multiset rewrite (msr)
  rules, a formalism expressive enough to encode state. As multiset
  rewriting is a ``low-level'' specification language with no direct
  support for concurrent message passing, encoding protocols correctly
  is a difficult and error-prone process.

  We propose a process calculus which is a variant of the applied pi
  calculus with constructs for manipulation of a global state by
  processes running in parallel. We show that this language can be
  translated to msr rules whilst preserving all security properties
  expressible in a dedicated first-order logic for security
  properties. The translation has been implemented in a prototype tool
  which uses the tamarin prover as a backend. We apply the tool to
  several case studies among which a simplified fragment of PKCS\#11,
  the Yubikey security token, and an optimistic contract signing
  protocol.%
\end{abstract}

\section{Introduction}

Automated analysis of security protocols has been extremely
successful. Using automated tools, flaws have been for instance
discovered in the Google Single Sign On
Protocol~\cite{BreakingGoogle-FMSE2008}, in commercial security tokens
implementing the PKCS\#11 standard~\cite{BCFS-ccs10}, and one may also
recall Lowe's attack~\cite{Lowe1996} on the Needham-Schroeder public
key protocol 17 years after its publication. While efficient tools
such as ProVerif~\cite{blanchetcsfw01}, AVISPA~\cite{armando05AVISPA}
or Maude-NPA~\cite{MaudeNPA09} exist, these tools fail to analyze
protocols that require \emph{non-monotonic global state}, i.e., some
database, register or memory location that can be read and altered by
different parallel threads. In particular ProVerif, one of the most
efficient and widely used protocol analysis tools, relies on an
abstraction that encodes protocols in first-order Horn clauses. This
abstraction is well suited for the monotonic knowledge of an attacker
(who never forgets), makes the tool extremely efficient for verifying
an unbounded number of protocol sessions and allows to build on
existing techniques for Horn clause resolution.  However, Horn clauses
are inherently monotonic: once a fact is true it cannot be set to
false anymore.  As a result, even though ProVerif's input language, a
variant of the applied pi calculus~\cite{AF-popl01}, allows a priori
encodings of a global memory, the abstractions performed by ProVerif
introduce false attacks.
%\iffullversion 
In the ProVerif user manual~\cite[Section 6.3.3]{proverif} such an encoding of memory
cells and its limitations are indeed explicitly discussed:
% \begin{quote}
  {\it
  ``Due to the abstractions performed by ProVerif, such a cell is
  treated in an approximate way: all values written in the cell are
  considered as a set, and when one reads the cell, ProVerif just
  guarantees that the obtained value is one of the written values (not
  necessarily the last one, and not necessarily one written before
  the read).''}
%\end{quote}
%\fi
Some work~\cite{ARR-csf11,Modersheim-ccs10,DKRS-csf11} has
nevertheless used ingenious encodings of mutable state in Horn
clauses, but these encodings have limitations that we discuss below.

A prominent example where non-monotonic global state appears are
security APIs, such as the RSA PKCS\#11 standard~\cite{PKCS11}, IBM's
CCA~\cite{IBMCCAGuide} or the trusted platform module
(TPM)~\cite{TCG}. They have been known to be vulnerable to logical
attacks for some time~\cite{longley92automatic,bond01api} and formal
analysis has shown to be a valuable tool to identify attacks and find
secure configurations. % Several models have been proposed, some of which
% require manual inspection~\cite{Froschle2010Reasoning-with-}, which is
% tedious and error-prone.
One promising paradigm for analyzing security APIs is to regard them
as a participant in a protocol and use existing analysis
tools. However, Herzog~\cite{herzog06applying} already identified not
accounting for mutable global state as a major barrier to the
application of security protocol analysis tools to verify security
APIs.  Apart from security APIs many other protocols need to maintain
databases: key servers need to store the status of keys, in optimistic
contract signing protocols a trusted party maintains the status of a
contract, RFID protocols maintain the status of tags and more
generally websites may need to store the current status of
transactions.

\paragraph{Our contributions} 
We propose a tool for analyzing protocols that may involve
non-monotonic global state, relying on Schmidt {\it et al.}'s tamarin
tool~\cite{SMCB-csf12,SMCB-cav13} as a backend.  We designed a new
process calculus that extends the applied pi calculus by defining, in
addition to the usual constructs for specifying concurrent processes,
constructs for explicitly manipulating global state. This calculus
serves as the tool's input language.  The heart of our tool is a
translation from this extended applied pi calculus to a set of
multiset rewrite rules that can then be analyzed by tamarin which we
use as a backend. We prove the correctness of this translation and
show that it preserves all properties expressible in a dedicated first
order logic for expressing security properties. As a result, relying
on the tamarin prover, we can analyze protocols without bounding the
number of sessions, nor making any abstractions. Moreover it allows to
model a wide range of cryptographic primitives by the means of
equational theories.  As the underlying verification problem is
undecidable, tamarin may not terminate. However, it offers an
interactive mode with a GUI which allows to manually guide the tool in
its proof. Our specification language includes support for private
channels, global state and locking mechanisms (which are crucial to
write meaningful programs in which concurrent threads manipulate a
common memory). The translation has been carefully engineered in order
to favor termination by tamarin. We illustrate the tool on several
case studies: a simple security API in the style of PKCS\#11, a
complex case study of the Yubikey security device, as well as several
examples analyzed by other tools that aim at analyzing stateful
protocols. In all of these case studies we were able to avoid
restrictions that were necessary in previous works.

\paragraph{Related work} The most closely related work is the
StatVerif tool by Arapinis {\it et al.}~\cite{ARR-csf11}. They propose an
extension of the applied pi calculus, similar to ours, which is
translated to Horn clauses and analyzed by the ProVerif tool. Their
translation is sound but allows for false attacks, limiting the scope
of protocols that can be analyzed. Moreover, StatVerif can only handle
a finite number of memory cells: when analyzing an optimistic contract
signing protocol this appeared to be a limitation and only the status
of a single contract was modeled, providing a manual proof to justify
the correctness of this abstraction. Finally, StatVerif is limited to
the verification of secrecy properties. As illustrated by the Yubikey
case study, our work is more general and we are able to analyze
complex injective correspondance properties.

M\"odersheim~\cite{Modersheim-ccs10} proposed a language with support
for sets together with an abstraction where all objects that belong to
the same sets are identified. His language, which is an extension of
the low level AVISPA intermediate format, is compiled into Horn
clauses that are then analyzed, \eg,
using ProVerif. His approach is tightly
linked to this particular abstraction limiting the scope of
applicability. M\"odersheim also discusses the need for a more
high-level specification level which we provide in this work.

There has also been work tailored to particular
applications. In~\cite{DKS-jcs09}, Delaune {\it et al.}~show by a
dedicated hand proof that for analyzing PKCS\#11 one may bound the
message size. Their analysis still requires to artificially bound the
number of keys. Similarly in spirit, Delaune \emph{et
  al.}~\cite{DKRS-csf11} give a dedicated result for analyzing
protocols based on the TPM and its registers. However, the number of
reboots (which reinitialize registers) needs to be limited.

Guttman~\cite{Guttman-jar12} also extended the strand space model by
adding support for state. While the protocol execution is modeled
using the classical strand spaces model, state is modeled by a
multiset of facts, and manipulated by multiset rewrite rules.  The
extended model has been used for analyzing by hand an optimistic
contract signing protocol. As of now, protocol analysis in the strand
space model with state has not been mechanized yet.

In the goal of relating different approaches for protocol analysis
Bistarelli {\it et al.}~\cite{BCLM-jcs05} also proposed a translation from a
process algebra to multiset rewriting: they do however not consider
private channels, have no support for global state and assume that
processes have a particular structure. These limitations significantly
simplify the translation and its correctness proof. Moreover their
work does not include any tool support for automated verification.

Obviously any protocol that we are able to analyze can be directly
analyzed by the tamarin prover~\cite{SMCB-csf12,SMCB-cav13} as the
rules produced by our translation could have been given directly as an
input to tamarin. Indeed, tamarin has already been used for analyzing
a model of the Yubikey device~\cite{KS-stm12}, the case studies
presented with Mödersheim's abstraction, as well as those presented
with StatVerif. It is furthermore able to reproduce the aforementioned
results on PKCS\#11~\cite{DKS-jcs09} and the TPM~\cite{DKRS-csf11} --
moreover, it does so without bounding the number of keys, security
devices, reboots, etc.  Contrary to ProVerif, tamarin sometimes
requires additional \emph{typing lemmas} which are used to guide the
proof. These lemmas need to be written by hand (but are proved
automatically). In our case studies we also needed to provide a few
such lemmas manually.  In our opinion, an important disadvantage of
tamarin is that protocols are modeled as a set of multiset rewrite
rules. This representations is very low level and far away from actual
protocol implementations, making it very difficult to model a protocol
adequately.  Encoding private channels, nested replications and locking
mechanisms directly as multiset rewrite rules is a tricky and error
prone task. As a result we observed that, in practice, the protocol
models tend to be simplified. For instance, locking mechanisms are
often omitted, modeling protocol steps as a single rule and making
them effectively atomic. Such more abstract models may obscure issues
in concurrent protocol steps and increase the risk of implicitly
excluding attacks in the model that are well possible in a real
implementation, \eg, race conditions. Using a more high-level
specification language, such as our process calculus, arguably eases
protocol specification and overcomes some of these risks.

\section{Preliminaries}

\paragraph{Terms and equational theories} As usual in symbolic
protocol analysis we model messages by abstract terms.  Therefore we
define an order-sorted term algebra with the sort \msgsort and two
incomparable subsorts \pubsort and \freshsort.  For each of these
subsorts we assume a countably infinite set of names, \FN for fresh
names and \PN for public names. Fresh names will be used to model
cryptographic keys and nonces while public names model publicly known
values.  We furthermore assume a countably infinite set of variables
for each sort $s$, $\Vars_s$ and let \Vars be the union of the set of
variables for all sorts.  We write $u:s$ when the name or variable $u$
is of sort $s$. Let \Sign be a signature, i.e., a set of function
symbols, each with an arity. We write $f/n$ when function symbol $f$
is of arity $n$.  We denote by $\Terms_\Sign$ the set of well-sorted
terms built over \Sign, \PN, \FN and \Vars. For a term $t$ we denote
by $\names(t)$, respectively $\vars(t)$ the set of names, respectively
variables, appearing in $t$. The set of ground terms, i.e., terms
without variables, is denoted by $\Mess_\Sign$.  When $\Sigma$ is
fixed or clear from the context we often omit it and simply write
$\Terms$ for $\Terms_\Sign$ and $\Mess$ for $\Mess_\Sign$.

We equip the term algebra with an equational theory \ET, that is a
finite set of equations of the form $M=N$ where $M,N \in \Terms$. From
the equational theory we define the binary relation $=_\ET$ on terms,
which is the smallest equivalence relation containing equations in
$\ET$ that is closed under application of function symbols, bijective
renaming of names and substitution of variables by terms of the same
sort. Furthermore, we require $\ET$ to distinguish different fresh
names, \ie, $\forall a,b\in\FN: a\neq b \Rightarrow a \neq_\ET b$.

\begin{example}
  Symmetric encryption can be modelled using a signature $$\Sigma =
  \set{ \mathit{senc}/2,\allowbreak \mathit{sdec}/2,\allowbreak
    \mathit{encCor}/2,\allowbreak \mathit{true}/0 }$$ and an
  equational theory defined by
  $$
%  \begin{array}{l}
    \mathit{sdec}(\mathit{senc}(m,k),k) = m \quad 
    \mathit{encCor}(\allowbreak\mathit{senc}(x,y),y) = \mathit{true}
  %\end{array}
  $$ The last
  equation allows to check whether a term can be correctly decrypted
  with a certain key.
\end{example}

For the rest of the paper we assume that \ET refers to some fixed
equational theory and that the signature and equational theory always
contain symbols and equations for pairing and projection, i.e.,
$\{ \pair{.}{.}, \fst, \snd \} \subseteq \Sign$ and equations
$\fst(\pair x y) = x$ and $\snd(\pair x y) = y$ are in \ET.
We will sometimes use $\langle x_1, x_2, \ldots, x_n\rangle$ as
a shortcut for $\langle x_1,\langle x_2,\langle \ldots,\langle x_{n-1}, x_n\rangle\ldots\rangle$.

We also use the usual notion of positions for terms. A position $p$ is a sequence of positive integers and $t|_p$ denotes the subterm of $t$ at position $p$.

\paragraph{Facts} We also assume an unsorted signature \FSign,
disjoint from $\Sign$. The set of \emph{facts} is defined as \[ \calF
:= \{ F(t_1,\ldots,t_k) \mid t_i\in\Terms_\Sign, F\in\FSign \mbox{ of
  arity }k \}.\] Facts will be used both to annotate protocols, by the
means of events, and for defining multiset rewrite rules. We partition
the signature $\FSign$ into \emph{linear} and \emph{persistent} fact
symbols.  We suppose that \FSign always contains a unary, persistent
symbol $\K$ and a linear, unary symbol $\Fr$.  Given a sequence or set
of facts $S$ we denote by $\lfacts(S)$ the multiset of all linear
facts in $S$ and $\pfacts(S)$ the set of all persistent facts in
$S$. By notational convention facts whose identifier starts with `!'
will be persistent.  \GroundFacts denotes the set of ground facts,
i.e., the set of facts that does not contain variables. For a fact $f$
we denote by $\ginsts(f)$ the set of ground instances of $f$. This
notation is also lifted to sequences and sets of facts as expected.

\paragraph{Substitutions} A substitution $\sigma$ is a partial
function from variables to terms. We suppose that substitutions are
well-typed, i.e., they only map variables of sort $s$ to terms of sort
$s$, or of a subsort of s. We denote by $\sigma = \{ ^{t_1}/_{x_1},
\ldots , ^{t_n}/_{x_n} \}$ the substitution whose domain is
$\Dom(\sigma) = \{ x_1, \ldots , x_n \}$ and which maps $x_i$ to
$t_i$. As usual we homomorphically extend $\sigma$ to apply to terms
and facts and use a postfix notation to denote its application, e.g.,
we write $t\sigma$ for the application of $\sigma$ to the term $t$. A
substitution $\sigma$ is grounding for a term $t$ if $t\sigma$ is
ground.  Given function $g$ we let $g(x) = \bot$ when $x\not\in
\Dom(x)$. When $g(x) = \bot$ we say that $g$ is undefined for $x$.  We
define the function $f:=g[a\mapsto b]$ with $\Dom(f) = \Dom(g) \cup
\set a$ as $f(a):=b$ and $f(x):=g(x)$ for $x\neq
a$.% otherwise, even if $f$ or $g$ are not substitutions.

\paragraph{Sets, sequences and multisets} We write $\setN_n$ for the
set $\{1,\ldots, n\}$. Given a set $S$ we denote by $S^*$ the set of
finite sequences of elements from $S$ and by $S^\#$ the set of finite
multisets of elements from $S$. We use the superscript $^\#$ to
annotate usual multiset operation, e.g. $S_1 \cup^\# S_2$ denotes the
multiset union of multisets $S_1, S_2$. Given a multiset $S$ we denote
by $\mathit{set}(S)$ the set of elements in $S$. The sequence
consisting of elements $e_1, \ldots, e_n$ will be denoted by $[e_1,
\ldots, e_n]$ and the empty sequence is denoted by $[]$. We denote by
$|S|$ the length, i.e., the number of elements of the sequence. We use
$\cdot$ for the operation of adding an element either to the start or
to the end, e.g., $e_1 \cdot [e_2, e_3] = [e_1, e_2, e_3] = [e_1, e_2]
\cdot e_3$. Given a sequence $S$, we denote by $\mathit{idx}(S)$ the
set of positions in $S$, i.e., $\setN_n$ when $S$ has $n$ elements,
and for $i \in \mathit{idx}(S)$ $S_i$ denotes the $i$th element of the
sequence.  Set membership modulo $E$ is denoted by $\in_E$ and defined
as $e \in_E S$ if $\exists e' \in S.\ e' =_E e$. $\subset_\ET$ and
$=_\ET$ are defined for sets in a similar way. Application of
substitutions are lifted to sets, sequences and multisets as expected.
By abuse of notation we sometimes interpret sequences as sets or
multisets\iffullversion; the applied operators should make the implicit cast clear\fi.

\iffullversion
\section{A cryptographic pi calculus with explicit state}
\else
\section[A cryptographic pi calculus with explicit state]{\texorpdfstring{A
cryptographic pi calculus\\ with explicit state}{A cryptographic pi
calculus with explicit state}}
\fi
\label{sec:calculus}
%see http://tex.stackexchange.com/questions/14008/how-to-put-a-line-break-in-section-heading

\subsection{Syntax and informal semantics}

Our calculus is a variant of the applied pi calculus~\cite{AF-popl01}. In
addition to the usual operators for concurrency, replication, communication
and name creation, it offers several constructs for reading and updating an explicit global state. The grammar for processes is described in Figure~\ref{fig:syntax}.

\begin{figure}
    \begin{grammar}
      <M,N> ::= 
      $x,y,z \in \Vars$ 
      \alt $p\in\PN$
      \alt $n\in\FN$
      \alt $f$($M_1$,\ldots,$M_n$) ($f \in \Sigma$ of arity $n$)
      
            <$P$,$Q$> ::= 
      0
      \alt $P \mid Q$   
      \alt $!$ $P$
      \alt $\nu n$; $P$
      \alt out($M,N$); $P$
      \alt in($M,N$); $P$
      \alt if $M$=$N$ then $P$ [else $Q$]
      \alt event $F$ ; $P$ \quad ($F\in\calF$)
      \alt insert $M$,$N$; $P$
      \alt delete $M$; $P$
      \alt lookup $M$ as $x$ in $P$ [else $Q$]
      \alt lock $M$; $P$
      \alt unlock $M$; $P$
      \alt $ [L] \msrewrite A [R] ; P$ \quad ($L,R,A \in \calF^*$)
    \end{grammar}
\caption{Syntax}\label{fig:syntax}
\end{figure}

$0$ denotes the terminal process. $P \mid Q$ is the parallel
execution of processes $P$ and $Q$ and $! P$ the replication of $P$,
allowing an unbounded number of sessions in protocol executions. The
construct $\nu n; P$ binds the name $n$ in $P$ and models the
generation of a fresh, random value. Processes out($M,N$); $P$ and
in($M,N$); $P$ represent the output, respectively input, of message
$N$ on channel $M$. Readers familiar with the applied pi
calculus~\cite{AF-popl01} may note that we opted for the possibility
of pattern matching in the input construct, rather than merely binding
the input to a variable $x$. The process \code{if $M$=$N$ then $P$
  else $Q$} will execute $P$ if $M =_\ET N$ and $Q$ otherwise.
The event construct is merely used for annotating processes and % does
% not influence its execution. These events
will be useful for stating security properties. For readability we
sometimes omit to write \code{else $Q$} when $Q$ is 0, as well as
trailing 0 processes.

The remaining constructs are used for manipulating state and are new
compared to the applied pi calculus. We offer two different mechanisms
for state. The first construct is \emph{functional} and allows to
associate a value to a key. The construct insert $M$,$N$ binds the
value $N$ to a key $M$. Successive inserts allow to change this
binding. The delete $M$ operation simply ``undefines'' the mapping for
the key $M$. The lookup $M$ as $x$ in $P$ else $Q$ allows to retrieve
the value associated to $M$, binding it to the variable $x$ in $P$. If
the mapping is undefined for $M$ the process behaves as $Q$. The lock
and unlock constructs allow to gain exclusive access to a resource
$M$. This is essential for writing protocols where parallel processes
may read and update a common memory. We additionally offer another
kind of global state in form of a multiset of ground facts, as opposed
to the previously introduced functional store.  This multiset can be
altered using the construct $[L] \msrewrite A [R] ; P$, which tries to
match each fact in the sequence $L$ to facts in the current multiset
and, if successful, adds the corresponding instance of facts $R$ to
the store.  The facts $A$ are used as annotations in a similar way to
events.  The purpose of this construct is to provide access to the
underlying notion of state in tamarin, but we stress that it is
distinct from the previously introduced functional state, and its use
is only advised to expert users.  We allow this ``low-level'' form of
state manipulation in addition to the \emph{functional} state, as it
offers a great flexibility and has shown useful in one of our case
studies. This style of state manipulation is similar to the state
extension in the strand space model~\cite{Guttman-jar12} and the
underlying specification language of the tamarin
tool~\cite{SMCB-csf12,SMCB-cav13}.  Note that, even though those
stores are distinct (which is a restriction imposed by our
translation% , however, we see this construct as a low-level construct
% recommended only to experienced users
), data can be moved from one to
another, for example as follows: \apip{lookup 'store1' as $x$ in $[]
  \msrewrite{~} [\mathsf{store2}(x)]$}.
  
In the following example, which will serve as our running example, we
model a security API that, even though much simplified, illustrates
the most salient issues that occur in the analysis of
security APIs such as
PKCS\#11~\iffullversion\cite{DKS-jcs09,BCFS-ccs10,Froschle2010Reasoning-with-}
\else\cite{DKS-jcs09,BCFS-ccs10}\fi.

\begin{example}
%Put names of events here..
\lstset{morekeywords=[2]{NewKey,Wrap,DecUsing}}

We consider a security device that allows the creation of keys in its
secure memory. The user can access the device via an API. If he
creates a key, he obtains a handle, which he can use to let the device
perform operations on his behalf. For each handle the device also
stores an attribute which defines what operations are permitted for
this handle.  The goal is that the user can never gain knowledge of
the key, as the user's machine might be compromised. We model the
device by the following process (we use $\mathsf{out}(m)$ as a
shortcut for $\mathsf{out}(c,m)$ for a public channel $c$):
\[
	! P_\mathit{new}	\mid\  ! P_\mathit{set} \mid\  !P_\mathit{dec} 
	\mid\  !P_\mathit{wrap}\text{, where}
\]
\begin{lstlisting} 
$P_\mathit{new}:=$  new h; new k; event NewKey(h,k); 
        insert <`key',h>,k; 
        insert <`att',h>,`dec'; out(h) 
\end{lstlisting}
In the first line, the device creates a new handle $h$ and a key $k$
and, by the means of the event NewKey$(h,k)$, logs the creation of
this key. It then stores the key that belongs to the handle by
associating the pair $\pair {\mbox{`key'}} h$ to the value of the key
$k$. In the next line, $\pair {\mbox{`att'}} h$ is associated to a
public constant $\mbox{`dec'}$. Intuitively, we use the public
constants $\mbox{`key'}$ and $\mbox{`att'}$ to distinguish two
databases. The process
% $P_\mathit{set}:=$  \lstinline[columns=fullflexible,breaklines=true]{in(h); lock h; insert <'att',h>, 'wrap'; unlock h}
\begin{lstlisting}
$P_\mathit{set}:=$  in(h);  insert <`att',$h$>, `wrap'
\end{lstlisting}
allows the attacker to change the attribute of a key from the initial value
$\mbox{`dec'}$ to another value $\mbox{`wrap'}$. 
If a handle has the $\mbox{`dec'}$ attribute set, it can be used for decryption:
\begin{lstlisting}
$P_\mathit{dec}:=$ in(<h,c>); lookup <`att',h> as $a$ in 
        if $a$=`dec' then 
            lookup <`key',$h$> as $k$ in
                if encCor($c$,$k$)=true then
                    event DecUsing(k,sdec($c$,$k$));
                    out(sdec($c$,$k$)) 
\end{lstlisting}
The first lookup stores the value
associated to $\pair {\mbox{`att'}} h$ in $a$. The value is compared
against $\mbox{`dec'}$. If the comparison and another lookup for the
associated key value $k$ succeeds, we check whether decryption succeeds
and, if so, output the plaintext.

If a key has the $\mbox{`wrap'}$ attribute set, it might be used to
encrypt the value of a second key:
\begin{lstlisting}
$P_\mathit{wrap}:=$ in(<$h_1$,$h_2$>); lookup <`att',$h_1$> as $a_1$ in 
            if $a_1$=`wrap' then
                lookup <`key',$h_1$> as $k_1$ in
                    lookup <`key', $h_2$> as $k_2$ in
                        event Wrap($k_1$,$k_2$);
                        out(senc($k_2$,$k_1$)) 
\end{lstlisting}
\end{example}

The bound names of a process are those that are bound by $\nu n$. We
suppose that all names of sort \freshsort appearing in the process are
under the scope of such a binder.  Free names must be of sort
\pubsort. A variable $x$ can be bound in three ways:
\begin{inparaenum}[\it (i)]
\item by the construct \code{lookup $M$ as $x$}, or
\item $x \in \vars(N)$ in the construct \code{in($M,N$)} and $x$ is not
  under the scope of a previous binder,
\item $x \in \vars(L)$ in the construct $ [L] \msrewrite A [R]$ and $x$ is not
  under the scope of a previous binder.
\end{inparaenum}
While the construct \code{lookup $M$ as $x$} always acts  as a binder, the
input and $ [L] \msrewrite A [R]$ constructs do not rebind an already
bound variable but perform pattern matching.  For instance in the process
\[ P = \code{in(c,$f(x)$); in(c,$g(x)$)} \]
$x$ is bound by the first input and pattern matched in the second. 
It might seem odd that lookup acts as a binder, while input does not.
We justify this decision as follows: as $P_\mathit{dec}$ and
$P_\mathit{wrap}$ in the previous example show, lookups appear often after
input was received. If lookup were to use pattern matching, the following process
\[ P = \code{in($c,x$); lookup `$\mathrm{store}$' as $x$ in $P'$} \]
might unexpectedly perform a check if `store' contains the message given by
the adversary, instead of binding the content of `store' to $x$, 
due to an undetected clash in the naming of variables.

A process is ground if it does not contain any free variables. We
denote by $P\sigma$ the application of the homomorphic extension of
the substitution $\sigma$ to $P$. As usual we suppose that the
substitution only applies to free variables.
We sometimes interpret the syntax tree of a process as a term and
write $P|_p$ to refer to the subprocess of $P$ at position $p$ (where
$\mid$, \code {if} and \code{lookup} are interpreted as binary
symbols, all other constructs as unary).

\subsection{Semantics}

\paragraph{Frames and deduction} Before giving the formal semantics
of our calculus we introduce the notions of frame and deduction. A
\emph{frame} consists of a set of fresh names $\tilde n$ and a
substitution $\sigma$ and is written $\nu \tilde
n. \sigma$. Intuitively a frame represents the sequence of messages
that have been observed by an adversary during a protocol execution
and secrets $\tilde n$ generated by the protocol, a priori unknown to
the adversary. Deduction models the capacity of the adversary to
compute new messages from the observed ones.

\begin{definition}[Deduction]
  \label{def:apip-ded}
  We define the deduction relation $\nu \tilde n . \Subst \vdash t$ as
  the smallest relation between frames and terms defined by the
  deduction rules in \autoref{fig:deduc}.
  
\end{definition}

\begin{figure*}
  $$
  \begin{array}{c @{\qquad}c}
  \infer[ \textsc{Dname}]{ \nu \tilde n . \Subst \vdash a }{ a\in\FN\cup\PN & a\notin\tilde n }
  &
  \infer[ \textsc{DEq}]{\nu \tilde n . \Subst \vdash t' }{\nu \tilde n . \Subst \vdash t & t=_E t'}
  \\[2mm]
  \infer[ \textsc{DFrame}]{\nu \tilde n . \Subst \vdash x\Subst}{x \in
    \Dom(\Subst) }
  & 
  \infer[ \textsc{DAppl}]{\nu \tilde n . \Subst \vdash f(t_1,\ldots,t_n) }
  {\nu \tilde n . \Subst \vdash t_1  \cdots 
    \nu \tilde n . \Subst \vdash t_n & f\in\Sigma^k }
  \end{array}
  $$
  \caption{Deduction rules.}
  \label{fig:deduc}
\end{figure*}

\begin{example}
  If one key is used to wrap a second key, then, if the intruder learns
  the first key, he can deduce the second. For $\tilde n=k_1,k_2$ and
  $\sigma=\set{^{\mathit{senc}(k_2,k_1)}/_{x_1}, ^{k_1}/_{x_2}}$,
  $\nu\tilde n.\sigma\vdash k_2$, as witnessed by the proof tree given
  in Figure~\ref{fig:examplededuction}.
\end{example}

\begin{figure*}

\[
\infer
 {\nu \tilde n . \sigma \vdash k_2 }{  
\infer
{
\nu\tilde n.\sigma\vdash  \mathit{sdec}(\mathit{senc}(k_2,k_1),k_1)
}
{
\infer
{\nu\tilde n.\sigma\vdash  \mathit{senc}(k_2,k_1)
}
{ x_1\in\dom(\sigma)}
&
\infer
{\nu\tilde n.\sigma\vdash  k_1}
{ x_2\in\dom(\sigma)}
}
&
\mathit{sdec}(\mathit{senc}(k_2,k_1),k_1) =_\ET k_2
 }
\]
\caption{Proof tree witnessing that $\nu \tilde n . \sigma \vdash k_2$}
\label{fig:examplededuction}
\end{figure*}

\paragraph{Operational semantics} We can now define the operational
semantics of our calculus. The semantics is defined by a labelled transition
relation between process configurations. A \emph{process
  configuration} is a 6-tuple $(\Names, \StoreA, \StoreB, \Processes, \Subst, \ActiveLocks)$
where 
\begin{itemize}
\item $\Names \subseteq \FN$ is the set of fresh names generated by the processes;
\item $\StoreA: \Mess_\Sign \to \Mess_\Sign$ is a partial function
  modeling the functional store;
\item $\StoreB \subseteq \GroundFacts^\#$ is a multiset of ground
  facts and models the multiset of stored facts;
\item $\Processes$ is a multiset of ground processes representing the
  processes executed in parallel;
\item $\Subst$ is a ground substitution modeling the messages output
  to the environment;
\item $\ActiveLocks \subseteq  \Mess_\Sign$ is the set of currently
  acquired locks.
\end{itemize}

The transition relation is defined by the rules described in
Figure~\ref{fig:operationalsemantics}. Transitions are labelled by
sets of ground facts. For readability we omit empty sets and 
brackets around singletons, i.e., we write $\to$ for
$\stackrel{\emptyset}\longrightarrow$ and $\stackrel{f}\longrightarrow$ for $\stackrel{\set f}\longrightarrow$. We write $\to^*$
for the reflexive, transitive closure of $\to$ (the transitions that
are labelled by the empty sets) and write $\stackrel{f}
\Rightarrow$ for $\to^* \stackrel{f}\rightarrow \to^*$. We can now
define the set of traces, i.e., possible executions, that a process
admits.

\begin{definition}[Traces of $P$]
Given a ground process $P$ we define the \emph{set of traces of $P$} as
$$
\begin{array}{l}
 \tracespi(P) =  \left\{[F_1,\ldots,F_n] \mid 
     (\emptyset,\emptyset, \emptyset, \{ P\}, \emptyset, \emptyset) \phantom{\stackrel{F_1}{\Longrightarrow} }\right.\\
   \qquad      \stackrel{F_1}{\Longrightarrow} 
      (\Names_1,\StoreA_1, 
      \StoreB_1, \Processes_1, \Subst_1,
     \ActiveLocks_1)\\
      \qquad \left. 
        \stackrel{F_2}{\Longrightarrow} 
     \ldots
     \stackrel{F_n}{\Longrightarrow} 
     (\Names_n,\StoreA_n, \StoreB_n, \Processes_n, \Subst_n,  \ActiveLocks_n)
   \right\}
\end{array}
$$
\end{definition}

\begin{figure*}
  \iffullversion \small \fi
  \newlength{\addlinespace}
  \setlength{\addlinespace}{1mm}
  {\bf Standard operations:}
  $$
  \begin{array}{rcl}
    (\Names, \StoreA, \StoreB, \Processes \mcup \{0\}, \Subst, \ActiveLocks) 
    & \longrightarrow &  (\Names, \StoreA, \StoreB, \Processes , \Subst, \ActiveLocks) \\[\addlinespace]
    (\Names, \StoreA, \StoreB, \Processes \mcup \{P|Q\}, \Subst, \ActiveLocks) 
    &\longrightarrow&
    (\Names, \StoreA, \StoreB, \Processes \mcup \{P,Q\}, \Subst, \ActiveLocks) 
    \\[\addlinespace]
    (\Names, \StoreA, \StoreB, \Processes \mcup \{!P\}, \Subst, \ActiveLocks) 
    &\longrightarrow& 
    (\Names, \StoreA, \StoreB, \Processes \mcup \{!P, P\}, \Subst, \ActiveLocks) 
    \\[\addlinespace]
    (\Names, \StoreA, \StoreB, \Processes \mcup \{\nu a;P\}, \Subst, \ActiveLocks) 
    &\longrightarrow&
    (\Names\cup\{a'\}, \StoreA, \StoreB, \Processes \mcup \{P\{a'/a\}
    \}, \Subst, \ActiveLocks) \\
    \multicolumn{3}{r}{\text{if $a'$ is fresh}}
    \\[\addlinespace]
    (\Names, \StoreA, \StoreB, \Processes, \Subst, \ActiveLocks) 
    &\trans{K(M)}& 
    (\Names, \StoreA, \StoreB, \Processes, \Subst, \ActiveLocks) \quad
    {\text{if $\nu \Names.\Subst \vdash M$}}
    \\[\addlinespace]
    (\Names, \StoreA, \StoreB, \Processes \mcup \{\piout(M,N);P\},\Subst, \ActiveLocks) 
    &\trans{K(M)}&
    (\Names, \StoreA, \StoreB, \Processes \mcup \{P\},\Subst\cup \{^N / _x\}, \ActiveLocks ) 
    \\
    \multicolumn{3}{r}{\text{if
      $x$ is fresh and $\nu \Names. \Subst\vdash M$}}
    \\[\addlinespace]
    (\Names, \StoreA, \StoreB, \Processes \mcup \{\piin(M,N);P\},\Subst, \ActiveLocks) 
    &\trans{K(\pair{M}{N\tau})}&
    (\Names, \StoreA, \StoreB, \Processes \mcup \{P\tau \},\Subst, \ActiveLocks)  
    \\
    \multicolumn{3}{r} {\text{if }\exists \tau.\ \tau\text{ is
            grounding for } N, \nu \Names.\Subst\vdash M, \nu \Names.\Subst\vdash N\tau}
    \\[\addlinespace]
    % \multicolumn{1}{l}{(\Names, \StoreA, \StoreB,}
    % \\
    % \quad\quad
      (\Names, \StoreA, \StoreB, \Processes \mcup \{\piout(M,N);P, \piin(M',N');Q\},\Subst, \ActiveLocks) 
    % \begin{array}{l}
    %   (\Names, \StoreA, \StoreB, \Processes \mcup\\\quad \{\piout(M,N);P, \piin(M',N');Q\},\Subst, \ActiveLocks) \end{array}
    &\longrightarrow&
    (\Names, \StoreA, \StoreB, \Processes \cup \{P, Q \tau \},\Subst, \ActiveLocks)  
    \\
    \multicolumn{3}{r}{\text{if $M=_E M'$ and  $\exists \tau.$ $N =_E N'\tau$ and $\tau$ grounding for $N'$}}
    \\[\addlinespace]
    (\Names, \StoreA, \StoreB, \Processes \cup \{\text{if $M=N$ then $P$ else $Q$}\},\Subst, \ActiveLocks) 
    &\longrightarrow&
    (\Names, \StoreA, \StoreB, \Processes \cup \{P\}, \Subst, \ActiveLocks) 
    \quad\text{if $M=_\ET N$} 
    \\[\addlinespace]
    (\Names, \StoreA, \StoreB, \Processes \cup \{\text{if $M=N$ then $P$ else $Q$}\}, \Subst, \ActiveLocks) 
    &\longrightarrow& 
    (\Names, \StoreA, \StoreB, \Processes \cup \{Q\}, \Subst, \ActiveLocks)  
    \quad\text{if $M\neq_\ET N$}
    \\[\addlinespace]
    (\Names, \StoreA, \StoreB, \Processes \cup \{\text{event($F$); $P$}\}, \Subst, \ActiveLocks) 
    &\stackrel{F}{\longrightarrow}&
    (\Names, \StoreA, \StoreB, \Processes \cup \{P\}, \Subst, \ActiveLocks)
  \end{array}
$$
{\bf Operations on global state:}
$$
\begin{array}{rcl}
  (\Names, \StoreA, \StoreB, \Processes \mcup \{\text{insert $M,N$; $P$}\},
  \Subst, \ActiveLocks) 
  &\longrightarrow& 
  (\Names, \StoreA[M\mapsto N ], \StoreB, \Processes \mcup \{P\},  \Subst,
  \ActiveLocks)  
  \\[\addlinespace]
  (\Names, \StoreA, \StoreB, \Processes \mcup \{\text{delete $M$; $P$}\},
  \Subst, \ActiveLocks) 
  &\longrightarrow&
  (\Names, \StoreA[M \mapsto \bot], \StoreB, \Processes \mcup \{P\},  \Subst,
  \ActiveLocks)  % \\
  % \multicolumn{3}{r}{\text{ where }\StoreA'(x) = \begin{cases}
  %   \StoreA(x)  & \text{if  $x\neq_E M$} \\
  %   \bot  & \text{otherwise}
  % \end{cases}}
  \\[\addlinespace]
  (\Names, \StoreA, \StoreB, \Processes \mcup \{\text{lookup $M$ as $x$ in $P$
    else $Q$ }\},  \Subst, \ActiveLocks) 
  &\longrightarrow&
  (\Names, \StoreA, \StoreB, \Processes \mcup \{ P\{V/x\}\},  \Subst,
  \ActiveLocks)\\
  \multicolumn{3}{r}{\text{if $\StoreA(N)=_E V$ is defined and $N=_E M$}} \\[\addlinespace]
  (\Names, \StoreA, \StoreB, \Processes \mcup \{\text{lookup $M$ as $x$ in $P$
    else $Q$ }\},  \Subst, \ActiveLocks) 
  & \longrightarrow & 
  (\Names, \StoreA, \StoreB, \Processes \mcup \{ Q \},  \Subst,
  \ActiveLocks)\\
  \multicolumn{3}{r}{\text{if $\StoreA(N)$ is undefined for all $N=_E M$}}\\[\addlinespace]

  (\Names, \StoreA, \StoreB, \Processes \mcup \{\text{lock $M$; $P$}\},
  \Subst, \ActiveLocks) 
  &\longrightarrow&
  (\Names, \StoreA, \StoreB, \Processes \mcup \{P\},  \Subst, \ActiveLocks
  \cup \set{M})\\
  \multicolumn{3}{r}{\text{if $ M{\not\in}_\ET\ActiveLocks$}} \\[\addlinespace]
  % unlock
  (\Names, \StoreA, \StoreB, \Processes \mcup \{\text{unlock $M$; $P$}\},
  \Subst, \ActiveLocks )
  &\longrightarrow&
  (\Names, \StoreA, \StoreB, \Processes \mcup \{P\},  \Subst, \ActiveLocks
  \setminus \set{ M' \mid M'=_E M}
  )  \\[\addlinespace]

  (\Names, \StoreA, \StoreB, \Processes \mcup \{[ l \msrewrite a r ];\ 
    P\},  \Subst, \ActiveLocks) 
  &\stackrel{a'}{\longrightarrow}&
  (\Names, \StoreA, \StoreB \setminus \lfacts(l') \mcup r',
  \Processes \mcup \set{P\tau},  \Subst, \ActiveLocks) \\
  \multicolumn{3}{r}{
    \begin{array}{ll}\text{if }\exists
        \tau,l',a',r'. & \tau
        \text{ grounding for } l \msrewrite{a} r, l' \msrewrite{a'} r'
        =_E (l \msrewrite a r)\tau,\\
        & \lfacts(l') \msubseteq \StoreB, \pfacts(l')\subset
        \StoreB
    \end{array}}
  \end{array}
  $$
  \caption{Operational semantics}
  \label{fig:operationalsemantics}
\end{figure*}

\begin{example}
  In Figure~\ref{fig:transitions} we display the transitions that
  illustrate how the first key is created on the security device in
  our running example and witness that $[\mathrm{NewKey}(h',k')]\in\tracespi(P)$.
%   The following examples shows how the first key is created on the security device in our
%   running example.
% \begin{align*}
%  & (\emptyset, \emptyset, \emptyset, \mset{
% \underbrace{
% !P_\mathit{new}, 
% 	! P_\mathit{set} \mid !P_\mathit{dec} 
% 	\mid !P_\mathit{wrap}
% }_{=:\calP'}
% },
% \emptyset,  \emptyset )\\
% \rightarrow & (\emptyset, \emptyset, \emptyset, \mset{P_\mathit{new}}\mcup
% \calP',
% \emptyset,  \emptyset )\\
% \rightarrow & (\emptyset, \emptyset, \emptyset, \mset{ \inlineapip{ new h;
% new k; event NewKey(h,k);}\ldots}\\
% &\mcup \calP',
% \emptyset,  \emptyset )\\
% %
% \rightarrow^* & (\set{h',k'}, \emptyset, \emptyset, \mset{ \inlineapip{ event
% NewKey(h',k');}\ldots}\\
% & \mcup \calP',
% \emptyset,  \emptyset )
% \intertext{$  \trans{\mathrm{NewKey}(h',k')} $}
% %
% & (\set{h',k'}, \emptyset, \emptyset, \mset{ \inlineapip{ 
% insert <`key',h>,k;}\ldots}\\
% & \mcup \calP', \emptyset,  \emptyset )\\
% \rightarrow^* & (\set{h',k'}, \StoreA, \emptyset, \mset{
%  \inlineapip{out(h); 0}}\mcup \calP',
% \emptyset,  \emptyset )\\
% \rightarrow^* & (\set{h',k'}, \StoreA, \emptyset, \calP',
% \set{^{h'}/ _{x_1} },  \emptyset )\text{, where }
% \end{align*}
% $ \StoreA(\inlineapip{ <`key',h'>})=k'$ and
% $ \StoreA(\inlineapip{<`att',h'>})=\mbox{`dec'} $.
% Therefore, $[\mathrm{NewKey}(h',k')]\in\tracespi(P)$.
\end{example}

\begin{figure*}
  \begin{align*}
    & (\emptyset, \emptyset, \emptyset, \mset{
      \underbrace{
        !P_\mathit{new}, 
	! P_\mathit{set} \mid !P_\mathit{dec} 
	\mid !P_\mathit{wrap}
      }_{=:\calP'}
    },
    \emptyset,  \emptyset )
    \rightarrow (\emptyset, \emptyset, \emptyset, \mset{P_\mathit{new}}\mcup
    \calP',
    \emptyset,  \emptyset )\\
    \rightarrow\,\,\, & (\emptyset, \emptyset, \emptyset,
    \mset{ \inlineapip{ new {}}h;\inlineapip{ new {}} k; \inlineapip{ event NewKey}(h,k);
    \ldots}
     \mcup \calP',
     \emptyset,  \emptyset )\\
    \rightarrow^* & (\set{h',k'}, \emptyset, \emptyset, \mset{ \inlineapip{ event
        NewKey}(h',k');\ldots}
    \mcup \calP',
    \emptyset,  \emptyset )\\
    {\trans{\mathrm{NewKey}(h',k')}}\,\,\,&
    (\set{h',k'}, \emptyset, \emptyset, \mset{ \inlineapip{  insert
        <`key',}\ h'\inlineapip{>,}\ k';\ldots}
    \mcup \calP', \emptyset,  \emptyset )\\
    \rightarrow^* & (\set{h',k'}, \StoreA, \emptyset, \mset{
      \inlineapip{out}(h'); 0}\mcup \calP',
    \emptyset,  \emptyset )
    \rightarrow^*  (\set{h',k'}, \StoreA, \emptyset, \calP',
    \set{^{h'}/ _{x_1} },  \emptyset )\\
    & \text{
      where $ \StoreA(\inlineapip{<`key',} h' \inlineapip{>})=k'$ and
      $ \StoreA(\inlineapip{<`att',}h' \inlineapip{>})=\mbox{`dec'}$.
}
\end{align*}
\caption{Example of transitions modelling the creation of a key on
a PKCS\#11-like device}
\label{fig:transitions}
\end{figure*}

\section{Labelled multiset rewriting}

We now recall the syntax and semantics of labelled multiset rewriting
rules, which constitute the input language of the tamarin
tool~\cite{SMCB-csf12}.

\begin{definition}[Multiset rewrite rule]
  A labelled multiset rewrite rule $ri$ is a triple $(l,a,r)$,
  $l,a,r\in\calF^*$, written $l \msrewrite{a} r$. We call
  $l=\prems(\ri)$ the premises, $a=\actions(\ri)$ the actions, and
  $r=\conclusions(\ri)$ the conclusions of the rule.
\end{definition}

\begin{definition}[Labelled multiset rewriting system]
\label{def:msr-system}
  A labelled multiset rewriting system is a set of labelled multiset
  rewrite rules $R$, such that each rule $l\msrewrite a r\in R$ satisfies
  the following conditions:
  \begin{itemize}
  \item $l,a,r$ do not contain fresh names
  \item $r$ does not contain $\Fr$-facts
  \end{itemize}
  A labelled multiset rewriting system is called well-formed,
  if additionally 
  \begin{itemize}
  \item   for each $l'\msrewrite {a'} r' \in_\ET \ginsts(l\msrewrite
    a r)$ we have that $\cap_{r'' =_\ET r'}\names(r'') \cap \FN \subseteq
    \cap_{l'' =_\ET l'}\names(l'') \cap \FN$.
  \end{itemize}
\end{definition}

We define one distinguished rule \textsc{Fresh} which is the only rule
allowed to have $\Fr$-facts on the right-hand side
$$\textsc{Fresh}: [] \msrewrite{} [\Fr(x:\mathit{fresh})]$$

The semantics of the rules is defined by a labelled transition relation. 

\begin{definition}[Labelled transition relation] \label{def:transition-relation}

  Given a multiset rewriting system $R$ we define the \emph{labeled
    transition relation} $\rightarrow_{R} \subseteq \GroundFacts^{\#}
  \times \Processes(\GroundFacts) \times \GroundFacts^{\#}$ as 
  \[ S \stackrel{a}{\longrightarrow}_{R} ((S \setminus^\#
  \mathit{lfacts}(l)) \cup^\# r)\]
  if and only if $l \msrewrite a r
  \in_E \ginsts(R\cup\textsc{Fresh}) $, $\mathit{lfacts}(l)
  \subseteq^\# S$ and $ \mathit{pfacts}(l) \subseteq S $.
\end{definition}

\begin{definition}[Executions]
  \label{def:execution}
  Given a multiset rewriting system $R$ we define its set of
  executions as
  $$
  \begin{array}{l}
    \execmsr(R)=   \left\{ \emptyset \stackrel{A_1}{\longrightarrow}_{R} \ldots
      \stackrel{A_n}{\longrightarrow}_{R} S_n \mid  \right. \\
     \quad\forall a,i,j \colon 0 \leq i\neq j <n. \\
     ~\quad ( S_{i+1} \setminus^\# S_i)=\{\Fr(a)\} 
     \left. \Rightarrow (S_{j+1}
      \setminus^\# S_j)\neq\{\Fr(a) \} \right\}
  \end{array}
  $$
\end{definition}

The set of executions consists of transition sequences that respect
freshness, \ie, for a given name $a$ the fact $\Fr(a)$ is only added
once, or in other words the rule $\textsc{Fresh}$ is at most fired once
for each name.
We define the set of traces in a similar way as for processes.
\begin{definition}[Traces]
  The set of traces is defined as
  $$
  \begin{array}{l}
    \tracesmsr(R) =  
    \left\{\vphantom{\stackrel{A_1}{\Longrightarrow}_{R} }
      [A_1,\ldots,A_n] \mid \right.\   \forall\;  0 \leq i \leq
    n.\;  A_i \not= \emptyset \\ 
    \qquad\qquad
    \left. \text{ and }
      \emptyset
      \stackrel{A_1}{\Longrightarrow}_{R} 
      \ldots
      \stackrel{A_n}{\Longrightarrow}_{R} S_n \in \execmsr(R) \right\}
  \end{array}
  $$
  where $\stackrel{A}{\Longrightarrow}_{R}$ is defined as
  $\stackrel{\emptyset}{\longrightarrow}\!{}^*_{R}\stackrel{A}{\longrightarrow}\!{}_{R}\stackrel{\emptyset}{\longrightarrow}\!{}^*_{R}$.
\end{definition}
Note that both for processes and multiset rewrite rules the set of
traces is a sequence of sets of facts.

\section{Security Properties}

In the tamarin tool~\cite{SMCB-csf12} security properties are
described in an expressive two-sorted first-order logic. The sort
\temp is used for time points, $\Vars_\temp$ are the temporal
variables.

% \begin{definition}[Trace formulas] A trace atom is either
%   \begin{itemize}
%   \item false $\bot$,
%   \item a term equality $t_1 \approx t_2$,
%   \item a timepoint ordering $i \lessdot j$,
%   \item a timepoint equality $i \doteq j$, or
%   \item an action $F@i$ for a fact $F\in\calF$ and a timepoint $i$.
%   \end{itemize}
%   A trace formula is a first-order formula over trace atoms.
% \end{definition}

\begin{definition}[Trace formulas] A trace atom is either false
  $\bot$, a term equality $t_1 \approx t_2$, a timepoint ordering $i
  \lessdot j$, a timepoint equality $i \doteq j$, or an action $F@i$
  for a fact $F\in\calF$ and a timepoint $i$. A trace formula is a
  first-order formula over trace atoms. 
\end{definition}
As we will see in our case studies this logic is expressive enough to
analyze a variety of security properties, including complex injective
correspondence properties.

To define the semantics, let each sort $s$ have a domain $\Dom(s)$.
$\Dom(\tempsort)=\mathcal{Q}$,
$\Dom(\msgsort)=\Mess$,
$\Dom(\freshsort)=\FN$, and
$\Dom(\pubsort)=\PN$.
A function $\theta:\Vars \to \calM \cup \mathcal{Q}$ is a valuation if it
respects sorts, that is, $\theta(\Vars_s) \subset \Dom(s)$ for all sorts
$s$. If $t$ is a term, $t\theta$ is the application of the homomorphic
extension of $\theta$ to $t$.

\begin{definition}[Satisfaction relation] 
  The satisfaction relation $(\tr,\theta) \vDash \varphi$ between trace \tr, valuation $\theta$ and trace formula $\varphi$ is defined as follows:
  $$%\arraycolsep=1.4pt
  \begin{array} {lcl}
    (\tr,\theta) \vDash \bot & \multicolumn{2}{l}{\mbox{never}}\\
    (\tr,\theta) \vDash F@i & \mbox{iff} & 
    \theta(i) \in \mathit{idx}(tr)  \mbox{ and } F\theta  \in_E \tr_{\theta(i)}\\
    (\tr,\theta) \vDash i\lessdot j & \mbox{iff} &  \theta(i) < \theta(j)\\
    (\tr,\theta) \vDash i\doteq j & \mbox{iff} & \theta(i) = \theta(j)\\
    (\tr,\theta) \vDash t_1 \approx t_2 & \mbox{iff} & t_1\theta =_E t_2\theta\\
    (\tr,\theta) \vDash \neg \varphi & \mbox{iff} & \mbox{not } (\tr,\theta) \vDash \varphi\\
    (\tr,\theta) \vDash \varphi_1 \wedge \varphi_2 & \mbox{iff} &
    (\tr,\theta)\vDash \varphi_1 \mbox{ and } (\tr,\theta)\vDash \varphi_2\\
    (\tr,\theta) \vDash \exists x:s .\varphi & \mbox{iff} & \mbox{there is
} u\in \Dom(s) \mbox{ such that}\\ & & (\tr,\theta[x\mapsto u]) \vDash
    \varphi
  \end{array}
  $$
\end{definition}
%\iffullversion
When $\varphi$ is a ground formula we sometimes simply write $\tr \vDash
\varphi$ as the satisfaction of $\varphi$ is independent of the valuation.
%\fi
\begin{definition}[Validity, satisfiability]
Let $\mathit{Tr} \subseteq (\Processes(\GroundFacts))^*$ be a set of
traces. 
A trace formula $\varphi$ is said to be \emph{valid} for $\mathit{Tr}$, written
$\mathit{Tr}  \vDash^\forall \varphi$, if for any trace $\tr \in
\mathit{Tr}$ and any valuation $\theta$ we have that  $(\tr , \theta) \vDash
\varphi$.

A trace formula $\varphi$ is said to be \emph{satisfiable} for $\mathit{Tr}$, written
$\mathit{Tr}  \vDash^\exists \varphi$, if there exist a trace $\tr \in
\mathit{Tr}$ and a valuation $\theta$ such that  $(\tr , \theta) \vDash
\varphi$.

\end{definition}

Note that $\mathit{Tr} \vDash^\forall \varphi$ iff $\mathit{Tr}
\not\vDash^\exists \neg \varphi$. 
Given a multiset rewriting system
$R$ we say that $\varphi$ is valid,  written $R \vDash^\forall
\varphi$, if $\tracesmsr(R) \vDash^\forall \varphi$.  We say that
$\varphi$ is satisfied in $R$, written $R \vDash^\exists \varphi$, if
$\tracesmsr(R) \vDash^\exists \varphi$.  Similarly, given a ground
process $P$ we say that $\varphi$ is valid, written $P \vDash^\forall
\varphi$, if $\tracespi(P) \vDash^\forall \varphi$, and that $\varphi$
is satisfied in $P$, written $P \vDash^\exists \varphi$, if
$\tracespi(P) \vDash^\exists \varphi$.

\begin{example}\label{ex:property}
  The following trace formula expresses secrecy of keys generated on
  the security API, which we introduced in Section~\ref{sec:calculus}.
  % \[
  % \begin{array}{l}
  %   \neg ( \exists h,k \colon \msgsort,\ i,j \colon
  %   \mathit{temp} .\ \\
  %   \qquad\qquad\mathrm{NewKey}(h,k)@i \wedge \mathrm{K}(k)@j ) 
  % \end{array}
  % \]
  \[
    \neg ( \exists h,k \colon \msgsort,\ i,j \colon
    \mathit{temp} .\ \mathrm{NewKey}(h,k)@i \wedge \mathrm{K}(k)@j ) 
  \]

\end{example}

\section[A translation from processes into multiset rewrite rules]
{A translation from processes into multiset rewrite rules
  \label{sec:translation}}

In this section we define a translation from a process $P$ into a set
of multiset rewrite rules $\sem P$ and a translation on trace formulas
such that $P \models^\forall \varphi$ if and only if $\sem P
\models^\forall \sem \varphi$. Note that the result also holds for
satisfiability, as an immediate consequence. For a rather expressive
subset of trace formulas (see~\cite{SMCB-csf12} for the exact
definition of the fragment), checking whether $\sem P \models^\forall
\sem \varphi$ can then be discharged to the tamarin prover that we use
as a backend.

\subsection{Definition of the translation of processes}
To model the adversary's message deduction capabilities, we introduce
the  set of rules \textsc{MD} defined in Figure~\ref{fig:MD}.
% $$
% \begin{array}{rcl r}
%  \mathsf{Out}(x) & \msrewrite~ & \K(x) & (\textsc{MDOut})\\ 
%  \K(x) & \msrewrite{K(x)} & \mathsf{In}(x) & (\textsc{MDIn})\\ 
%   & \msrewrite{~} & \K(x:\mathit{pub}) &(\textsc{MDPub})\\ 
%   \Fr(x:\mathit{fresh})& \msrewrite{~} &
% \K(x:\mathit{fresh}) &(\textsc{MDFresh})
% \end{array}
% $$
% and
% $$
% \begin{array}{rcl r}
%    \K(x_1),\ldots, \K(x_k)& \msrewrite{~} & \K(f(x_1,\ldots,x_k)) \\
%  &&
%  \mbox{ for $f\in\Sign^k$} & (\textsc{MDAppl})\\ 
% \end{array}
% $$

\begin{figure*}
  $$
  \begin{array}{rclr}
    \mathsf{Out}(x) & \msrewrite{~} & \K(x) & (\textsc{MDOut})\\ 
    \K(x) & \msrewrite{K(x)}&  \mathsf{In}(x) & (\textsc{MDIn})\\ 
    & \msrewrite{~} & \K(x:\mathit{pub}) &(\textsc{MDPub})\\
    \Fr(x:\mathit{fresh}) & \msrewrite{~} & \K(x:\mathit{fresh}) &(\textsc{MDFresh})\\
    \K(x_1),\ldots, \K(x_k) &\msrewrite{~} & \K(f(x_1,\ldots,x_k)) 
     \text{ for $f\in\Sign^k$} & (\textsc{MDAppl})
  \end{array}
  $$
  \caption{The set of rules \textsc{MD}.}
  \label{fig:MD}
\end{figure*}
In order for our translation to be correct, we need to make some
assumptions on the set of processes we allow. These assumptions are
however, as we will see, rather mild and most of them without loss of
generality. First we define a set of reserved variables that will be
used in our translation and whose use we therefore forbid in the
processes.

\begin{definition}[Reserved variables and facts] % (fold)
  \label{def:reserved-names}
  The set of reserved variables is defined as the set containing the elements
  \begin{inparaitem}[]
  % \item $\mathit{sid}$,
  % \item $\mathit{rep}_i$ for any $i$,
  \item $\mathit{n}_a$ for any $a\in\FN$ and
  \item $\mathit{lock}_l$ for any $l\in\setN$.
  \end{inparaitem}
  
  The set of reserved facts $\ReservedFacts$ is defined as the set containing facts
  $f(t_1, \ldots, t_n)$ where $t_1, \ldots, t_n \in \Terms$ and $f \in 
  \{$ \begin{inparaitem}[]
    \item Init,
    \item Insert,
    \item Delete,
    \item IsIn,
    \item IsNotSet,
    \item state,
    % \item $\semistate$,
    \item Lock,
    \item Unlock,
    \item Out,
    \item Fr,
    \item In,
    \item Msg,
%    \item RepNonce,
    \item ProtoNonce,
    \item Eq,
    \item NotEq,
    \item Event,
    \item InEvent
%    \item K
    \end{inparaitem} $\}$.
  % definition reserved_names (end)
\end{definition}

Similar to~\cite{ARR-csf11}, for our translation to be sound, we
require that for each process, there exists an injective mapping
assigning to every \apip{unlock $t$} in a process a \apip{lock $t$}
that precedes it in the process' syntax tree.
% Similar to~\cite{ARR-csf11}, but slightly more general, we require for
% our translation that any \code{lock $t$} in a process is followed by
% at most one \code{unlock $t$} in each branch of the process' syntax
% tree.
Moreover, given a process \code{lock $t$; $P$} the corresponding
unlock in $P$ may not be under a parallel or replication. These
conditions allow us to annotate each corresponding pair \code{lock $t$},
\code{unlock $t$} with a unique label $l$. The annotated version
of a process $P$ is denoted $\overline P$. The formal definition of
$\overline P$ is given in Appendix~\ref{app:defs}. In case the annotation
fails, i.e., $P$ violates one of the above conditions, the process
$\overline P$ contains $\bot$.

% Given a ground process $P$, we formally define an annotation procedure in
% Definition~\ref{def:process-annotation} in the Appendix. It assigns a
% unique label $l$ to each \code{lock $t$} that occurs in the process tree, and
% annotates the closest \code{unlock $t$} (for the syntactically same $t$)
% with the BLA label $l$. This annotation might output a process that
% contains $\bot$, which means that the annotation failed. This is the case
% if \code{lock $t$} is not followed by \code{unlock $t$}, or vice versa, on
% some path to a leaf, or if there is a replication or a parallel after a \code{lock
% $t$} but before an \code{unlock $t$}.

\begin{definition}[well-formed]
\label{def:well-formed}
  A ground process $P$ is well-formed if
  \begin{itemize}
  \item no reserved variable nor reserved fact appear in $P$,
%  \item each replication $!^i$ has a unique identifier $i$,
  \item any name and variable in $P$ is bound at most once and
  \item $\overline P$ does not contain $\bot$.
  \item For each action $l \msrewrite a r$ that appears in the process, the
following holds: for each $l'\msrewrite {a'} r' \in_\ET \ginsts(l\msrewrite
    a r)$ we have that $\cap_{r'' =_\ET r'}\names(r'') \cap \FN \subseteq
    \cap_{l'' =_\ET l'}\names(l'')\cap \FN$.
  \end{itemize}
  A trace formula $\varphi$ is well-formed if no reserved variable nor reserved fact appear in $\varphi$.
\end{definition}
The two first restrictions of well-formed processes are not a loss of
generality as processes and formulas can be consistently renamed to
avoid reserved variables and $\alpha$-converted to avoid binding names or
variables several times. Also note that the second condition
is not necessarily preserved during an execution, e.g. when unfolding
a replication, $!P$ and $P$ may bind the same names. We only require
this condition to hold on the initial process for our translation to
be correct.

The annotation of locks restricts the set of protocols we can
translate, but allows us to obtain better verification results, since
we can predict which \code{unlock} is ``supposed'' to close a given
\code{lock}. This additional information is helpful for tamarin's
backward reasoning. We think that our locking mechanism captures all
practical use cases. Using our calculus' ``low-level'' multiset
manipulation construct, the user is also free to implement locks
himself, e.g., as
\[ [\mathrm{NotLocked}()] \rightarrow []; \mathit{code}; []
\rightarrow [\mathrm{NotLocked}()]\] (In this case the user does not
benefit from the optimisation we put into the translation of locks.)
Obviously, locks can be modelled both in tamarin's multiset rewriting
calculus (this is actually what the translation does) and Mödersheim's
set rewriting calculus~\cite{Modersheim-ccs10}. However, protocol
steps typically consist of a single input, followed by several
database lookups, and finally an output. In practice, they tend to be
modelled as a single rule, and are therefore atomic.  Real
implementations are however different, as several entities might be
involved, database lookups could be slow, etc. In this case, such
simplified
models could, \eg, miss race conditions. To the best of our knowledge,
StatVerif is the only comparable tool that models locks explicitly and
it has stronger restrictions.% It has the same
% restrictions, but supports only a single, global lock,  \ie, every
% \code{lock} must be followed by precisely one corresponding \code{unlock}
% in every branch of the syntax tree, and in a process \code{lock;$P$}, the
% part of the process $P$ that occurs before the next unlock, may not include
% parallel, replication, or lock. 

\begin{definition}
  \label{def:transprocess}
  Given a well-formed ground process $P$ we define the labelled
  multiset rewriting system $\sem P$ as
  $$\textsc{MD} \cup \{\textsc{Init} \}  \cup
  \sem{\overline P,[],[]}$$

  \begin{itemize}
  \item   where the rule \textsc{Init} is defined as 
    $$\textsc{Init}:  [] \msrewrite{\mathrm{Init}()} [\state_{[]}()]$$

  \item $\sem{P,p,\tilde x}$ is defined inductively for process $P$,
    position $p \in \setN^*$ and sequence of variables $\tilde x$ in Figure~\ref{fig:transprocess}.
  \item For a position $p$ of $P$ we define $\state_p$ to be
      persistent if ${P|_p} =\ !Q$ for some process $Q$;
    otherwise $\state_p$ is linear.
  \end{itemize}
\end{definition}

\begin{figure*}
  \renewcommand{\arraystretch}{1.3}
  \renewcommand{\theactualrule}[1]{\{ #1 \}}
  \renewcommand{\underscorethingy}{}

  \input{def-trans.tex}
  \caption{Translation of processes: definition of $\sem{P,p,\tilde x}$}\label{fig:transprocess}
\end{figure*}

In the definition of $\sem{P,p,\tilde x}$ we intuitively use the
family of facts $\state_p$ to indicate that the process is currently
at position $p$ in its syntax tree. A fact $\state_p$ will indeed be
true in an execution of these rules whenever some instance of $P_p$
(i.e. the process defined by the subtree at position $p$ of the syntax
tree of $P$) is in the multiset $\Processes$ of the process
configuration.
The translation of the zero-process, parallel and replication
operators merely use $\state_p$-facts. For instance $\sem{P \mid Q, p,
  \tilde x}$ defines the rule
$$
[\state_p(\tilde x)] \to [\state_{p\cdot 1}(\tilde x), \state_{p\cdot 2}(\tilde x)]\\
$$
which intuitively states that when a process is at position $p$
(modelled by the fact $\state_p(\tilde x)$ being true) then the
process is allowed to move both to $P$ (putting $\state_{p\cdot
  1}(\tilde x)$ to true) and $Q$ (putting $\state_{p\cdot 2}(\tilde
x)$ to true). The translation of $\sem{P \mid Q, p, \tilde x}$ also
contains the set of rules $\sem{P, p\cdot 1, \tilde x}\cup \sem{Q,
  p\cdot 2, \tilde x}$ expressing that after this transition the
process may behave as $P$ and $Q$, i.e., the processes at positions
$p\cdot 1$, respectively $p\cdot 2$, in the process tree.  Also note
that the translation of $!P$ results in a persistent fact as $!P$
always remains in $\Processes$.
The translation of the construct $\nu\,a$ translates the name $a$ into
a variable $n_a$, as msr rules must not contain fresh names.  Any
instantiation of this rule will substitute $n_a$ by a fresh name,
which the \Fr-fact in the premise guarantees to be new. This step is
annotated with a (reserved) action \ProtoNonce, used in the proof of
correctness to distinguish adversary and protocol nonces. Note that
the fact $\state_{p\cdot1}$ in the conclusion carries $n_a$, so that
the following protocol steps are bound to the fresh name used to
instantiate $n_a$.  The first rules of the translation of \apip{out}
and \apip{in} model the communication between the protocol and the
adversary, and vice versa.  In the case of \apip{out}, the adversary
must know the channel $M$, modelled by the fact $\In(M)$ in the rule's
premisse, and learns the output message, modelled by the fact
$\Out(N)$ in the conclusion. In the case of \apip{in}, the knowledge of
the message $N$ is additionally required and the variables of the
input message are added to the parameters of the $\state$ fact to
reflect that these variables are bound. The second and third rules of
the translations of \apip{out} and \apip{in} model an internal
communication, which is synchronous. For this reason, when the second
rule of the translation of \apip{out} is fired, the \state-fact is
substituted by an intermediate, \emph{semi-state} fact, $\semistate$,
reflecting that the sending process can only execute the next step if
the message was successfully received. The fact $\Msg(M,N)$ models
that a message is present on the synchronous channel.  Only with the
acknowledgement fact $\Ack(M,N)$, resulting from the second rule of
the translation of \apip{in}, is it possible to advance the execution
of the sending process, using the third rule in the translation of
\apip{out}, which transforms the semi-state \emph{and} the
acknowledgement of receipt into $\state_{p\cdot 1}(\ldots)$. Only now
the next step in the execution of the sending process can be
executed. The remaining rules essentially update the position in the
$\state$ facts and add labels.  Some of these labels are used to
restrict the set of executions. For instance the label Eq($M$,$N$)
will be used to indicate that we only consider executions in which $M
=_\ET N$. As we will see in the next section these restrictions will
be encoded in the trace formula.

%\iffullversion %{{{
\begin{figure*}

  $$\arraycolsep=1.4pt
  \begin{array}{rcl}
      [ ] & \msrewrite{\mathrm{Init}()} & [\state_{[]}()]\\
      {[ \state_{[]}()] } & \msrewrite{~}  & [!\state_{[1]}()]\\
      {[ !\state_{[1]}(), \Fr(h)]}  & \msrewrite{~}  & [\state_{[11]}(h)]\\
      {[ \state_{[11]}(h), \Fr(k)] } & \msrewrite{~}  & [\state_{[111]}(k, h)]\\
    {[ \state_{[111]}(k, h)]}  & \msrewrite{\mathrm{Event}(), \mathrm{NewKey}(h,
  k)}  & [\state_{[1111]}(k, h)]\\
    {[ \state_{[1111]}(k, h)]}  & \msrewrite{\mathrm{Insert}(\langle\mbox{\small 'key'},
  h\rangle, k)} & [\state_{[11111]}(k, h)]\\
    {[ \state_{[11111]}(k, h)]}  & \msrewrite{\mathrm{Insert}(\langle\mbox{\small
        'att'}, h\rangle, \mbox{\small 'dec'})}  & [\state_{[111111]}(k, h)]\\
      {[ \state_{[111111]}(k, h)]}  & \msrewrite{~} & [\Out(h), \state_{[1111111]}(k,h)]
  \end{array}
  $$
  \caption{The set of multiset rewrite rules $\sem{!P_\mathit{new}}$ (omitting
  the rules in \textsc{MD})}
  \label{fig:ex-trans}
\end{figure*}

\begin{example}
  Figure~\ref{fig:ex-trans} illustrates the above translation by
  presenting the set of msr rules $\sem{!P_\mathit{new}}$ (omitting
  the rules in \textsc{MD} already shown in Figure~\ref{fig:MD}).
% \begin{example}
%  $\sem{!P_\mathit{new}}$ gives the following set of rules:
% \begin{align*}
%  [ ] & \msrewrite{\mathrm{Init}()} [\state_{0}()]\\
%  [ \state_{0}()] & \msrewrite{} [!\state_{01}()]\\
%  [ !\state_{01}(), \Fr(h)] & \msrewrite{} [\state_{011}(h)]\\
%  [ \state_{011}(h), \Fr(k)] & \msrewrite{} [\state_{0111}(k, h)]\\
%  [ \state_{0111}(k, h)] & \msrewrite{\mathrm{Event}(), \mathrm{NewKey}(h,
% k)}\\
% &\qquad  [\state_{01111}(k, h)]\\
%  [ \state_{01111}(k, h)] & \msrewrite{\mathrm{Insert}(<\mbox{\small 'key'},
% h>, k)}\\
% &\qquad [\state_{011111}(k, h)]\\
%  [ \state_{011111}(k, h)] & \msrewrite{\mathrm{Insert}(<\mbox{\small
% 'att'}, h>, \mbox{\small 'dec'})}\\
% & \qquad [\state_{0111111}(k, h)]\\
%  [ \state_{0111111}(k, h)] & \msrewrite{} [\Out(h), \\
%  &  \qquad\qquad \state_{01111111}(k,h)]
% \end{align*}
  
  A graph representation of an example trace, generated by the
  tamarin tool, is depicted in Figure~\ref{fig:example-trace}.  Every
  box in this picture stands for the application of a multiset rewrite
  rule, where the premises are at the top, the conclusions at the
  bottom, and the actions (if any) in the middle. Every premise needs
  to have a matching conclusion, visualized by the arrows, to ensure
  the graph depicts a valid msr execution.  (This is a simplification
  of the dependency graph representation tamarin uses to perform
  backward-induction~\cite{SMCB-csf12,SMCB-cav13}.) Note that the
  machine notation for $\state_p()$ predicates omits brackets $[\, ]$
  in the position $p$ and denotes the empty sequence by `0'. We also
  note that in the current example $ !\state_{[1]}()$ is persistent
  and can therefore be used multiple times as a premise. As $\Fr(\, )$
  facts are generated by the $\textsc{Fresh}$ rule which has an empty
  premise and action, we omit instances of \textsc{Fresh} and 
  leave those premises, but only those, disconnected.

\begin{figure}[h] % (fold)
\centering

%\fbox{
% \includegraphics[scale=0.65,
% trim=1.7cm 1.5cm 1.5cm 1.3cm]
% {gfx/example-double.pdf}
% % }
% \caption{ Example trace for the translation of $!P_\mathit{new}$. }

\includegraphics[scale=0.65]
{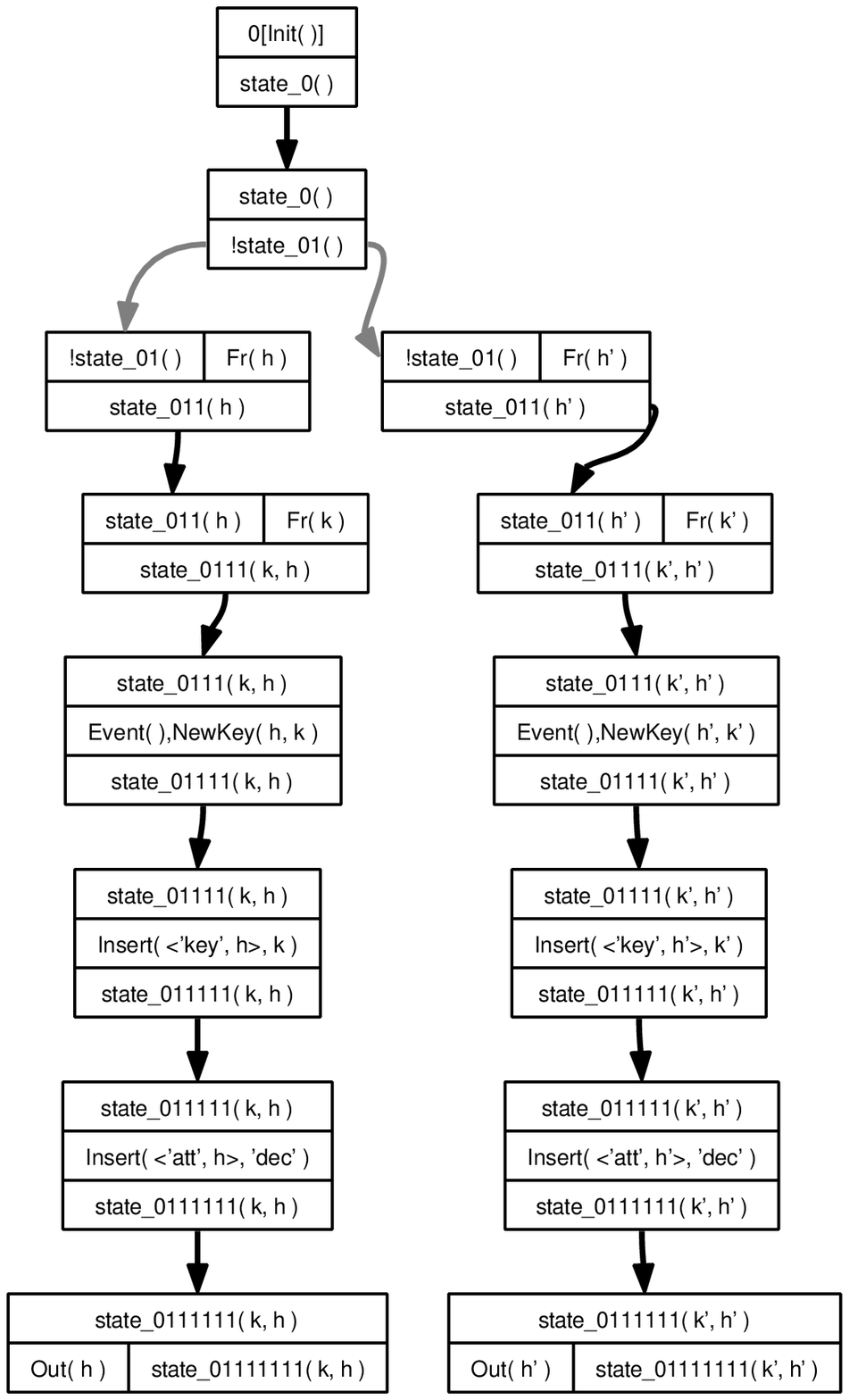}
% }
\caption{ Example trace for the translation of $!P_\mathit{new}$. }

\label{fig:example-trace}
% figure example-trace (end)
\end{figure}
\end{example} 
% \else
% \begin{example}
%   The set $\sem{!P_\mathit{new}}$ contains the following msr rules (omitting
%   the rules in \textsc{MD} already shown in Figure~\ref{fig:MD}):

%   $$\arraycolsep=0.8pt
%   \begin{array}{rcl}
%       [ ] & \msrewrite{\mathrm{Init}()} & [\state_{[]}()]\\
%       {[ \state_{[]}()] } & \msrewrite{~}  & [!\state_{[1]}()]\\
%       {[ !\state_{[1]}(), \Fr(h)]}  & \msrewrite{~}  & [\state_{[11]}(h)]\\
%       {[ \state_{[11]}(h), \Fr(k)] } & \msrewrite{~}  & [\state_{[111]}(k, h)]\\
%           {[ \state_{[111]}(k, h)]}  & \msrA\mathrm{Event}(), \\
%                                      & ~ \mathrm{NewKey}(h, k)\msrB  & [\state_{[1111]}(k, h)]\\
%      {[ \state_{[1111]}(k, h)]}  & \msrA\mathrm{Insert}(\langle\mbox{\small 'key'},\\
%                                  & h\rangle, k)\msrB & [\state_{[11111]}(k, h)]\\
%     {[ \state_{[11111]}(k, h)]}  & \msrA\mathrm{Insert}(\langle\mbox{\small
%         'att'}, h\rangle, \\
%         & \mbox{\small 'dec'})\msrB  & [\state_{[111111]}(k, h)]\\
%       {[ \state_{[111111]}(k, h)]}  & \msrewrite{~} & [\Out(h),\\
%         && ~ \state_{[1111111]}(k,h)]
%   \end{array}
%   $$
% \end{example}
% \fi %}}}

\begin{remark}
  One may note that, while for all other operators, the translation
  produces well-formed multiset rewriting rules (as long as the
  process is well-formed itself), this is not the case for the
  translation of the \code{lookup} operator, \ie, it violates the
  well-formedness condition from Definition~\ref{def:msr-system}.
  Tamarin's constraint solving algorithm requires all rules, with the
  exception of \textsc{Fresh}, to be well-formed.  We show however
  that, under these specific conditions, the solution procedure is
  still correct. See \iffullversion Appendix~\ref{app:solution} for
  the proof.  \else Appendix~C in the full version for the proof.  \fi
\end{remark}

\subsection{Definition of the translation of trace formulas}

We can now define the translation for formulas. 

\begin{definition}
  \label{def:transformula}
  Given a well-formed trace formula $\varphi$ we define 
  \begin{equation*}
  {\sem \varphi}_\forall := \Ass \Rightarrow \varphi \qquad \mbox{
    and } \qquad {\sem \varphi}_\exists := \Ass \wedge \varphi 
  \end{equation*}
  where $\Ass$ is defined in Figure~\ref{fig:def-ass}.
\end{definition}

\begin{figure*}
  $\Ass := \AssInit \wedge \AssEq \wedge \AssNotEq \wedge \AssSetIn \wedge
  \AssSetNotIn \wedge \AssLock  \wedge \AssIn$ and
  \begin{alignat*}{3}
      \AssInit :=&
      \forall i,   j.\ && \mbox{Init}()@i \wedge \mbox{Init}()
      @ j \Longrightarrow i=j \displaybreak[0]\\
      \AssEq :=& \forall x, y, i.\ && \mbox{Eq}(x,y) @ i \Longrightarrow
      x \approx y  \displaybreak[0]\\
      \AssNotEq :=& \forall x, y, i.\ &&\mbox{NotEq}(x,y) @ i \Longrightarrow
      \neg (x \approx y) \displaybreak[0]\\
      \AssSetIn :=& \forall x,y,t_3.\ &&\mbox{IsIn}(x,y)@t_3 \Longrightarrow
      \exists t_2 . \ \mbox{Insert}(x,y)@t_2 \wedge  t_2\lessdot t_3
      \displaybreak[0]\\
      &&& \phantom{\mbox{IsIn}(x,y)@t_3 \Longrightarrow
        \exists t_2 . \ }\wedge  \forall t_1,y .\ \mbox{Insert}(x,y)@t_1 \Longrightarrow (t_1\lessdot t_2
      \vee t_1 \doteq t_2 \vee t_3 \lessdot t_1 )\\
    &&& \phantom{\mbox{IsIn}(x,y)@t_3 \Longrightarrow
      \exists t_2 . \ } \wedge  \forall t_1.\phantom{,y} \ \mbox{Delete}(x)@t_1 \Longrightarrow (t_1\lessdot t_2
    \vee t_3 \lessdot t_1 ) \\
    \AssSetNotIn :=&\forall x,y,t_3.\ &&\mbox{IsNotSet}(x)@t_3 
    \Longrightarrow 
    (\forall t_1,y.\  \mbox{Insert}(x,y)@t_1 \Longrightarrow t_3
    \lessdot t_1  ) \vee\\
    &&& \phantom{\mbox{IsNotSet}(x)@t_3 
    \Longrightarrow\; } (\exists t_1 .\ \mbox{Delete}(x)@t_1 \wedge t_1 \lessdot t_3\\
    &&& \phantom{\mbox{IsNotSet}(x)@t_3 
    \Longrightarrow  (\exists t_1 .\ }\wedge 
    \forall t_2,y .\ (\mbox{Insert}(x,y)@t_2 \wedge
        t_2\lessdot t_3) \Longrightarrow t_2 \lessdot t_1)\displaybreak[0]\\
        \AssLock :=& \forall x,l,l',i,j .\ &&\mathrm{Lock}(l,x)@i
    \wedge \mathrm{Lock}(l',x)@j \wedge i\lessdot j\\
    &&& \qquad \qquad 
    \Longrightarrow 
    \exists k.\ \mathrm{Unlock}(l,x)@k \wedge i\lessdot k \wedge
    k\lessdot j \\
    &&& \phantom{ \qquad \qquad 
    \Longrightarrow 
    \exists k.\ } \wedge 
    (\forall l', m .\ \mathrm{Lock}(l',x)@m \Longrightarrow \neg(i\lessdot m
    \wedge m\lessdot k)) \\
    &&& \phantom{ \qquad \qquad 
    \Longrightarrow 
    \exists k.\ } \wedge 
    (\forall l', m .\ \mathrm{Unlock}(l',x)@m \Longrightarrow
    \neg(i\lessdot m \wedge m \lessdot k)) \displaybreak[0]\\
    \AssIn :=& \forall t,i .\ &&\ChannelInEvent(t)@i
    \Longrightarrow  \exists j.\ \mathrm{K}(t)@j 
    \wedge  (\forall k .\ \Event()@k \Longrightarrow (k\lessdot j
    \vee i\lessdot k))\\
    &&&\phantom{\ChannelInEvent(t)@i
      \Longrightarrow  \exists j.\ }
    \wedge (\forall k, t'.\ \mathrm{K}(t')@k \Longrightarrow (k\lessdot j
    \vee i\lessdot k \vee k\approx j)) 
  \end{alignat*}
  \caption{Definition of $\Ass$.}
  \label{fig:def-ass}
\end{figure*}

The formula $\alpha$ uses the actions of the generated rules to filter
out executions that we wish to discard:
\begin{itemize}

\item $\AssInit$ ensures that the init rule is only fired once.

\item $\AssEq$ and $\AssNotEq$ ensure that we only consider traces where all (dis)equalities hold.

\item $\AssSetIn$ and $\AssSetNotIn$ ensure that a successful
  lookup was preceded by an insert that was neither revoked nor
  overwritten while an unsuccessful lookup was either never
  inserted, or deleted and never re-inserted.

\item $\AssLock$ checks that between each two matching locks there
  must be an unlock. Furthermore, between the first of these locks and
  the corresponding unlock, there is neither a lock nor an unlock.

\item $\AssIn$ ensures that whenever an instance of \textsc{MDIn} is
  required to generate an \In-fact, it is generated as late as
  possible, \ie, there is no visible event between the action $K(t)$
  produced by \textsc{MDIn}, and a rule that requires $\In(t)$.
\end{itemize}
We also note that $\Tr \vDash^\forall {\sem \varphi}_\forall$ iff  $\Tr
\not\vDash^\exists {\sem {\neg \varphi}}_\exists$.

The axioms in the translation of the formula are designed to work hand
in hand with the translation of the process into rules. They express
the correctness of traces with respect to our calculus' semantics, but
are also meant to guide tamarin's constraint solving algorithm.
$\AssSetIn$ and $\AssSetNotIn$ illustrate what kind of axioms work
well: when a node with the action \textsf{IsIn} is created, by
definition of the translation, this corresponds to a \apip{lookup}
command. The existential translates into a graph constraint that
postulates an insert node for the value fetched by the lookup, and
three formulas assuring that
\begin{inparaenum}[\itshape a\upshape)]
\item this insert node appears before the lookup,
\item is uniquely defined, \ie, it is the last
  insert to the corresponding key, and
\item there is no delete in between.
\end{inparaenum}
Due to these conditions, \AssSetNotIn only adds one \textsf{Insert}
node per \textsf{IsIn} node -- the case where an axiom postulates a
node, which itself allows for postulating yet another node needs to be
avoided, as tamarin runs into loops otherwise.  Similarly, a na\"{i}ve
way of implementing locks using an axiom would postulate that every
lock is preceeded by an unlock and no lock or unlock in between,
unless it is the first lock. This again would cause tamarin to loop,
because an unlock is typically preceeded by yet another lock.
  % , because an unlock node
  % postulated in the first case is typically preceeded by a lock --
  % to
  % which the axiom applies again.  
The axiom \AssLock avoids this caveat because it only applies to pairs
of locks carrying the same annotations.
  % Furthermore, possible pairings are restricted to those
  % carrying the same annotations.

\iffullversion\else The interaction between the \AssLock axiom and
tamarin's constraint solving algorithm is described in more detail in
the full version. \fi \iffullversion We will outline how \AssLock is
applied during the constraint solving procedure:
\begin{enumerate}
\item If there are two locks for the same term and with possibly
  different annotations, an unlock for the first of those locks is
  postulated, more precisely, an unlock with the same term, the same
  annotation and no lock or unlock for the same term in-between.  The
  axiom itself contains only one case, so the only case distinction
  that takes place is over which rule produces the matching
  $\mathsf{Unlock}$-action. However, due to the annotation, all but
  one are refuted immediately in the next step. Note further that
  \AssLock postulates only a single node, namely the node with the
  action $\mathsf{Unlock}$.

\item Due to the annotation, the fact $\state_p(\ldots)$ contains the
  fresh name that instantiates the annotation variable. Let
  $a:\freshsort$ be this fresh name. Every fact $\state_{p'}(\ldots)$
  for some position $p'$ that is a prefix of $p$ and a suffix of the
  position of the corresponding lock contains this fresh name.
  Furthermore, every rule instantiation that is an ancestor of a node
  in the dependency graph corresponds to the execution of a command
  that is an ancestor in the process tree. Therefore, the backward
  search eventually reaches the matching lock, including the
  annotation, which is determined to be $a$, and hence appears in the
  \Fr-premise.

\item Because of the \Fr-premise, any existing subgraph that already
  contains the first of the two original locks would be merged with
  the subgraph resulting from the backwards search that we described
  in the previous step, as otherwise $\Fr(a)$ would be added at two
  different points in the execution.

\item The result is a sequence of nodes from the first lock to the
  corresponding unlock, and graph constraints restricting the second
  lock to not take place between the first lock and the unlock. We
  note that the axiom \AssLock is only instantiated once per pair of
  locks, since it requires that $i\lessdot j$, thereby fixing their
  order.

\end{enumerate}

In summary, the annotation helps distinguishing which unlock is
expected between to locks, vastly improving the speed of the backward
search.  This optimisation, however, required us to put restrictions
on the locks.  \fi

\subsection{Correctness of the translation}

The correctness of our translation is stated by the following theorem.

\begin{theorem}
  \label{thm:main}
  Given a well-formed ground process $P$ and a well-formed trace
  formula $\varphi$ we have that 
  $$
  \tracespi(P) \vDash^\star \varphi \mbox{ iff }
  \tracesmsr(\sem{P}) \vDash^\star \sem{\varphi}_\star
  $$
  where $\star$ is either $\forall$ or $\exists$.
\end{theorem}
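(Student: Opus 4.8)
The plan is to reduce the biconditional to a single \emph{trace-correspondence} lemma and then discharge the theorem by unfolding the definition of $\sem{\cdot}_\star$. Since $\sem{\varphi}_\forall = \Ass \Rightarrow \varphi$ and $\sem{\varphi}_\exists = \Ass \wedge \varphi$, validity of $\sem{\varphi}_\forall$ over $\tracesmsr(\sem P)$ is equivalent to validity of $\varphi$ over the subset of msr traces that satisfy $\Ass$, and dually for satisfiability; the remark $\Tr\vDash^\forall\sem\varphi_\forall$ iff $\Tr\not\vDash^\exists\sem{\neg\varphi}_\exists$ lets me treat the two cases uniformly. Hence it suffices to prove that, after projecting away all reserved actions, the $\Ass$-satisfying traces of $\sem P$ coincide (modulo $=_E$) with $\tracespi(P)$. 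I would state this as
\[
\hide\bigl(\{\, \tr \in \tracesmsr(\sem P) \mid (\tr,\emptyset)\vDash \Ass \,\}\bigr) =_E \tracespi(P),
\]
where $\hide$ erases the reserved action symbols (Init, Insert, Delete, IsIn, IsNotSet, Lock, Unlock, $\ProtoNonce$, Eq, NotEq, Event, InEvent) while keeping the genuine event-labels and the $\K$-labels. The well-formedness of $\varphi$ guarantees it mentions no reserved fact, so $\hide$ does not affect its truth value and the reduction is sound.

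The heart of the argument is this correspondence lemma, which I would prove by a step-wise simulation in both directions. I would define a relation between process configurations $(\Names,\StoreA,\StoreB,\Processes,\Subst,\ActiveLocks)$ and ground msr states $S$ that holds whenever: \emph{(i)} each ground subprocess of $\Processes$ at position $p$, with its bound variables instantiated by $\tilde x$, is witnessed by a fact $\state_p(\tilde x)$ (or $\semistate_p(\tilde x)$) in $S$, and conversely; \emph{(ii)} the multiset store $\StoreB$ equals the non-reserved ground facts of $S$; \emph{(iii)} the frame $\Subst$ is mirrored by the $\Out$- and $\K$-facts of $S$; and \emph{(iv)} the functional store $\StoreA$ and the lock set $\ActiveLocks$ are exactly those \emph{implied} by the recorded history of Insert/Delete and Lock/Unlock actions under $\AssSetIn$, $\AssSetNotIn$ and $\AssLock$. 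Completeness (process $\Rightarrow$ msr) and soundness (msr satisfying $\Ass$ $\Rightarrow$ process) then follow by induction on the length of the execution, checking that this relation is preserved by every transition of Figure~\ref{fig:operationalsemantics} on one side and every translated rule of Figure~\ref{fig:transprocess} on the other.

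Two auxiliary results feed the induction. First, a \emph{deduction lemma} relating $\nu\Names.\Subst \vdash t$ of Definition~\ref{def:apip-ded} to derivability of $\K(t)$ by the \MD{} rules from the corresponding $\Out$-facts; this matches the $K(M)$-labelled transitions and the side conditions $\nu\Names.\Subst\vdash M$ in the out/in rules against the msr side, with $\AssIn$ supplying exactly the constraint that forces adversary derivations to be scheduled late enough not to introduce spurious orderings. Second, I would show that $\AssSetIn$/$\AssSetNotIn$ precisely characterise functional-store semantics (a lookup succeeds iff the last non-deleted insert for the key exists, and fails otherwise) and that $\AssLock$ characterises mutual exclusion; both are pure trace-logic arguments about the ordering of actions, matching the informal reading given after Figure~\ref{fig:def-ass}.

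The main obstacle is precisely that the functional store and the lock set, explicit components of a configuration, are \emph{not} represented by state facts in the translation but only implicitly, through the interplay of the recorded actions with the axioms $\Ass$. Establishing part \emph{(iv)} of the invariant is therefore non-local: the ``current'' value of $\StoreA(M)$ at an msr state is fixed by the entire trace prefix together with the ordering constraints, not by the state itself, so the simulation must carry the action history along rather than read off a single configuration. A second, more technical, complication is that internal synchronous communication is compiled into several rules ($\Msg$, $\Ack$ and the intermediate $\semistate_p$ fact), so one communication step of the process corresponds to a block of msr steps; I would handle this by arguing that a $\semistate_p$ fact can only be consumed by its matching $\Ack$-rule, so these intermediate states become invisible after $\hide$ and are absorbed by the $\stackrel{\emptyset}{\to}{}^*$-padding in the definition of traces. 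Finally, I expect the alignment of fresh-name renaming (the variables $n_a$ and the $\ProtoNonce$ annotation) with the freshness condition of Definition~\ref{def:execution} and the $\nu$-transitions to be routine given the well-formedness hypotheses.
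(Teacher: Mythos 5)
Your proposal follows essentially the same route as the paper: the reduction via the two observations about $\Ass$ and reserved facts corresponds to Propositions~\ref{prop:filter} and~\ref{prop:hide}, your trace-correspondence claim is exactly Lemma~\ref{lem:trace-equivalence}, and your bidirectional simulation with an invariant tracking state facts, the multiset store, the frame, and the history-determined functional store and lock set matches the paper's Lemmas~\ref{lem:inclusion-apip-in-msr} and~\ref{lem:inclusion-msr-in-apip} (with the deduction lemma being Lemma~\ref{lem:ded}, and your treatment of the $\Msg$/$\Ack$/$\semistate$ blocks corresponding to the normalisation step of Lemma~\ref{lem:msr-normalisation}). The approach and its key technical obstacles are correctly identified.
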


We here give an overview of the main propositions and lemmas needed
to  prove Theorem~\ref{thm:main}. % Detailed proofs are given in
% Appendix.
%
To show the result we need two additional definitions. We first define
an operation that allows to restrict a set of traces to those that
satisfy the trace formula $\alpha$ as defined in
Definition~\ref{def:transformula}.
\begin{definition} \label{def:filter}
  Let $\alpha$ be the trace formula as defined in
  Definition~\ref{def:transformula} and $\Tr$ a set of traces. We
  define
  $$
  \filter(\mathit{Tr}) := \{ \tr\in\mathit{Tr} \mid \forall \theta.(\tr,\theta) \vDash
  \alpha \}
  $$
\end{definition}

The following proposition states that if a set of traces satisfies the
translated formula then the filtered traces satisfy the original
formula.
\begin{restatable}{proposition}{filterproposition}
  \label{prop:filter}
    Let $\Tr$ be a set of traces and $\varphi$ a trace formula.
    We have that 
    $$ \mathit{Tr} \vDash^\star \sem{\varphi}_\star \mbox{ iff } \filter(\mathit{Tr})
    \vDash^\star \varphi $$
    where $\star$ is either $\forall$ or $\exists$.
\end{restatable}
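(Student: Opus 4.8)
The plan is to reduce both statements to definition-chasing, after making one structural observation about $\Ass$. The key remark is that $\Ass$ (which is exactly the formula $\alpha$ used in $\filter$) is a \emph{closed} formula: inspecting Figure~\ref{fig:def-ass}, each conjunct $\AssInit, \AssEq, \AssNotEq, \AssSetIn, \AssSetNotIn, \AssLock, \AssIn$ opens with a universal quantifier binding all of its free variables, so $\Ass$ has no free variables at all. By the satisfaction relation the truth of a closed formula is independent of the valuation; hence for every trace $\tr$ and all valuations $\theta,\theta'$ we have $(\tr,\theta)\vDash\Ass$ iff $(\tr,\theta')\vDash\Ass$, and I will write $\tr\vDash\Ass$ for this valuation-independent fact. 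This immediately rewrites the filter as $\filter(\Tr)=\{\,\tr\in\Tr\mid\tr\vDash\Ass\,\}$, the quantification over $\theta$ in Definition~\ref{def:filter} being vacuous.

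For the universal case ($\star=\forall$) I would unfold $\sem{\varphi}_\forall=\Ass\Rightarrow\varphi$ together with the definition of $\vDash^\forall$: the left-hand side says that for every $\tr\in\Tr$ and every $\theta$, $(\tr,\theta)\vDash\Ass$ implies $(\tr,\theta)\vDash\varphi$. Since by closedness the antecedent does not mention $\theta$, I can pull $\forall\theta$ inside the implication, using the elementary equivalence $\forall\theta\,(A\Rightarrow B(\theta))\equiv(A\Rightarrow\forall\theta\,B(\theta))$ valid when $A$ is $\theta$-free. This yields the equivalent statement ``for every $\tr\in\Tr$, if $\tr\vDash\Ass$ then $(\tr,\theta)\vDash\varphi$ for all $\theta$'', which, substituting the rewritten filter, is precisely $\filter(\Tr)\vDash^\forall\varphi$.

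For the existential case ($\star=\exists$) I would unfold $\sem{\varphi}_\exists=\Ass\wedge\varphi$: the left-hand side asserts the existence of $\tr\in\Tr$ and $\theta$ with $(\tr,\theta)\vDash\Ass$ and $(\tr,\theta)\vDash\varphi$. Replacing $(\tr,\theta)\vDash\Ass$ by $\tr\vDash\Ass$ via closedness, this is exactly the existence of $\tr\in\filter(\Tr)$ and $\theta$ with $(\tr,\theta)\vDash\varphi$, i.e.\ $\filter(\Tr)\vDash^\exists\varphi$. Alternatively, one case can be obtained from the other through the dualities $\Tr\vDash^\forall\psi$ iff $\Tr\not\vDash^\exists\neg\psi$ together with the logical equivalence $\sem{\neg\varphi}_\exists=\Ass\wedge\neg\varphi\equiv\neg(\Ass\Rightarrow\varphi)=\neg\sem{\varphi}_\forall$, but carrying out both directly is cleaner. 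There is no genuinely hard step; the only point requiring care is the quantifier manipulation in the universal case, which is legitimate precisely because $\Ass$ is closed. I would therefore state the closedness of $\Ass$ explicitly at the outset and treat it as the crux, since without it the proposition need not hold for these definitions.
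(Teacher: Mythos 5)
Your proof is correct and follows essentially the same route as the paper's: both are direct definition-unfolding arguments, with the paper chaining set-level equivalences via $\filter(\Tr)\subseteq\Tr$ and $\filter(\Tr)\vDash^\forall\alpha$, and handling $\exists$ by duality. Your explicit observation that $\Ass$ is closed (so its satisfaction is valuation-independent) is a point the paper leaves implicit but in fact relies on — e.g.\ in the step asserting $(\Tr\setminus\filter(\Tr))\not\vDash^\forall\alpha$ entails $\Tr\vDash^\forall\neg\alpha\lor(\alpha\land\varphi)$ — so making it the stated crux is a sound and slightly more careful rendering of the same argument.
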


The proof (detailed in \iffullversion Appendix\else the full
version\fi) follows directly from the definitions.  Next we define the
\emph{hiding} operation which removes all reserved facts from a trace.
\begin{definition}[\hide]
  Given a trace $\tr$ and a set of facts $F$ we inductively define
  $\hide([]) = []$ and
  \[ 
  \hide(F \cdot \mathit{tr}) :=
  \begin{cases}
    \hide (\tr) & \mbox{if } F \subseteq \calF_\mathit{res}\\
     (F \setminus \calF_\mathit{res}) \cdot \hide(\tr) &  \mbox{otherwise}
  \end{cases}
  \]
  Given a set of traces $\Tr$ we define $\hide(\Tr) = \{ \hide(t) \mid t
  \in \Tr\}$.
\end{definition}
As expected well-formed formulas that do not contain reserved facts
evaluate the same whether reserved facts are hidden or not.

\begin{restatable}{proposition}{hideproposition}
  \label{prop:hide}
  Let $\Tr$ be a set of traces and $\varphi$ a well-formed trace formula.
  We have that 
  $$\Tr \vDash^\star \varphi \mbox{ iff }\hide(\Tr) \vDash^\star
  \varphi$$
  where $\star$ is either $\forall$ or $\exists$.
\end{restatable}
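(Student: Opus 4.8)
The plan is to reduce the statement to a correspondence at the level of individual traces and then lift it to sets. The key exploit is that, since $\varphi$ is well-formed, it contains no reserved facts, so every action atom $F@i$ occurring in $\varphi$ has $F \notin \ReservedFacts$. I first observe that it suffices to treat $\star = \exists$. Negation introduces no reserved facts, so $\neg\varphi$ is again well-formed, and combining the remark $\Tr \vDash^\forall \psi$ iff $\Tr \not\vDash^\exists \neg\psi$ with the $\exists$-case applied to $\neg\varphi$ yields the $\forall$-case for $\varphi$.

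The heart of the argument is a per-trace claim: for every trace $\tr$ there is a bijection on valuations $\theta \mapsto \theta'$ with $(\tr,\theta)\vDash\varphi$ iff $(\hide(\tr),\theta')\vDash\varphi$. To build it, let $n=|\tr|$ and let $K \subseteq \{1,\dots,n\}$ be the set of positions of $\tr$ carrying at least one non-reserved fact. By the definition of $\hide$ these are exactly the positions that survive, so $\hide(\tr)$ has $m := |K|$ positions and there is a unique order-preserving bijection $r : \mathit{idx}(\hide(\tr)) \to K$ with $\hide(\tr)_{j} = \tr_{r(j)} \setminus \ReservedFacts$. I then fix an order-automorphism $\rho$ of the timepoint domain $\mathcal{Q}$ sending the $k$-th smallest element of $K$ to $k$ for $k=1,\dots,m$; such a $\rho$ exists because $\mathcal{Q}$ is a dense total order without least or greatest element, so each gap between consecutive elements of $K$ and the two unbounded ends can be mapped order-isomorphically onto the corresponding gaps and ends around $\{1,\dots,m\}$. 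Being a bijection with $\rho^{-1}(\{1,\dots,m\}) = K$, the map $\rho$ sends every position of $\tr$ not in $K$, and every non-position in $\mathcal{Q}$, to a value that is not a position of $\hide(\tr)$. Finally I let $\theta'$ agree with $\theta$ on the sorts $\msgsort,\freshsort,\pubsort$ and put $\theta'(i) = \rho(\theta(i))$ on temporal variables; this is a bijection on valuations, with inverse built from $\rho^{-1}$.

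It then remains to check that $\theta \mapsto \theta'$ preserves satisfaction of each atom, after which the full equivalence follows by a routine structural induction (the cases $\neg$ and $\wedge$ are immediate, and $\exists x\!:\!s$ uses that $\theta\mapsto\theta'$ is a bijection so that witnesses correspond, via $\rho$ for temporal $s$ and via the identity otherwise). Term equalities $t_1 \approx t_2$ are preserved because $\theta$ and $\theta'$ agree off the temporal sort; orderings $i \lessdot j$ and equalities $i \doteq j$ are preserved because $\rho$ is strictly monotone and injective. For $F@i$ with $F \notin \ReservedFacts$, note that $F\theta = F\theta'$ and that reserved-ness depends only on the fact symbol and is hence $=_E$-invariant: if $(\tr,\theta)\vDash F@i$ then $\theta(i)$ is a position of $\tr$ containing the non-reserved fact $F\theta$, so $\theta(i)\in K$, whence $\rho(\theta(i))$ is a position of $\hide(\tr)$ and $F\theta \in_E \tr_{\theta(i)} \setminus \ReservedFacts = \hide(\tr)_{\rho(\theta(i))}$, because stripping reserved facts cannot remove the $=_E$-witness of $F\theta$; the converse is symmetric via $\rho^{-1}$. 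Lifting to sets, $\Tr \vDash^\exists \varphi$ holds iff some $(\tr,\theta)$ with $\tr\in\Tr$ satisfies $\varphi$, iff some $(\hide(\tr),\theta')$ does, iff $\hide(\Tr) \vDash^\exists \varphi$, using $\hide(\Tr) = \{\hide(\tr)\mid \tr\in\Tr\}$. I expect the main obstacle to be exactly the timepoint reindexing forced by deleting the purely reserved positions: engineering $\rho$ so that surviving positions land on their new indices while all other timepoints are pushed off the index set is what turns the action-atom case into an exact equivalence rather than a one-sided implication.
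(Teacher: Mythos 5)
Your proof is correct, but it takes a different route from the paper's. The paper also reduces $\forall$ to $\exists$ via negation and works trace by trace, but it then proceeds by a nested induction on the pair (length of $\tr$, size of $\varphi$), peeling off the last entry of the trace and asserting in each case the existence of a suitable $\theta'$; the reindexing of timepoints caused by dropping purely reserved positions is handled only implicitly inside that induction. You instead build, once per trace, an explicit order-automorphism $\rho$ of the timepoint domain carrying the surviving positions $K$ onto $\mathit{idx}(\hide(\tr))$ (using that $\mathcal{Q}$ is a dense order without endpoints), obtain a \emph{bijection} $\theta \mapsto \theta'$ on valuations, and then run a single structural induction on $\varphi$. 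What your version buys is uniformity: the same $\theta'$ works for every subformula, so the $\wedge$ and $\exists$ cases are immediate and you get a genuine per-valuation ``iff'' rather than a $\forall\theta\exists\theta'$ statement that must be threaded through both directions separately — this sidesteps the point where the paper's sketch is weakest. What it costs is the (mild) reliance on order-theoretic properties of $\mathcal{Q}$ to construct $\rho$, which the paper's length induction does not need. Your observation that $=_E$ preserves fact symbols, so that reserved-ness of the $\in_E$-witness is determined by $F$ itself, is exactly the point that makes the $F@i$ case go through, and you state it explicitly where the paper leaves it tacit.
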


We can now state our main lemma which is relating the set of traces of
a process $P$ and the set of traces of its translation into multiset
rewrite rules (proven in the full version).
\begin{restatable}{lemma}{traceequivalence}
  \label{lem:trace-equivalence}
  Let $P$ be a well-formed ground process. We have that
  \[\tracespi(P) = \hide (\filter (\tracesmsr(\sem{P}))).\]
\end{restatable}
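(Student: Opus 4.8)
The plan is to prove the stated set equality by two inclusions, each obtained from a simulation argument between the two transition systems. First I would introduce a correspondence relation $\approx$ between process configurations $(\Names,\StoreA,\StoreB,\Processes,\Subst,\ActiveLocks)$ reachable from the initial configuration $(\emptyset,\emptyset,\emptyset,\{P\},\emptyset,\emptyset)$ and pairs $(S,\tr)$, where $S$ is an msr state reachable in $\execmsr(\sem{P})$ and $\tr$ is the trace produced so far, restricted to those executions whose trace satisfies $\alpha$. The relation encodes the six components of a configuration as follows: the multiset $\Processes$ is matched by the multiset of $\state_p(\tilde x)$ facts in $S$ (one per active subprocess sitting at position $p$ of $\overline P$, with $\tilde x$ recording its bound variables), with transient $\semistate_p$ facts accounting for in-flight internal communication; the low-level store $\StoreB$ is matched by the facts of $S$ after removal of all reserved facts $\ReservedFacts$; the functional store $\StoreA$ is reconstructed from the $\mathrm{Insert}$/$\mathrm{Delete}$ actions recorded in $\tr$; the locks $\ActiveLocks$ are reconstructed from the $\mathrm{Lock}$/$\mathrm{Unlock}$ actions of $\tr$; and the frame $\nu\Names.\Subst$ corresponds to the $\Out$-facts of $S$ through the rules \MD.

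As an auxiliary step I would establish a deduction lemma: for any reachable $(S,\tr)\approx(\Names,\StoreA,\StoreB,\Processes,\Subst,\ActiveLocks)$ and any ground term $t$, we have $\nu\Names.\Subst\vdash t$ if and only if $\K(t)$ is derivable from $S$ using \MD. The ``if'' direction goes by induction on the \MD-derivation, matching \textsc{MDOut}, \textsc{MDPub}, \textsc{MDFresh} and \textsc{MDAppl} against \textsc{Dname}, \textsc{DFrame} and \textsc{DAppl}; the ``only if'' direction goes by induction on the deduction proof tree of \autoref{fig:deduc}, using \textsc{DEq} together with the fact that msr rules fire modulo $E$. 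This lemma is what makes the adversary-mediated $\K(M)$-labelled transitions of the process semantics line up with the first rules in the translations of \apip{out} and \apip{in}, modulo the normalisation enforced by the $\AssIn$ axiom.

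With the relation and the deduction lemma in place, both inclusions follow by induction on the length of an execution. For $\tracespi(P)\subseteq\hide(\filter(\tracesmsr(\sem{P})))$ I would show that every process transition $\cdot\stackrel{F}{\Longrightarrow}\cdot$ is mimicked by a sequence of msr transitions that preserves $\approx$, produces only reserved actions apart from the genuine event $F$, and yields a trace still satisfying $\alpha$; this is a case analysis over the rules of \autoref{fig:operationalsemantics}, where for each construct the corresponding block from \autoref{fig:transprocess} is fired. Conversely, for $\hide(\filter(\tracesmsr(\sem{P})))\subseteq\tracespi(P)$ I would take a filtered msr execution, group its rule firings by the $\state_p$-fact they consume, and show that each group corresponds to exactly one process step, with the intermediate $\semistate$, $\Msg$ and $\Ack$ firings collapsing into a single synchronous communication. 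Propositions~\ref{prop:filter} and~\ref{prop:hide} are then used to pass between filtered/unfiltered and hidden/unhidden traces.

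The main obstacle is the functional store, since \apip{insert}, \apip{delete} and \apip{lookup} leave no state fact in $S$ and are captured entirely through the axioms $\AssSetIn$ and $\AssSetNotIn$. The crux is to prove, as an invariant of $\approx$, that for a filtered trace $\StoreA(M)=_E V$ holds exactly when $\tr$ contains an $\mathrm{Insert}(M,V)$ action that is the last insert to key $M$ and is not followed by a $\mathrm{Delete}(M)$, and symmetrically that $\StoreA$ is undefined at $M$ precisely under the condition described by $\AssSetNotIn$. A second delicate point is locking: here I would rely on the well-formedness assumption giving an injective matching of each \apip{unlock} to a preceding \apip{lock}, together with the unique annotation $l$, and argue that $\AssLock$ discards exactly those interleavings forbidden by the side condition $M\not\in_E\ActiveLocks$ of the process semantics. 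Finally, the internal-communication rules need care, as the $\semistate$/$\Msg$/$\Ack$ mechanism introduces intermediate msr states with no process counterpart; showing that these are always completed within a filtered execution, so that they never appear as stable configurations, is the key technical observation behind the backward inclusion.
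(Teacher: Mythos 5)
Your proposal is correct and follows essentially the same route as the paper: a deduction lemma relating $\nu\Names.\Subst\vdash t$ to derivability of $\K(t)$ via \MD, a simulation invariant tying $\Processes$ to $\state_p$-facts, $\StoreB$ to $S\msetminus\ReservedFacts$, $\StoreA$ and $\ActiveLocks$ to the $\mathrm{Insert}/\mathrm{Delete}$ and $\mathrm{Lock}/\mathrm{Unlock}$ actions of the trace, and the frame to the $\Out$-facts, followed by induction over execution length in each direction. The only structural difference is that the paper factors your ``grouping'' of the $\semistate$/$\Msg$/$\Ack$ firings (and of the \MD-only steps) into an explicit normalisation lemma on msr executions before proving the backward inclusion, rather than handling it inline.
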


% This lemma is the main part of the proof.
% %NEWSTEVE
% The proof of this lemma in the full version proceeds by first showing each message
% that can be deduced in our calculus can also be derived using the
% rules in \textsc{MD} and \textsc{Fresh} (and vice versa), whenever a
% frame (in our calculus) corresponds to the set of $\K$ facts (in
% a msr state).  Next we prove the inclusion of $\tracespi(R)$ in
% $\hide(\filter(\tracesmsr(\sem{P})))$, by showing an invariant that
% implies the inclusion of traces generated using the semantics of our
% calculus in the fragment of multiset rewriting traces of the
% translation of process that fulfills the axioms. Said invariant
% relates a given process configuration to a multiset of facts derived
% from the translated rules.  To simplify the proof for the opposite
% inclusion, we first define a normal form of multiset rewriting
% executions with respect to our translation.  A ``normal'' execution,
% \eg, removes semi-states as soon as possible, keeps the rules
% corresponding to an internal communication together and produces
% \In-facts as late as possible.  We can show that we can bring every
% execution into this form, which allows us to only reason about
% ``normalized'' executions in the proof of the opposite inclusion.

Our main theorem can now be proven by applying
Lemma~\ref{lem:trace-equivalence}, \autoref{prop:hide} and
\autoref{prop:filter}.
% \begin{proof}[Proof of Theorem~\ref{thm:main}]
% \begin{align*}
%     \tracespi& (P)  \vDash^\star \varphi \\
%      \Leftrightarrow &
%     \hide ( \filter (\tracesmsr(\sem{P}))) \vDash^\star \varphi
%      \tag{by Lemma~\ref{lem:trace-equivalence}}\\
%      \Leftrightarrow &
%     \filter (\tracesmsr(\sem{P})) \vDash^\star \varphi
%      \tag{by Proposition~\ref{prop:hide}}\\
%      \Leftrightarrow &
%     \tracesmsr(\sem{P})\vDash^\star \sem{\varphi}_\star
%      \tag{by Proposition~\ref{prop:filter}}
% \end{align*}
% \end{proof}
\begin{proof}[Proof of Theorem~\ref{thm:main}]
$$
\begin{array}{r@{~}l}
  & \tracespi (P)  \vDash^\star \varphi \\
  \Leftrightarrow &
     \hide ( \filter (\tracesmsr(\sem{P}))) \vDash^\star \varphi
     \hfill \mbox{\qquad by Lemma~\ref{lem:trace-equivalence}}\\
     \Leftrightarrow &
    \filter (\tracesmsr(\sem{P})) \vDash^\star \varphi
     \hfill  \mbox{by Proposition~\ref{prop:hide}}\\
     \Leftrightarrow &
    \tracesmsr(\sem{P})\vDash^\star \sem{\varphi}_\star
     \hfill  \mbox{by Proposition~\ref{prop:filter}}
\end{array}
$$
\end{proof}

% Given the above propositions and lemma we can now easily proof our
% main theorem.
% \begin{proof}[Theorem~\ref{thm:main}]
% \begin{align*}
%     \tracespi& (P)  \vDash^\star \varphi \\
%      \Leftrightarrow &
%     \hide ( \filter (\tracesmsr(\sem{P}))) \vDash^\star \varphi
%      \tag{by Lemma~\ref{lem:trace-equivalence}}\\
%      \Leftrightarrow &
%     \filter (\tracesmsr(\sem{P})) \vDash^\star \varphi
%      \tag{by Proposition~\ref{prop:hide}}\\
%      \Leftrightarrow &
%     \tracesmsr(\sem{P})\vDash^\star \sem{\varphi}_\star
%      \tag{by Proposition~\ref{prop:filter}}\\
% \end{align*}
% \end{proof}

\lstset{morekeywords=[2]{YubiPress,Login,Smaller}}

\section{Case studies}
\label{sec:casestudies}

In this section we briefly overview some case studies we performed.
These case studies include a simple security API similar to
PKCS\#11~\cite{PKCS11}, the Yubikey security token, the optimistic
contract signing protocol by Garay, Jakobsson and MacKenzie
(GJM)~\cite{GJM99} and a few other examples discussed in Arapinis et
al.~\cite{ARR-csf11} and M\"odersheim~\cite{Modersheim-ccs10}.
%\iffullversion %
The results are summarized in \autoref{fig:case-studies}. % \else %
%The results are summarized in the following table.  \fi%
For each case study we provide the number of typing lemmas that were
needed by the tamarin prover and whether manual guidance of the tool
was required. In case no manual guidance is required we also give
execution times. We do not detail all the formal models of the
protocols and properties that we studied, and sometimes present
slightly simplified versions. All files of our prototype
implementation and our case studies are available at
\url{\downloadlink}.  % \iffullversion\else \medskip

\begin{figure}[ht] % (fold)
\centering
\iffullversion
\begin{tabular}{c | c c }
  Example                                                      &
  {Typing Lemmas} & {Automated Run$^{*}$}\\
  \hline
  Security API \`a la PKCS\#11                                 & 1             & yes ($51s$) \\
  % Needham-Schroeder-Lowe (w/ AS, w/ tagging)                 & not working   & - \\
  Yubikey Protocol~\cite{KS-stm12,yubikey}                     & 3             & no\\
  GJM protocol~\cite{ARR-csf11,GJM99}         & 0             & yes ($36s$)\\
  {Mödersheim's example (locks/inserts)~\cite{Modersheim-ccs10}} & 0             & no$^{**}$ \\
  {Mödersheim's example (embedded msr rules)~\cite{Modersheim-ccs10}} & 0             & yes ($1s$) \\
  Security Device~\cite{ARR-csf11}                     & 1             &
  yes ($21s$)\\
  Needham-Schroeder-Lowe~\cite{Lowe1996}                       & 1
  & yes ($5s$) \\
\end{tabular}
\begin{flushright}
  \footnotesize $^{*}$~(Running times on Intel Core2 Duo 2.66Ghz with
  4GB RAM)
  \\
  \footnotesize $^{**}$~(little interaction: 7 manual rule selections)
\end{flushright}

\else

\newcommand{\specialcell}[2][c]{%
  \begin{tabular}[#1]{@{}c@{}}#2\end{tabular}}
\begin{tabular}{c | c c }
  Example                                                      & \specialcell{Typing\\ Lemmas} & \specialcell{Automated\\Run$^{*}$}\\
  \hline
  Security API \`a la PKCS\#11                                 & 1             & yes ($51s$) \\
  % Needham-Schroeder-Lowe (w/ AS, w/ tagging)                 & not working   & - \\
  Yubikey Protocol~\cite{KS-stm12,yubikey}                     & 3             & no\\
  GJM protocol~\cite{ARR-csf11,GJM99}         & 0             & yes ($36s$)\\
  \specialcell{Mödersheim's example\\ (locks/inserts)~\cite{Modersheim-ccs10}} & 0             & no$^{**}$ \\
  \specialcell{Mödersheim's example\\ (embedded
    msr rules)~\cite{Modersheim-ccs10}} & 0             & yes ($1s$) \\
  Security Device~\cite{ARR-csf11}                     & 1             &
  yes ($21s$)\\
  Needham-Schroeder-Lowe~\cite{Lowe1996}                       & 1
  & yes ($5s$) \\
\end{tabular}
\begin{flushright}
  \footnotesize $^{*}$~(Running times on Intel Core2 Duo 2.66Ghz with
  4GB RAM)\\
  \footnotesize $^{**}$~(little interaction: 7 manual rule selections)
\end{flushright}

\fi
\caption{Case studies.}
\label{fig:case-studies}
% figure case-studies (end)
\end{figure}

% \begin{centering}
% \newcommand{\specialcell}[2][c]{%
%   \begin{tabular}[#1]{@{}c@{}}#2\end{tabular}}
% \begin{tabular}{c | c c }
%     Example                                                      & \specialcell{Typing\\ Lemmas} & \specialcell{Automated\\Run$^{**}$}\\
% \hline
% Security API \`a la PKCS\#11                                 & 1             & yes ($51s$) \\
% % Needham-Schroeder-Lowe (w/ AS, w/ tagging)                 & not working   & - \\
% Yubikey Protocol~\cite{KS-stm12,yubikey}                     & 3             & no\\
% GJM protocol~\cite{ARR-csf11,GJM99}         & 0             & yes ($36s$)\\
% \specialcell{Mödersheim's example\\ (locks/inserts)~\cite{Modersheim-ccs10}} & 0             & no$^*$ \\
% \specialcell{Mödersheim's example\\ (embedded
%   MSRs)~\cite{Modersheim-ccs10}} & 0             & yes ($1s$) \\
% Security Device~\cite{ARR-csf11}                     & 1             &
% yes ($21s$)\\
% Needham-Schroeder-Lowe~\cite{Lowe1996}                       & 1
% & yes ($5s$) \\
% \end{tabular}
%   \begin{flushright}
%       \footnotesize $^*$~(little interaction: 7 manual rule selections)\\
%       \footnotesize $^{**}$~(Running times on Intel Core2 Duo 2.66Ghz
%   with 4GB RAM)
      
%   \end{flushright}
% \end{centering}
% \fi

% \iffullversion %
% All files of our prototype implementation and our case studies are
% available at \url{\downloadlink}.
% \else %
% Because of space constraints we do not detail all the
% formal models of the protocols and properties that we studied. All
% files of our prototype implementation and our case studies are
% available at \url{\downloadlink}.
% \fi

\subsection{Security API  \`a la PKCS\#11}

This example illustrates how our modelling might be useful for the
analysis of Security APIs in the style of the PKCS\#11
standard~\cite{PKCS11}.  We expect studying a complete model of
PKCS\#11, such as in~\cite{DKS-jcs09}, to be a straightforward
extension of this example.  In addition to the processes presented in the running
example in \autoref{sec:calculus} the actual case study models the
following two operations:
\begin{inparaenum}[\it (i)]
\item \emph{encryption:} given a handle and a plain-text, the user can
  request an encryption under the key the handle points to.
\item \emph{unwrap} given a ciphertext $\mathit{senc}(k_2,k_1)$, and a
  handle $h_1$, the user can request the ciphertext to be
  \emph{unwrapped}, i.e.  decrypted, under the key pointed to by
  $h_1$. If decryption is successful the result is stored on the
  device, and a handle pointing to $k_2$ is returned.
\end{inparaenum}
Moreover, contrary to the running example, at creation time keys are
assigned the attribute `init', from which they can move to either
`wrap', or `unwrap', see the following snippet:
\begin{lstlisting}[numbers=left,numbersep=3pt,numberstyle=\footnotesize]
in(<`set_dec',h>); lock <`att',h>;
	 lookup <`att',h> as a in
		if a=`init'	then 
			insert <`att',h>,`dec'; unlock <`att',h>
\end{lstlisting}
Note that, in contrast to the running example, it is necessary to
encapsulate the state changes between lock and unlock. Otherwise an
adversary can stop the execution after line 3, set the attribute to
`wrap' in a concurrent process and produce a wrapping.
After resuming operation at line 4, he can set the key's attribute to
`dec', even though the attribute is set to `wrap'. Hence, the
attacker is allowed to decrypt the wrapping he has produced and can 
obtain the key. Such subtleties can produce attacks that our modeling allows to
detect. If locking is handled correctly, we show secrecy of keys
produced on the device, proving the property introduced in
\autoref{ex:property}. If locks are removed the attack described
before is found.

% \subsection{Needham-Schoeder-Lowe}

% We can show secrecy for a session-key established between two honest parties
% running the Needham-Schroeder-Lowe protocol \ref{missing-ref}. The
% modelling does not require tags on the messages.

\subsection{Yubikey}
The Yubikey~\cite{yubikey} is a small hardware device designed to
authenticate a user against network-based services.  Manufactured by
Yubico, a Swedish company, the Yubikey itself is a low cost (\$25),
thumb-sized USB device. In its typical configuration, it
generates one-time passwords based on encryptions of a secret value,
a running counter and some random values using a unique AES-128 key
contained in the device. The Yubikey authentication server accepts
a one-time password only if it decrypts under the correct AES key to
a valid secret value containing a counter larger than the last counter
accepted. The counter is thus used as a means to prevent replay
attacks.  To date, over a million Yubikeys have been shipped to more
than 30,000 customers including governments, universities and
enterprises, e.g.  Google, Microsoft, Agfa and
Symantec~\cite{Yubicos-custome}.

Besides the counter values used in the one-time password, the Yubikey
stores three additional pieces of information: the public id
$\mathit{pid}$ that is used to identify the Yubikey, a secret id
$\mathit{secretid}$ that is transmitted as part of the one-time
password and only known to the server and the Yubikey, as well as the
AES key $k$, which is also shared with the server. The following
process $P_\mathit{Yubikey}$ models a single Yubikey, as well as its
initial configuration, where an entry in the server's database for the
public id $\mathit{pid}$ is created. This entry contains a tuple
consisting of the Yubikey's secret id, AES key, and an initial counter
value.
\begin{lstlisting}
$P_\mathit{Yubikey}=$
 new k; new pid; new secretid;
 insert <`Server',pid>, <secretid, k, `zero'>;
 insert <`Yubikey', pid>,`zero'+`one';
 out(pid);
 $!P_\mathit{Plugin}$ | $!P_\mathit{ButtonPress}$
\end{lstlisting}
Here, the processes $!P_\mathit{Plugin}$ and $!P_\mathit{ButtonPress}$
model the Yubikey being unplugged and plugged in again (possibly on a
different computer), and the emission of the one-time password.  We
will only discuss $P_\mathit{ButtonPress}$ here.  When the user
presses the button on the Yubikey, the device outputs a
one-time password consisting of a counter $tc$, the secret id
$\mathit{secretid}$ and additional randomness $\mathit{npr}$ encrypted
using the AES key $k$.
\begin{lstlisting}
$P_\mathit{ButtonPress}=$
 lock pid;
   lookup <`Yubikey',pid> as tc in
     insert <`Yubikey',pid>, tc + `one';
     new nonce; new npr;
     event YubiPress(pid,secretid,k,tc);
     out(<pid,nonce,senc(<secretid,tc,npr>,k)>);
 unlock pid
\end{lstlisting}
The one-time password $senc(\langle secretid,tc,npr \rangle,k)$ can
be used to authenticate against a server that shares the same secret
key, which we model in the process $P_\mathit{Server}$. The process
receives the encrypted one-time password along with the public id
$\mathit{pid}$ of a Yubikey and a $\mathit{nonce}$ that is part of the
protocol, but is irrelevant for the authentication of the Yubikey on
the server.

The server looks up the secret id and the AES key associated to the
public id, \ie, to the Yubikey sending the request, as well as the
last recorded counter value $\mathit{otc}$. If the key and secret id
used in the request match the values retrieved from the database, then
the event $\mathrm{Smaller}(otc,tc)$ is logged along with the event
$\mathrm{Login}(\mathit{pid},k,\mathit{tc})$, which marks a successful
login of the Yubikey $\mathit{pid}$ with key $k$ for the counter value
$\mathit{tc}$. Afterwards, the old tuple $\langle
\mathit{secretid},k,\mathit{otc} \rangle$ is replaced by $\langle
\mathit{secretid},k,\mathit{tc} \rangle$, to update the latest counter
value received.
\begin{lstlisting}
$P_\mathit{Server}=$
$!$ in(<pid,nonce,senc(<secretid,tc,npr>,k)>);
 lock pid;
 lookup <`Server',pid> as tuple in
   if fst(tuple)=secretid then
     if fst(snd(tuple))=k then
       event Smaller(snd(snd(tuple)), tc)
       event Login(pid,k,tc);
       insert <`Server',pid>, <secretid,k,tc>;
 unlock pid
\end{lstlisting}

Note that, in our modelling, the server keeps one lock per public id,
which means that it is possible to have several active instances of
the server thread in parallel as long as all requests concern 
different Yubikeys.

An important part of the modelling of the protocol is to determine
whether one counter value is smaller than another. To this end, our
modelling employs a feature added to the development version of
tamarin as of October 2012, a union operator $\cup^\#$ for multisets
of message terms. The operator is denoted with a plus sign (``$+$'').
% in tamarin; to avoid confusion we use the same notation in the process
% calculus.
We model the counter as a multiset only consisting of the symbols
``one'' and ``zero''. The multiplicity of `one' in the multiset is the value of the
counter. A counter value is considered smaller than another one, if
the first multiset is included in the second. A test $a < b$ is
included by adding the event $\mathrm{Smaller}(a,b)$ and an axiom that
requires that $a$ is a subset of $b$:
\begin{align*}
    \alpha_\mathit{Smaller} := & \forall i:\tempsort, a, b:\msgsort.~\mathrm{Smaller}(a,b)@i \\
                               & \quad \Rightarrow \exists z:\msgsort.\  a+z=b
\end{align*}
We incorporate this axiom into the security properties just like in
Definition~\ref{def:transformula}. Intuitively, we are only interested
in traces where $a$ is indeed smaller than $b$.%$\mathrm{Smaller}$ has the correct semantics.

The process we analyse models a single authentication server (that may
run arbitrary many threads) and an arbitrary number of Yubikeys, \ie,
$P_\mathit{Server} \mid\, !P_\mathit{Yubikey}$.  Among other
properties, we show by the means of an injective correspondence
property that an attacker that controls the network cannot perform
replay attacks, and that each successful login was preceded by a user
``pressing the button'', formally:
\begin{align*}
\forall \, & \mathit{pid}, k, x, t_{2}. \mathrm{Login}(\mathit{pid,k,x})@t_{2} \Rightarrow  \\
 & \quad \exists \mathit{sid}, t_{1}  .
\mathrm{YubiPress}(\mathit{pid,sid,k,x})@t_{1} \wedge 
t_{1}\lessdot t_{2}  \\
       & \qquad   \wedge \forall \mathit{%otp_2,
 t_{3} }.
\mathrm{Login}(\mathit{pid,k,x})@t_{3} \Rightarrow t_{3}=t_{2}
\end{align*}
Besides injective correspondence, we show the absence of replay
attacks and the property that a successful login invalidates
previously emitted one-time passwords. All three properties follow
more or less directly from a stronger invariant, which itself can be
proven in 295 steps. To find theses steps, tamarin needs some
%TODO verify this number.
additional human guidance, which can be provided using the interactive
mode.  This mode still allows the user to complement his manual
efforts with automated backward search. The example files contain the
modelling in our calculus, the complete proof, and the manual part of
the proof which can be verified by tamarin without interaction.

%\iffullversion
Our analysis makes three simplifications: First, in
$P_\mathit{Server}$, we use pattern matching instead of decryption as
demonstrated in the process $P_\mathit{dec}$ we introduced in
\autoref{sec:calculus}. Second, we omit the CRC checksum and the
time-stamp that are part of the one-time password in the actual
protocol, since they do not add to the security of the protocol in the
symbolic setting. Third, the Yubikey has actually two counters instead
of one, a session counter, and a token counter.  We treat the session
and token counter on the Yubikey as a single value, which we justify
by the fact that the Yubikey either increases the session counter and
resets the token counter, or increases only the token counter, thereby
implementing a complete lexicographical order on the pair
$(\mathit{session\ counter}, \mathit{token\ counter})$.
%\fi

A similar analysis has already been performed by Künnemann and Steel,
using tamarin's multiset rewriting calculus~\cite{KS-stm12}.  However,
the model in our new calculus is more fine-grained and we believe more
readable. Security-relevant operations like locking and tests on state
are written out in detail, resulting in a model that is closer to the
real-life operation of such a device. The modeling of the Yubikey
takes approximately 38 lines in our calculus, which translates to 49
multiset rewrite rules. The model of~\cite{KS-stm12} contains only
four rules, but they are quite complicated, resulting in 23 lines
of code.  More importantly, the gap between their model and the actual
Yubikey protocol is larger -- in our calculus, it becomes clear that
the server can treat multiple authentication requests in parallel, as
long as they do not claim to stem from the same Yubikey.
%NEWSTEVE
An implementation on the basis of the model from K\"{u}nnemann and
Steel would need to implement a global lock accessible to the
authentication server and all Yubikeys.  This is however unrealistic,
since the Yubikeys may be used at different places around the world,
making it unlikely that there exist means of direct communication
between them.  While a server-side global lock might be conceivable
(albeit impractical for performance reasons), a real global lock could
not be implemented for the Yubikey as deployed.

\subsection{Further Case Studies}

We also investigated the case study presented by
Mödersheim~\cite{Modersheim-ccs10}, a key-server example. We encoded
two models of this example, one using the insert construct, the other
manipulating state using the embedded multiset rewrite rules. For
this example the second model turned out to be more natural and more
convenient allowing for a direct automated proof without any
additional typing lemma.

We furthermore modeled the contract signing protocol by Garay et
al.~\cite{GJM99} and a simple security device which both served as
examples in~\cite{ARR-csf11}. In the contract signing protocol a
trusted party needs to maintain a database with the current status of
all contracts (aborted, resolved, or no decision has been taken).  In
our calculus the status information is naturally modelled using our
insert and lookup constructs. The use of locks is indispensable to
avoid the status to be changed between a lookup and an insert.
Arapinis et al.~\cite{ARR-csf11} showed the crucial property that the
same contract can never be both aborted and resolved. However, due to
the fact that StatVerif only allows for a finite number of memory
cells, they have shown this property for a single contract and provide
a manual proof to lift the result to an unbounded number of
contracts. We directly prove this property for an unbounded number of
contracts.
%
% : the device is initialized once to left or
% right. Later on it accepts pairs of encryptions and decrypts either
% the left component of the pair or the right component, but not
% both. As the input language of StatVerif is very similar to ours their
% model could be easily adapted to our tool.
%
Finally we also illustrate the tool's ability to analyze classical
security protocols, by analyzing the Needham Schroeder Lowe
protocol~\cite{Lowe1996}.

\section{Conclusion}

We present a process calculus which extends the applied pi calculus
with constructs for accessing a global, shared memory together with an
encoding of this calculus in labelled msr rules which enables
automated verification using the tamarin prover as a backend. Our
prototype verification tool, automating this translation, has been
successfully used to analyze several case studies. As future work we
plan to increase the degree of automation of the tool by automatically
generating helping lemmas. To achieve this goal we can exploit the
fact that we generate the msr rules, and hence control their form.  We
also plan to use the tool for more complex case studies including a
complete model of PKCS\#11 and a study of the TPM 2.0 standard,
currently in public review. Finally, we wish to investigate how our
constructs for manipulating state can be used to encode loops, needed
to model stream protocols such as TESLA.

% We presented a translation procedure that allows for analyzing a wide
% range of security properties for processes in a high-level
% specification language.  
% Future work will concentrate on developing a methodology for
% deriving the typing lemmas tamarin needs to obtain a proof
% automatically, thus permitting more automation and more expressive
% models.

% We presented a translation procedure that allows for analyzing a wide
% range of security properties for processes in a high-level
% specification language.  For future work, we plan to extend the case
% studies by modeling for instance a larger fragment of PKCS\#11 as to
% be able to possibly identify new secure configurations, embedding the
% fragments of the TPM analysed in previous
% work~\cite{Delaune2011A-Formal-Analys,DKRS-csf11} into one model while
% at the same time strengthening their model,\eg, allowing for an
% unlimited number of reboots.  In the long term, we aim to integrate
% our process calculus into tamarin, as to provide a user experience
% similiar to that of ProVerif. The process calculus provides some
% additional structure that would, for instance, allow tamarin to
% distinguish different principals thus visualise attacks better.
% Finally, our work will concentrate on developing a methodology for
% deriving the typing lemmas tamarin needs to obtain a proof
% automatically, thus permitting more automation and more expressive
% models.

\paragraph{Acknowledgements} The research leading to these results has
received funding from the European Research Council under the European
Union's Seventh Framework Programme (FP7/2007-2013) / ERC grant
agreement no 258865, project ProSecure and was supported by CASED
(\href{http://www.cased.de}{http://www.cased.de}).

%\bibliography{../recherche/references}

\iffullversion
\bibliographystyle{abbrv}
\else
\bibliographystyle{IEEEtran}
\fi
\bibliography{references}

\begin{thebibliography}{10}

\bibitem{AbadiCortierTCS06}
M.~Abadi and V.~Cortier.
\newblock Deciding knowledge in security protocols under equational theories.
\newblock {\em Theoretical Computer Science}, 387(1-2):2--32, 2006.

\bibitem{AF-popl01}
M.~Abadi and C.~Fournet.
\newblock Mobile values, new names, and secure communication.
\newblock In {\em Proc.\ 28th ACM Symp. on Principles of Programming Languages
  (POPL'01)}, pages 104--115. ACM Press, 2001.

\bibitem{ARR-csf11}
M.~Arapinis, E.~Ritter, and M.~Ryan.
\newblock Statverif: Verification of stateful processes.
\newblock In {\em Proc.\ 24th {IEEE} Computer Security Foundations Symposium
  (CSF'11)}, pages 33--47. {IEEE} Press, 2011.

\bibitem{armando05AVISPA}
A.~Armando, D.~A. Basin, Y.~Boichut, Y.~Chevalier, L.~Compagna, J.~Cu{\'e}llar,
  P.~H. Drielsma, P.-C. H{\'e}am, O.~Kouchnarenko, J.~Mantovani,
  S.~M{\"o}dersheim, D.~von Oheimb, M.~Rusinowitch, J.~Santiago, M.~Turuani,
  L.~Vigan{\`o}, and L.~Vigneron.
\newblock The {AVISPA} tool for the automated validation of internet security
  protocols and applications.
\newblock In {\em Proc.\ 17th International Conference on Computer Aided
  Verification (CAV'05)}, LNCS, pages 281--285. Springer, 2005.

\bibitem{BreakingGoogle-FMSE2008}
A.~Armando, R.~Carbone, L.~Compagna, J.~Cuellar, and L.~T. Abad.
\newblock Formal analysis of saml 2.0 web browser single sign-on: Breaking the
  saml-based single sign-on for google apps.
\newblock In {\em Proc.\ 6th ACM Workshop on Formal Methods in Security
  Engineering (FMSE'08)}, pages 1--10, 2008.

\bibitem{BCLM-jcs05}
S.~Bistarelli, I.~Cervesato, G.~Lenzini, and F.~Martinelli.
\newblock Relating multiset rewriting and process algebras for security
  protocol analysis.
\newblock {\em Journal of Computer Security}, 13(1):3--47, 2005.

\bibitem{blanchetcsfw01}
B.~Blanchet.
\newblock An {E}fficient {C}ryptographic {P}rotocol {V}erifier {B}ased on
  {P}rolog {R}ules.
\newblock In {\em Proc.\ 14th Computer Security Foundations Workshop
  (CSFW'01)}, pages 82--96. {IEEE} Press, 2001.

\bibitem{proverif}
B.~Blanchet, B.~Smyth, and V.~Cheval.
\newblock {\em ProVerif 1.88: Automatic Cryptographic Protocol Verifier, User
  Manual and Tutorial}, 2013.

\bibitem{bond01api}
M.~Bond and R.~Anderson.
\newblock {API} level attacks on embedded systems.
\newblock {\em IEEE Computer Magazine}, pages 67--75, October 2001.

\bibitem{BCFS-ccs10}
M.~Bortolozzo, M.~Centenaro, R.~Focardi, and G.~Steel.
\newblock Attacking and fixing {PKCS}\#11 security tokens.
\newblock In {\em {P}roc.\ 17th {ACM} {C}onference on {C}omputer and
  {C}ommunications {S}ecurity ({CCS}'10)}, pages 260--269. ACM Press, 2010.

\bibitem{IBMCCAGuide}
{\em {CCA} {B}asic {S}ervices {R}eference and {G}uide}, Oct. 2006.
\newblock Available online.

\bibitem{DKRS-csf11}
S.~Delaune, S.~Kremer, M.~D. Ryan, and G.~Steel.
\newblock Formal analysis of protocols based on {TPM} state registers.
\newblock In {\em {P}roc.\ 24th {IEEE} {C}omputer {S}ecurity {F}oundations
  {S}ymposium ({CSF}'11)}, pages 66--82. {IEEE} Press, 2011.

\bibitem{DKS-jcs09}
S.~Delaune, S.~Kremer, and G.~Steel.
\newblock Formal analysis of {PKCS\#11} and proprietary extensions.
\newblock {\em Journal of Computer Security}, 18(6):1211--1245, Nov. 2010.

\bibitem{MaudeNPA09}
S.~Escobar, C.~Meadows, and J.~Meseguer.
\newblock Maude-npa: Cryptographic protocol analysis modulo equational
  properties.
\newblock In {\em Foundations of Security Analysis and Design V}, volume 5705
  of {\em LNCS}, pages 1--50. Springer, 2009.

\bibitem{Froschle2010Reasoning-with-}
S.~B. Fr{\"o}schle and N.~Sommer.
\newblock Reasoning with past to prove {PKCS}\#11 keys secure.
\newblock In {\em Proc.\ 7th International Workshop on Formal Aspects in
  Security and Trust (FAST'10)}, volume 6561 of {\em LNCS}, pages 96--110,
  2010.

\bibitem{GJM99}
J.~A. Garay, M.~Jakobsson, and P.~D. MacKenzie.
\newblock Abuse-free optimistic contract signing.
\newblock In {\em Advances in Cryptology---{Crypto'99}}, volume 1666 of {\em
  LNCS}, pages 449--466. Springer, 1999.

\bibitem{Guttman-jar12}
J.~D. Guttman.
\newblock State and progress in strand spaces: Proving fair exchange.
\newblock {\em J. Autom. Reasoning}, 48(2):159--195, 2012.

\bibitem{herzog06applying}
J.~Herzog.
\newblock Applying protocol analysis to security device interfaces.
\newblock {\em IEEE Security \& Privacy Magazine}, 4(4):84--87, July-Aug 2006.

\bibitem{KS-stm12}
R.~K{\"u}nnemann and G.~Steel.
\newblock {Y}ubi{S}ecure? {F}ormal security analysis results for the {Y}ubikey
  and {Y}ubi{HSM}.
\newblock In {\em {P}roc. 8th {W}orkshop on {S}ecurity and {T}rust {M}anagement
  ({STM}'12)}, volume 7783 of {\em LNCS}, pages 257--272, 2012.

\bibitem{longley92automatic}
D.~Longley and S.~Rigby.
\newblock An automatic search for security flaws in key management schemes.
\newblock {\em Computers and Security}, 11(1):75--89, March 1992.

\bibitem{Lowe1996}
G.~Lowe.
\newblock Breaking and fixing the {N}eedham-{S}chroeder public-key protocol
  using {FDR}.
\newblock In {\em Proc.\ 2nd International Workshop on Tools and Algorithms for
  Construction and Analysis of Systems ({TACAS}'96)}, volume 1055 of {\em
  LNCS}, pages 147--166. Springer, 1996.

\bibitem{Modersheim-ccs10}
S.~M{\"o}dersheim.
\newblock Abstraction by set-membership: verifying security protocols and web
  services with databases.
\newblock In {\em Proc.\ 17th ACM Conference on Computer and Communications
  Security (CCS'10)}, pages 351--360. ACM, 2010.

\bibitem{PKCS11}
RSA Security Inc., v2.20.
\newblock {\em {PKCS} \#11: Cryptographic Token Interface Standard.}, June
  2004.

\bibitem{benschmi-thesis}
B.~Schmidt.
\newblock {\em Formal Analysis of Key-Exchange Protocols and Physical
  Protocols}.
\newblock PhD thesis, ETH Z\"{u}rich, November 2012.

\bibitem{SMCB-csf12}
B.~Schmidt, S.~Meier, C.~Cremers, and D.~Basin.
\newblock Automated analysis of {Diffie-Hellman} protocols and advanced
  security properties.
\newblock In {\em Proc.\ 25th {IEEE} Computer Security Foundations Symposium
  (CSF'12)}, pages 78--94. {IEEE} Press, 2012.

\bibitem{SMCB-cav13}
B.~Schmidt, S.~Meier, C.~Cremers, and D.~Basin.
\newblock The tamarin prover for the symbolic analysis of security protocols.
\newblock In {\em Proc.\ 25th International Conference on Computer Aided
  Verification (CAV'13)}, volume 8044 of {\em LNCS}, pages 696--701. Springer,
  2013.

\bibitem{TCG}
{T}rusted~{C}omputing {G}roup.
\newblock {TPM Specification version 1.2. Parts 1--3, revision 103}.
\newblock
  \url{http://www.trustedcomputinggroup.org/resources/tpm_main_specification},
  2007.

\bibitem{yubikey}
{Yubico AB}, Kungsgatan 37, 111 56 Stockholm Sweden.
\newblock {\em The {YubiKey} Manual - Usage, configuration and introduction of
  basic concepts (Version 2.2), available at:
  \url{http://www.yubico.com/documentation}}, June 2010.

\bibitem{Yubicos-custome}
{Yubico AB}.
\newblock Yubico customer list, 2013.
\newblock Accessed: Wed 17 Jul 2013 11:40:50 CEST.

\end{thebibliography}

\appendix

%\newpage

\section{Definition of the process annotation}
\label{app:defs}

\begin{definition}[Process annotation] % (fold)
\label{def:process-annotation}
  Given a ground process $P$ we define the annotated ground process $\overline P$ as follows:
 \renewcommand{\arraystretch}{1.3}
\[\arraycolsep=1.4pt
\begin{array}{rcl}
\overline 0 &:= &0 \\
\overline {P | Q} &:= &\overline P | \overline Q \\
\overline {! P} &:= & ! \overline P \\
\begin{array}{r}
  \overline {\text{if $t_1=t_2$ then $P$}}\\[-1mm]
 \overline {\text{else $Q$}}
\end{array}
 &:= &
 {\text{if $t_1=t_2$ then $\overline P$ else $\overline Q$}}\\
\begin{array}{r}
\overline{\text{lookup  $M$ as $x$}}\\[-1mm]
\overline{\text{in $P$ else $Q$}} 
\end{array}
&:= &
\begin{array}{l}
\text{lookup  $M$ as $x$}\\[-1mm]
\text{in $\overline P$ else $\overline Q$}
\end{array}\\
\overline { \alpha; P } &:= & \alpha; \overline P \\[-2mm]
\multicolumn{3}{r}{\text{where }\alpha\notin\set{\mathrm{lock}~t,\mathrm{unlock}~t
\colon t\in\Terms}}\\
\overline {\mathrm{lock}~t; P} &:=& \mathrm{lock}^l~t; \overline {
\mathit{au}(P,t,l)} \\[-1.5mm]
\multicolumn{3}{r}{\text{where $l\in\setN$ is a fresh label}}\\
\overline {\mathrm{unlock}^l~t; P} &:=& \mathrm{unlock}^l~t; \overline {P}\\
\overline {\mathrm{unlock}~t; P} &:=& \bot
\end{array}\]
%
% \begin{align*}
% \overline 0 &:= 0 \\
% \overline {P | Q} &:= \overline P | \overline Q \\
% \overline {! P} &:= ! \overline P \\
% \begin{aligned}
% \overline {\text{if $t_1=t_2$ then $P$}}\\[-1mm]
%  \overline {\text{else $Q$}}
% \end{aligned}
%  &:= 
%  {\text{if $t_1=t_2$ then $\overline P$ else $\overline Q$}}\\
% \begin{aligned}
% \overline{\text{lookup  $M$ as $x$}}\\[-1mm]
% \overline{\text{in $P$ else $Q$}} 
% \end{aligned}
% &:= 
% \text{lookup  $M$ as $x$ in $\overline P$ else $\overline Q$}\\
% \overline { \alpha; P } &:= \alpha; \overline P \\
% & \text{where }\alpha\notin\set{\mathrm{lock}~t,\mathrm{unlock}~t
% \colon t\in\Terms}\\
% \overline {\mathrm{lock}~t; P} &:= \mathrm{lock}^l~t; \overline {
% \mathit{au}(P,t,l)} \\
% \overline {\mathrm{unlock}~t; P} &:= \bot
% \end{align*}
where $\mathit{au}(P,t,l)$ annotates the first unlock that has parameter
$t$ with the label $l$, \ie:
\[\arraycolsep=1.4pt
\begin{array}{rcl}
  \mathit{au}({P | Q}, t,l)&:= & \bot \\
  \mathit{au}({! P}, t,l)&:= & \bot\\
  \renewcommand{\arraystretch}{1}
  \begin{array}{r}
    \mathit{au}({\text{if $t_1=t_2$ then}}\\
    \text{$P$ else $Q$},t,l)
  \end{array}
  &:= & 
  \renewcommand{\arraystretch}{1}
  \begin{array}{l}
    \text{if } t_1=t_2\text{ then $\mathit{au}(P,t,l)$}\\
    \text{else $\mathit{au}(Q,t,l)$}
  \end{array}
  \\
  \renewcommand{\arraystretch}{1}
  \begin{array}{r}
    \mathit{au}(\text{lookup  $M$ as $x$ }\\
    \text{in $P$ else $Q$},t,l)
  \end{array}
  &:= & 
  \renewcommand{\arraystretch}{1}
  \begin{array}{l}
    \text{lookup  $M$ as $x$ in}\\
    \text{$\mathit{au}(P,t,l)$ else
      $\mathit{au}(Q,t,l)$}
  \end{array}
  \\
  \mathit{au}( { \alpha; P }, t,l)&:= & \alpha; \mathit{au}(P,t,l) \\[-2mm]
  \multicolumn{3}{r}{\text{where }\alpha\neq {unlock}~t}\\
  \mathit{au}( {\mathrm{unlock}~t; P}, t,l)&:= & \mathrm{unlock}^l~t; P \\
  \mathit{au}(0,t,l)&:= & 0 \\
\end{array}
\]

% \begin{align*}
% \mathit{au}({P | Q}, t,l)&:= \bot \\
% \mathit{au}({! P}, t,l)&:= \bot\\
% \begin{aligned}
% \mathit{au}({\text{if $t_1=t_2$ then}}\\[-1mm]
% \text{$P$ else $Q$},t,l)
% \end{aligned}
% &:= 
% \begin{aligned}
%  \text{if }&t_1=t_2\text{ then}\\[-1mm]
% &\text{$\mathit{au}(P,t,l)$ else $\mathit{au}(Q,t,l)$}
% \end{aligned}
% \\
% \begin{aligned}
% \mathit{au}(&\text{lookup  $M$ as $x$ }\\[-1mm]
% & \text{in $P$ else $Q$},t,l)
% \end{aligned}
% &:= 
% \begin{aligned}
% 	&\text{lookup  $M$ as $x$}\\[-1mm]
% & \text{ in $\mathit{au}(P,t,l)$ else
%      $\mathit{au}(Q,t,l)$}
% \end{aligned}
% \\
% \mathit{au}( { \alpha; P }, t,l)&:= \alpha; \mathit{au}(P,t,l) 
% \quad \text{where }\alpha\neq {unlock}~t \\
% \mathit{au}( {\mathrm{unlock}~t; P}, t,l)&:= \mathrm{unlock}^l~t; P \\
% \mathit{au}(0,t,l)&:= \bot \\
% \end{align*}
% definition process_annotation (end)
\end{definition}

\iffullversion
% we need to introduce \sem{\cdot}^D for the proof of lemma lem:freshnotded
\section{Correctness of tamarin's solution procedure for translated rules}
\label{app:solution}

The multiset rewrite system produced by our translation for a well-formed
process $P$ could actually contain rewrite rules that are not valid with
respect to Definition~\ref{def:msr-system}, because they violate the third
condition, which is: for each $l'\msrewrite {a'} r' \in R \in_\ET \ginsts(l\msrewrite
    a r)$ we have that $\cap_{r'' =_\ET r'}\names(r'') \cap \FN \subseteq
    \cap_{l'' =_\ET l'}\names(l'')\cap \FN$.

This does not hold for rules in $\sem{P}_{=p}$ where $p$ is the position of
the lookup-operator. The right hand-side of this rule can be instantiated
such that, assuming the variable bound by the lookup is named $v$, this
variable $v$ is substituted by a names that does not appear on the
left-hand side. In the following, we will show that the results
from~\cite{SMCB-csf12} still hold. In practice, this means that the
tamarin-prover can be used for verification, despite the fact that it
outputs well-formedness errors for each rule that is a translation of a
lock.

We will introduce some notation first. We re-define $\sem{P}$ to contain
the \textsc{Init} rule and $\sem{\overline P, \emptyrule, \emptyrule}$, but
not \textsc{MD} (which is different to Definition~\ref{def:transprocess}).
We furthermore define a translation with dummy-facts, denoted $\sem{P}^D$,
that contains \textsc{Init} and $\sem{\overline P, \emptyrule,
\emptyrule}^D$, which is defined as follows:

\begin{definition}
\label{def:dummy-trans}
We define $\sem{P}^D:= \textsc{Init} \cup \sem{\overline P, \emptyrule,
\emptyrule}^D$, where $\sem{\overline P, \emptyrule,
\emptyrule}^D$ is defined just as $\sem{\overline P, \emptyrule,
\emptyrule}$, with the exception of two cases,
$P= \mbox{lookup}~M\mbox{ as }v \mbox{ in } P \mbox{ else } Q$ and
$P=\mbox{insert } s, t ; P$, where it is defined as follows:
\begin{align*}
  \sem{\mbox{lookup}~M\mbox{ as }v \mbox{ in } P \mbox{ else } Q, p,
\tilde x}^D & =
 \{ [\state_p(\tilde x),\dummy(v)]
  \msrewrite{\mathrm{IsIn}(M,v)}
  [\state_{p\cdot 1}(\tilde M,v)], \\
   & \phantom{ = \{ ~} [\state_p(\tilde x)]
\msrewrite{\mathrm{IsNotSet}(M)}
   [\state_{p\cdot 2}(\tilde x)] \}
\\
 & \phantom{ = \,} \cup \sem{P,p\cdot~1,(\tilde x,v)}^D\cup  \sem{Q, p\cdot 2, \tilde
x}^D\\
   \sem{\mbox{insert } s, t ; P, p, \tilde x}^D &=
 \set{ [\state_p(\tilde x)] \msrewrite{\mathrm{Insert}(s,t)}
  [\state_{p\cdot 1}(\tilde x),\dummy(t)] } \\
&\phantom{ = \,} \cup \sem{P, p\cdot 1, \tilde x}^D \\
\end{align*}
\end{definition}

The only difference between $\sem{P}$ and $\sem{P}^D$ is therefore, that
$\sem{P}^D$ produces a permanent fact $\dummy$ for every value $v$ that
appears in an action $\mathrm{insert}(k,v)$, which is
 a premise to every rule instance with an action $\mathrm{IsIn}(k',v)$.
We see that $\sem{P}^D$ contains now only valid multiset rewrite rules.

In the following, we would like to show that the tamarin-prover's solution
algorithm is correct for $\sem{P}$. To this end, we make use of the proof
of correctness of tamarin as presented in Benedikt Schmidt's Ph.D.
thesis~\cite{benschmi-thesis}. We will refer to Lemmas, Theorems and
Corollaries in this work by their numbers. We will use the notation of this
work, to make it easier to the reader to compare our statements against the
statements there. In particular,
$\overline{\mathit{trace(\mathit{execs}(R))}}$ is $\tracesmsr(R)$  in our
notation. We have to show that:

\begin{lemma}
For all well-formed process $P$ and guarded trace properties $\phi$,
\[
\mathit{trace}(\mathit{execs}(\sem{P}\cup\textsc{MD})
\vDash_{\mathcal{DH}_e} \neg \AssSetIn \vee \phi
\]
if and only if
\[
\mathit{trace}(\mathit{ndgraphs}(\sem{P})) \vDash_{ACC}\neg \AssSetIn \vee
\phi.
\]
\end{lemma}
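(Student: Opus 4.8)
The plan is to route the argument through the dummy-fact translation $\sem{P}^D$ of Definition~\ref{def:dummy-trans}. Since $\sem{P}^D$ contains only well-formed rules (as observed just above), the correctness of tamarin's solution procedure established in~\cite{benschmi-thesis} applies to it without change: for every guarded trace property $\psi$,
\[
  \mathit{trace}(\mathit{execs}(\sem{P}^D\cup\textsc{MD})) \vDash_{\mathcal{DH}_e} \psi
  \iff
  \mathit{trace}(\mathit{ndgraphs}(\sem{P}^D)) \vDash_{ACC} \psi .
\]
Taking $\psi := \neg\AssSetIn \vee \phi$, it then suffices to prove two bridge lemmas connecting $\sem{P}$ with $\sem{P}^D$, namely
\[
  \mathit{trace}(\mathit{execs}(\sem{P}\cup\textsc{MD})) \vDash_{\mathcal{DH}_e} \neg\AssSetIn \vee \phi
  \iff
  \mathit{trace}(\mathit{execs}(\sem{P}^D\cup\textsc{MD})) \vDash_{\mathcal{DH}_e} \neg\AssSetIn \vee \phi
\]
together with the same statement in which $\mathit{execs}(\cdot)\cup\textsc{MD}$ is replaced by $\mathit{ndgraphs}(\cdot)$ and $\vDash_{\mathcal{DH}_e}$ by $\vDash_{ACC}$. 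Chaining the execution bridge, Schmidt's equivalence, and the dependency-graph bridge yields exactly the claim.

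Both bridge lemmas rest on one core observation: \emph{restricted to traces satisfying $\AssSetIn$, $\sem{P}$ and $\sem{P}^D$ admit the same traces}. I would first prove this for executions. The two systems differ only in the persistent fact $\dummy$, which every \code{insert} adds and every \code{lookup} requires as a premise; being persistent it is never consumed, and carrying no action it never appears in the trace. Hence deleting all $\dummy$-facts from an execution of $\sem{P}^D$ yields an execution of $\sem{P}$ with the identical action trace, so $\mathit{trace}(\mathit{execs}(\sem{P}^D\cup\textsc{MD})) \subseteq \mathit{trace}(\mathit{execs}(\sem{P}\cup\textsc{MD}))$. For the converse, take an execution of $\sem{P}$ whose trace satisfies $\AssSetIn$. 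Each time a lookup rule fires with action $\mathrm{IsIn}(M,v)$ at a timepoint $t_3$, $\AssSetIn$ supplies an earlier $\mathrm{Insert}(M,v)$ at some $t_2 \lessdot t_3$; the corresponding insert rule of $\sem{P}^D$ produced $\dummy(v)$ at $t_2$, which is still present at $t_3$. Thus every lookup can also fire in $\sem{P}^D$, and inserting the matching $\dummy$-facts turns the execution into one of $\sem{P}^D$ with the same trace. Finally, a trace violating $\AssSetIn$ satisfies $\neg\AssSetIn\vee\phi$ vacuously, while a trace satisfying $\AssSetIn$ satisfies it iff it satisfies $\phi$; since the $\AssSetIn$-traces coincide for the two systems, validity of $\neg\AssSetIn\vee\phi$ transfers, establishing the execution bridge.

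The main obstacle is the dependency-graph bridge. I would mirror the simulation at the level of normal dependency graphs: deleting the $\dummy$-nodes together with their incident edges turns a normal dependency graph of $\sem{P}^D$ into a dependency graph of $\sem{P}$ with the same trace, and conversely a dependency graph of $\sem{P}$ whose trace satisfies $\AssSetIn$ can be augmented with $\dummy$-edges by linking each $\mathrm{IsIn}(M,v)$ node to the $\dummy(v)$ conclusion of the insert node furnished by $\AssSetIn$. The delicate point is that both operations must preserve \emph{normality}: tamarin's normal-form conditions constrain the message-deduction fragment of the graph (the construction/deconstruction and $\mathcal{DH}_e$ structure), so one has to verify that the $\dummy$-facts --- which are persistent, protocol-level facts that never occur in \textsc{MD}, are inert with respect to deduction, and whose only fresh name is the looked-up value already justified by the insert that created the fact --- neither create nor remove any normal-form redex and introduce no spurious $\Fr$-edges. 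Granting this, the deletion and augmentation maps are trace-preserving and cover both directions, so the $\AssSetIn$-satisfying normal dependency graphs of $\sem{P}$ and $\sem{P}^D$ carry the same traces and the dependency-graph bridge follows just as on the execution side. I expect this normality-preservation check, rather than the simulation argument itself, to be the technically demanding part of the proof.
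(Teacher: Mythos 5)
Your proposal is correct in outline but takes a genuinely different route from the paper's. The paper does not treat Schmidt's correctness theorem as a black box: it walks through the chain of lemmas in its proof (Lemmas~3.7, 3.10, 3.11, A.12 and A.20 of~\cite{benschmi-thesis}), observes that the violated well-formedness condition is used only in the step justified by Lemma~A.12 --- via Lemma~A.14, which asserts that every factor of a conclusion fact has a preceding $\mathsf{K}^d$-conclusion --- and patches exactly that step: for every normal dependency graph of $\sem{P}$ whose trace satisfies $\AssSetIn$ there is a trace-equivalent normal dependency graph of $\sem{P}^D$ with the same $\mathsf{K}^d$-conclusions and containing every conclusion of the original, so A.14's guarantee transfers to $\sem{P}$; graphs violating $\AssSetIn$ satisfy $\neg\AssSetIn\vee\phi$ vacuously. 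You instead apply the theorem wholesale to the well-formed system $\sem{P}^D$ and prove two trace-preserving bridges, one at the execution level (essentially the paper's Lemma~\ref{lem:sametracesfordummytrans}, proved there for a different purpose) and one at the dependency-graph level. The trade-off: your argument is modular and never opens up the internals of the constraint-solving correctness proof, but it requires a full bidirectional trace correspondence between $\mathit{ndgraphs}(\sem{P})$ and $\mathit{ndgraphs}(\sem{P}^D)$ on $\AssSetIn$-traces --- including the normality-preservation check you flag, in both the dummy-deletion and dummy-augmentation directions --- whereas the paper needs only the augmentation direction, and only to read off the $\mathsf{K}^d$-conclusions. (The paper, too, silently relies on the augmented graph being a \emph{normal} dependency graph of $\sem{P}^D$, so your explicit normality check is not wasted effort; since $\dummy$ is a persistent protocol fact that never occurs in \textsc{MD}, the check does go through.) Both arguments are sound; yours is cleaner to state but heavier at the graph level.
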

\begin{proof}
The proof proceeds similar to the proof to Theorem~3.27. We refer to
results in~\cite{benschmi-thesis}, whenever their proofs apply despite the
fact that the rules in $\sem{P}$ do not satisfy the third condition of
multiset rewrite rules.
\begin{align*}
& \mathit{trace}(\mathit{execs}(\sem{P}\cup\textsc{MD})
\vDash_{\mathcal{DH}_e} \neg \AssSetIn \vee \psi\\
\Leftrightarrow~&
\overline {\mathit{trace}(\mathit{execs}(\sem{P}\cup\textsc{MD})}
\vDash_{\mathcal{DH}_e} \neg \AssSetIn \vee \phi
\tag{Lemma~3.7 (unaltered)}\\
\Leftrightarrow~&
\overline {\mathit{trace}(\mathit{execs}(\sem{P}\cup\textsc{MD}))}
\downarrow_{\mathcal{RDH}_e}
\vDash_{\mathcal{DH}_e} \neg \AssSetIn \vee \phi
\tag{Definition of $\vDash_{\mathcal{DH}_e}$}\\
\Leftrightarrow~&
\overline
{\mathit{trace}(\mathit{dgraphs}_{\mathcal{DH}_e}(\sem{P}\cup\textsc{MD}))}
\downarrow_{\mathcal{RDH}_e}
\vDash_{\mathcal{DH}_e} \neg \AssSetIn \vee \phi
\tag{Lemma~3.10 (unaltered)}\\
\Leftrightarrow~&
\overline
{\mathit{trace}(\{\mathit{dg}\mid
\mathit{dg}\in\mathit{dgraphs}_\mathcal{ACC}(\lceil \sem{P}\cup \textsc{MD}
\rceil^{\mathcal{RDH}_e}_\mathit{insts}) }\\
&\overline{
\wedge \mathit{dg} \downarrow_{\mathcal{RDH}_e}\text{-normal}\})
}
\vDash_{\mathcal{DH}_e} \neg \AssSetIn \vee \phi
\tag{Lemma~3.11 (unaltered)}\\
\Leftrightarrow~&
\overline
{\mathit{trace}(\mathit{ndgraphs}(\sem{P}))
}
\vDash_{\mathcal{DH}_e} \neg \AssSetIn \vee \phi
\tag{Lemma~A.12 (*)}\\
\Leftrightarrow~&
\mathit{trace}(\mathit{ndgraphs}(\sem{P}))
\vDash_{\mathcal{ACC}} \neg \AssSetIn \vee \phi
\tag{Lemma~3.7 and A.20(both unaltered)}\\
\end{align*}

It is only in Lemma~A.12 where the third condition is used: The proof to
this lemma applies Lemma~A.14, which says that all factors (or their
inverses) are known to the adversary. We will quote Lemma~A.14 here:

\begin{lemma}[Lemma~A.14 in~\cite{benschmi-thesis}]
For all $\mathit{ndg}\in\mathit{ndgraphs}(P)$, conclusions $(i,u)$ in
$\mathit{ndg}$ with conclusion fact $f$ and terms
$t\in\mathit{afactors}(f)$, there is a conclusion $(j,v)$ in $\mathit{ndg}$
with $j<i$ and conclusion fact $\mathsf{K}^d(m)$ such that
$m\in_\mathit{ACC} \set{t,(t^{-1})\downarrow_{\mathcal{RBP}_e}}$.
\end{lemma}

If there is  $\mathit{ndg}\in\mathit{ndgraphs}(\sem{P})$, such that
$\mathit{trace}(\mathit{ndg})\vDash_\mathit{ACC} \AssSetIn$, then
\begin{align*}
& \mathit{trace}(\mathit{ndgraphs}(\sem{P})) \vDash_{\mathcal{ACC}} \neg
\AssSetIn \vee \phi  \\
\Leftrightarrow  & \forall \mathit{ndg}\in\mathit{ndgraphs}(\sem{P}) \text{
~s.\,t.~} \mathit{trace}(\mathit{ndg})\vDash_{\mathit{ACC}} \AssSetIn \\
&\qquad \mathit{trace}(\mathit{ndg}) \vDash_{\mathcal{ACC}} \vDash \phi
\end{align*}

Since for the empty trace, $\emptyrule \vDash_{\mathit{ACC}}\AssSetIn$, we
only have to show that Lemma~A.14 holds for
$\mathit{ndg}\in\mathit{ndgraphs}(\sem{P})$, such that
$\mathit{trace}(\mathit{ndg})\allowbreak \vDash_\mathit{ACC} \AssSetIn$.

For every $\mathit{ndg}\in\mathit{ndgraphs}(\sem{P})$, such that
$\mathit{trace}(\mathit{ndg})\allowbreak \vDash_\mathit{ACC} \AssSetIn$,
there is a trace
equivalent $\mathit{ndg}'\in\mathit{ndgraphs}(\sem{P}^D)$, since the only
difference between $\sem{P}$ and $\sem{P}^D$ lies in the dummy conclusion
and premises, and $\AssSetIn$ requires that any $v$ in an action
$\mathrm{IsIn(u,v)}$ appeared previously in an action
$\mathrm{Insert(u,v)}$ (equivalence modulo $\mathit{ACC}$). Therefore,
$\mathit{ndg'}$ has the same $\mathsf{K}^d$-conclusions $\mathit{ndg}$ has,
and every conclusion in $\mathit{ndg}$ is a conclusion in $\mathit{ndg}'$.

We have that Lemma~A.14 holds for $\sem{P}^D$, since all rules generated in
this translation are valid multiset rewrite rules. Therefore, Lemma~A.14
holds for all $\mathit{ndg}\in\mathit{ndgraphs}(\sem{P})$, such that
$\mathit{trace}(\mathit{ndg})\vDash_\mathit{ACC} \AssSetIn$, too,
concluding the proof by showing the marked (*) step.

\end{proof}

\section{Proofs of Section~\ref{sec:translation}}

%abbreviation for convenience:

\newcommand{\invariantSubst}[1]{\ensuremath{
\set{x \Subst_{#1} \mid x\in\Dom(\Subst_{#1})}^\#=\{\mathsf{Out}(t)\in \cup_{k
\leq f({#1})} S_k\}^\#
 }}
\newcommand{\invariantNonce}[1]{\ensuremath{
 \Names_{#1} =  \set{ a \mid \ProtoNonce(a)\in\bigcup_{1 \leq j \leq f(#1)}F_j}
}}
\newcommand{\invariantProc}[1]{\ensuremath{
\Processes_{#1} \leftrightarrow_P \{\state_p(\tilde x)\in^\# S_{i_{#1}}\}
}}

\filterproposition*

\begin{proof}
We first show the two directions for the case $\star = \forall$. We
start by showing that $\Tr \vDash^\forall \sem{\varphi}$ implies
$\filter(\mathit{Tr})\vDash \varphi$.
\begin{align*}
  \Tr \vDash^\forall \sem{\varphi}_\forall & \Rightarrow
  \filter(\Tr) \vDash^\forall \sem{\varphi}_\forall   \tag{since $\filter(\Tr) \subseteq
    \Tr$}\\
  & \Leftrightarrow  \filter(\Tr) \vDash^\forall \alpha \Rightarrow \varphi
   \tag{by definition of $\sem{\varphi}_\forall$} \\
  & \Leftrightarrow  \filter(\Tr) \vDash^\forall  \varphi
   \tag{since $\filter(\Tr)\vDash^\forall \alpha$}
\end{align*}

\noindent We next show that $\filter(\mathit{Tr})\vDash^\forall \varphi$ implies $ \mathit{Tr} \vDash^\forall
\sem{\varphi}_\forall$.
\begin{align*}
\filter(\Tr) \vDash^\forall \varphi &\Rightarrow
\filter(\Tr) \vDash^\forall \alpha \wedge \varphi  \tag{since $\filter(\Tr) \vDash^\forall
    \alpha$}\\
  &\Leftrightarrow \Tr \vDash^\forall \neg \alpha \lor (\alpha \land \varphi)
  \tag{since $\filter(\Tr)\subseteq \Tr$ and $(\Tr \setminus \filter(\Tr))
    \not \vDash^\forall \alpha$}\\
&\Leftrightarrow \Tr \vDash^\forall \alpha \Rightarrow \varphi \\
&\Leftrightarrow \Tr \vDash^\forall \sem{\varphi}_\forall & \tag{by
  definition of $\sem \varphi_\forall$}
\end{align*}
The case of $\star = \exists$ now easily follows:
\[
\begin{array}{rcl}
   \Tr \vDash^\exists{\sem \varphi}_\exists &
  \mbox{ iff } & \Tr \not  \vDash^\forall{\sem {\neg
      \varphi}}_\forall\\ 
&  \mbox{ iff } & \filter(\mathit{Tr})\not \vDash^\forall \neg \varphi\\
&  \mbox{ iff } &  \filter(\Tr) \vDash^\exists \varphi.
\end{array}
\]
\end{proof}

\hideproposition

\begin{proof}
  We start with the case $\star = \exists$ and show the stronger
  statement that for a trace $\tr$
  $$\forall \theta.  \exists
  \theta'. \mbox{ if } (\tr, \theta) \vDash \varphi \mbox{ then }(\hide(\tr), \theta') \vDash \varphi$$
  and
  $$\forall \theta. \exists
  \theta'. \mbox{ if } (\hide(\tr), \theta) \vDash \varphi \mbox{ then
  } (\tr, \theta') \vDash \varphi$$ We will show both statements by a
  nested induction on $|\tr|$ and the structure of the formula. (The
  underlying well-founded order is the lexicographic ordering of the
  pairs consisting of the length of the trace and the size of the
  formula)

  \smallskip

  \noindent If $|\tr|=0$ then $\tr =[]$ and $\tr = \hide(\tr)$ which allows us to
  directly conclude letting $\theta' := \theta$.

  \smallskip

  \noindent If $|\tr| = n$, we define $\overline \tr$ and $F$ such that $\tr = \overline{\tr} \cdot
  F$. By induction hypothesis we have that
  $$\forall \overline\theta. \exists
  \overline\theta'. \mbox{ if } (\overline {\tr}, \overline\theta) \vDash \varphi \mbox{ then } (\hide(\overline {\tr}), \overline\theta') \vDash \varphi$$
  and
  $$\forall \overline\theta. \exists
  \overline\theta'. \mbox{ if } (\hide(\overline {\tr}),
  \overline\theta) \vDash \varphi \mbox{ then } (\overline {\tr},
  \overline\theta') \vDash \varphi$$
  We proceed by structural
  induction on $\varphi$.
  \begin{itemize}

  \item $\varphi = \bot$, $\varphi= i\lessdot j$, $\varphi= i\doteq j$
    or $t_1 \approx t_2$. In these cases we trivially conclude as the
    truth value of these formulas does not depend on the trace and for
    both statements we simply let $\theta':= \theta$.

  \item $\varphi = f @ i$. We start with the first statement. Suppose
    that $(\tr, \theta) \vDash f@i$. If $\theta(i) < n$
    then we have also that $\overline{\tr}, \theta \vDash f@i$. By
    induction hypothesis, there exists $\overline
    \theta'$ such that $(\overline{\tr}, \overline \theta') \vDash f@i$. Hence we
    also have that $ (\tr, \overline \theta') \vDash f@i$ and letting
    $\theta' := \overline \theta'$ allows us to conclude.  If $\theta(i)
    = n$ we know that $f \in tr_n$. As $\varphi$ is well-formed $f
    \not \in \calF_{res}$ and hence $f \in \hide(\tr)_{n'}$ where $n' =
    |\hide(\tr)|$.  The proof of the other statement is similar.

  \item $\varphi = \neg \varphi'$, $\varphi = \varphi_1 \wedge
    \varphi_2$, or $\varphi = \exists x:s. \varphi'$. We directly
    conclude by induction hypotheses (on the structure of $\varphi$).

  \end{itemize}
  From the above statements we easily have that $\Tr \vDash^\exists
  \varphi$ iff $\hide(\Tr) \vDash^\exists \varphi$.

  The case of $\star =\forall$ now easily follows:
  $$\Tr \vDash^\forall
  \varphi \mbox{ iff } \Tr \not \vDash^\exists
  \neg \varphi \mbox{ iff } \hide(\Tr) \not \vDash^\exists
  \neg \varphi \mbox{ iff } \hide(\Tr) \vDash^\forall \varphi$$

\end{proof}

In order to prove Lemma~\ref{lem:trace-equivalence}, we need a few additional
lemmas.  

We say that a set of traces $\Tr$ is prefix closed if for all $\tr \in
\Tr$ and for all $\tr'$ which is a prefix of $\tr$ we have that $\tr'
\in \Tr$.

\begin{lemma}[\filter is prefix-closed]\label{lem:filter-prefix}
  Let \Tr be a set of traces.  If \Tr is prefix closed then
  $\filter(\Tr)$ is prefix closed as well.
\end{lemma}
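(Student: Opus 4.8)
The plan is to reduce the statement to a prefix-robustness property of the single filtering formula \Ass of Figure~\ref{fig:def-ass}. Concretely, I would fix a prefix-closed set \Tr, take $\tr\in\filter(\Tr)$ and an arbitrary prefix $\tr'$ of $\tr$. Since \Tr is prefix-closed, $\tr'\in\Tr$ is immediate, so by the definition of \filter it only remains to show $(\tr',\theta)\vDash\Ass$ for every valuation $\theta$, given that $(\tr,\theta)\vDash\Ass$ for every $\theta$. As \Ass is a finite conjunction, it suffices to prove that each conjunct is preserved when passing from $\tr$ to any prefix $\tr'$, and I would treat the conjuncts one by one.

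The single structural fact driving every case is that a prefix is an initial segment: writing $|\tr'|=m$, we have $\mathit{idx}(\tr')=\{1,\dots,m\}\subseteq\mathit{idx}(\tr)$ and $\tr'_k=\tr_k$ for all $k\le m$. Three consequences are used throughout: (i) if an action atom $F@i$ holds in $\tr'$ then it holds in $\tr$; (ii) $\mathit{idx}(\tr')$ is downward closed, so any timepoint $\lessdot$-below a surviving timepoint also survives; and (iii) the truth of $i\lessdot j$, $i\doteq j$ and $t_1\approx t_2$ does not depend on the trace. For \AssInit, \AssEq and \AssNotEq this already closes the argument: their consequents are trace-independent by (iii), and if the antecedent fires in $\tr'$ it fires in $\tr$ by (i), so the consequent, which holds in $\tr$, transfers back verbatim.

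Next I would handle \AssSetIn, \AssSetNotIn and \AssLock together, since they share a pattern. Whenever the guard fires at a timepoint ($t_3$, respectively the later lock $j$) that lies inside $\tr'$, the same guard fires in $\tr$, and the existential witness that $\tr$ supplies is explicitly ordered strictly before that guard ($t_2\lessdot t_3$ in \AssSetIn, the delete $t_1\lessdot t_3$ in the second branch of \AssSetNotIn, the unlock $k\lessdot j$ in \AssLock). By (ii) such a witness lies within the initial segment and hence survives in $\tr'$, while the accompanying universally quantified side-conditions range over $\mathrm{Insert}$/$\mathrm{Delete}$/$\mathrm{Lock}$/$\mathrm{Unlock}$ actions, which in $\tr'$ form a sub-multiset of those in $\tr$; a clause valid over the larger collection is a fortiori valid over the subset. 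Where a purely universal disjunct occurs, as in the first branch of \AssSetNotIn, monotonicity alone suffices. Each of these three conjuncts therefore transfers to $\tr'$.

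The main obstacle is \AssIn, the only conjunct whose existential witness is not syntactically constrained to precede its guard: for each $\ChannelInEvent(t)@i$ it asserts a witness $\mathrm{K}(t)@j$ whose side-conditions merely forbid $\Event()$- and $\mathrm{K}$-actions strictly between $j$ and $i$, without demanding $j\lessdot i$. The careful step is to exhibit, for an input event surviving in $\tr'$, a $\mathrm{K}$-witness inside the initial segment. If the witness supplied by $\tr$ satisfies $j\le i$ it lies in $\tr'$ and its universal side-conditions carry over as before; the delicate situation is a degenerate witness with $j>i$, which satisfies the side-conditions only vacuously and may fall outside the prefix. To rule this out one must use that every $\ChannelInEvent(t)$ consumes an $\In(t)$-premise that was produced by an earlier instance of \textsc{MDIn} (whose action is exactly $\mathrm{K}(t)$), so a witness is always available no later than the input timepoint. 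Once this witness-placement is secured, \AssIn transfers to $\tr'$ like the other conjuncts, and collecting all conjuncts yields $(\tr',\theta)\vDash\Ass$, completing the proof.
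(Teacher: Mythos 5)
Your overall strategy is the same as the paper's: the paper's proof is a two-line sketch that reduces the claim to showing that $\forall \theta.\,(\tr,\theta)\vDash\Ass$ implies $\forall\theta.\,(\tr',\theta)\vDash\Ass$ for $\tr=\tr'\cdot a$, and then states that this ``can be shown by inspecting each of the conjuncts of $\Ass$''. You actually carry out that inspection, and your treatment of \AssInit, \AssEq, \AssNotEq, \AssSetIn, \AssSetNotIn and \AssLock is correct and complete: in each of these the existential witness is explicitly ordered before the triggering timepoint, so it survives truncation, and the universally quantified side conditions only become weaker on a prefix.

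The gap is in your \AssIn case, and it is a real one. You rightly observe that the witness $j$ for $\mathrm{K}(t)@j$ is not forced to satisfy $j\lessdot i$ and may therefore fall outside the prefix; but your repair --- that every $\ChannelInEvent(t)$ is preceded by an instance of \textsc{MDIn} whose action is $\mathrm{K}(t)$ --- appeals to the operational semantics of $\sem{P}$, whereas the lemma is stated for an \emph{arbitrary} prefix-closed set of traces $\Tr$ and gives you no such structure to lean on. Taken literally the statement even admits a counterexample: the prefix-closed set consisting of $[\,]$, $[\set{\ChannelInEvent(t)}]$ and $[\set{\ChannelInEvent(t)},\set{\mathrm{K}(t)}]$ places the last trace in $\filter(\Tr)$ (the degenerate witness $j=2>i=1$ satisfies both side conditions of \AssIn vacuously) but not its one-step prefix, which contains no $\mathrm{K}$-action at all. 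So either the lemma must be read as implicitly restricted to traces of $\execmsr(\sem P)$ --- in which case your \textsc{MDIn} observation is the missing ingredient, though you would still need to argue that the earlier $\mathrm{K}(t)$-occurrence can be chosen so that the two side conditions of \AssIn hold \emph{inside the prefix} --- or the statement needs an extra hypothesis. To be fair, the paper's own proof glosses over exactly this point; you have correctly isolated the only conjunct for which the inspection does not go through on purely logical grounds.
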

\begin{proof}
  It is sufficient to show that for any trace $\tr = \tr' \cdot
  a$ we have that if $\forall \theta.\: (\tr,\theta) \vDash \alpha$ then
  $\forall \theta.\: (\tr',\theta) \vDash \alpha$.
  This can be shown by inspecting each of the conjuncts of
  $\alpha$. 
\end{proof}

We next show that the translation with dummy facts defined in
\autoref{def:dummy-trans} produces the same traces as $\sem{P}$,
excluding traces not consistent with the axioms. For this we define
the function $d$ which removes any dummy fact from an execution, i.e.,
$$d(\emptyset
\stackrel{F_1}{\longrightarrow} S_1 \stackrel{F_2}{\longrightarrow}
\ldots \stackrel{F_{n}}{\longrightarrow} S_n) = \emptyset
\stackrel{F_1}{\longrightarrow} S_1' \stackrel{F_2}{\longrightarrow}
\ldots \stackrel{F_{n}}{\longrightarrow} S_n'$$ where
$S_i'=S_i\msetminus \cup_{t\in\Terms}\, \dummy(t)$.

\begin{lemma}\label{lem:sametracesfordummytrans}
  Given a ground process $P$, we have that
  \[ \filter(\execmsr(\sem{P}))=\filter(d(\execmsr(\sem{P}^D\cup \MD)))\]
\end{lemma}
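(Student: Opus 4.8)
The plan is to prove the set equality by the two inclusions, exploiting the observation that $\sem{P}$ and $\sem{P}^D$ share \textsc{Init}, all of \MD, and every rule of the process translation \emph{except} the two for \apip{insert} and \apip{lookup}: the \apip{insert} rule of $\sem{P}^D$ additionally deposits the persistent fact $\dummy(t)$, and the \apip{lookup} rule of $\sem{P}^D$ additionally reads such a fact $\dummy(v)$ in its premise. Since $\dummy$ is persistent, reading it never consumes it, and $d$ simply erases all $\dummy$-facts from every global state while leaving the action labels untouched. This is the structural fact I would rely on throughout.

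For the inclusion $\filter(d(\execmsr(\sem{P}^D\cup\MD))) \subseteq \filter(\execmsr(\sem{P}))$ I would first establish the stronger unfiltered statement $d(\execmsr(\sem{P}^D\cup\MD)) \subseteq \execmsr(\sem{P})$; the filtered inclusion then follows by monotonicity of $\filter$, using that $d$ preserves all labels (so the two operations commute). Given $e'\in\execmsr(\sem{P}^D\cup\MD)$, I would argue transition by transition that $d(e')$ is a legal $\sem{P}$-execution: each step instantiates a rule of $\sem{P}^D\cup\MD$, and the matching $\sem{P}$-rule is obtained by deleting its $\dummy$ premise/conclusion. Because $\dummy$-facts are persistent, they are never in $\lfacts$ of any rule, so erasing them from every state removes no linear premise that a later step consumes and invalidates no guard that $\sem{P}$ still requires ($\sem{P}$'s rules have strictly fewer premises). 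Hence every transition remains valid, and freshness carries over verbatim since $d$ leaves $\Fr$-facts alone.

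The substantial inclusion is $\filter(\execmsr(\sem{P})) \subseteq \filter(d(\execmsr(\sem{P}^D\cup\MD)))$. For $e\in\filter(\execmsr(\sem{P}))$ I would build, by induction on the length of $e$, an execution $e'\in\execmsr(\sem{P}^D\cup\MD)$ with $d(e')=e$, maintaining the invariant that the $i$-th state of $e'$ equals the $i$-th state of $e$ together with exactly $\{\dummy(t)\mid \mathrm{Insert}(\cdot,t)\text{ occurs among the first }i\text{ actions of }e\}$. Every non-lookup step of $e$ is replayed by the identical rule instance (for \apip{insert} this additionally deposits the persistent $\dummy(t)$, extending the invariant). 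For a lookup step with action $\mathrm{IsIn}(x,v)$ the $\sem{P}^D$-rule demands $\dummy(v)$ in its premise, and this is exactly where filtering enters: since the trace of $e$ satisfies $\Ass$, hence $\AssSetIn$, there is an earlier action $\mathrm{Insert}(x,v)$, so by the invariant $\dummy(v)$ is present and the dummy-variant fires, producing the same $\state$-fact and the same label. Then $d(e')=e$, whose trace satisfies $\Ass$, so $e\in\filter(d(\execmsr(\sem{P}^D\cup\MD)))$.

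The main obstacle is precisely this last direction. In $\sem{P}$ a \apip{lookup} may fire with \emph{no} preceding \apip{insert} (its rule carries no guard), whereas in $\sem{P}^D$ it cannot; these spurious lookups are exactly the executions that $\filter$ discards through $\AssSetIn$. Thus the equality genuinely fails without filtering, and the crux of the argument is verifying that, along a filtered trace, the persistent dummy produced by the guaranteed prior insert is still available when the lookup occurs. Combining the two inclusions yields the claimed equality.
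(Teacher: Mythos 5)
Your proposal is correct and follows essentially the same route as the paper's proof: both directions rest on the observation that the only differences are the extra persistent $\dummy$ fact produced by \apip{insert} and consumed (non-destructively) by \apip{lookup}, with the non-trivial direction secured by $\AssSetIn$ guaranteeing a prior matching \textsf{Insert} whose persistent dummy fact is still available. Your version merely makes explicit the induction and state invariant that the paper's terser argument leaves implicit.
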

\begin{proof}
  The only rules in $\sem{P}^D$ that differ from $\sem{P}$ are
  translations of insert and lookup. The first one only adds a
  permanent fact, which by the definition of $d$, is removed when
  applying $d$. The second one requires a fact $\dummy(t)$, whenever
  the rule is instantiated such the actions equals
  $\mathsf{IsIn}(s,t)$ for some $s$. Since the translation is otherwise
  the same, we have that
  \[ \filter(d(\execmsr(\sem{P}^D\cup \MD))) \subseteq
  \filter(\execmsr(\sem{P})) \] For any trace in
  $\filter(d(\execmsr(\sem{p}\cup \MD)))$ and any action
  $\mathsf{IsIn}(s,t)$ in this trace, there is an earlier action
  $\mathsf{Insert}(s',t')$ such that $s=s'$ and $t=t'$, as otherwise
  $\AssSetIn$ would not hold. Therefore the same trace is part of
  $\filter(d(\execmsr(\sem{p}^D\cup \MD)))$, as this means that
  whenever $\dummy(t)$ is in the premise, $\dummy(t')$ for $t=t'$ has
  previously appeared in the conclusion. Since it is a permanent fact,
  it has not disappeared and therefore
  \[\filter(d(\execmsr(\sem{P}^D\cup \MD))) \subseteq
  \filter(\execmsr(\sem{P}))\]
\end{proof}

We slightly abuse notation by defining $\filter$ on executions to filter out all traces contradicting the axioms, see \autoref{def:filter}.

\begin{lemma}\label{lem:freshnotded}
  Let $P$ be a ground process and $\emptyset
  \stackrel{F_1}{\longrightarrow} S_1 \stackrel{F_2}{\longrightarrow}
  \ldots \stackrel{F_{n}}{\longrightarrow} S_n \in
  \filter(\execmsr(\sem{P}))$. For all $1 \leq i \leq n$, if $\Fr(a) \in S_i$
  and $F(t_1,\ldots,t_k)\in S_i$ for any $F\in\FSign\setminus \set{\Fr}$, then $a \not \in
  \cap_{t =_E t'} \names(t')$, for any $t\in\{t_1,\ldots,t_k\}$.
\end{lemma}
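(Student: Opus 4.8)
The plan is to prove the statement by induction on the length $i$ of the execution prefix, reading it as a state invariant. It is convenient to strengthen the induction hypothesis to the conjunction of the claim itself --- call it (I1): \emph{if $\Fr(a)\in S_i$ then $a\notin\cap_{t=_E t'}\names(t')$ for every argument $t$ of every non-$\Fr$ fact of $S_i$} --- together with a provenance invariant (I2): \emph{if a fresh name $a$ occurs essentially in some fact of $S_i$ or in some action $F_j$ with $j\le i$, then \textsc{Fresh} produced $\Fr(a)$ at some step $k\le i$}. The base case $S_0=\emptyset$ holds vacuously. For the inductive step I would consider the transition $S_{i-1}\stackrel{F_i}{\longrightarrow}S_i$, which is the firing of a ground instance $l\msrewrite{a_i}r\in_E\ginsts(\sem{P}\cup\textsc{Fresh})$ with $S_i=(S_{i-1}\msetminus\lfacts(l))\mcup r$, and argue by cases on the rule (using that $P$ is well-formed, as $\sem{P}$ requires).

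Before the case analysis I would record two facts. Since $\Fr$-facts are linear and, by the freshness condition of Definition~\ref{def:execution}, \textsc{Fresh} fires at most once per name, each $\Fr(a)$ has multiplicity at most one in every reachable state; hence in any non-\textsc{Fresh} step $\Fr(a)\in S_i$ forces $\Fr(a)\notin\lfacts(l)$, so $\Fr(a)$ genuinely survives from $S_{i-1}$. Combined with (I2) and the at-most-once condition, this also shows that a name freshly produced by \textsc{Fresh} at step $i$ cannot occur essentially anywhere in $S_{i-1}$, since otherwise (I2) applied at $i-1$ would supply an earlier, second \textsc{Fresh} step for the same name.

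The core cases are then handled as follows.
\begin{itemize}
\item If the applied rule is \textsc{Fresh}, the only new fact is $\Fr(a')$, which directly discharges (I2) for $a'$; pairing $\Fr(a')$ with any non-$\Fr$ fact $G\in S_{i-1}$ is handled by the freshness remark above, and all remaining pairs lie in $S_{i-1}$ and fall to the induction hypothesis.
\item If the applied rule is any translation rule \emph{other than} \code{lookup}, it is well-formed with respect to fresh names (Definition~\ref{def:msr-system}; for the embedded-msr construct this is exactly the fourth clause of Definition~\ref{def:well-formed}). Then an essential occurrence of $a$ in a newly produced fact of $r$ yields $a\in\cap_{r''=_E r'}\names(r'')\cap\FN\subseteq\cap_{l''=_E l'}\names(l'')\cap\FN$, i.e.\ an essential occurrence of $a$ in some premise fact, which lies in $S_{i-1}$. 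By the multiplicity remark this premise fact is not $\Fr(a)$, so it is a non-$\Fr$ fact of $S_{i-1}$ essentially containing $a$, which together with $\Fr(a)\in S_{i-1}$ contradicts (I1) at $i-1$; the provenance part (I2) likewise reduces to the corresponding premise occurrence at $i-1$.
\end{itemize}

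The hard part, and the only case not dispatched by well-formedness, is the \code{lookup} rule $[\state_p(\tilde x)]\msrewrite{\mathrm{IsIn}(M,v)}[\state_{p\cdot1}(\tilde x,v)]$, where the bound variable $v$ occurs on the right but not the left and may be instantiated by a fresh name $a$ absent from the premise. Here I would exploit that the execution lies in $\filter(\execmsr(\sem{P}))$ and hence satisfies $\AssSetIn$: a successful $\mathrm{IsIn}(M,v)$ is preceded by an $\mathrm{Insert}(M,v)$ at some earlier step $m<i$, and the second argument of the insert rule $[\state_q(\tilde y)]\msrewrite{\mathrm{Insert}(s,t)}[\state_{q\cdot1}(\tilde y)]$ is a term over $\tilde y$. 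Thus $a$ already occurs essentially in the $\state$-fact consumed at step $m$, so it occurs essentially in a fact of $S_{m-1}$; applying (I2) there produces a \textsc{Fresh} step generating $\Fr(a)$ at some $k\le m-1$, while the lookup leaves $\Fr(a)\in S_i$ untouched from $S_{i-1}$, so a further \textsc{Fresh} step for $a$ is needed in $(m-1,i]$ --- contradicting the at-most-once freshness condition and closing the case. The principal technical obstacle is precisely this interaction: because the \code{lookup} rule breaks fresh-name well-formedness, the generic premise-to-conclusion argument fails and one must instead route the fresh name through the $\AssSetIn$ axiom and the provenance invariant to reach a contradiction.
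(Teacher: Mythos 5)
Your proof is correct, but it takes a genuinely different route from the paper's. The paper does not induct on $\sem{P}$ directly: it first passes to the dummy translation $\sem{P}^D$ of Definition~\ref{def:dummy-trans}, in which \code{insert} additionally concludes $\dummy(t)$ and \code{lookup} additionally requires $\dummy(v)$ as a premise, so that \emph{every} rule satisfies the fresh-name condition of Definition~\ref{def:msr-system}; Lemma~\ref{lem:sametracesfordummytrans} (whose proof is exactly where $\AssSetIn$ is consumed) shows the filtered executions coincide, and the induction then runs uniformly with no special case for \code{lookup}. You instead keep the ill-formed \code{lookup} rule and repair it inside the induction, invoking $\AssSetIn$ to trace the offending name back to the earlier \code{insert}, whose $\state$-premise carries it; your strengthened provenance invariant (I2) then lets you pull $\Fr(a)$ back to that earlier configuration. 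Your handling of the \textsc{Fresh} case also differs: the paper builds a contradictory shorter execution by appending the \textsc{Fresh} instance to a prefix (via prefix-closedness, Lemma~\ref{lem:filter-prefix}), whereas (I2) plus the at-most-once condition rules the case out directly, which is arguably cleaner. Two points deserve care in your version. First, the final contradiction in the \code{lookup} case is stated awkwardly: no ``further \textsc{Fresh} step'' is forced in $(m-1,i]$; rather, since $\Fr(a)\in S_i$ and \textsc{Fresh} fires at most once per name, $\Fr(a)$ persists backwards to $S_{m-1}$, which together with the essential occurrence of $a$ in $\state_q(\tilde y)\theta$ violates (I1) at $m-1$ --- you have all the pieces, but that is the step that closes the case. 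Second, the well-formedness condition relates \emph{conclusions} to premises, not \emph{actions} to premises, so passing from ``$a$ essential in the second argument of $\mathrm{Insert}$'' to ``$a$ essential in some argument of the $\state$-premise'' needs the (easy) observation that the process terms $s,t$ contain no free fresh names, so an essential fresh name in $t\theta$ must be essential in some $x_j\theta$ with $x_j\in\tilde x$; the paper's dummy-fact detour exists precisely to avoid having to make this argument about actions. What each approach buys: the paper's is modular and reuses $\sem{P}^D$, which it needs anyway for the correctness of tamarin's constraint solving in Appendix~\ref{app:solution}; yours is self-contained and avoids introducing an auxiliary rewrite system at the cost of a heavier induction hypothesis.
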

\begin{proof}
  The translation with the dummy fact introduced in
  \autoref{app:solution} will make this proof easier as for
  $\sem{P}^D\cup\MD$, we have that the third condition of
  \autoref{def:msr-system} holds, namely,
  \begin{equation}
    \label{eq:thirdcondition}
    \forall l'\msrewrite {a'} r' \in_\ET \ginsts(l\msrewrite
    a r): \cap_{r'' =_\ET r'}\names(r'') \cap \FN \subseteq
    \cap_{l'' =_\ET l'}\names(l'')\cap \FN
  \end{equation}
  We will show that the statement holds for all
  $\emptyset \stackrel{F_1}{\longrightarrow} S_1 \stackrel{F_2}{\longrightarrow}
  \ldots \stackrel{F_{n}}{\longrightarrow} S_n \in
  \filter(\execmsr(\allowbreak \sem{P}^D\allowbreak \cup \MD))$, which
  implies the claim by \autoref{lem:sametracesfordummytrans}. We
  proceed by induction on $n$, the length of the execution. 
  \begin{itemize}
  \item Base case, $n=0$.  We have that $S_0=\emptyset$ and therefore the
    statement holds trivially.
  \item Inductive case, $n \geq 1$.   We distinguish two cases.
    \begin{enumerate}
    \item A rule that is not \Fresh was applied and there is a fact
      $F(t_1,\ldots, t_k)\in S_n$, such that $F(t_1,\ldots,t_k)\notin
      S_{n-1}$, and $\Fr(a)\in S_n$ such that $a \in \cap_{t_i =_E t'}
      \names(t')$ for some $t_i$. (If there are no such
      $F(t_1,\ldots,t_k)$ and $\Fr(a)$ we immediately conclude by
      induction hypothesis.)  By \autoref{eq:thirdcondition}, $a\in
      t'_j$ for some $F'(t'_1,\ldots,t'_l)\in S_{n-1}$. Since \Fresh
      is the only rule that adds a \Fr-fact and $\Fr(a)\in S_{n}$, it
      must be that $\Fr(a)\in S_{n-1}$, contradicting the induction
      hypothesis. Therefore this case is not possible.

    \item The rule \Fresh was applied, \ie, $\Fr(a)\in S_n$ and
      $\Fr(a)\notin S_{n-1}$.  If there is no $a \in \cap_{t_i =_E t'}
      \names(t')$ for some $t_i$, and $F(t_1,\ldots,t_k)\in S_{n}$,
      then we conclude by induction hypothesis. Otherwise, if there is
      such a $F(t_1,\ldots,t_k)\in S_{n}$, then, by
      \autoref{eq:thirdcondition}, $a\in t'_j$ for some
      $F'(t'_1,\ldots,t'_l)\in S_{i}$ for $i < n$. We construct a
      contradiction to the induction hypothesis by taking the prefix
      of the execution up to $i$ and appending the instantiation of
      the \Fresh rule to its end. Since $d(\execmsr(\sem{P}^D\cup
      \MD))$ is prefix closed by \autoref{lem:filter-prefix} we have
      that $\emptyset \stackrel{F_1}{\longrightarrow} S_1'
      \stackrel{F_2}{\longrightarrow} \ldots
      \stackrel{F_{i}}{\longrightarrow} S_i \in
      \filter(d(\execmsr(\sem{P}^D\cup \MD)))$. Moreover as rule
      \Fresh was applied adding $\Fr(a)\in S_n$ it is also possible to
      apply the same instance of \Fresh to the prefix
      (by~\autoref{def:execution}) and therefore
      \[
      \emptyset \stackrel{F_1}{\longrightarrow} S_1'
      \stackrel{F_2}{\longrightarrow} \ldots
      \stackrel{F_{i}}{\longrightarrow} S_i \longrightarrow S_i \cup
      \set{\Fr(a)} \in \filter(d(\execmsr(\sem{P}^D\cup \MD)))
      \]
      contradicting the induction hypothesis.      
    \end{enumerate}
  \end{itemize}
\end{proof}

\begin{lemma}\label{lem:dedrestriction}
  For any frame $\nu \tilde n.\sigma$, $t\in\Mess$ and $a\in\FN$, if $a \not
  \in \st(t)$, $a \not \in \st(\sigma)$ and  $\nu \tilde n.\sigma
  \vdash t$, then $\nu \tilde n,a.\sigma\vdash t$.
\end{lemma}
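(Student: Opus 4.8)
The plan is to avoid a direct structural induction on the derivation of $\nu\tilde n.\sigma\vdash t$ and instead to massage a fixed such derivation so that the hidden name $a$ disappears from it entirely. A naive induction would get stuck precisely at the \textsc{DEq} rule (\autoref{fig:deduc}): from $\nu\tilde n.\sigma\vdash s$ together with $s=_\ET t$ one derives $\nu\tilde n.\sigma\vdash t$, and although we assume $a\notin\st(t)$, the intermediate term $s$ produced modulo $\ET$ may still contain $a$, so the induction hypothesis is not applicable to the premise. The other three rules (\textsc{Dname}, \textsc{DFrame}, \textsc{DAppl}) are unproblematic, and this is why a renaming argument is the right tool.

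First I would record an auxiliary stability lemma: for every sort-preserving bijective renaming $\rho$ of names, if $\nu\tilde n.\sigma\vdash t$ then $\nu\rho(\tilde n).\sigma\rho\vdash t\rho$; more precisely, applying $\rho$ to every term occurring in a derivation yields a valid derivation of the renamed judgement. This is a routine induction on the derivation: \textsc{Dname}, \textsc{DFrame} and \textsc{DAppl} only use that $\rho$ is a sort-preserving bijection that commutes with function symbols and with substitution application, while the \textsc{DEq} step uses exactly that $=_\ET$ is closed under bijective renaming of names, which is part of the definition of $=_\ET$.

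With this in hand the main argument is short. If $a\in\tilde n$ there is nothing to prove, since then the restricted name sets of $\nu\tilde n, a.\sigma$ and $\nu\tilde n.\sigma$ coincide and the conclusion is exactly the hypothesis. Otherwise $a\notin\tilde n$; fix a derivation $\pi$ of $\nu\tilde n.\sigma\vdash t$. As $\pi$ is finite it mentions only finitely many names, so I may pick a name $b\in\FN$ of the same sort as $a$ that does not occur anywhere in $\pi$; in particular $b\notin\tilde n$, $b\notin\st(\sigma)$ and $b\notin\st(t)$. Let $\rho$ be the transposition swapping $a$ and $b$ and fixing all other names, and set $\pi':=\rho(\pi)$. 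By the stability lemma $\pi'$ derives $\nu\rho(\tilde n).\sigma\rho\vdash t\rho$; using $a,b\notin\tilde n$, $a,b\notin\st(\sigma)$ and $a,b\notin\st(t)$ we get $\rho(\tilde n)=\tilde n$, $\sigma\rho=\sigma$ and $t\rho=t$, so $\pi'$ is again a derivation of $\nu\tilde n.\sigma\vdash t$. Crucially, since $b$ did not occur in $\pi$ and $\rho$ swaps $a$ with $b$, the name $a$ occurs nowhere in $\pi'$.

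It then remains to reinterpret $\pi'$ over the enlarged frame $\nu\tilde n, a.\sigma$. Enlarging the set of restricted names only affects the side condition of \textsc{Dname}, which requires the deduced name to lie outside that set. Every \textsc{Dname} step in $\pi'$ concerns some name $c$ with $c\notin\tilde n$, and $c\neq a$ because $a$ is absent from $\pi'$, hence $c\notin\tilde n\cup\set a$; all other rule applications are insensitive to the restricted set. Thus $\pi'$ is also a valid derivation of $\nu\tilde n, a.\sigma\vdash t$, which is the claim. The only real subtlety, and the step I expect to require care, is the \textsc{DEq} case in the stability lemma together with the bookkeeping that $b$ must be chosen absent from the \emph{whole} derivation (not merely from $\sigma$, $t$ and $\tilde n$), so that the swap genuinely erases $a$ rather than merely relocating it.
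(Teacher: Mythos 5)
Your proof is correct, but it takes a genuinely different route from the paper's. The paper does not touch the derivation tree at all: it invokes the recipe characterization of deduction from Abadi and Cortier (Proposition~1 of the cited work), namely that $\nu \tilde n.\sigma \vdash t$ holds iff there is a term $M$ with $\mathit{fn}(M)\cap\tilde n=\emptyset$ and $M\sigma =_\ET t$, and then renames $a$ to a fresh name inside that single recipe $M$; the hypotheses $a\notin\st(\sigma)$ and $a\notin\st(t)$ together with closure of $=_\ET$ under bijective renaming immediately give $M'\sigma=_\ET t$ and $\mathit{fn}(M')\cap(\tilde n\cup\set a)=\emptyset$. You instead prove a stability lemma for the deduction system itself (derivations are preserved under sort-preserving bijective renamings of names), apply the transposition $a\leftrightarrow b$ for a $b$ chosen fresh for the \emph{entire} derivation, and then observe that only \textsc{Dname} is sensitive to enlarging the restriction set, which is harmless once $a$ has been erased from the derivation. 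The two arguments hinge on exactly the same two facts --- closure of $=_\ET$ under bijective renaming and the absence of $a$ from $\sigma$ and $t$ --- but yours is self-contained where the paper's imports an external result, and your diagnosis of why the naive induction fails (the intermediate term in a \textsc{DEq} step may contain $a$ even when the conclusion does not) makes explicit precisely the difficulty that the recipe characterization hides. The price is the extra bookkeeping you correctly flag: $b$ must avoid every term occurring in the derivation, not merely $\tilde n$, $\sigma$ and $t$.
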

\begin{proof}
  In~\cite[Proposition~1]{AbadiCortierTCS06} it is shown that $\nu \tilde n. \sigma \vdash t$ if and
  only if $\exists M. \mathit{fn}(M) \cap \tilde n =\emptyset$ and $M\sigma
  =_E t$.  Define $M'$ as $M$ renaming $a$ to some fresh name, i.e., not
  appearing in $\tilde n, \sigma,t$. As $a \not \in st(\sigma, t)$ and
  the fact that equational theories are closed under bijective
  renaming of names we have that
  $M'\sigma =_E t$ and $\mathit{fn}(M') \cap (\tilde n,a) = \emptyset$.
  Hence $\nu \tilde n, a. \sigma \vdash t$.
\end{proof}

\begin{lemma}%[Deductive capabilities]
\label{lem:ded}
Let $P$ be a ground process and $\emptyset
\stackrel{F_1}{\longrightarrow} S_1 \stackrel{F_2}{\longrightarrow}
\ldots \stackrel{F_{n}}{\longrightarrow} S_n \in \filter(\execmsr(\sem{P}))$.
% old condition:
% such that $\{\Out(t)\in S \mid t\in\Mess\}=\emptyset$.
Let
\[
\tilde n = \{ a:\freshsort \mid \ProtoNonce(a)\in \bigcup_{1\leq j
\leq n}F_j\},
\]
% \[
% \tilde r = \{ a:\freshsort \mid \RepNonce(a)\in\bigcup_{1\leq j
% \leq n}F_j\},
% \]
\[
\{t_1,\ldots,t_m\}=\{t \mid \mathsf{Out}(t) \in_{1\leq j\leq n} S_j  \}.
\]
Let $\sigma=\{^{t_1} / _{x_1},\ldots,^{t_m} / _{x_m}\}$. We have that
\begin{enumerate}
\item if $\K(t) \in S_n$ then $\nu \tilde n. \sigma \vdash
  t$;
\item if $\nu \tilde n. \sigma \vdash t$ then there exists
  $S$ such that
  \begin{itemize}
  \item $\emptyset \stackrel{F_1}{\longrightarrow} S_1
    \stackrel{F_2}{\longrightarrow} \ldots
    \stackrel{F_{n}}{\longrightarrow} S_n {\longrightarrow}^* S\in \filter(\execmsr_E(\sem{P}))$,
  \item  $\K(t) \in_E S$ and
  \item $S_n \rightarrow^*_{R} S$ for $R =
\MDOutPubFreshApplFresh$.
  \end{itemize}
\end{enumerate}
\end{lemma}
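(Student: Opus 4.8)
The plan is to prove the two statements separately, as they express the soundness and completeness of the message-deduction rules $\MD$ with respect to the deduction relation $\vdash$ of Definition~\ref{def:apip-ded}. For statement~(1) I would proceed by induction on the length $n$ of the execution, showing that every $t$ with $\K(t)\in S_n$ satisfies $\nu\tilde n.\sigma\vdash t$ for the $\tilde n$ and $\sigma$ associated with the prefix of length $n$. The crucial observation is that the only rules introducing a fresh $\K$-fact are those of $\MD$, so in the inductive step I only inspect the four $\K$-producing rules. \textsc{MDOut} turns an $\Out(t)$-fact into $\K(t)$; since $\Out(t)$ occurred, $t=t_i$ for some output, so $t=x_i\sigma$ and \textsc{DFrame} applies. \textsc{MDPub} gives $\K(p)$ for a public name, handled by \textsc{Dname}. \textsc{MDFresh} consumes a $\Fr(a)$-fact; by the freshness condition of Definition~\ref{def:execution} that $\Fr(a)$ was produced only once, and since it is consumed here rather than by a \ProtoNonce-rule, $a\notin\tilde n$, so \textsc{Dname} again applies. \textsc{MDAppl} is immediate from \textsc{DAppl} and the induction hypothesis applied to its persistent $\K$-premises.

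The work of statement~(1) lies not in these cases but in keeping the claim stable as $\tilde n$ and $\sigma$ grow along the execution. Adding an output to $\sigma$ is harmless, since $\vdash$ is monotone in the frame. Adding a protocol nonce $a$ to $\tilde n$, when a \ProtoNonce-rule fires, is the delicate point: a deduction valid for $\tilde n\setminus\{a\}$ must remain valid for $\tilde n$. Here I would invoke Lemma~\ref{lem:freshnotded}: since $\Fr(a)$ was present immediately before being consumed, $a$ occurs in no other fact, hence in none of the previously output terms ($a\notin\st(\sigma)$) nor in any already-deduced $\K$-term ($a\notin\st(t)$); Lemma~\ref{lem:dedrestriction} then lets me enlarge the restricted set by the fresh name $a$ without loss. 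Passing through the dummy-fact translation via Lemma~\ref{lem:sametracesfordummytrans} is convenient, as it makes the third well-formedness condition hold and so renders Lemma~\ref{lem:freshnotded} directly applicable.

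For statement~(2) I would induct on the structure of the proof tree witnessing $\nu\tilde n.\sigma\vdash t$, building for each deduction rule a short sequence of $\MD$-transitions (together with \Fresh) that produces the required $\K$-fact. \textsc{DFrame} corresponds to $t=t_i$ for an output: either $\K(t_i)$ was already derived and persists, or the linear fact $\Out(t_i)$ is still present and \textsc{MDOut} yields $\K(t_i)$. \textsc{Dname} for a public name uses \textsc{MDPub}; for a fresh name $a\notin\tilde n$ I use \textsc{MDFresh}, first firing \Fresh to supply $\Fr(a)$ if that name has not yet been generated --- this is exactly why $R$ must contain \Fresh, and why $a\notin\tilde n$ matters, as it guarantees the single-use discipline on $\Fr(a)$ is not violated. \textsc{DAppl} is handled by chaining: by the induction hypothesis each argument $t_i$ can be made known, and since $\K$-facts are persistent the individual extensions can be concatenated into a single run reaching a state that holds all $\K(t_i)$ simultaneously, after which \textsc{MDAppl} produces $\K(f(t_1,\dots,t_k))$. \textsc{DEq} is absorbed into the ``$\in_E$'' of the conclusion, since the whole transition relation works modulo $E$. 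Throughout, only the rules of $\MDOutPubFreshApplFresh$ are used, never \textsc{MDIn} (which would consume $\K$-facts), giving the final clause $S_n\rightarrow^*_R S$.

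The main obstacle I anticipate is the fresh-name bookkeeping rather than any single deduction case: in~(1), justifying that enlarging $\tilde n$ by a newly generated nonce cannot invalidate an earlier deduction, which rests on the combination of Lemmas~\ref{lem:freshnotded} and~\ref{lem:dedrestriction}; and in~(2), arranging the adversary's generation of a fresh name $a\notin\tilde n$ so that the at-most-once firing of \Fresh for $\Fr(a)$ demanded by Definition~\ref{def:execution} is respected, distinguishing the cases where $\K(a)$ is already present, where $\Fr(a)$ is still present, and where $a$ has never been introduced.
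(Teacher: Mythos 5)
Your proposal is correct and follows essentially the same route as the paper's proof: part~(1) by induction on the execution length with a case split over the $\K$-producing rules, using Lemma~\ref{lem:freshnotded} together with Lemma~\ref{lem:dedrestriction} to handle the enlargement of $\tilde n$ at \ProtoNonce{} steps, and part~(2) by induction on the deduction proof tree, with the same three-way case analysis for fresh names under \textsc{Dname} and the same persistence-based merging of sub-executions for \textsc{DAppl}.
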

\begin{proof}
  We prove both items separately.
  \begin{enumerate}
  \item The proof proceeds by induction on $n$, the number of steps of
    the execution.

    \paragraph{Base case: n=0.} This case trivially holds as $S_n =
    \emptyset$.

    \paragraph{Inductive case: n>0.} By induction we suppose that if
    $\K(t) \in S_{n-1}$ then $\nu \tilde n'. \sigma' \vdash
    t$ where $\tilde n', \sigma' $ are defined in a similar
    way as $\tilde n, \sigma$ but for the execution of size
    $n-1$. We proceed by case analysis on the rule used to extend the
    execution.
    \begin{itemize}
    \item \textsc{MDOut}. Suppose that $\mathsf{Out}(u) \msrewrite~
      \K(u)\allowbreak \in \ginsts(\textsc{MDOut})$ is the rule used to extend the
      execution. Hence $\mathsf{Out}(u) \in S_{n-1}$ and by definition
      of $\sigma$ there exists $x$ such that $x\sigma = u$. We can
      apply deduction rule {\sc DFrame} and conclude that $\nu \tilde
      n. \sigma \vdash u$.  If $\K(t) \in S_n$ and $t \not =
      u$ we conclude by induction hypothesis as $\tilde n = \tilde n',
     \sigma = \sigma'$.

    \item \textsc{MDPub}.  Suppose that $\msrewrite{~} \mathsf{K}(a:
      \mathit{pub}) \in \ginsts(\textsc{MDPub})$ is the rule used to
      extend the execution. As names of sort $pub$ are never added to
      $\tilde n$  we can apply deduction rule {\sc DName}
      and conclude that $\nu \tilde n. \sigma \vdash a$.  If
      $K(t) \in S_n$ and $t \not = a$ we conclude by induction
      hypothesis as $\tilde n = \tilde n',
      \sigma = \sigma'$.

    \item \textsc{MDFresh}. Suppose that $\Fr(a:\mathit{fresh})
      \msrewrite{~} \mathsf{K}(a:\mathit{fresh}) \in
      \ginsts(\textsc{MDFresh})$ is the rule used to extend the
      execution. By definition of an execution we have that
      $\Fr(a:\mathit{fresh}) \not = (S_{j+1} \setminus S_{j})$ for any
      $j \not = n-1$. Hence $n \not \in \tilde n$. We can
      apply deduction rule {\sc DName} and conclude that $\nu \tilde
      n. \sigma \vdash a$.  If $\K(t) \in S_n$ and $t \not =
      a$ we conclude by induction hypothesis as $\tilde n = \tilde n',
      \sigma = \sigma'$.

    \item \textsc{MDAppl.} Suppose that $\K(t_1),\ldots,\K(t_k)
      \msrewrite{~} \K(u) \in \ginsts(\textsc{MDAppl})$ is the rule
      used to extend the execution. We have that $K(t_1),\ldots,K(t_k)
      \in S_{n-1}$ and $u =_\ET f(t_1,\ldots,t_k)$. By induction
      hypothesis, $\nu \tilde n'.\sigma' \vdash t_i$ for $1 \leq i
      \leq k$. As $\tilde n = \tilde n', \sigma = \sigma'$ we have
      that $\nu \tilde n.\sigma \vdash t_i$ for $1 \leq i \leq k$. We
      can apply deduction rule {\sc DAppl} and conclude that $\nu
      \tilde n. \sigma \vdash f(t_1, \ldots , t_k)$. Hence, $\nu
      \tilde n. \sigma \vdash u$ by rule {\sc DEq}.  If $K(t) \in S_n$
      and $t \not = f(t_1, \ldots , t_k)$ we conclude by induction
      hypothesis as $\tilde n = \tilde n', \sigma = \sigma'$.

    \item If $S_{n-1} {\trans{\ProtoNonce(a) } } S_n$ we have that
      $\Fr(a) \in S_{n-1}$. By \autoref{lem:freshnotded}, we obtain
      that if $\K(t) \in S_{n-1}$ then there exist $t'$ and $\sigma''$
      such that $t' =_\ET t$, $\sigma '' =_\ET \sigma'$ and $a$ $\not
      \in \st(t')$ and $a \not \in \st(\sigma'')$.  For each $\K(u)\in
      S_n$ there is $\K(u)\in S_{n-1}$, and by induction hypothesis,
      $\nu \tilde n'.\sigma' \vdash u$.  By
      \autoref{lem:dedrestriction} and the fact that $\sigma'' =_\ET
      \sigma'$ we have that $\nu \tilde n', a.\sigma' \vdash u$. As
      $\tilde n', a =\tilde n$ and $\sigma' = \sigma$ we conclude.

      % By \autoref{lem:freshnotded}, there is no term
      % $t$ with $\K(t)\in S_n$ such that $\mathit{rep}_i \in \st(t)$
      % as a subterm.  For each $\K(u)\in S_n$ there is $\K(u)\in
      % S_{n-1}$, and by induction hypothesis, $\tilde n',\tilde
      % r'.\sigma \vdash u$. Since, by nonce uniqueness,
      % $\mathit{rep}_i\not\in \tilde r'$ (or $a\not\in\tilde n'$), and
      % none of them can appear in $\sigma'$, we can apply
      % \autoref{lem:dedrestriction} and yield $\tilde n,\tilde r,\sigma
      % \vdash u$.

    \item All other rules do neither add $\K(~)$-facts nor do they
      change $\tilde n$ and may only extend $\sigma$. Therefore we
      conclude by the induction hypothesis.
    \end{itemize}

  \item Suppose that $\nu \tilde n. \sigma \vdash t$. We proceed by
    induction on the proof tree witnessing $\nu \tilde n. \sigma
    \vdash t$.

    \paragraph{Base case.} The proof tree consists of a single
    node. In this case one of the deduction rules \textsc{DName} or
    \textsc{DFrame} has been applied.

    \begin{itemize}
    \item \textsc{DName}. We have that $t \not \in \tilde n$. If $t \in \PN$ we use rule \textsc{MDPub} and we have that
      $S_n \rightarrow S= S_n \cup \{ \K(t) \}$. In case
      $t \in \FN$ we need to consider 3 different cases:
      \begin{inparaenum}[\itshape (i)]
      \item $\K(t) \in S_n$ and we immediately conclude (by letting $S=S_n$),
      \item $\Fr(t) \in S_n$ and applying rule \textsc{MDFresh} we have
        that $S_n \rightarrow S=S_n \cup \{ \K(t) \}$,
      \item $\Fr(t) \not \in S_n$. By inspection of the rules we see
        that $\Fr(t) \not \in S_i$ for any $1 \leq i \leq n$: the only
        rules that could remove $\Fr(t)$ are \textsc{MDFresh} which
        would have created the persistent fact $\K(t)$, or the
        $\ProtoNonce$ rules which would however have added $t$ to
        $\tilde n$. Hence, applying successively rules
        $\textsc{Fresh}$ and \textsc{MDFresh} yields a valid extension
        of the execution $S_n \to S_n \cup \{ \Fr(t) \}\to S = S_n
        \cup \{ \K(t) \}$.
      \end{inparaenum}

    \item \textsc{DFrame}. We have that $x\sigma = t$ for some
      $x\in\Dom(\sigma)$, that is, $t\in\{t_1,\ldots,t_m\}$. By
      definition of $\{t_1,\ldots,t_m\}$, $\Out(t)\in S_i$ for some
      $i\leq n$. If $\Out(t)\in S_n$ we have that $S_n \to
      S = ( S_n \setminus \{ \Out(t) \} ) \cup \{ \K(t) \}$ applying rule
      $\textsc{MDOut}$.  If $\Out(u) \not \in S_n$, the fact that the
      only rule in $\sem P$ that allows to remove an $\Out$-fact is
      $\textsc{MDOut}$, suggests that it was applied before, and thus
      $\K(u)\in S$.
    \end{itemize}

    \paragraph{Inductive case.} We proceed by case distinction on the
    last deduction rule which was applied.

    \begin{itemize}
    \item \textsc{DAppl}. In this case $t=f(t_1,\ldots,t_k)$, such
      that $f\in\Sigma^k$ and $\nu \tilde n \tilde r.\sigma \vdash
      t_i$ for every $i\in\{1,\ldots, k\}$. Applying the induction
      hypothesis we obtain that there are $k$ transition sequences
      $S_n \rightarrow_R^* S^i$ for $1 \leq i \leq k$ which extend
      the execution such that $t_i \in S^i$. All of them only add \K
      facts which are persistent facts. If any two of these extensions
      remove the same $\Out(t)$-fact or the same $\Fr(a)$-fact it also
      adds the persistent fact $\K(t)$, respectively $\K(a)$, and we
      simply remove the second occurrence of the transition. Therefore,
      applying the same rules as for the transitions $S_n \rightarrow^*
      S^i$ (and removing duplicate rules) we have that $S_n
      \to^*S'$ and $\K(t_1),\ldots,\K(t_k) \in S'$. Applying
      rule $\textsc{MDAppl}$ we conclude.

    \item \textsc{DEq}. By induction hypothesis there exists $S$ as
      required with $\K(t') \in_E S$ and $t=_Et'$ which allows us to
      immediately conclude that $\K(t) \in_E S$.
    \end{itemize}

  \end{enumerate}
\end{proof}

\begin{lemma}
\label{lem:ded-eq}
	If $\nu \tilde n. \sigma \vdash t$, $\tilde n=_\ET\tilde n'$,
$\sigma=_\ET \sigma'$ and $t=_\ET t'$, then $\nu \tilde n'.\sigma' \vdash
t'$.
\end{lemma}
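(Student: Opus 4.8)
The plan is to prove the statement by structural induction on the proof tree witnessing $\nu \tilde n. \Subst \vdash t$, strengthening the induction hypothesis so that it quantifies over the target term: I fix $\tilde n', \Subst'$ with $\tilde n =_\ET \tilde n'$ and $\Subst =_\ET \Subst'$ (by the latter I mean the pointwise lifting, i.e.\ $\Dom(\Subst) = \Dom(\Subst')$ and $x\Subst =_\ET x\Subst'$ for every $x \in \Dom(\Subst)$) and show that \emph{for every} $t' =_\ET t$ one has $\nu \tilde n'. \Subst' \vdash t'$. The key structural observation that keeps the induction short is that rule \textsc{DEq} lets me move freely along $=_\ET$ at the root of a derivation: in each case it therefore suffices to re-apply the \emph{same} deduction rule over the new frame $\tilde n', \Subst'$ to obtain $\nu \tilde n'. \Subst' \vdash t$, and then append one instance of \textsc{DEq} using $t =_\ET t'$ to reach $t'$.

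The four cases are then as follows. For \textsc{DName} we have $t = a \in \FN \cup \PN$ with $a \notin \tilde n$; I re-apply \textsc{DName} over $\tilde n'$, which requires checking the side condition $a \notin \tilde n'$ (discussed below), and close with \textsc{DEq}. For \textsc{DFrame} we have $t = x\Subst$ for some $x \in \Dom(\Subst)$; since $\Dom(\Subst') = \Dom(\Subst)$, rule \textsc{DFrame} gives $\nu \tilde n'. \Subst' \vdash x\Subst'$, and from $x\Subst' =_\ET x\Subst = t =_\ET t'$ a single \textsc{DEq} yields $t'$. For \textsc{DAppl} we have $t =_\ET f(t_1,\dots,t_k)$ with subderivations of $\nu \tilde n. \Subst \vdash t_i$; applying the induction hypothesis to each subderivation with the identity choice $t_i' := t_i$ gives $\nu \tilde n'. \Subst' \vdash t_i$, so \textsc{DAppl} produces $\nu \tilde n'. \Subst' \vdash f(t_1,\dots,t_k)$ and \textsc{DEq} finishes. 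The \textsc{DEq} case itself is immediate: its subderivation concludes some $t_0 =_\ET t$, and since $t_0 =_\ET t =_\ET t'$ the induction hypothesis for that subderivation directly gives $\nu \tilde n'. \Subst' \vdash t'$.

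The only genuinely non-routine point — the main obstacle — is preserving the freshness side condition in the \textsc{DName} case, i.e.\ deducing $a \notin \tilde n'$ from $a \notin \tilde n$ and $\tilde n =_\ET \tilde n'$. This is where the assumption that $\ET$ distinguishes fresh names is used. If $a \in \PN$ the condition is automatic, since frame name-sets satisfy $\tilde n' \subseteq \FN$ and $\PN, \FN$ are disjoint. If $a \in \FN$, suppose for contradiction $a \in \tilde n'$; by $\tilde n' =_\ET \tilde n$ (symmetry of set equality modulo $\ET$) there is $b \in \tilde n$ with $b =_\ET a$, and as both are fresh names the requirement $\forall a,b \in \FN.\ a \neq b \Rightarrow a \neq_\ET b$ forces $a = b \in \tilde n$, contradicting $a \notin \tilde n$. (In fact this argument shows $\tilde n =_\ET \tilde n'$ already implies $\tilde n = \tilde n'$ for name-sets.)

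As an alternative, one could bypass the explicit induction by invoking the recipe characterisation of deduction from~\cite{AbadiCortierTCS06} that is already used in the proof of \autoref{lem:dedrestriction}: $\nu \tilde n. \Subst \vdash t$ iff there is a term $M$ with $\mathit{fn}(M) \cap \tilde n = \emptyset$ and $M\Subst =_\ET t$. Taking such an $M$, the identity $\tilde n = \tilde n'$ gives $\mathit{fn}(M) \cap \tilde n' = \emptyset$, while $\Subst =_\ET \Subst'$ together with closure of $=_\ET$ under application of function symbols yields $M\Subst' =_\ET M\Subst =_\ET t =_\ET t'$, so $\nu \tilde n'. \Subst' \vdash t'$. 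I would present the structural induction as the primary argument since it is self-contained and keeps the dependency on external results minimal, relegating the recipe-based proof to a remark.
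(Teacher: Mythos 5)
Your induction is correct and is essentially the paper's own argument made explicit: the paper's proof performs the same proof-tree transformation (appending \textsc{DEq} at the \textsc{DFrame} leaves and at the root, and observing that $\tilde n$ and $\tilde n'$ must coincide as sets of names), just stated without the formal induction. Your careful treatment of the \textsc{DName} side condition, using that $\ET$ distinguishes fresh names, is exactly the point the paper compresses into ``since $\tilde n, \tilde n'$ consist only of names, $\tilde n=\tilde n'$''.
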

\begin{proof}
Assume  $\nu \tilde n. \sigma \vdash t$.
	Since an application of \textsc{DEq} can be appended to the leafs of
its proof tree, we have $\nu \tilde n. \sigma' \vdash t$. Since
\textsc{DEq} can be applied to its root, we have $\nu \tilde n. \sigma'
\vdash t'$. Since $\tilde n, \tilde n'$ consist only of names, $\tilde
n=\tilde n'$ and thus $\nu \tilde n'.\sigma' \vdash t'$.
\end{proof}

To state our next lemma we need two additional definitions.
\begin{definition}
\label{def:trans-at-pos}
  Let $P$ be a well-formed ground process and $p_t$ a position in
  $P$. We define the set of multiset rewrite rules generated for position
$p_t$ of $P$, denoted $\sem P_{=p_t}$ as follows:
\[ \sem P_{=p_t} := \sem{P,[],[]}_{=p_t} \]
where $\sem{\cdot,\cdot,\cdot}_{=p_t}$ is defined in Figure~\ref{fig:transatp}.
\end{definition}

\begin{figure*}
\renewcommand{\arraystretch}{1.5}
 \renewcommand{\theactualrule}[1]{\ensuremath{\{\,#1\,\}_{p \stackrel{?}{=}p_t}}}
 \renewcommand{\underscorethingy}{\ensuremath{_{=p_t}}}

\input{def-trans.tex}
\caption{Definition of $\sem{P,p,\tilde x}_{=p_t}$ where $\{\cdot\}_{a\stackrel{?}{=}b}=\set{\cdot}$ if $a=b$ and
$\emptyset$ otherwise.}\label{fig:transatp}
\end{figure*}

The next definition will be useful to state that for a process $P$ every fact of the
form $\state_p(\tilde t)$ in a multiset rewrite execution of $\sem P$ corresponds
to an active process in the execution of $P$ which is an instance of
the subprocess $P|_p$.

\begin{definition}
\label{def:process-bijection}
Let $P$ be a ground process, \calP be a multiset of processes and $S$
a multiset of multiset rewrite rules. We write ${\cal P}
\leftrightarrow_P S$ if there exists a bijection between \calP and
the multiset $\{ \state_p(\tilde t) \mid \exists p, \tilde
t.\ \state_p(\tilde t) \melem S \}^\#$ such that whenever $Q \melem
\calP$ is mapped to $\state_p(\tilde t) \melem S$ we have that
\begin{enumerate}
\item $P|_p\tau = Q\rho$, for some substitution $\tau$ and some
  bijective renaming $\rho$ of fresh, but not bound names in $Q$, and
\item $\exists ri \in_\ET \ginsts(\sem{P}_{=p}).\ \state_p(\tilde t)
  \in \prems(ri)$.
\end{enumerate}
\end{definition}
When $\calP \leftrightarrow_P S$, $Q \melem \calP$ and
$\state_p(\tilde t) \melem S$ we also write $Q \leftrightarrow_P
\state_p(\tilde t)$ if this bijection maps $Q$ to $\state_p(\tilde
t)$.

\begin{remark}
  \label{remark:properties-bijection}
  Note that $\leftrightarrow_P$ has the following properties (by the
  fact that it defines a bijection between multisets).
  \begin{itemize}
  \item  If $\calP_1 \leftrightarrow_P S_1$ and $\calP_2 \leftrightarrow_P S_2$ then
    $\calP_1 \mcup \calP_2 \leftrightarrow_P S_1 \mcup S_2$.
  \item If $\calP_1 \leftrightarrow_P S_1$ and $Q \leftrightarrow_P
    \state_p(\tilde t)$ for $Q \in \calP_1$ and $\state_p(\tilde t) \in
    S_1$ (i.e. $Q$ and $\state_p(\tilde t)$ are related by the bijection
    defined by $\calP_1 \leftrightarrow_P S_1$) then $\calP_1 \msetminus
    \{Q\} \leftrightarrow_P S_1 \msetminus \{ \state_p(\tilde t) \}$.
  \end{itemize}
\end{remark}

\begin{proposition}
\label{prop:total-order}
Let $A$ be a finite set, $<$  a strict total order on $A$ and $p$ a predicate
on elements of $A$. We have that
$$
\begin{array}{rrl}
  \forall i\in A.p(i) &\Leftrightarrow& \forall i\in A.\ p(i)\vee \exists j\in A.\ i<j
  \wedge \neg p(j) \\
  &( \Leftrightarrow & \forall i\in A.\ \neg p(i)\rightarrow \exists
  j\in A.\ i<j
  \wedge \neg p(j))
\end{array}
$$
and
$$
\begin{array}{rrl}
  \exists i\in A.p(i)  &\Leftrightarrow & \exists i\in A .p(i) \wedge
  \forall j \in A.\ i< j \to\neg p(j)
\end{array}
$$
\end{proposition}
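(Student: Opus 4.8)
The plan is to reduce both stated equivalences to a single elementary fact about finite totally ordered sets: every nonempty subset $B \subseteq A$ has a $<$-maximal element, i.e. an $m \in B$ such that no $j \in B$ satisfies $m < j$. This holds because $A$ is finite and $<$ is total, so I would first record it as an auxiliary observation (it follows by induction on $|B|$, or simply by choosing any $<$-maximal element, which exists by finiteness and is the maximum by totality). I would also dispatch the parenthetical equivalence immediately: for the abbreviation $\psi := \exists j \in A.\ i < j \wedge \neg p(j)$, the formula $\neg p(i) \rightarrow \psi$ is propositionally equivalent to $p(i) \vee \psi$, so the two candidate right-hand sides of the first display agree pointwise in $i$ and hence agree under the universal quantifier.

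For the first equivalence the forward direction ($\Rightarrow$) is immediate: if $p(i)$ holds for every $i$, then the left disjunct $p(i)$ already establishes the right-hand side. For the backward direction ($\Leftarrow$) I would argue by contradiction. Assume $\forall i \in A.\ p(i) \vee \psi$ but that $p$ fails somewhere, so $B := \{ i \in A \mid \neg p(i) \}$ is nonempty. Let $m$ be the $<$-maximal element of $B$ given by the auxiliary fact, and instantiate the hypothesis at $m$: the disjunct $p(m)$ is excluded since $m \in B$, so there is a $j$ with $m < j$ and $\neg p(j)$; but then $j \in B$ and $m < j$, contradicting maximality of $m$. Hence $B = \emptyset$, i.e. $\forall i \in A.\ p(i)$.

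For the second equivalence the backward direction ($\Leftarrow$) is again trivial, by dropping the universally-quantified conjunct. For the forward direction ($\Rightarrow$), assuming some $i_0$ with $p(i_0)$, the set $C := \{ i \in A \mid p(i) \}$ is nonempty and so has a $<$-maximal element $m$. Then $p(m)$ holds since $m \in C$, and for any $j$ with $m < j$ maximality forces $j \notin C$, hence $\neg p(j)$; this exhibits $m$ as the witness required on the right-hand side.

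I do not expect a genuine obstacle, since the whole statement is finitary first-order reasoning. The only point requiring care is the justification that the extremal element exists, which is precisely where finiteness of $A$ is used: without it (for instance $A = \mathbb{N}$ with a predicate failing cofinally) the backward direction of the first equivalence would fail. I would therefore make the appeal to finiteness explicit in the auxiliary observation and then reuse that observation verbatim, once in the contradiction argument for the first equivalence and once in the witness construction for the second.
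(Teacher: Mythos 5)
Your proof is correct. The paper actually states Proposition~\ref{prop:total-order} without any proof, treating it as an elementary fact, so there is nothing to compare against; your argument --- reducing both equivalences to the existence of a $<$-maximal element in any nonempty subset of a finite totally ordered set, and dispatching the parenthetical variant by the propositional equivalence of $p(i)\vee\psi$ and $\neg p(i)\rightarrow\psi$ --- is exactly the standard justification one would expect, and your remark that finiteness is essential (the backward direction of the first equivalence fails for, e.g., $\mathbb{N}$ with $p$ failing cofinally) correctly isolates the one hypothesis that does real work.
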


\begin{lemma}
\label{lem:inclusion-apip-in-msr}
  Le $P$ be a well-formed ground process. If
  \[
  \apipstate{0}
  \stackrel{E_1}{\longrightarrow}
  \apipstate{1}
  \stackrel{E_2}{\longrightarrow}
  \ldots \stackrel{E_n}{\longrightarrow}
  \apipstate{n}
  \]
  where  $\apipstate{0} = (\emptyset,\emptyset,\emptyset,\set{P},\emptyset,\emptyset)$
  then  there are $(F_1,S_1),\ldots,(F_{n'},S_{n'})$ such that
  \[
  \emptyset \stackrel{F_1}{\longrightarrow}_\sem P S_1
  \stackrel{F_2}{\longrightarrow}_\sem P \ldots
  \stackrel{F_{n'}}{\longrightarrow}_\sem P S_{n'} \in \execmsr(\sem
  P)\]
  and there exists a monotonic, strictly increasing function $f\colon \setN_n
  \to \setN_{n'}$ such that $f(n)=n'$ and for all $i \in \setN_n$
  \begin{enumerate}
  \item \label{e:nonce}
    \invariantNonce i

  \item \label{e:storea}
    $\forall t\in\Mess.\:\StoreA_i(t) =
    \begin{cases}
      u &
      \mbox{if } \exists j \leq f(i). \mathrm{Insert}(t,u) \in
      F_j \\
      & \qquad \wedge \forall j', u'. j < j' \leq f(i) \to
      \mathrm{Insert}(t,u') \not \in_\ET F_{j'} \wedge
      \mathrm{Delete}(t) \not \in_\ET F_{j'}\\
      \bot & \text{otherwise}
    \end{cases}$

  \item \label{e:storeb}
    $\StoreB_i = S_{f(i)} \msetminus \ReservedFacts$

  \item \label{e:proc}
    $\Processes_i \leftrightarrow_P S_{f(i)}$

  \item \label{e:subst} \invariantSubst{i}

  \item \label{e:lock}
    $\calL_i =_\ET \set{ t \mid \exists j\leq f(i),u .\
    \mathrm{Lock}(u,t) \in_\ET F_j
    \wedge \forall j < k \leq f(i). \mathrm{Unlock}(u,t) \not \in_\ET F_k}$

  \item \label{e:axioms} $[F_1,\ldots,F_{n'}]\vDash \alpha$ where
    $\Ass$ is defined as in Definition~\ref{def:transformula}.

  % \item \label{e:actions}
  %   $E_i=F_{f(i)}$ and $\cup_{f(i-1) < j < f(i)-1} F_j  \subseteq
  %   \ReservedFacts$

  \item \label{e:actions}
    $\exists k.\ f(i-1) < k \leq f(i) \mbox{ and } E_i=F_k \mbox{ and
    }\cup_{f(i-1) < j\not = k \leq f(i)} F_j  \subseteq
    \ReservedFacts$

  \end{enumerate}

\end{lemma}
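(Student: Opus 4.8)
The plan is to proceed by induction on the length $n$ of the process execution, building the msr execution incrementally and carrying all eight invariants as a strengthened induction hypothesis. The function $f$ records, after each process step, the index of the msr execution reached so far, so that the block of rules simulating the $i$-th process step occupies the interval $(f(i-1),f(i)]$.

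For the base case $n=0$ we have $\apipstate{0}=(\emptyset,\emptyset,\emptyset,\set{P},\emptyset,\emptyset)$. I would fire the single rule \textsc{Init}, obtaining $\emptyset \xrightarrow{\mathrm{Init}()} \set{\state_{[]}()}$, and set $f(0)=1$. Every invariant then holds by inspection: invariants~\ref{e:nonce}, \ref{e:storea}, \ref{e:subst}, \ref{e:lock} hold because no \ProtoNonce, $\mathrm{Insert}$, \Out{} or $\mathrm{Lock}$ action has occurred; invariant~\ref{e:storeb} holds because $\state_{[]}()$ is reserved; invariant~\ref{e:proc} holds via the bijection mapping $P$ to $\state_{[]}()$ (taking $\tau$ and $\rho$ to be the identity); invariant~\ref{e:axioms} holds since each conjunct of $\alpha$ is vacuous on a one-element trace; and invariant~\ref{e:actions} is vacuous.

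For the inductive step, assume the claim for the prefix of length $n-1$ and consider the last transition $\apipstate{n-1}\xrightarrow{E_n}\apipstate{n}$. I would perform a case analysis on the rule of Figure~\ref{fig:operationalsemantics} that was applied, in each case appending a short block of msr steps, extending $f$, and re-establishing the invariants. Invariant~\ref{e:proc} tells us that the active subprocess rewritten by the process rule corresponds to a fact $\state_p(\tilde t)$ in $S_{f(n-1)}$ that is a premise of an instance of the matching translation rule of Figure~\ref{fig:transprocess}. The purely structural and state-manipulating cases ($0$, $P\mid Q$, $!P$, $\nu a$, if-then-else, \apip{event}, \apip{insert}, \apip{delete}, the embedded rule, \apip{lock}, \apip{unlock}) are handled by firing exactly that rule, prefixed for $\nu a$ and \apip{lock} by an instance of \textsc{Fresh}. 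The state facts are updated and the bijection of invariant~\ref{e:proc} is repaired using the compositionality recorded in Remark~\ref{remark:properties-bijection}. Invariants~\ref{e:storea}, \ref{e:storeb}, \ref{e:lock} follow from the effect of the corresponding action ($\mathrm{Insert}$, $\mathrm{Delete}$, the embedded rule, $\mathrm{Lock}$, $\mathrm{Unlock}$) on the respective component, and the side conditions of the process rules (e.g. $M=_\ET N$ for the then-branch, $M\notin_\ET\calL$ for \apip{lock}) are precisely what re-establishes the relevant conjunct of $\alpha$ in invariant~\ref{e:axioms}.

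The delicate cases are those involving the adversary and the axioms, and they are where the main obstacle lies. For the knowledge, \apip{out} and \apip{in} rules, whose side condition is a deduction $\nu\Names.\Subst\vdash M$, I would invoke item~2 of Lemma~\ref{lem:ded} to extend the execution by \textsc{MD} rules to a state containing $\K(M)$ (respectively $\K(\pair{M}{N\tau})$ for input), then fire \textsc{MDIn} to produce the $K(M)$ action together with the $\In$-fact consumed by the first rule of the \apip{out}/\apip{in} translation. The crucial point is that these derivation and \textsc{MDIn} steps must be inserted \emph{immediately} before the consuming $\ChannelInEvent$-step, so that no visible \Event{} action lies between the $K(M)$ action and that step; this is exactly the condition needed to preserve the $\AssIn$ conjunct of invariant~\ref{e:axioms}, and \autoref{prop:total-order} lets us phrase the ``as late as possible'' requirement. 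Internal communication is simulated by three translation rules in sequence: the second \apip{out} rule (producing $\Msg(M,N)$ and a \semistate{} fact), the second \apip{in} rule (consuming $\Msg$, producing $\Ack$), and the third \apip{out} rule (consuming the \semistate{} and $\Ack$), which together realise the synchronous handshake. For \apip{lookup} one must check, using the functional-store characterisation of invariant~\ref{e:storea} and the side conditions of the \apip{lookup} rules, that the new $\mathrm{IsIn}$/$\mathrm{IsNotSet}$ action is consistent with $\AssSetIn$/$\AssSetNotIn$, i.e. that a matching last, non-deleted $\mathrm{Insert}$ exists in the successful case, or that no live binding exists in the failing case; and for \apip{lock} that the freshly created annotation $\mathit{lock}_l$ together with $M\notin_\ET\calL$ re-establishes $\AssLock$. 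Verifying these axioms, and checking invariant~\ref{e:actions} by identifying $E_n$ with the unique step of the block whose action is not entirely reserved, is where the bulk of the careful bookkeeping resides; the axioms $\AssSetIn$, $\AssSetNotIn$ and $\AssLock$ are the hardest to maintain and constitute the principal difficulty of the proof.
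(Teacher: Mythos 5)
Your plan follows the same route as the paper's proof: induction on the length of the process execution carrying all eight invariants, with $f$ mapping each process step to the end of the block of msr steps simulating it, the \textsc{Init} rule for the base case, Lemma~\ref{lem:ded}(2) plus a final \textsc{MDIn} placed immediately before the consuming rule to handle deduction and preserve $\AssIn$, \textsc{Fresh}-prefixes for $\nu a$ and \apip{lock}, the three-rule $\Msg$/$\Ack$/\semistate{} handshake for internal communication, and Proposition~\ref{prop:total-order} in the lock bookkeeping. You also correctly locate the real work in re-establishing $\AssSetIn$, $\AssSetNotIn$ and $\AssLock$ (the latter being a rather lengthy contradiction argument in the paper exploiting the lock annotations and well-formedness), so the proposal is a faithful outline of the paper's argument.
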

\begin{proof}
  \renewcommand\itemautorefname{Condition}

We proceed by induction over the number of transitions $n$.
\begin{component}[Base Case]
  For $n=0$, we let $f(n)=1$ and $S_1$ be the multiset obtained by using
  the Rule \textsc{Init}:
  \[
  \emptyset
  \stackrel{\mathsf{Init}}{\longrightarrow}
  \mset{\state_{[]}()}
  \]
\autoref{e:nonce}, \autoref{e:storea}, \autoref{e:storeb},
\autoref{e:subst}, \autoref{e:lock},
\autoref{e:axioms}
and
\autoref{e:actions}
 hold trivially.
 To show that \autoref{e:proc} holds, we have to show that
 $\Processes_0\leftrightarrow_P
 \mset{\state_{[]}()}$.  Note that
 $\Processes_0=\mset{P}$. We choose the bijection such that $P
 \leftrightarrow_P \state_{[]}()$.  For
 $\tau=\emptyset$ and $\rho = \emptyset$ we have that
 $P|_{[]}\tau=P\tau=P\rho$.  By \autoref{def:trans-at-pos},
 $\sem{P}_{=[]}=\sem{P,[],[]}_{=[]}$.  We see
 from Figure~\ref{fig:transatp} that for every $P$ we have that
 $\state_{[]}() \in
 \prems(\sem{P,[],[]}_{=[]}$). Hence, we conclude
 that there is a ground instance $\ri\in_E \ginsts(\sem{P}_{=[]})$
 with $\state_{[]}()\in\prems(\ri)$.
\end{component}

\begin{component}[Inductive step]
Assume the invariant holds for $n-1\geq0$. We have to show that the lemma
holds for $n$ transitions
\[
  \apipstate{0}
  \stackrel{E_1}{\longrightarrow}
  \apipstate{1}
  \stackrel{E_2}{\longrightarrow}
  \ldots \stackrel{E_n}{\longrightarrow}
  \apipstate{n}
  \]
By induction hypothesis, we have that
there exists a
monotonically increasing function from $\setN_{n-1} \to \setN_{n'}$  and an execution
  \[
  \emptyset \stackrel{F_1}{\longrightarrow}_\sem P S_1
  \stackrel{F_2}{\longrightarrow}_\sem P \ldots
  \stackrel{F_{n'}}{\longrightarrow}_\sem P S_{n'} \in \execmsr(\sem
  P)\]
such that Conditions~\ref{e:nonce} to \ref{e:actions} hold. Let $f_p$ be
this function and note that $n'=f_p(n-1)$.
Fix a bijection such that
$\Processes_{n-1} \leftrightarrow_P S_{f_p(n-1)}$. We
will abuse notation by writing $P \leftrightarrow_P \state_p(\tilde t)$, if
this bijection goes from $P$ to $\state_p(\tilde t)$.

We now proceed by case distinction over the type of transition from
$\apipstate{n-1}$ to $\apipstate n$. We will (unless stated otherwise)
extend the previous execution by a number of steps, say $s$, from
$S_{n'}$ to some $S_{n'+s}$, and prove that Conditions~\ref{e:nonce}
to \ref{e:actions} hold for $n$ (since by induction hypothesis, they
hold for all $i<n$) and a function $f \colon \setN_n \to \setN_{n'+s}$
that is defined as follows:
\[
	f(i) := \begin{cases}
		f_p(i) & \text{if $i\in\setN_{n-1}$} \\
		n'+s & \text{if $i=n$}
	\end{cases}
\]

\newcommand{\inclusionAcaseintro}[4]
% #1 - Process name
% #2 - \ri
% #3 - the new S_n
% #4 - the label F_n
{ By induction hypothesis ${\cal P}_{n-1} \leftrightarrow_P S_{n'}$. Let
  $p$ and $\tilde t$ be such that $#1 \leftrightarrow_P
  \state_p(\tilde t)$.
%We have that $P|_p = #1 \tau$ for some $\tau$.
%
By \autoref{def:process-bijection},
there is
 a $\ri\in\ginsts(\sem{P}_{=p})$ such that
$\state_p(\tilde t)$ is part of its premise.
By definition of $\sem{P}_{=p}$, we can choose $\ri=#2$.
We can extend the previous execution by one step using $\ri$, therefore:
  \[
  \emptyset \stackrel{F_1}{\longrightarrow}_\sem P S_1
  \stackrel{F_2}{\longrightarrow}_\sem P \ldots
  \stackrel{F_{n'}}{\longrightarrow}_\sem P S_{n'}
  \trans{#4}_\sem P S_{n'+1}
\in \execmsr(\sem
  P)\]
 with $S_{n'+1}=#3$. %  We define
% $f(i):=f_p(i)$ for $i<n$ and $f(n):=f(n-1)+1$.
%
It is left to show that Conditions~\ref{e:nonce} to \ref{e:actions} hold
for $n$.
}

\begin{mycase} [%{{{ 0
$
(\Names_{n-1}, \StoreA_{n-1}, \StoreB_{n-1}, \Processes_{n-1}=\Processes'
\cup \{0\}, \Subst_{n-1}, \ActiveLocks_{n-1})
 ~\rightarrow
(\Names_{n-1}, \StoreA_{n-1},\StoreB_{n-1}, \Processes', \Subst_{n-1},
\ActiveLocks_{n-1})$]

\inclusionAcaseintro
{0}
{[\state_p(\tilde t)]\msrewrite~ []}
{ \set{S_{f(n-1)}\setminus\{\state_p(\tilde t) }}
{}

\autoref{e:nonce},
\autoref{e:storea},
\autoref{e:storeb},
\autoref{e:subst},
\autoref{e:lock},
\autoref{e:axioms},
and
\autoref{e:actions}
hold trivially.

\autoref{e:proc} holds because
$\Processes'=\Processes_{n-1} \setminus^\# \{0\}$,
$S_{f(n)}= S_{f(n-1)} \msetminus \mset{\state_p(\tilde t)} $, and  $0
\leftrightarrow_P \state_p(\tilde t)$ (see
\autoref{remark:properties-bijection}).

\end{mycase}%}}}

\begin{mycase}[%{{{ P|Q
$(\Names_{n-1}, \StoreA_{n-1}, \StoreB_{n-1}, \Processes_{n-1}=\Processes'
\cup \{Q | R\}, \Subst_{{n-1}}, \ActiveLocks_{n-1})
~\rightarrow
(\Names_{n-1}, \StoreA_{n-1}, \StoreB_{n-1}, \Processes'\cup \{Q,R\},\\ \Subst_{n-1},
\ActiveLocks_{n-1})$]

\inclusionAcaseintro
{Q|R}
{[\state_p(\tilde t)]\msrewrite~ [\state_{p\cdot 1}(\tilde
t),\state_{p\cdot 2}(\tilde t)]}
{S_{f(n-1)}\setminus \mset{ \state_p(\tilde
t)} \cup \mset{ \state_{p\cdot 1}(\tilde t),\state_{p\cdot 2}(\tilde t)
}}
{}

\autoref{e:nonce},
\autoref{e:storea},
\autoref{e:storeb},
\autoref{e:subst},
\autoref{e:lock},
\autoref{e:axioms}
and
\autoref{e:actions}
hold trivially. We now show that \autoref{e:proc} holds.

\autoref{e:proc} holds because
$\Processes_n=\Processes_{n-1} \setminus^\# \{Q|R\} \cup^\# \{Q,R\}$,
$\{ Q \}
\leftrightarrow_P \{ \state_{p\cdot 1}(\tilde x) \}$ and $\{ R \}
\leftrightarrow_P \{ \state_{p\cdot 2}(\tilde x)\}$ (by definition of the
translation) (see \autoref{remark:properties-bijection}).

\end{mycase}%}}}

\begin{mycase}[%{{{ !P
$(\Names_{n-1}, \StoreA_{n-1}, \StoreB_{n-1}, \Processes_{n-1}=\Processes'
\cup \{! Q \}, \Subst_{{n-1}},\ActiveLocks_{n-1})
~\rightarrow
(\Names_{n-1}, \StoreA_{n-1},\StoreB_{n-1}, \Processes'\cup \{! Q , Q\},
\\
\Subst_{n-1}, \ActiveLocks_{n-1})$]

Let $p$ and $\tilde t$ such that
$!^i Q \leftrightarrow_P  \state_p(\tilde t)$.
By \autoref{def:process-bijection}, there is
a $\ri\in\ginsts(\sem{P}_{=p})$ such that
$\state_p(\tilde t)$ is part of its premise.
By definition of $\sem{P}_{=p}$, we can choose $\ri=
[\state_p(\tilde t) ]\msrewrite~
[\state_{p}(\tilde t), \state_{p\cdot 1}(\tilde t)] $.
We can extend the previous execution by 1 step using
\ri, therefore:
  \[
  \emptyset \stackrel{F_1}{\longrightarrow}_\sem P S_1
  \stackrel{F_2}{\longrightarrow}_\sem P \ldots
  \stackrel{F_{n'}}{\longrightarrow}_\sem P S_{n'}
  \stackrel{(\ri)}{\rightarrow_\sem P} S_{n'+1}
\in \execmsr(\sem
  P)\]
 with
	$S_{n'+1}=S_{f(n)}\mcup \mset{ \state_{p\cdot 1}(\tilde t)}$.
% We define
% $f(i):=f_p(i)$ for $i<n$ and $f(n):=f(n-1)+2$.
%
\autoref{e:proc} holds because $\Processes_n=\Processes_{n-1} \cup^\#
\{Q\}$ and $\{ Q \} \leftrightarrow_P \{ \state_{p\cdot 1}(\tilde t) \}$ (by definition of $\sem{P}_{=p}$).
\autoref{e:nonce},
\autoref{e:storea},
\autoref{e:storeb},
\autoref{e:subst},
\autoref{e:lock},
\autoref{e:axioms}
and
\autoref{e:actions}
hold trivially.
\end{mycase}%}}}

\begin{mycase}[%{{{ new a
$(\Names_{n-1}, \StoreA_{i_n-1},\StoreB_{i_n-1},
\Processes_{n-1}=\Processes' \cup \set{\nu a; Q }, \Subst_{{n-1}},
\ActiveLocks_{n-1})
~\rightarrow
(\Names_{n-1}\cup\{a'\}, \StoreA_{i_n-1},\StoreB_{i_n-1},\\
\Processes'\cup \{Q\{^{a'} / _a\}\},
\Subst_{n-1}, \ActiveLocks_{n-1})$ for a fresh $a'$]
Let $p$ and $\tilde t$ be such that
$\{\nu a; Q\} \leftrightarrow_P  \state_p(\tilde t)$.
There is a $\ri\in\ginsts(\sem{P}_{=p})$ such that
$\state_p(\tilde t)$ is part of its premise.
By definition of $\sem{P}_{=p}$, there is a $\ri\in\ginsts(\sem{P}_{=p})$, $\ri=
 [\state_p(\tilde t), \Fr(a':\freshsort) ]
\msrewrite{\ProtoNonce(a':\freshsort)} [\state_{p\cdot
1}(\tilde t,a':\freshsort)] $.
Assume there is an $i<n'$ such that $\Fr(a')\in S_i$. If $\Fr(a')\in
S_n$, then we can remove the application of the instance of
\textsc{Fresh} that added $\Fr(a')$ while still preserving
Conditions~\ref{e:nonce} to \ref{e:actions}. If $\Fr(a')$ is consumned
at some point, by the definition of $\sem{P}$, the transition where it
is consumned is annotated either $\ProtoNonce(a')$ or
$\mathit{Lock(a',t)}$ for some $t$. In the last case, we can apply a
substitution to the execution that substitutes $a$ by a different
fresh name that never appears in $\cup_i\le n' S_i$. The conditions we
have by induction hypothesis hold on this execution, too, since
$\mathit{Lock}\in\ReservedFacts$, and therefore \autoref{e:actions} is
not affected. The first case implies that $a'\in\Names_{n-1}$,
contradicting the assumption that $a'$ is fresh with respect to the
process execution.
Therefore, without loss of generality, the previous execution does not
contain an $i<n'$ such that $\Fr(a')\in S_i$, and
we can extend the previous execution by two steps using
the \textsc{Fresh} rule and \ri, therefore:
  \[
  \emptyset \stackrel{F_1}{\longrightarrow}_\sem P S_1
  \stackrel{F_2}{\longrightarrow}_\sem P \ldots
  \stackrel{F_{n'}}{\longrightarrow}_\sem P S_{n'}
  \stackrel{(\textsc{Fresh})}{\rightarrow_\sem P} S_{n'+1}
  \stackrel{(\ri)}{\rightarrow_\sem P} S_{n'+2}
\in \execmsr(\sem
  P)\]
 with
	$S_{n'+1}=S_{n'}\mcup \mset{ \Fr(a':\freshsort)}$
and
	$S_{n'+2}=S_{f(n)} = S_{n'} \mcup \mset{ \state_{p\cdot 1}(\tilde t,
a':\freshsort)}$.
 We define
$f(i):=f_p(i)$ for $i<n$ and $f(n):=f(n-1)+2$.
We now show that \autoref{e:proc} holds. As by induction hypothesis
$\nu a; Q \leftrightarrow_P \state_{p\cdot 1}(\tilde t)$ we also have
that $P|_p\sigma = \nu a;Q\rho$ for some $\sigma$ and
$\rho$. Extending $\rho$ with $\{a' \mapsto a\}$ it is easy to see
from definition of $\sem{P}_{=p}$ that $\{ Q\{^{a'} / _a\} \}
\leftrightarrow_P \{ \state_{p\cdot 1}(\tilde t, a')\}$. As
$\Processes_n=\Processes_{n-1} \msetminus \{\nu a; Q\} \mcup
\mset{Q\{^{a'} / _a\}}$, we also immediately obtain that $\Processes_n
\leftrightarrow_P  S_{f(n)}$.
Since $a'$ is fresh, and therefore $\set{a'}=\Names_n \setminus \Names_{n-1}$, and
$F_n=\ProtoNonce(a')$, \autoref{e:nonce} holds.
\autoref{e:storea},
\autoref{e:storeb},
\autoref{e:subst},
\autoref{e:lock},
\autoref{e:axioms}
and
\autoref{e:actions}
hold trivially.

\end{mycase}%}}}

\begin{mycase}[%{{{ silent deduction
$(\Names_{n-1}, \StoreA_{n-1}, \StoreB_{n-1}, \Processes_{n-1}, \Subst_{{n-1}}, \ActiveLocks_{n-1})
\stackrel{K(t)}{\longrightarrow}
(\Names_{n-1}, \StoreA_{n-1}, \StoreB_{n-1}, \Processes_{n-1}, \Subst_{{n-1}}, \ActiveLocks_{n-1})$]

This step requires that $\nu \Names_{n-1}. \Subst_{n-1} \vdash t$.
% Since all rules in $\sem{P}$ that are annotated with $\ProtoNonce$
% have a $\Fr$-fact as a premise, we know that $\Names_{n-1}=\set{a \mid
%   \ProtoNonce(a) \in \cup_{j\leq f(n-1} F_j }$ and $\tilde r:=\set{a
%   \mid \RepNonce(a) \in \cup_{j\leq f(n-1)} F_j }$ are distinct.
% Since $P$ is well-formed, the translated rules $\sem{P}$ assure that
% no $\Out(u)$ is produced, where $u$ contains a fresh name $a\in\tilde
% r$, as those names are reserved.  Therefore, no $a \in \tilde r$
% appears in $\Subst$.  If any $a\in\tilde r$ appears in $t$, we
% substitute $a$ by a new fresh $a'\in\FN$ that does not appear in $t$
% throughout the execution $\emptyset
% \stackrel{F_1}{\longrightarrow}_\sem \ldots
% \stackrel{F_{n'}}{\longrightarrow}_\sem P S_{n'} $ . This altered
% execution is still valid with respect to $\sem{P}$ and fulfills the
% conditions of the induction hypothesis, as they are invariant with
% respect to such a renaming (see previous case).
% Therefore, we
% assume without loss of generality that $\tilde r$ and the fresh names in
% $t$ are distinct.
% We can apply \autoref{lem:dedrestriction} for each
% $a'\in\tilde r$, hence $\nu \Names_{n-1}, \tilde r . \Subst \vdash t$.
From \autoref{lem:ded} follows that there is an execution $\emptyset
\trans{F_1} S_1 \trans {F_2} \ldots \trans {F_{n'}} S_{n'} \rightarrow^* S \in
\execmsr_E(\sem{P})$ such that $\K(t)\in_E S$ and $S_{n'} \rightarrow_R^* S$ for
$R=\MDOutPubFreshAppl$.

From $S$, we can go one further step using \textsc{MDIn}, since $\K(t)\in S$:
  \[
  \emptyset \trans{F_1}_\sem P S_1
  \trans{F_2}_\sem P \ldots
  \trans{F_{n'}}_\sem P S_{n'}
  \trans{}_{R\subset \sem{P}}^* S=S_{n'+s-1}
  \trans{K(t)}_\sem P S_{n'+s}
\in \execmsr(\sem
  P)\]
where $S_{n'+s}=S\cup \{\mathsf{In}(t)\}$.

From the fact that $S_{f(n-1)}
\rightarrow_R^* S_{f(n)} = S$, and the induction hypothesis, we can conclude that
\autoref{e:actions} holds.
\autoref{e:proc} holds since $\Processes_n=\Processes_{n-1}$ and no
$\mathtt{state}$-facts where neither removed nor added.
\autoref{e:nonce},
\autoref{e:storea},
\autoref{e:storeb},
\autoref{e:subst},
\autoref{e:lock}
and
\autoref{e:axioms}
hold trivially.
\end{mycase}%}}}

\begin{mycase}[%{{{ out
  $(\Names_{n-1}, \StoreA_{n-1}, \StoreB_{n-1},
  \Processes_{n-1}=\Processes'\cup\set{\piout(t,t'); Q }, \Subst_{{n-1}}, \ActiveLocks_{n-1})
  \trans{K(t)}
  (\Names_{n-1}, \StoreA_{n-1}, \StoreB_{n-1}, \\
  \Processes'\cup^\# \set{Q}, \Subst_{n-1}\cup
  \{^{t'}/_{x}\}, \ActiveLocks_{n-1})$]
  This step requires that $x$ is fresh and $\nu \Names_{n-1}. \Subst \vdash t$.
%
  % As in the previous case, we can assume without loss of generality
  % that $\tilde r:=\set{a \mid \RepNonce(a) \in \cup_{j\leq f( n-1 )}
  %   F_j}$ is distinct from $\Names_{n-1}$ and the variables in $t$,
  % and thus $\nu \Names_{n-1}, \tilde r. \Subst \vdash t$.
%
Using \autoref{lem:ded}, we have that there is an execution $\emptyset
\trans{F_1} S_1 \trans {F_2} \ldots \trans {F_{f(n)}} S_{f(n-1)} \rightarrow^* S \in
\execmsr_E(\sem{P})$ such that $\K(t)\in_E S$ and $S_{f(n-1)} \rightarrow_R^* S$ for
$R=\MDOutPubFreshAppl$.
Let $p$ and $\tilde t$ such that
$\{\piout(t,t'); Q\} \leftrightarrow_P  \state_p(\tilde t)$.
By \autoref{def:process-bijection},
 there is
 a $\ri\in\ginsts(\sem{P}_{=p})$ such that
$\state_p(\tilde t)$ is part of its premise.
From the definition of $\sem{P}_{=p}$, we see that we can choose
 $\ri =[\state_p(\tilde t),\In(t)]  \msrewrite{\ChannelInEvent(t)}
[\Out(t'),\state_{p\cdot1}(\tilde t)]$.
To apply this rule, we need the fact $\In(t)$.
Since $\nu \Names_{n-1} . \Subst \vdash t$, as mentioned before,
we can apply \autoref{lem:ded}. It follows that there is an execution $\emptyset
\trans{F_1} S_1 \trans {F_2} \ldots \trans {F_{n'}} S_{n'} \rightarrow^* S \in
\execmsr_E(\sem{P})$ such that $\K(t)\in_E S$ and $S_{n'} \rightarrow_R^* S$ for
$R=\MDOutPubFreshAppl$.
From $S$, we can now go two steps further, using \textsc{MDIn} and \ri:
  \[
  \emptyset \stackrel{F_1}{\longrightarrow}_\sem P S_1
   \ldots
  \stackrel{F_{n'}}{\longrightarrow}_\sem P S_{n'}
  \rightarrow_{R\subset \sem{P}}^* S=S_{n'+s-2}
  \trans{K(t)}_\sem P S_{n'+s-1}
  \trans{\ChannelInEvent(t)}_\sem P S_{n'+s}
\in \execmsr(\sem
  P)\]
where
$S_{n'+s-1}=
S\mcup \mset{\In(t)}$ and
 $S_{f(n)}=S\setminus^\# \set{\state_p(\tilde t)} \cup^\#
\set{\Out(t'),\state_{p\cdot 1}(\tilde t)}$.

Taking $k = n'+s-1$ we immediately obtain that
\autoref{e:actions} holds.
Note first that, since $S_{n'}\rightarrow_R S$,
$\mathit{set}(S_{n'})
\setminus \set{\Fr(t),\Out(t) | t\in \Mess} \subset \mathit{set}(S)$
and
$\mathit{set}(S)
\setminus \set{\K(t) | t\in \Mess} \subset \mathit{set}(S_{n'})$.
Since $\Processes_n=\Processes_{n-1} \setminus \{\piout(t,t'); Q\}
\cup \{Q\}$ and $\{ Q \}
\leftrightarrow_P \{ \state_{p\cdot 1}(\tilde t) \}$ (by definition of
$\sem{P}_{=p}$), we have that $\Processes_n \leftrightarrow_P S_{f(n)}$,
\ie, \autoref{e:proc} holds.
\autoref{e:subst} holds since $t'$ was added to $\Subst_{n-1}$ and
$\Out(t)$ added to $S_{f(n-1)}$.
\autoref{e:axioms} holds since $\mathrm{K}(t)$ appears right before
$\ChannelInEvent(t)$.
\autoref{e:nonce},
\autoref{e:storea},
\autoref{e:storeb} and
\autoref{e:lock}
hold trivially.
\end{mycase}%}}}

\begin{mycase}[%{{{ in
$(\Names_{n-1}, \StoreA_{n-1}, \StoreB_{n-1},
\Processes_{n-1}=\Processes'\cup\{\piin(t,N); Q \}, \Subst_{{n-1}}, \ActiveLocks_{n-1})
\rightarrow
(\Names_{n-1}, \StoreA_{n-1}, \StoreB_{n-1}, \\
\Processes'\cup^\# \set{Q\theta}, \Subst_{n-1}, \ActiveLocks_{n-1})$]
This step requires that  $\theta$ is grounding for
$N$ and that $\nu \Names_{n-1}.\Subst_{n-1}\vdash \pair{t}{N\theta}$.
%
% As in the previous cases, we can assume without loss of generality that
% $\tilde r:=\set{a \mid \allowbreak \RepNonce(a) \in \cup_{j\leq f( n-1 )} F_j}$ is distinct from
% $\Names_{n-1}$ and the names in $t$ and $N$, and thus $\nu
% \Names_{n-1}, \tilde r. \Subst \vdash \pair{t}{N\theta}$.
%
Using \autoref{lem:ded}, we have that there is an execution $\emptyset
\trans{F_1} S_1 \trans {F_2} \ldots \trans {F_{f(n-1)}} \allowbreak
S_{f(n-1)} \rightarrow^* S \in \execmsr_E(\sem{P})$ such that
$\K(t)\in_E S$ and $S_{f(n-1)} \rightarrow_R^* S$ for
$R=\MDOutPubFreshAppl$. The same holds for $N\theta$. We can
combine those executions, by removing duplicate instantiations of
\textsc{Fresh}, \textsc{MDFresh} and \textsc{MDOut}. (This is possible
since \K is persistent.) Let $\emptyset
\trans{F_1} S_1 \trans {F_2} \ldots \trans {F_{f(n-1)}} \allowbreak
S_{f(n-1)} \rightarrow_R^* \overline S \in \execmsr_E(\sem{P})$ this combined
execution, and $\K(t),\K(N\theta)\in_E \overline S$.
Let $p$ and $\tilde t$ be such that,
$\piin(t,N); Q \leftrightarrow_P  \state_p(\tilde t)$.
By \autoref{def:process-bijection} there is
 a $\ri\in\ginsts(\sem{P}_{=p})$ such that
$\state_p(\tilde t)$ is part of its premise.
From the definition of $\sem{P}_{=p}$ and the fact that $\theta$ is grounding for
$N\theta$, we have
$\state_p(\tilde t)$ in their premise, namely,
\[ \ri=[\state_p(\tilde t),\In(\pair t {N\theta})]
\msrewrite{\ChannelInEvent(\pair t {N\theta})}
[\state_{p\cdot 1}(\tilde t\cup (\vars(N)\theta)]. \]

From $S_{n'}$, we can first apply the above transition
$S_{n'}\rightarrow_R^* \overline S$, and then, (since $\K(t),\allowbreak \K(N\theta),\allowbreak \state_p(\tilde
x)\in \overline S$), \textsc{MDAppl} for the pair constructer, \textsc{MDIn} and $\ri$:
\begin{align*}
  \emptyset \stackrel{F_1}{\longrightarrow}_\sem P S_1 &
   \ldots
  \stackrel{F_{n'}}{\longrightarrow}_\sem P S_{n'}
  \rightarrow_{R\subset \sem{P}}^* \overline S=S_{n'+s-3} \\
  &
  \trans{(\textsc{MdAppl})}_{\sem P} S_{n'+s-2}
  \trans{K(\pair{t}{N\theta})}_{\sem P} S_{n'+s-1}
  \trans{\ChannelInEvent(\pair{t}{N\theta})}_{\sem P} S_{n'+s}
\in \execmsr(\sem
  P)\text{, where}
\end{align*}
\begin{itemize}
\item since $S_{n'}\rightarrow_R S$, $S$ is such that
  $\mathit{set}(S_{n'})
  \setminus \set{\Fr(t),\Out(t) | t\in \Mess} \subseteq \mathit{set}(S)$,
  $\mathit{set}(S)
  \setminus \set{\K(t) | t\in \Mess} \subseteq \mathit{set}(S_{n'})$, and
  $\K(t),\K(N\theta)\in S$
\item $S_{n'+s-2}= S\mcup \mset{\K(\pair{t}{N\theta})}$,
\item $S_{n'+s-1}= S\mcup \mset{\In(\pair{t}{N\theta})}$,
\item $S_{n'+s}= S\setminus^\# \set{\state_p(\tilde t)} \cup^\#
  \set{\state_{p\cdot 1}(\tilde t\cup (\vars(N)\theta))}$.
\end{itemize}
Letting $k=n'+s-1$ we immediately have that \autoref{e:actions}
holds.

We now show that \autoref{e:proc} holds. Since by induction
hypothesis, $\piin(t,N); Q \leftrightarrow_P \state_p(\tilde t)$, we
have that $P|_p \tau = \piin(t,N); Q\rho$ for some $\tau$ and
$\rho$. Therefore we also have that $P|_{p\cdot 1} \tau \cup
(\theta\rho) = Q\rho (\theta\rho)$ and it is easy to see from
definition of $\sem{P}_{=p}$ that $\{ Q\theta \} \leftrightarrow_P \{
\state_{p\cdot 1}(\tilde t, (\vars(N)\theta))\}$. Since
$\Processes_n=\Processes_{n-1} \msetminus \{\piin(t,N); Q\} \mcup
\{Q\}$, we have that $\Processes_n \leftrightarrow_P S_{f(n)}$, \ie,
\autoref{e:proc} holds.
\autoref{e:axioms} holds since $\mathrm{K}\pair t {N\theta})$ appears right before
$\ChannelInEvent\pair t {N\theta})$.
\autoref{e:nonce},
\autoref{e:storea},
\autoref{e:storeb},
\autoref{e:subst} and
\autoref{e:axioms}
hold trivially.
\end{mycase}%}}}

\begin{mycase}[%{{{ silent input/output
$(\Names, \StoreA, \StoreB,
\Processes \cup \{\piout(c,m);Q\} \cup \{\piin(c',N);R\},\Subst, \ActiveLocks)
~\rightarrow
(\Names, \StoreA, \StoreB, \Processes \cup \set{Q, R \theta },\Subst,
\ActiveLocks)$]
This step requires that $\theta$ grounding
for $N$,
$t =_E N\theta$ and $c=_E c'$.
Let $p,p'$ and $\tilde t, \tilde N$ such that
$\{\piout(c,m); P\} \leftrightarrow_P  \state_p(\tilde t)$,
$\{\piin(c',N); Q\} \leftrightarrow_P  \state_{p'}(\tilde t')$,
and there are
$\ri\in\ginsts(\sem{P}_{=p})$
and
$\ri'\in\ginsts(\sem{P}_{=p'})$
 such that
$\state_p(\tilde t)$ and
$\state_{p'}(\tilde t')$
are part of their respective premise.
From the definition of $\sem{P}_{=p}$ and the fact that $\theta$ is
grounding for $N$, we have:
\begin{align*}
\ri_1 &=&[\state_p(\tilde t)]  &\rightarrow
[\Msg(t,N\theta),\semistate_{p\cdot1}(\tilde t)]\\
 \ri_2 &=&[\state_{p'}(\tilde t'),\Msg(t,N\theta)] &\rightarrow
[\state_{p'\cdot1}(\tilde t'\cup(\mathit{vars}(N)\theta)),\Ack(t,N\theta)]\\
 \ri_3 &=&[\semistate_{p}(\tilde t),\Ack(t,N\theta)] &\rightarrow
[\state_{p\cdot1}(\tilde t)].
\end{align*}
This allows to extend the previous execution by 3 steps:
\[
  \emptyset \stackrel{F_1}{\longrightarrow}_\sem P S_1
   \ldots
  \stackrel{F_{n'}}{\longrightarrow}_\sem P S_{n'}
  \stackrel{(\ri_1)}{\rightarrow_\sem P} S_{n'+s-2}
  \stackrel{(\ri_2)}{\rightarrow_\sem P} S_{n'+s-1}
  \stackrel{(\ri_2)}{\rightarrow_\sem P} S_{n'+s}
\in \execmsr(\sem P)
\]
where:
\begin{itemize}
	\item $S_{n'+s-2}= S_{n'}\msetminus \set{\state_p(\tilde t)}\mcup \mset{\Msg(t,N\theta),\semistate_{p\cdot1}(\tilde t)}$,

	\item $S_{n'+s-1}= S_{n'}\msetminus \set{\state_p(\tilde
t),\state_{p'}(\tilde t')}\mcup
\mset{\semistate_{p\cdot1}(\tilde t),\state_{p'\cdot1}(\tilde
t'\cup(\mathit{vars}(N)\theta)),\allowbreak\Ack(t,N\theta)}$,

	\item $S_{n'+s}= S_{n'}\msetminus \set{\state_p(\tilde
t),\state_{p'}(\tilde t')} \cup^\#
\set{
\state_{p\cdot 1}(\tilde t),
\state_{p'\cdot 1}(\tilde t'\cup (\vars(N)\theta))}$.
\end{itemize}
We have that $\Processes_n=\Processes_{n-1} \msetminus
\set{\piout(c,m);Q,~\piin(c',t'); R} \mcup \set{Q,
  R\theta}^\#$. Exactly as in the two previous cases we have that
$Q \leftrightarrow \state_{p\cdot 1}(\tilde t)$, as well as
$R\theta \leftrightarrow \state_{p'\cdot 1}(\tilde t')$.  Hence we have that, \autoref{e:proc} holds.
\autoref{e:nonce},
\autoref{e:storea},
\autoref{e:storeb},
\autoref{e:subst},
\autoref{e:lock},
\autoref{e:actions}
and
\autoref{e:axioms}
hold trivially.
\end{mycase}%}}}

\begin{mycase}[%{{{ if then - positive
$(\Names_{n-1}, \StoreA_{n-1},\StoreB_{n-1},
\Processes_{n-1}=\Processes' \cup \set{\code{if $t=t'$ then $Q$ else $Q'$}}, \Subst_{{n-1}},
\ActiveLocks_{n-1})
~\rightarrow
(\Names_{n-1}, \StoreA_{n-1},\\
\StoreB_{n-1}, \Processes'\cup \set{Q},
\Subst_{n-1}, \ActiveLocks_{n-1})$]
This step requires that $t=_\ET t'$.
\inclusionAcaseintro
{\code{if $t=t'$ then $Q$ else $Q'$}}
{[\state_p(\tilde t)] \msrewrite{\mbox{Eq}(t,t')} [\state_{p\cdot 1}(\tilde t)]}
{ \set{S_{n'}\msetminus\mset{\state_p(\tilde t)}\mcup \mset{\state_{p\cdot
1}(\tilde t)} }}
{Eq(t,t')}
The last step is labelled $F_{f(n)}=\mset{Eq(t,t')}$. As $t=_E t'$,
\autoref{e:axioms} holds, in particular, \AssEq\ is not violated. Since
$\mathrm{Eq}$ is reserved, \autoref{e:actions} holds as well.

As before, since $\Processes_n=\Processes_{n-1}
\msetminus \set{\text{if $t=t'$ then $Q$ else $Q'$}}
\mcup \{Q\}$ and
$\{ Q \}
\leftrightarrow \{ \state_{p\cdot 1}(\tilde t, a) \}$ (by definition of the
translation), we have that $\Processes_n \leftrightarrow_P
S_{f(n)}$, and therefore \autoref{e:proc} holds.
\autoref{e:nonce},
\autoref{e:storea},
\autoref{e:storeb},
\autoref{e:subst}
and
\autoref{e:lock}
hold trivially.
\end{mycase}%}}}

\begin{mycase}[%{{{ if then - negative
$(\Names_{n-1}, \StoreA_{n-1},\StoreB_{n-1},
\Processes_{n-1}=\Processes' \cup \set{\code{if $t=t'$ then $Q'$ else $Q$}}, \Subst_{{n-1}},
\ActiveLocks_{n-1})
~\rightarrow
(\Names_{n-1}, \StoreA_{n-1},\\
\StoreB_{n-1}, \Processes'\cup \set{Q'},
\Subst_{n-1}, \ActiveLocks_{n-1})$]
This step requires that $t\neq_E t'$.
This proof step is similar to the previous case, except $\ri$ is chosen to be
\[
[\state_p(\tilde t)] \msrewrite{\mbox{NotEq}(t,t')} [\state_{p\cdot
2}(\tilde t)].
\]
The condition in \AssNotEq\ holds since $t \neq_E t'$.
\end{mycase}%}}}

\begin{mycase}[%{{{ event
$(\Names_{n-1}, \StoreA_{n-1},\StoreB_{n-1},
\Processes_{n-1}=\Processes' \cup \set{\code{event($F$);$Q$}}, \Subst_{{n-1}},
\ActiveLocks_{n-1})
\trans{F}
(\Names_{n-1}, \StoreA_{n-1},\StoreB_{n-1},\\
 \Processes'\cup \set{Q},
\Subst_{n-1}, \ActiveLocks_{n-1})$
]
\inclusionAcaseintro
{\code{event($F$);$Q$}}
{[\state_p(\tilde t)] \msrewrite{F,\Event()} [\state_{p\cdot 1}(\tilde t)]}
{S_{n')}
\setminus^\# \{\state_p(\tilde t)\} \mcup \{ \state_{p\cdot 1}(\tilde
t)\}}
{F,\Event()}
\autoref{e:proc} holds because
$\Processes_n=\Processes_{n-1} \setminus^\#
\set{\text{event($F$);$Q$}}
\cup^\# \set{Q} $
and
$\{ Q \} \leftrightarrow \{ \state_{p\cdot 1}(\tilde t) \}$
 (by definition of $\sem{P}_{=p}$). Taking $k=f(n)$ \autoref{e:actions} holds.
\autoref{e:nonce},
\autoref{e:storea},
\autoref{e:storeb},
\autoref{e:subst},
\autoref{e:lock}
and
\autoref{e:axioms}
hold trivially.
\end{mycase}%}}}

\begin{mycase}[%{{{ insert
$(\Names_{n-1}, \StoreA_{n-1},\StoreB_{n-1},
\Processes_{n-1}=\Processes' \cup \set{\code{insert $t,t'$; $Q$}}, \Subst_{{n-1}},
\ActiveLocks_{n-1})
\rightarrow
(\Names_{n-1}, \StoreA_n = \StoreA_{n-1}\lbrack t \mapsto t'\rbrack,\\
 \StoreB_{n-1}, \Processes'\cup \set{Q},
\Subst_{n-1}, \ActiveLocks_{n-1})$]
\inclusionAcaseintro
{\text{insert $t,t'$; $Q$}}
{[\state_p(\tilde t)] \msrewrite{\mathrm{Insert}(t,t')} [\state_{p\cdot
1}(\tilde t)]}
{S_{f(n-1)} \msetminus \mset{ \state_p(\tilde t)} \mcup
\mset{\state_{p\cdot 1}(\tilde t)}}
{\mathrm{Insert}(t,t')}

This step is labelled $F_{f(n)}=\mathrm{Insert}(t,t')$, hence
\autoref{e:actions} holds. To see that \autoref{e:storea} holds we let $j =
f(n)$ for which both conjuncts trivially hold.  Since, by induction
hypothesis, \autoref{e:axioms} holds, i.e., $[ F_1,\ldots F_{n'}]
\vDash\alpha$, it holds for this step too.  In particular, if $[ F_1,\ldots
F_{n'}] \vDash\AssSetIn$ and $[ F_1,\ldots F_{n'}] \vDash\AssSetNotIn$, we
also have that $[ F_1,\ldots F_{n'},F_{f(n)}] \vDash\AssSetIn$ and $[
F_1,\ldots F_{n'},F_{f(n)}] \vDash \AssSetNotIn$: as the $\mathrm{Insert}$-action was added at the last position of the trace, it appears after any
$\mathrm{InIn}$ or $\mathrm{IsNotSet}$-action and by the semantics of the
logic the formula holds.

Since
$\Processes_n=\Processes_{n-1} \setminus^\#
\set{\text{insert $t,t'$; $Q$}}
\cup^\# \set{Q} $
and
$\{ Q \} \leftrightarrow \{ \state_{p\cdot 1}(\tilde t) \}$
 (by definition of $\sem{P}_{=p}$), we have that \autoref{e:proc} holds.
\autoref{e:nonce},
\autoref{e:storeb},
\autoref{e:subst}
and
\autoref{e:lock}
hold trivially.
\end{mycase}%}}}

\begin{mycase}[%{{{ delete
$(\Names_{n-1}, \StoreA_{n-1},\StoreB_{n-1},
\Processes_{n-1}=\Processes' \cup \set{\code{delete $t$; $Q$}}, \Subst_{{n-1}},
\ActiveLocks_{n-1})
\rightarrow
(\Names_{n-1}, \StoreA_n = \StoreA_{n-1}\lbrack t \mapsto \bot\rbrack,\\
 \StoreB_{n-1}, \Processes'\cup \set{Q},
\Subst_{n-1}, \ActiveLocks_{n-1})$]

\inclusionAcaseintro
{\code{delete $t$; $Q$}}
{[\state_p(\tilde t)] \msrewrite{\mathrm{Delete}(t)}
  [\state_{p\cdot 1}(\tilde t)]
}
{ S_{f(n-1)}
\setminus^\# \{ \state_p(\tilde t) \} \mcup \{ \state_{p\cdot 1}(\tilde t)\}}
{\mathrm{Delete}(t)}

This step is labelled $F_{f(n)}=\mathrm{Delete}(t)$, hence
\autoref{e:actions} holds.  Since, by induction hypothesis,
\autoref{e:axioms} holds, i.e., $[ F_1,\ldots F_{n'}] \vDash\alpha$,
it holds for this step too.  In particular, if $[ F_1,\ldots F_{n'}]
\vDash\AssSetIn$ and $[ F_1,\ldots F_{n'}] \vDash\AssSetNotIn$, we
also have that $[ F_1,\ldots F_{n'},F_{f(n)}] \vDash\AssSetIn$ and $[
F_1,\ldots F_{n'},F_{f(n)}] \vDash \AssSetNotIn$: as the
$\mathrm{Insert}$-action was added at the last position of the trace, it
appears after any $\mathrm{InIn}$ or $\mathrm{IsNotSet}$-actions and by
the semantics of the logic the formula holds.

We now show that \autoref{e:storea} holds. We have that $\StoreA_{n}=
\StoreA_{n-1}\lbrack t \mapsto \bot\rbrack$ and therefore, for all
$t'\neq_ET t$, $\StoreA_{n}(x)=\StoreA_{n-1}(x)$. Hence for all such
$t'$ we have by induction hypothesis that  for
some $u$,
$$\exists j \leq n'. \mathrm{Insert}(t',u) \in
      F_j
      \wedge \forall j', u'. j < j' \leq n' \to
      \mathrm{Insert}(t',u') \not \in_\ET F_{j'} \wedge
      \mathrm{Delete}(t') \not \in_\ET F_{j'}$$

As,  $F_{n'+1}\neq_E  \mathrm{Delete}(x,u)$
and, for all $u'\in\Mess$,
$F_{n'+1}\neq_E  \mathrm{Insert}(x,u')$ we also have that
$$ \exists j \leq n'+1. \mathrm{Insert}(t',u) \in
F_j \wedge \forall j', u'. j < j' \leq n'+1 \to \mathrm{Insert}(t',u')
\not \in_\ET F_{j'} \wedge \mathrm{Delete}(t') \not \in_\ET F_{j'}.$$
For $t' =_\ET t$, the above condition can never be true, because
$F_{n'+1}=\mathrm{Delete}(t)$ which allows us to conclude that
\autoref{e:storea} holds.

Since $\Processes_n=\Processes_{n-1} \setminus^\#
\set{\text{delete $t$; $Q$}}
\cup^\# \set{Q} $
and
$\{ P \} \leftrightarrow \{ \state_{p\cdot 1}(\tilde t) \}$
 (by definition of $\sem{P}_{=p}$), we have that \autoref{e:proc} holds.
\autoref{e:nonce},
\autoref{e:storeb},
\autoref{e:subst}
and
\autoref{e:lock}
hold trivially.
\end{mycase}%}}}

\begin{mycase}[%{{{ lookup-positive
$(\Names_{n-1}, \StoreA_{n-1},\StoreB_{n-1},
\Processes_{n-1}=\Processes' \cup \set{
\code{lookup $t$ as $x$ in $Q$ else $Q'$}
}, \Subst_{n-1},
\ActiveLocks_{n-1})
\rightarrow
(\Names_{n-1}, \StoreA_{n-1},\\
 \StoreB_{n-1}, \Processes'\cup \set{Q\{v/x\}},
\Subst_{n-1}, \ActiveLocks_{n-1})$]
This step requires that
$\StoreA_{n-1}(t')=_E v$ for some $t'=_E t$.
\inclusionAcaseintro
{\code{lookup $t$ as $v$ in $Q$ else $Q'$}}
 {[\state_p(\tilde t)]
	\msrewrite{\mathrm{IsIn}(t,v)}
 [\state_{p\cdot 1}(\tilde t,v)]
}
{S_{f(n-1)}
\setminus^\# \{ \state_p(\tilde t) \} \mcup \{ \state_{p\cdot
  1}(\tilde t)\} }
{\mathrm{IsIn}(t,v)}

This step is labelled $F_{f(n)}=\mathrm{IsIn}(t,v)$, hence
\autoref{e:actions} holds.

From the induction hypothesis, \autoref{e:storea}, we
have that there is a $j$ such that $\mathrm{Insert}(t,t')\in_E F_j$,
$j\le n'$ and
\[ \forall j', u'.\ j < j' \leq n' \to
      \mathrm{Insert}(t,u') \not \in_\ET F_{j'} \wedge
      \mathrm{Delete}(t) \not \in_\ET F_{j'} \]
  This can be strengthened, since
$F_{f(n)}=\set{\mathrm{IsIn}(t,v)}$:
\[ \forall j', u'.\ j < j' \leq f(n) \to \mathrm{Insert}(t,u') \not
\in_\ET F_{j'} \wedge \mathrm{Delete}(t) \not \in_\ET F_{j'} \] This
allows to conclude that \autoref{e:storea} holds. From
\autoref{e:storea} it also follows that \autoref{e:axioms}, in
particular \AssSetIn, holds.

We now show that \autoref{e:proc} holds. By induction
hypothesis we have that $\code{lookup } t \code{ as } x \allowbreak \code{ in } Q
\code{ else } Q'
\leftrightarrow_P \state_p(\tilde t)$, and hence
$P|_p \tau = (\code{lookup $t$ as $x$ in $Q$ else $Q'$})\rho$ for some $\tau$ and
$\rho$. Therefore we also have that $P|_{p\cdot 1} \tau \cup
(\{^{v \rho} / _x \}) = Q\rho \{^{v \rho} / _x \})$ and it is easy to see from
definition of $\sem{P}_{=p}$ that $\{ Q\{^v/_x\} \} \leftrightarrow_P \{
\state_{p\cdot 1}(\tilde t, v)\}$. Since $\Processes_n=\Processes_{n-1} \setminus^\#
\set{\code{lookup $t$ as $x$ in $Q$ else $Q'$}}
\cup^\# \set{Q \{^v / _x\}} $
 we have that $\Processes_n \leftrightarrow_P S_{f(n)}$, \ie,
\autoref{e:proc} holds.

\autoref{e:nonce},
\autoref{e:storeb},
\autoref{e:subst}
and
\autoref{e:lock}
hold trivially.
\end{mycase}%}}}

\begin{mycase}[%{{{ lookup-negative
$(\Names_{n-1}, \StoreA_{n-1},\StoreB_{n-1},
\Processes_{n-1}=\Processes' \cup \set{
\code{lookup $t$ as $x$ in $Q$ else $Q'$}
}, \Subst_{n-1},
\ActiveLocks_{n-1})
\rightarrow
(\Names_{n-1},\\
 \StoreA_{n-1}, \StoreB_{n-1}, \Processes'\cup \set{Q'},
\Subst_{n-1}, \ActiveLocks_{n-1})$]
This step requires that $S(t')$ is undefined for all $t'=_E t$.
\inclusionAcaseintro
{\code{lookup $t$ as $x$ in $Q$ else $Q'$}}
{
[\state_p(\tilde t)] \msrewrite{\mathrm{IsNotSet}(t)}
  [\state_{p\cdot 1}(\tilde t)]
}
{S_{f(n-1)}
\setminus^\# \state_p(\tilde t) \mcup \state_{p\cdot 1}(\tilde t)}
{\mathrm{IsNotSet}(t)}

This step is labelled $F_{f(n)}=\mathrm{IsNotSet}(t)$, hence
\autoref{e:actions} holds. \autoref{e:storea} also holds trivially and
will be used to show \autoref{e:axioms}.  Since this step requires
that $S(t')$ is undefined for all $t'=_E t$,
we have by \autoref{e:storea} that
\begin{align*}
  \forall j\le f(n), u  .\ & \mathrm{Insert}(t,u)\in_E F_j \\
 & \rightarrow
\exists j',u'. j<j'\le f(n) \wedge (
\mathrm{Insert}(t,u') \in_E F_{j'}  \vee
\mathrm{Delete}(t) \in_E F_{j'} )
\end{align*}
Now suppose that
$$\exists i\le f(n), y . \mathrm{Insert}(t,y)\in_E
F_i )$$
As there exists an insert, there is a last
insert and hence we also have
$$
\exists i\le f(n), y . \mathrm{Insert}(t,y)\in_E F_i
\quad \wedge \quad
\forall i', y'. i<i'\le f(n) \to \mathrm{Insert}(t,y')
\notin_\ET F_{i'}
$$
Applying \autoref{e:storea} (cf above) we obtain that
\begin{align*}
  \exists i\le f(n), y .& \mathrm{Insert}(t,y)\in_E F_i
\quad \wedge \quad
	 \forall i', y'.\ i<i'\le f(n) \to \mathrm{Insert}(t,y')
\notin_\ET F_{i'}
\\
 \wedge \quad
& \exists j',u'.\ i<j'\le f(n) \wedge (
\mathrm{Insert}(t,u') \in_E F_{j'}  \vee
\mathrm{Delete}(t) \in_E F_{j'} )
\end{align*}
which simplifies to
\begin{align*}
  \exists i\le f(n), y .& \mathrm{Insert}(t,y)\in_E F_i
  \quad \wedge \quad
	 \forall i',y'.\ i<i'\le f(n) \to \mathrm{Insert}(t',y')
\notin F_{i'}
\\
 \wedge \quad
& \exists j'.\  i<j'\le f(n) \wedge
\mathrm{Delete}(t) \in_E F_{j'}
\end{align*}
Now we weaken the statement by dropping the first conjunct and
restricting the quantification $\forall i'. i<i'\le f(n)$ to $\forall
i'. j'<i'\le f(n)$, since $i<j'$.
$$
\exists i\le f(n) .\ \exists j'.\ i<j'\le f(n) \wedge
	 \forall i'.\ j'<i'\le f(n) \to \mathrm{Insert}(t',y')
\notin F_{i'}
 \wedge
\mathrm{Delete}(t) \in_E F_{j'}
$$
We further weaken the statement by weakening the scope of the
existential quantification $\exists j'.\  i<j'\le f(n)$ to $\exists
j'.\  j'\le f(n)$.  Afterwards, $i$ is not needed anymore.
$$
\exists j'.\  j'\le f(n) \wedge
	 \forall i'.\  j'<i'\le f(n) \to \mathrm{Insert}(t',y')
\notin F_{i'}
 \wedge
\mathrm{Delete}(t) \in_E F_{j'}
$$
This statement was obtained under the hypothesis that
$\exists i\le f(n), y . \mathrm{Insert}(t,y)\in_E
F_i )$. Hence we have that
\begin{align*}
\forall i\le f(n), y . &\mathrm{Insert}(t,y)\not\in_E F_i \\
\vee
 \exists j'\le f(n)  .\ &
\mathrm{Delete}(t) \in_E F_{j'}
 \wedge \forall i'.\  j'<i'\le f(n) \to \mathrm{Insert}(t',y')
\notin F_{i'}
\end{align*}
This shows that \autoref{e:axioms}, in particular \AssSetNotIn, holds.

Since
$\Processes_n=\Processes_{n-1} \setminus^\#
\set{\code{lookup $t$ as $x$ in $Q$ else $Q'$}}
\cup^\# \set{Q'} $
and
$\{ Q' \} \leftrightarrow \{ \state_{p\cdot 1}(\tilde t) \}$
 (by definition of $\sem{P}_{=p}$), we have that \autoref{e:proc} holds.
\autoref{e:nonce},
\autoref{e:storeb},
\autoref{e:subst}
and
\autoref{e:lock}
hold trivially.
\end{mycase}%}}}

\begin{mycase}[%{{{ lock
$(\Names_{n-1}, \StoreA_{n-1},\StoreB_{n-1},
 \Processes_{n-1}=\Processes' \cup \set{
\code{lock $t$; $Q$}
}, \Subst_{n-1},
\ActiveLocks_{n-1})
\rightarrow
(\Names_{n-1}, \StoreA_{n-1}, \StoreB_{n-1}, \Processes'\mcup \set{Q'},\\
\Subst_{n-1}, \ActiveLocks_{n-1}\cup \set{t})$]
This step requires that for all $t'=_E t$, $t'\notin \ActiveLocks_{n-1}$.
Let $p$ and $\tilde t$ such that
$\code{lock $t$; $Q$} \leftrightarrow_P  \state_p(\tilde t)$.
By \autoref{def:process-bijection},
there is
 a $\ri\in\ginsts(\sem{P}_{=p})$ such that
$\state_p(\tilde t)$ is part of its premise.
By definition of $\sem{P}_{=p}$, we can choose $\ri=
[\Fr(l),\allowbreak\state_p(\tilde t)]\allowbreak
\msrewrite{\mathrm{Lock}(l, t)}
  [\state_{p\cdot 1}(\tilde t, l)] $
for a fresh name $l$, that never appeared in a $\Fr$-fact in $\cup_{j\leq
f(n-1)} S_j$.
We can extend the previous execution by $s=2$ steps using an instance of
\textsc{Fresh} for $l$ and $\ri$:
  \[
  \emptyset \stackrel{F_1}{\longrightarrow}_\sem P S_1
  \stackrel{F_2}{\longrightarrow}_\sem P \ldots
  \stackrel{F_{n'}}{\longrightarrow}_\sem P S_{n'}
  \trans{~}_{\set{\textsc{fresh} }} S_{n'+s-1}
  \trans{\mathrm{Lock}(l, t)}_\sem P S_{n'+s}
\in \execmsr(\sem
  P)\]
 with
$S_{n'+s-1}=S_{f(n-1)}
\setminus^\# \mset{\state_p(\tilde t)} \mcup \set{\Fr(l)}$
and
$S_{n'+s}=S_{f(n-1)}
\setminus^\# \mset{\state_p(\tilde t)} \mcup \mset{\state_{p\cdot 1}(\tilde t)}$.
It is left to show that Conditions~\ref{e:nonce} to \ref{e:actions} hold
for $n$.

The step from $S_{f(n)-1}$ to $S_{f(n)}$ is labelled
$F_{f(n)}=\mathrm{Lock}(l, t)$, hence
\autoref{e:actions} and \autoref{e:storea} hold.

$F_{f(n)}$ also preserves \autoref{e:lock} for the new set of active
locks $\ActiveLocks_{f(n)}=\ActiveLocks_{f(n-1)} \cup \set{t}$.

In the following we show by contradiction that \AssLock, and therefore
\autoref{e:axioms} holds. \AssLock held in the previous step, and $F_{f(n-1)+1}$ is
empty, so we assume (by contradiction), that
$F_{f(n)}=\mathrm{Lock}(l, t)$ violates \AssLock. If this was
the case, then:
\begin{align}
 \exists i<f(n),l_1 & .\ \mathrm{Lock}(l_1, t)\in_E F_i  \notag \wedge\\
&\wedge \forall j.\ i<j<f(n) \to
\mathrm{Unlock}(l_1,t)\not\in_E F_j
\notag\\
\label{eq:contradiction}
& \quad \vee
\exists l_2, k.\ i<k<j \wedge
(\mathrm{Lock(l_2,t)}\in_E F_k \vee
\mathrm{Unlock(l_2,t)}\in_E F_k)
\end{align}
Since the semantics of the calculus requires that for all $t'=_E t$, $t'\notin \ActiveLocks_{n-1}$, by
induction hypothesis, \autoref{e:lock}, we have that
\begin{align*}
  \forall i<f(n-1),l_1 & .\ \mathrm{Lock}(l_1,t)\in_E  F_i \to \\
  & \exists j.\ i<j<f(n-1) \wedge \mathrm{Unlock}(l_1,t)\in_E F_j
  \intertext{Since $F_{f(n-1)+1}=\emptyset$ and
    $f(n)=f(n-1)+2$, we have:}
  \forall i<f(n),l_1 &. \mathrm{Lock}(l_1,t)\in_E  F_i \to \\
  & \exists j.\ i<j<f(n) \wedge \mathrm{Unlock}(l_1,t)\in_E F_j
\end{align*}
We apply Proposition~\ref{prop:total-order} for the total order $>$
on the integer interval $i+1..f(n)-1$:
\begin{align*}
  \forall i<f(n),l_1.\  & \mathrm{Lock}(l_1,t)\in_E  F_i \to \\
 & \exists j.\ i<j<f(n) \wedge \mathrm{Unlock}(l_1,t)\in_E F_j \\
 & \wedge \quad \forall k.\ i<k<j \to
 \mathrm{Unlock}(l_1,t)\not\in_E F_k
\end{align*}
Combining this with (\ref{eq:contradiction}) we obtain that
\begin{align*}
  \exists i<f(n),l_1 .\ & \mathrm{Lock}(l_1, t)\in_E F_i  \wedge\\
  & \exists j.\ i<j<f(n) \wedge \mathrm{Unlock}(l_1,t)\in_E F_j \\
  & \quad \wedge
  \exists l_2, k.\ i<k<j \wedge
  (\mathrm{Lock(l_2,t)}\in_E F_k \vee
  ( \mathrm{Unlock(l_2,t)}\in_E F_k \wedge l_2\neq_E l_1))
\end{align*}

Fix $i<f(n)$, $j$ such that $i<j<f(n)$, and $l_1$ such that
$\mathrm{Lock}(l_1, t)\in_E F_i$ and $\mathrm{Unlock}(l_1,t)\in_E
F_j$.  Then, there are $l_2$ and $k$ such that $i<k<j$ and either
$\mathrm{Lock(l_2,t)}\in_E F_k$ or $\mathrm{Unlock(l_2,t)}\in_E F_k$,
but $l_2\neq_E l_1$. We proceed by case distinction.

\noindent \underline{Case 1:} there is no unlock in between $i$ and
$j$, \ie, for all $m$, $i<m<j$, $\mathrm{Unlock}(l',t)\not\in
F_m$. Then there is a $k$ and $l_2$ such that
$\mathrm{Lock(l_2,t)}\in_E F_k$. In this case, \AssLock\ is already
invalid at the trace produced by the $k$-prefix of the execution,
contradicting the induction hypothesis.

\noindent \underline{Case 2:} there are $l'$ and $m$, $i<m<j$ such that
$\mathrm{Unlock}(l',t)\in F_m$ (see
\autoref{fig:visualisation-of-case-2-}).

\begin{figure}[h] % (fold)
\centering
\begin{tikzpicture}[scale=3
]

%draw horizontal line
\draw[snake] (0,0) -- (1,0);
\draw[snake] (1,0) -- (2,0);
\draw[snake] (2,0) -- (3,0);
\draw[snake] (3,0) -- (4,0);

%draw vertical lines
\foreach \x in {1,2,3}
   \draw (\x cm,2pt) -- (\x cm,-2pt);

%draw nodes
\draw (1,0) node[below=3pt] {$ i $} node[above=3pt] {$\mathrm{Lock}(l_1,t)$};
\draw (2,0) node[below=3pt] {$ m $} node[above=3pt] {$\mathrm{Unlock}(l',t)$};
\draw (3,0) node[below=3pt] {$ j $} node[above=3pt] {$\mathrm{Unlock}(l_1,t)$};
\end{tikzpicture}
\caption{ Visualisation of Case 2. }
\label{fig:visualisation-of-case-2-}
% figure visualisation-of-case-2- (end)
\end{figure}

We first observe that for any $l, u, i_1, i_2$ , if
$\mathrm{Unlock(l,u)}\in_E F_{i_1}$ and $\mathrm{Unlock(l,u)}\in_E
F_{i_2}$, then $i_1=i_2$. We proceed by contradiction.  By definition of
$\sem{P}$ and well-formedness of $P$, the steps from $i_1-1$ to $i_1$ and
from $i_2-1$ to $i_2$ must be ground instances of rules $\sem{P}_{=q}$ and
$\sem{P}_{=q'}$ such that $P|_{q}$ and $P|_{q'}$ start with unlock commands
that are labelled the same and have the same parameter, since every
variable $\mathit{lock_l}$ in $\sem{P}$ appears in a $\Fr$-fact in the
translation for the corresponding lock command.  By definition of
$\overline P$, this means $q$ and $q'$ have a common prefix $q_l$ that
starts with a lock with this label.

Let $q_l \le q$ denote that $q_l$ is a prefix of $q$.  Since
$\overline P$ gives $\bot$ if there is a replication or a parallel
between $q_l$ and $q$ or $q'$, and since $P$ is well-formed (does not
contain $\bot$), we have that every state fact $\state_r$ for $q_l \le
r \le q$ or $q_l \le r \le q'$ appearing in $\sem{P}$ is a linear
fact, since no replication is allowed between $q_l$ and $q$ or
$q'$. This implies that $q'\neq q$. Furthermore, every rule in
$\cup_{q_l \le r \le q \vee q_l \le r \le q' } \sem{P}_{=r}$ adds at
most one fact $\state_r$ and if it adds one fact, it either removes a
fact $\state_{r'}$ where $r=r'\cdot 1$ or $r'\cdot 2$, or removes a
fact $\semistate_{r'}$ where $r=r'\cdot 1$, which in turn requires
removing $\state_{r'}$ (see translation of \code{out}). Therefore,
either $q\le q'$ or $q' \le q$. But this implies that both have
different labels, and since $\sem{P}_{=q_l}$ requires $\Fr(l)$, and $\ET$
distinguishes fresh names, we
have a contradiction.  (A similiar observation is possible for locks:
For any $l,u,i_1,i_2$ , if $\mathrm{Lock(l,u)}\in_E F_{i_1}$ and
$\mathrm{Lock(l,u)}\in_E F_{i_2}$, then $i_1=i_2$, since by definition
of the translation, the transition from $i_1-1$ to $i_1$ or $i-2-1$ to
$i_2$ removes fact $\Fr(l)$.)

From the first observation we learn that , $l'\neq_E l_1$ for any $l'$ and $m$, $i<m<j$ such that
$\mathrm{Unlock}(l',t)\in F_m$.  We now choose the smallest such $m$. By
definition of $\sem P$, the step from $S_{m-1}$ to $S_m$ must be ground
instance of a rule from $\sem{P}_{=q}$ for $P|_{q}$ starting with
\code{unlock}. Since $P$ is well-formed, there is a $q_l$ such that
$P|_{q_l}$ starts with \code{lock}, with the same label and parameter as
the \code{unlock}. As before, since $P$ is well-formed, and therefore there
are no replications and parallels between $q_l$ and $q$, there must be $n$
such that $\mathrm{Lock}(l',t)\in F_n$ and $n<m$. We proceed again by
case distinction.

\noindent \underline{Case 2a:} $n<i$ (see
\autoref{fig:visualisation-of-case-2a-}). By the fact that $m>i$ we
have that there is no $o$ such that $n<o<i$ and
$\mathrm{Unlock}(l',t)\in_E F_o$ (see first observation). Therefore,
the trace produced by the $i$-prefix of this execution does already
not satisfy \AssLock, \ie, $[F_1,\dots,F_i] \not\vDash \AssLock$.
\begin{figure}[h] % (fold)
\centering
\begin{tikzpicture}[scale=3 ]

%draw horizontal line
\draw[snake] (0,0) -- (1,0);
\draw[snake] (1,0) -- (2,0);
\draw[snake] (2,0) -- (3,0);
\draw[snake] (3,0) -- (4,0);
\draw[snake] (4,0) -- (5,0);

%draw vertical lines
\foreach \x in {1,2,3,4}
   \draw (\x cm,2pt) -- (\x cm,-2pt);

%draw nodes
\draw (1,0) node[below=3pt] {$ n $} node[above=3pt] {$\mathrm{Lock}(l',t)$};
\draw (2,0) node[below=3pt] {$ i $} node[above=3pt] {$\mathrm{Lock}(l_1,t)$};
\draw (3,0) node[below=3pt] {$ m $} node[above=3pt] {$\mathrm{Unlock}(l',t)$};
\draw (4,0) node[below=3pt] {$ j $} node[above=3pt] {$\mathrm{Unlock}(l_1,t)$};
\end{tikzpicture}
\caption{ Visualisation of Case 2a. }
\label{fig:visualisation-of-case-2a-}
% figure visualisation-of-case-2a- (end)
\end{figure}

\noindent \underline{Case 2b:} $i<n$ (see
\autoref{fig:visualisation-of-case-2b-}).
Again, \AssLock\ is not satisfied, i.e., $[F_1,\dots,F_n] \not\vDash
\AssLock$, since there is no $o$ such that $i<o<n$ and
$\mathrm{Unlock}(l_1,t)\in_E F_o$.

\begin{figure}[h] % (fold)
\centering
\begin{tikzpicture}[scale=3
]

%draw horizontal line
\draw[snake] (0,0) -- (1,0);
\draw[snake] (1,0) -- (2,0);
\draw[snake] (2,0) -- (3,0);
\draw[snake] (3,0) -- (4,0);
\draw[snake] (4,0) -- (5,0);

%draw vertical lines
\foreach \x in {1,2,3,4}
   \draw (\x cm,2pt) -- (\x cm,-2pt);

%draw nodes
\draw (1,0) node[below=3pt] {$ i $} node[above=3pt] {$\mathrm{Lock}(l_1,t)$};
\draw (2,0) node[below=3pt] {$ n $} node[above=3pt] {$\mathrm{Lock}(l',t)$};
\draw (3,0) node[below=3pt] {$ m $} node[above=3pt] {$\mathrm{Unlock}(l',t)$};
\draw (4,0) node[below=3pt] {$ j $} node[above=3pt] {$\mathrm{Unlock}(l_1,t)$};
\end{tikzpicture}
\caption{ Visualisation of Case 2b. }
\label{fig:visualisation-of-case-2b-}
% figure visualisation-of-case-2b- (end)
\end{figure}

Since we could, under the assumption that Condition~\ref{e:nonce} to
Condition~\ref{e:actions} hold for $i\le n'$, reduce every case in
which $[F_1,\ldots,F_{n'+1}]\not \vDash \AssLock$ to a contradiction,
we can conclude that \autoref{e:axioms} holds for $n'+1$.

Since
$\Processes_n=\Processes_{n-1} \setminus^\#
\set{\text{lock $t$; $Q$}}
\cup^\# \set{Q} $
and
$\{ Q \} \leftrightarrow \{ \state_{p\cdot 1}(\tilde t) \}$
 (by definition of the
translation), we have that \autoref{e:proc} holds.
\autoref{e:nonce},
\autoref{e:storeb}
and
\autoref{e:subst}
hold trivially.

\end{mycase} %}}}

\begin{mycase}[%{{{ unlock
$(\Names_{n-1}, \StoreA_{n-1},\StoreB_{n-1},
\Processes_{n-1}=\Processes' \cup \set{
\code{unlock $t$; $Q$}
}, \Subst_{n-1},
\ActiveLocks_{n-1})
\rightarrow
(\Names_{n-1}, \StoreA_{n-1}, \StoreB_{n-1},\\
\Processes'\mcup \set{Q'},
\Subst_{n-1}, \ActiveLocks_{n-1}\setminus \set{t' \colon t'=_E t})$]
\inclusionAcaseintro
{ \code{unlock $t$; $Q$}}
{
[\state_p(\tilde t)]
\msrewrite{\mathrm{Unlock}(l, t)}
  [\state_{p\cdot 1}(\tilde t)]
}
{ S_{f(n-1)} \setminus^\# \{ \state_p(\tilde t) \} \mcup \{ \state_{p\cdot 1}(\tilde
t)\}}
{\mathrm{Unlock}(l, t)}

The step from $S_{f(n-1)}$ to $S_{f(n)}$ is labelled
$F_{f(n)}=\mathrm{Unlock}(l, t)$, hence
\autoref{e:actions} and \autoref{e:storea} hold.

In order to show that \autoref{e:lock} holds, we perform a case
distinction.  Assume $t\not\in_E \in\ActiveLocks_{n-1}$. Then,
$\ActiveLocks_{f(n-1)}=\ActiveLocks_{f(n)}$. In this case,
\autoref{e:lock} holds by induction hypothesis.  In the following, we
assume $t\in_E\ActiveLocks_{n-1}$.  Thus, there is $j\in n',l'$ such
that $\mathrm{Lock}(l',t)\in_E F_j$ and for all $k$ such that $j<k\le
n'$, $\mathrm{Unlock}(l',t)\not\in_E F_k$.

Since $P|_{p}$ is an unlock node and $P$ is well-formed, there is a prefix
$q$ of $p$, such that $P|_{q}$ is a lock with the same parameter and
annotation. By definition of $\overline P$, there is no parallel and
no replication between $q$ and $p$. Note that
any rule in $\sem{P}$ that produces a state named
$\state_p$ for a non-empty $p$ is such that it requires a fact with name
$\state_{p'}$ for $p=p'\cdot 1$ or $p=p'\cdot 2$ (in case of the
translation of \code{out}, it might require $\semistate_{p'}$, which in
turn requires $\state_{p'}$).
This means that, since $\state_p(\tilde t)\in S_{n'}$, there
is an $i$ such that $\state_q(\tilde t') \in S_i$ and  $\state_q(\tilde t')
\not\in S_{i-1}$ for $\tilde t'$ a prefix to $t$. This rule is an instance
of $\sem{P}_{=q}$ and thus labelled $F_i=\mathrm{Lock}(l,t)$.
We proceed by case distinction.

\begin{figure}[h] % (fold)
\centering
\begin{tikzpicture}[scale=3
]

%draw horizontal line
\draw[snake] (0,0) -- (1,0);
\draw[snake] (1,0) -- (2,0);
\draw[snake] (2,0) -- (3,0);
\draw[snake] (3,0) -- (4,0);

%draw vertical lines
\foreach \x in {1,2,3}
   \draw (\x cm,2pt) -- (\x cm,-2pt);

%draw nodes
\draw (1,0) node[below=3pt] {$ j $} node[above=3pt] {$\mathrm{Lock}(l',t)$};
\draw (2,0) node[below=3pt] {$ i $} node[above=3pt] {$\mathrm{Lock}(l,t)$};
\draw (3,0) node[below=3pt] {$ n'+1 $} node[above=3pt] {$\mathrm{Unlock}(l,t)$};
\end{tikzpicture}
\caption{ Visualisation of Case 1. }
\label{fig:visualisation-of-case-1-}
% figure visualisation-of-case-1- (end)
\end{figure}

\noindent \underline{Case 1:} $j<i$ (see
\autoref{fig:visualisation-of-case-1-}). By induction hypothesis,
\autoref{e:axioms} holds for the trace up to $n'$.  But,
$[F_1,\ldots,F_{i}]\not\vDash \AssLock$, since we assumed that for all
$k$ such that $j<k\le n'$, $\mathrm{Unlock}(l',t)\not\in_E F_k$.

\begin{figure}[h] % (fold)
\centering
\begin{tikzpicture}[
scale=3
]

%draw horizontal line
\draw[snake] (0,0) -- (1,0);
\draw[snake] (1,0) -- (2,0);
\draw[snake] (2,0) -- (3,0);
\draw[snake] (3,0) -- (4,0);

%draw vertical lines
\foreach \x in {1,2,3}
   \draw (\x cm,2pt) -- (\x cm,-2pt);

%draw nodes
\draw (1,0) node[below=3pt] {$ i $} node[above=3pt] {$\mathrm{Lock}(l,t)$};
\draw (2,0) node[below=3pt] {$ j $} node[above=3pt] {$\mathrm{Lock}(l',t)$};
\draw (3,0) node[below=3pt] {$ n'+1 $} node[above=3pt] {$\mathrm{Unlock}(l,t)$};
\end{tikzpicture}
\caption{ Visualisation of Case 2. }
\label{fig:visualisation-of-case-2-}
% figure visualisation-of-case-2- (end)
\end{figure}

\noindent \underline{Case 2:} $i<j$ (see
\autoref{fig:visualisation-of-case-2-}).  As shown in the \code{lock}
case, any $k$ such that $\mathrm{Unlock}(l,t)\in_E F_k$ is
$k=n'+1$. This contradicts \autoref{e:axioms} for the trace up to $j$,
since $[F_1,\ldots,F_{j}]\not\vDash \AssLock$, because there is not
$k$ such that $i<k<j$ such that $\mathrm{Unlock}(l,t)\in_E F_k$.  This
concludes the proof that \autoref{e:lock} holds for $n+1$.

\autoref{e:axioms} holds, since none of the axioms, in particular not
\AssLock, become unsatisfied if they were satisfied for the trace up
to $f(n-1)$ and an $\mathrm{Unlock}$ is added.

Since
$\Processes_n=\Processes_{n-1} \setminus^\#
\set{\text{unlock $t$; $Q$}}
\cup^\# \set{Q} $
and
$\{ Q \} \leftrightarrow \{ \state_{p\cdot 1}(\tilde t) \}$
 (by definition of the
translation), we have that \autoref{e:proc} holds.
\autoref{e:nonce},
\autoref{e:storeb}
and
\autoref{e:subst}
hold trivially.
\end{mycase} %}}}

\begin{mycase}[%{{{ embedded msrs
$(\Names_{n-1}, \StoreA_{n-1},\StoreB_{n-1},
\Processes_{n-1}=\Processes' \cup \set{
l \msrewrite a r;~Q
}, \Subst_{n-1},
\ActiveLocks_{n-1})
\trans a \newline
(\Names_{n-1}, \StoreA_{n-1}, \StoreB_{n-1}\setminus \lfacts(l') \cup^\# \mathit{mset}(r),
 \Processes'\mcup \set{Q},
\Subst_{n-1}, \ActiveLocks_{n-1})$]
This step requires that $l' \msrewrite{a'} r'
\in_E \ginsts(l \msrewrite a r)$ and
$\lfacts(l') \subset^\# \StoreB_{n-1}, \pfacts(l')\subset
\mathit{mset}(\StoreB_{n-1})$. Let $\theta$ be a substitution such
that $(l \msrewrite{a} r)\theta = (l' \msrewrite{a'} r')$.
Since, by induction hypothesis,
$\StoreB_{n-1}=S_{n'}\msetminus \ReservedFacts$,
we therefore have $\lfacts(l') \subset^\# S_{n'}, \pfacts(l')\subset
\mathit{mset}(S_{n'})$.
\inclusionAcaseintro
{\code{$l \msrewrite a r$; Q}}
{ [\state_p(\tilde t), l'] \msrewrite {a',\Event()} [r',\state_{p\cdot 1}(\tilde t\cup
(\vars(l)\theta))]
}
{ S_{f(n-1)} \setminus^\#  \mset{ \state_p(\tilde t)}\msetminus \lfacts(l')
\mcup \mset{\state_{p\cdot 1}(\tilde t \cup
(\vars(l)\theta))} \mcup \mathit{mset}(r') }
{a',\Event()}

\autoref{e:storeb} holds since
\begin{align*}
	\StoreB_n & = \StoreB_{n-1}\msetminus \lfacts(l')
\cup^\# \mathit{mset}(r)  \\
	& =  ( S_{n'} \msetminus  \ReservedFacts )\setminus \lfacts(l')
\cup^\# \mathit{mset}(r)  \tag{induction hypothesis} \\
	& =  ( S_{n'} \msetminus \lfacts(l')
\cup^\# \mathit{mset}(r)  \msetminus \mset{ \state_p(\tilde t)} \mcup
\mset{\state_{p\cdot 1}(\tilde t \cup
(\vars(l)\theta))} )
 \msetminus  \ReservedFacts  \tag{since $\state_p(\tilde t), \state_{p\cdot 1}(\tilde t \cup
(\vars(l)\theta))\in
\ReservedFacts$ }\\
	& = S_{f(n)} \msetminus \ReservedFacts
\end{align*}

The step from $S_{f(n-1)}$ to $S_{f(n)}$ is labelled
$F_{f(n)}=a$, and does not contain actions in
$\ReservedFacts$, since $P$ is well-formed. Hence
\autoref{e:storea},
 \autoref{e:lock},
\autoref{e:axioms}
and
\autoref{e:actions}
hold.

Since
$\Processes_n=\Processes_{n-1} \setminus^\#
\set{\text{$l \msrewrite{a} r$; $Q$}}
\cup^\# \set{Q} $
and
$\{ Q \} \leftrightarrow \{ \state_{p\cdot 1}(\tilde t \cup
(\vars(l)\theta)) \}$
 (by definition of $\sem{P}_{=p}$), we have that \autoref{e:proc} holds.
\autoref{e:nonce},
and
\autoref{e:subst}
hold trivially.
\end{mycase}%}}}

\end{component}

\end{proof}

\begin{definition}[normal msr execution] % (fold)%{{{
\label{def:normal-msr-execution}
A msr execution $\emptyset \trans{E_1}_{\sem{P}} \cdots
\trans{E_n}_{\sem{P}} S_n \in \execmsr(\sem{P})$ for the multiset rewrite
system $\sem{P}$ defined by a ground process $P$ is \emph{normal} if:
\begin{enumerate}
	\item \label{item:init}The first transition is an instance of the \textsc{Init} rule,
\ie, $S_1=\state_{[]}()$ and there is at least this transition.
	\item \label{item:no-semi-no-ack} $S_n$ neither contains any fact with the symbol $\semistate_p$
for any $p$, nor any fact with symbol $\Ack$.
	\item \label{item:silent-ded}if for some $i$ and $t_1,t_2\in\Mess$, $\Ack(t_1,t_2) \in (
S_{i-1} \msetminus S_i)$, then there are $p$ and $q$ such that:
 \[ S_{i-3} \trans{~}_{R_1}  S_{i-2}
\trans{~}_{R_2}  S_{i-1}
\trans{~}_{R_3}  S_i \quad\text{, where:}\]
		\begin{itemize}
			\item  $R_1 = [\state_p(\tilde x)] \to [\Msg(t_1,t_2),
\semistate_{p}(\tilde x)]$
		\item $R_2=[\state_q(\tilde y), \Msg(t_1,t_2)] \to
  [\state_{q\cdot 1}(\tilde y \cup \tilde y'), \Ack(t_1,t_2)]$
		\item $R_3=[\semistate_p(\tilde x),\Ack(t_1,t_2) ] \to [\state_{p\cdot
1}(\tilde x)]$.
		\end{itemize}
	\item \label{item:real-step} $S_{n-1}\trans{E_n}_{\sem{P,[],[]},\textsc{MDIn,Init}}S_n$
	\item\label{item:mdin-immeadately} if $\In(t) \in ( S_{i-1} \msetminus S_i)$ for some $i$ and $t\in\Mess$, then
$S_{i-2} \trans{K(t)}_\textsc{MDIn} S_{i-1}$
	\item \label{item:m-a} if $n\ge 2$ and no \Ack-fact in $(
S_{i-1} \msetminus S_i)$, then
 there exists $m<n$ such that $S_m \rightarrow^*_R
S_{n-1}$ for $R=\MDOutPubFreshAppl \cup  \textsc{Fresh}$ and
	$\emptyset \trans{E_1}_{\sem{P}} \cdots
\trans{E_m}_{\sem{P}} S_m \in \execmsr(\sem{P})$ is normal.
	\item \label{item:m-b} if for some $t_1,t_2\in\Mess$, $\Ack(t_1,t_2) \in (
S_{n-1} \msetminus S_n)$, then
 there exists $m\le n-3$ such that $S_m \rightarrow^*_R
S_{n-3}$ for $R=\MDOutPubFreshAppl \cup  \textsc{Fresh}$ and
	$\emptyset \trans{E_1}_{\sem{P}} \cdots
\trans{E_m}_{\sem{P}} S_m \in \execmsr(\sem{P})$ is normal.
\end{enumerate}
% definition normal-msr-execution (end)
\end{definition}%}}}

\begin{lemma}[Normalisation]
\label{lem:msr-normalisation}
  Le $P$ be a well-formed ground process. If
  \[
	S_0=\emptyset \stackrel{E_1}{\longrightarrow}_\sem P S_1
	\stackrel{E_2}{\longrightarrow}_\sem P \ldots
	\stackrel{E_{n}}{\longrightarrow}_\sem P S_{n} \in \execmsr(\sem
  P)
  \]
 and $[E_1,\ldots,E_n]\vDash \alpha$,
 then there exists a normal msr execution
  \[
	T_0=\emptyset \stackrel{F_1}{\longrightarrow}_\sem P T_1
	\stackrel{F_2}{\longrightarrow}_\sem P \ldots
	\stackrel{F_{n'}}{\longrightarrow}_\sem P T_{n'} \in \execmsr(\sem
  P)
  \]
	such that $\hide([E_1,\ldots,E_n])=\hide(F_1,\ldots,F_{n'})$ and
$[F_1,\ldots,F_{n'}]\vDash \alpha$.
\end{lemma}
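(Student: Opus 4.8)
The plan is to argue by induction on the length $n$ of the given execution $S_0 \trans{E_1} \cdots \trans{E_n} S_n$, rebuilding it into normal form one observable step at a time. The guiding observation is a classification of the rules of $\sem P$ by their action labels: the forward-deduction rules in $\MDOutPubFreshAppl$, the rule \textsc{Fresh}, and the three internal-communication rules (the second and third rules of \apip{out} and the second rule of \apip{in}) all carry the empty action, and the external \apip{in}/\apip{out} rules carry a $\ChannelInEvent$-action, which is reserved; hence none of these affect $\hide([E_1,\ldots,E_n])$. The only visible actions are therefore the $K(t)$ produced by \textsc{MDIn}, the user events $F$, and the user actions $a$ of embedded multiset-rewrite steps. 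Consequently I may freely delete, insert, reorder, or complete any block of deduction and internal-communication steps, provided I keep the visible steps in their original relative order; and since the facts $\K(\cdot)$ are persistent, a deduction block may always be relocated to sit immediately before the step that consumes its output without invalidating any intervening premise. Whenever such a block has to be regenerated I will invoke Lemma~\ref{lem:ded}(2), which guarantees that a $\K(t)$ can be produced by a run of $\MDOutPubFreshApplFresh$-rules exactly when $t$ is deducible. Prefixes of executions are again executions, and $\alpha$ is prefix-closed (this is the content underlying Lemma~\ref{lem:filter-prefix}), so the induction hypothesis applies to every prefix.

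First I would secure condition~1 of Definition~\ref{def:normal-msr-execution}: since \textsc{Init} has an empty premise and its conclusion $\state_{[]}()$ is produced by no other rule, it can be commuted to the very front without affecting validity, and $\AssInit$ forces it to occur exactly once. The induction then branches on the last transition. If the final step is a forward-deduction or \textsc{Fresh} step, its label is empty, so I discard it and apply the induction hypothesis to the $(n-1)$-prefix, whose hidden trace is already the required one. If the final step is a protocol step, I peel it off together with the deduction (and \textsc{MDIn}) steps that feed it: by Lemma~\ref{lem:ded}(2) the needed $\K$-facts can be re-synthesised by a fresh $R=\MDOutPubFreshAppl\cup\textsc{Fresh}$ block placed immediately in front of the consumer, which is exactly the shape demanded by conditions~5 and~6, and, since this block carries no $K$- or $\Event$-action, re-establishes $\AssIn$ as well. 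The induction hypothesis, applied to the original execution with this last step and its feeding block removed, supplies the required \emph{normal} prefix; appending the re-synthesised block and the final step yields the normal execution for length $n$.

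The delicate case is internal communication, governed by conditions~2 and~3. Here I would show that a consumed $\Ack$ can only be removed by the third \apip{out}-rule, which simultaneously consumes the matching $\semistate$, and that the $\Msg$/$\Ack$ handshake pins down the relative order of the triple $R_1,R_2,R_3$; after moving the empty-labelled deduction steps out of the way, the triple becomes contiguous, giving condition~3 and, via the induction hypothesis on the prefix before $R_1$, condition~7. A $\semistate$- or $\Ack$-fact left over in the final state corresponds to a handshake that was started but not finished; as all of its steps are invisible I simply delete the orphaned steps together with their partner facts, which does not touch the hidden trace and restores condition~2.

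The main obstacle I anticipate is the self-referential nature of conditions~6 and~7, which require the \emph{constructed} normal prefix to reach the penultimate state by an $R$-block: the induction must therefore be stated so that the normal prefix it produces is connected to the states of the execution being built, and this connection must be maintained while deduction blocks are re-synthesised rather than transplanted from the original run. Keeping this bookkeeping consistent — in particular ensuring that a deduction block is always placed after a genuinely normal prefix and never strands a persistent $\K$-fact before a premise that still needs it — together with the contiguity and completion of internal-communication handshakes, is where the real work of the proof lies; everything else reduces to the label classification above and routine checks that each reordering preserves the conjuncts of $\alpha$.
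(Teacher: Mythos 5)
Your classification of the rules by visibility, the use of persistence of $\K$-facts, and your list of sub-tasks (fronting \textsc{Init}, completing and making contiguous the $\Msg$/$\Ack$ handshakes, removing trailing deduction steps, placing \textsc{MDIn} immediately before its consumer, and verifying conditions~6--7 recursively) match the content of the paper's proof, which performs exactly these operations. The difference is architectural: the paper applies them as \emph{global, in-place} transformations of the one given execution --- only silent steps are deleted, inserted or permuted, and the enabledness of every surviving step is checked locally at the moment of each move --- whereas you package the argument as an induction on the length of the execution in which the last protocol step is peeled off, the prefix is normalised via the induction hypothesis, and the step is then re-attached. That re-attachment is where the proposal has a genuine gap. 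The lemma's conclusion, and hence your induction hypothesis, asserts only equality of \emph{hidden traces}; it says nothing about the final state $T_{n'}$ of the normalised prefix. To re-attach the last step you must know that $T_{n'}$ still contains $\state_p(\tilde t)$ (and, for embedded msr steps, the consumed linear facts; for inputs, a frame from which the received term is deducible so that Lemma~\ref{lem:ded}(2) applies; for $\nu a$, that $\Fr(a)$ can still legally be produced). None of this follows from trace equality, and you explicitly defer the point (``this is where the real work of the proof lies'') without stating the strengthened state-level invariant that would make the induction close. Formulating and maintaining such an invariant is essentially the content of Lemma~\ref{lem:inclusion-msr-in-apip} with its bijection $\bijp$; the paper's proof of the present lemma is designed precisely so as not to need it.

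A second, smaller defect: ``the deduction steps that feed it'' is not well defined, because a persistent $\K$-fact (and the \textsc{Fresh}/\textsc{MDFresh} steps behind it) may feed several later protocol steps and several instances of \textsc{MDIn}. Excising such steps from the middle of the run can orphan other consumers, while re-synthesising them at the end via Lemma~\ref{lem:ded}(2) risks firing \textsc{Fresh} a second time for the same name, which Definition~\ref{def:execution} forbids. The paper never deletes-and-regenerates deduction: its transformation~4 removes deduction steps only at the very tail (where they feed nothing), and transformation~5 relocates only the single \textsc{MDIn} instance, whose conclusion $\In(t)$ is linear and consumed exactly once. To salvage your format you would need to (i) strengthen the induction hypothesis to a statement about states and (ii) replace ``peel off and re-synthesise'' by ``leave all deduction steps in place and only permute silent steps'' --- at which point you have reconstructed the paper's argument.
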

\begin{proof}
%for convenience
\newcommand{\tracename}[1]{\ensuremath{
S_0^{(#1)}  \stackrel{E_1^{(#1)}}{\longrightarrow}_\sem P
	\ldots \stackrel{E_{n}^{(#1)}}{\longrightarrow}_\sem P S_{n^{(#1)}}^{(#1)} }}

\newcommand{\conditiona}[1]{\ensuremath{
	\hide([E_1,\ldots,E_n])=\allowbreak\hide([E_1^{(#1)},\allowbreak\ldots,E_{n^{(#1)}}^{(#1)}])
 }}
\newcommand{\conditionb}[1]{\ensuremath{
	[E_1^{(#1)},\ldots,E_{n^{(#1)}}^{(#1)}] \vDash \alpha
 }}

We will modify $S_0 \stackrel{E_1}{\longrightarrow}_\sem P
	\ldots \stackrel{E_{n}}{\longrightarrow}_\sem P S_{n}$ by applying one
transformation after the other, each resulting in an msr execution that
still fulfills the conditions on its trace.
\begin{enumerate}
	\item If an application of the \textsc{Init} rule appears in
$S_0 \stackrel{E_1}{\longrightarrow}_\sem P
	\ldots \stackrel{E_{n}}{\longrightarrow}_\sem P S_{n}$, we move it to
the front. Therefore,  $S_1=\state_{[]}()$.
This is possible since the left-hand side of the \textsc{Init} rule is
empty. If the rule is never instantiated, we prepend it to the trace. Since
$\mathrm{Init}()\in\ReservedFacts$, the resulting msr execution
\[ \tracename{1} \]
is such that \conditiona{1}. Since $\mathrm{Init}()$ is only added if it
was not present before, \conditionb{1}, especially \AssInit.

	\item For each fact $\Ack(t_1,t_2)$ contained in $S^{(1)}_{n^{(1}}$, it
also contains a fact $\semistate_p(\tilde t)$ for some $p$ and $\tilde t$
such that there exists a rule of type $R_3$ that consumes both of them,
since $\Ack(t_1,t_2)$ can only be produced by a rule of type $R_2$ which
consumes $\Msg(t_1,t_2)$ which in turn can only be produced along with a
fact $\semistate_p(\tilde t)$, and by definition of $\sem{P}$, there
exists a rule in $\sem{P}_{=p}$ of form $R_3$ that consumes
$\Ack(t_1,t_2)$ and $\semistate_p(\tilde t)$.
We append as many applications of rules of type
$R_3$ as there are facts $\Ack(t_1,t_2)\in S^{(1)}_{n^{(1)}}$, and repeat
this for all $t_1,t_2$ such that $\Ack(t_1,t_2)\in S^{(1)}_{n^{(1)}}$.
Then,  $S^{(1)}_{n^{(1)}} \trans{~}_{\sem{P}} S^{(1)}_{n'}$ and
$S^{(1)}_{n'}$ does not contain \Ack-facts anymore.

 If $S^{(1)}_{n'}$ contains a fact $\semistate_p(\tilde t)$, we remove
the last transition that produced this fact, \ie, for $i$ such that $S_i =
S_{i-1} \msetminus \mset{\state_p(\tilde t)} \mcup
\mset{\Msg(t_1,t_2),\semistate_p(\tilde t)}$, we define
\[
	S^{(1)'}_j := \begin{cases}
	S^{(1)}_j & \text{if $j\le i-1$} \\
	S^{(1)}_{j+1} \msetminus
\mset{\Msg(t_1,t_2),\semistate_p(\tilde t)} \mcup \mset{\state_p(\tilde t)}
	& \text{if $i-1< j < n'$}
	\end{cases}
\]
The resulting execution is valid, since $\semistate_p(\tilde t)\in
S^{(1)}_{n'}$ and since $\Msg(t_1,t_2)\in S^{(1)}_{n'}$. The latter is the
case because if $\Msg(t_1,t_2)$ would be consumned at a later point, say
$j$, $j+1$ would contain $\Ack(t_1,t_2)$, but since $S^{(1)'}_{n'-1}$ does
not contain \Ack-facts, they can only be consumned by a rule of type $R_3$,
which would have consumned $\semistate_p(\tilde t)$. We repeat this
procedure for every remaining $\semistate_p(\tilde t)\in S^{(1)}_{n'}$, and
call the resulting trace
\[ \tracename{2} \]

Since no rule added or removed or removed has an action, \conditiona{2} and
\conditionb{2}.

	\item We transform \tracename{1} as follows (all equalities are modulo
\ET):
Let us call instances of $R_1$, $R_2$ or $R_3$ that appear outside
a chain
\[ S_{i-3} \trans{~}_{R_1}  S_{i-2} \trans{~}_{R_2}  S_{i-1} \trans{~}_{R_3}  S_i \]
 for some $i$ $t_1,t_2\in\Mess$ ``unmarked''.
Do the following for the smallest $i$ that is an unmarked instance of
$R_3$ ( we will call the instance
of $R_3$ $\ri_3$ and suppose it is applied from $S_{i-1}$ to $S_i$):
Apply $\ri_3$ after $j<i$ such that $S_{j-1}$ to $S_j$ is the first
unmarked instance of $R_2$, for some $q$ and $\tilde y$,\ie, this instance
produces a fact $\state_{q\cdot 1}(\tilde y,\tilde y')$ and
a fact $\Ack(t_1,t_2)$. Since there is no rule between $j$ and $i$ that
might consume $\Ack(t_1,t_2)$ (only rules of form $R_3$ do, and $\ri_3$ is
the first unmarked instance of such a rule) and since $\ri_3$ does not consume
 $\state_{q\cdot 1}(\tilde y,\tilde y')$, we can move $ri_3$ between $j$
and $j+1$, adding the conclusions of $\ri_3$ and removing the premises of
$\ri_3$ from every $S_{j+1},\ldots,S_{i}$. Note that unmarked instances of
$R_2$ and $R_3$ are guaranteed to be preceeded by a marked $R_1$, and
therefore  only remove facts of form $\Ack(\ldots)$ or $\Msg(\ldots)$ that
have been added in that preceeding step.
Since the transition at step $j$ requires a fact $\Msg(t_1,t_2)$, there is
an instance of $R_1$ prior to $j$, say at $k<j$, since only rules of form
$R_1$ produces facts labelled $\Msg(t_1,t_2)$. Since $\ri_3$ is now
applied from $Sj$ to $S_{j+1}$, we have that an instance $\ri_1$ of a rule of form
$R_1$ that produces $\semistate_p(\tilde t)$ must appear before $j$, \ie,
$\ri_1 \in\ginsts(\sem{P}_{=p})$. Therefore, it produces a fact
$\Msg(t_1,t_2)$ indeed. We choose the largest $k$ that has an unmarked
$R_1$ that produces $\Msg(t_1,t_2)$ and $\semistate_p(\tilde t)$ and move
it right before $j$, resulting in the following msr execution:
\[
	S^{(1)'}_t := \begin{cases}
	S^{(1)}_t & \text{if $t<k$} \\
	S^{(1)}_{t+1}
	 \mcup \mset{\Msg(t_1,t_2),\semistate_p(\tilde t)} \msetminus \mset{\state_p(\tilde t)}
	& \text{if $k\le t < j-1 $} \\
	S^{(1)}_{(t)}
	& \text{if $j-1 \le t < j+1 $} \\
	S^{(1)}_{(t-1)} \msetminus \mset{\semistate_p(\tilde t),\Ack(t_1,t_2)} \mcup \mset{\state_{p\cdot 1}(\tilde t)}
	& \text{if $j+1\le t < i+1 $} \\
	S^{(1)}_t & \text{if $i+1 \le t$}
	\end{cases}
\]

We apply this procedure until it reaches a fixpoint and call the resulting trace
\[ \tracename{3} \]

Since no rule moved during the procedure has  an action, \conditiona{3} and
\conditionb{3}.

	\item If the last transition is in \MDOutPubFreshApplFresh, we remove
it. Repeat until fixpoint is reached and call the resulting trace
\[ \tracename{4} \]
Since no rule removed during the procedure has an action, \conditiona{4} and
\conditionb{4}.

	\item If there is  $\In(t)\in S^{(4)}_{n^{(4)}-1}$, then there is a
transition where $\In(t)$ is produced and never consumned until
$n^{(4)}-1$. The only rule producing $\In(t)$ is \textsc{MDIn}.
We can move this transition to just before $n^{(4)}-1$ and call the resulting trace
\[ \tracename{5} \]
Since $\conditionb{4}$, especially $\AssIn$, there is no action that is not
in $\ReservedFacts$ between the abovementioned instance of \textsc{MDIn},
therefore, \conditiona{5} holds. Since $\AssIn$ is the only part of
$\alpha$ that mentions $\mathrm{K}$, and since the tranformation preserved
$\AssIn$, we have that \conditionb{5}.

	\item We will show that \ref{item:m-a} and \ref{item:m-b} hold for
	\[ \tracename{5} \] in one step.

If $n^{(5)}\ge 2$ and there is no \Ack-fact in
$S^{(5)}_{n^{(5)-1}}
\setminus S^{(5)}_{n^{(5)}}$,
then we chose the largest $m<n$ such that
$S^{(5)}_{m-1}\trans{E^{(5)}_m}_{\sem{P,[],[]},\textsc{Init,MDIn}}
S^{(5)}_m$, or, if there is an \Ack-fact in
$S^{(5)}_{n^{(5)-1}}
\setminus S^{(5)}_{n^{(5)}}$,
 we will chose the largest $m'<n-2$ such that
$S^{(5)}_{m'-1}\trans{E^{(5)}_{m'}}_{\sem{P,[],[]},\textsc{Init,MDIn}}
S^{(5)}_{m'}
$.

This trivially fulfills \ref{item:real-step}.
$S^{(5)}_m \rightarrow^*_R
S^{(5)}_{n^{(5)}}$
and
$S^{(5)}_{m'} \rightarrow^*_R
S^{(5)}_{n^{(5)-3}}$,
 since otherwise there would be a larger $m$ or $m'$. This also
implies \ref{item:no-semi-no-ack}, as none of the rules in
$R=\MDOutPubFreshApplFresh$ remove \Ack- or
$\semistate$-facts, and the chain of rules $R_1,R_2,R_3$ consumes as many as
it produces.  Thus, if they where in $S^{(5)}_m$, they would be in
$S^{(5)}_{n^{(5)}}$, too.
 Since $n>2$, $m>1$, and therefore \ref{item:init}. \ref{item:silent-ded}  holds
for all parts of the trace, and therefore also for the $m$ prefix. Similar for
\ref{item:mdin-immeadately}.

Since we can literally apply the same argument
for the largest $\tilde m<m$ such that
$$S^{(5)}_{m-1}\trans{E^{(5)}_m}_{\sem{P,[],[]},\textsc{Init,MDIn}}
S^{(5)}_m$$ or, in case that there is an \Ack-fact in
$S^{(5)}_{m-1}
\setminus S^{(5)}_{m}$,
for the largest  $\tilde m<m-2$, can show that \ref{item:m-a} and
\ref{item:m-b} hold for the trace up to $m$ or $m'$, concluding it is
normal.

\end{enumerate}

\end{proof}

\begin{definition}
\label{def:second-process-bijection}
Let $P$ be a ground process, \calP be a multiset of processes and $S$
a multiset of multiset rewrite rules. We write ${\cal P}
\bijp S$ if there exists a bijection between \calP and
the multiset $\{ \state_p(\tilde t) \mid \exists p, \tilde
t.\ \state_p(\tilde t) \melem S \}^\#$ such that whenever $Q \melem
\calP$ is mapped to $\state_p(\tilde t) \melem S$, then:
\begin{enumerate}
\item $\state_p(\tilde t)\in_E\prems(R)$ for $R\in\ginsts(\sem{P}_{=p})$.
\item Let
$\theta$ be a grounding substitution for
$state(\tilde x)\in\prems(\sem{P}_{=p})$ such that
$\tilde t=\tilde{x}\theta$. Then
 \[ (P|_p\tau)\rho =_E Q \]
for a substitution $\tau$,
and a bijective renaming $\rho$ of fresh, but not bound names in $Q$,
defined as follows:
\begin{align*}
	\tau(x):=& \theta(x) && \text{if $x$ not a reserved variable}\\
	\rho(a):=& a' && \text{if $\theta(n_a)=a'$}
\end{align*}

\end{enumerate}
\end{definition}
When $\calP \bijp S$, $Q \melem \calP$ and
$\state_p(\tilde t) \melem S$ we also write $Q \bijp
\state_p(\tilde t)$ if this bijection maps $Q$ to $\state_p(\tilde
t)$.

\begin{remark}
  \label{remark:second-properties-bijection}
  Note that $\bijp$ has the following properties (by the
  fact that it defines a bijection between multisets).
  \begin{itemize}
  \item  If $\calP_1 \bijp S_1$ and $\calP_2 \bijp S_2$ then
    $\calP_1 \mcup \calP_2 \bijp S_1 \mcup S_2$.
  \item If $\calP_1 \bijp S_1$ and $Q \bijp
    \state_p(\tilde t)$ for $Q \in \calP_1$ and $\state_p(\tilde t) \in
    S_1$ (i.e. $Q$ and $\state_p(\tilde t)$ are related by the bijection
    defined by $\calP_1 \bijp S_1$) then $\calP_1 \msetminus
    \{Q\} \bijp S_1 \msetminus \{ \state_p(\tilde t) \}$.
  \end{itemize}
\end{remark}

\begin{lemma}
\label{lem:inclusion-msr-in-apip}
  Le $P$ be a well-formed ground process. If
  \[
	S_0=\emptyset \stackrel{E_1}{\longrightarrow}_\sem P S_1
	\stackrel{E_2}{\longrightarrow}_\sem P \ldots
	\stackrel{E_{n}}{\longrightarrow}_\sem P S_{n} \in \execmsr(\sem
  P)
  \]
	is normal (see \autoref{def:normal-msr-execution}) and
  $[E_1,\ldots,E_{n}]\vDash \alpha$ (see
Definition~\ref{def:transformula}),
 then there are
	$\apipstate{0},\ldots, \apipstate{n'}$ and $F_1,\ldots,F_{n'}$ such
that:
  \[
  \apipstate{0}
  \stackrel{F_1}{\longrightarrow}
  \apipstate{1}
  \stackrel{F_2}{\longrightarrow}
  \ldots \stackrel{F_{n'}}{\longrightarrow}
  \apipstate{n'}
  \]
  where  $\apipstate{0} = (\emptyset,\emptyset,\emptyset,\set{P},\emptyset,\emptyset)$
  and there exists a monotonically increasing, surjective
function $f\colon \setN_n
\setminus \set{0}
  \to \setN_{n'}$ such that $f(n)=n'$ and for all $i \in \setN_n$
  \begin{enumerate}
  \item\label{f:nonce}
	$ \Names_{f(i)} =  \set{ a\in\FN \mid \ProtoNonce(a)\in_E\bigcup_{1 \leq j \leq
i}E_j}$

  \item\label{f:storea}
    $\forall~t\in\Mess.\:\StoreA_{f(i)} (t) =
    \begin{cases}
      u &
      \mbox{if } \exists j \leq i. \mathrm{Insert}(t,u) \in_\ET
      E_j \\
      & \qquad \wedge \forall j', u'. j < j' \leq i \to
      \mathrm{Insert}(t,u') \not \in_\ET E_{j'} \wedge
      \mathrm{Delete}(t) \not \in_\ET E_{j'}\\
      \bot & \text{otherwise}
    \end{cases}$

  \item\label{f:storeb}
    $\StoreB_{f(i)} =_\ET S_{i} \msetminus \ReservedFacts$

  \item\label{f:proc}
    $\Processes_{f(i)} \bijp S_{i}$

  \item\label{f:subst}
$\mset{x \Subst_{f(i)} \mid x\in\Dom(\Subst_{f(i)})}
=_\ET \mset{\mathsf{Out}(t)\in \cup_{k \leq i} S_k}$

  \item\label{f:lock}
    $\calL_{f(i)} =_\ET \set{ t \mid \exists j\leq i,u .\
    \mathrm{Lock}(u,t) \in_\ET E_j
    \wedge \forall j < k \leq i. \mathrm{Unlock}(u,t) \not \in_\ET E_k}.$
\end{enumerate}
Furthermore,
\begin{enumerate}[resume]
  \item\label{f:actions}
		% if $E_i \not\subset \ReservedFacts$, then $E_i=F_{f(i)}$ and
% $E_{i'}\subset \ReservedFacts$ for all $i'\neq i, f(i')=f(i)$.
		$\mathit{hide}([E_1,\ldots,E_n])=_\ET[F_1,\ldots,F_n'].$
  \end{enumerate}

\end{lemma}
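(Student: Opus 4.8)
The plan is to prove the statement by induction on the length $n$ of the normal msr execution, exploiting the recursive structure that normality makes available: for $n>1$, conditions~\ref{item:m-a} and~\ref{item:m-b} of Definition~\ref{def:normal-msr-execution} split the execution into a strictly shorter \emph{normal} prefix $\emptyset\to^*S_m$, a block of purely administrative deduction steps $S_m\to^*_R S_{n-1}$ (resp.\ $S_{n-3}$) with $R=\MDOutPubFreshApplFresh$, and a final ``real'' step (resp.\ a synchronous-communication chain $R_1,R_2,R_3$ as isolated by condition~\ref{item:silent-ded}). This decomposition is the exact inverse of the way Lemma~\ref{lem:inclusion-apip-in-msr} extended the msr execution by one, two, or three steps per process transition, so the two proofs run dually and I would mirror its case analysis.

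For the base case $n=1$, condition~\ref{item:init} forces the single step to be \textsc{Init} with $S_1=\state_{[]}()$. I map this to the initial configuration $(\emptyset,\emptyset,\emptyset,\set{P},\emptyset,\emptyset)$ with no process transition taken; conditions~\ref{f:nonce}--\ref{f:lock} hold vacuously, \ref{f:proc} holds via the bijection $P\bijp\state_{[]}()$ (using, as in the forward lemma, that $\state_{[]}()\in\prems(\sem{P}_{=[]})$ for every $P$), and \ref{f:actions} holds because $\mathrm{Init}()\in\ReservedFacts$ is hidden.

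For the inductive step I would first invoke the induction hypothesis on the normal prefix $\emptyset\to^*S_m$: it is shorter, it is normal by~\ref{item:m-a}/\ref{item:m-b}, and its trace still satisfies $\alpha$ since $\alpha$ is prefix-closed (Lemma~\ref{lem:filter-prefix}). I then observe that the administrative block $S_m\to^*_R S_{n-1}$ carries only empty actions and touches only the adversary-knowledge facts $\K$, $\Out$, $\Fr$ and never the $\state$-, $\StoreB$- or lock-facts, so it leaves the process configuration, both stores, and the lock set unchanged and preserves every invariant with $f$ held constant; crucially, conditions~\ref{f:nonce} and~\ref{f:subst} are phrased cumulatively over $\bigcup_{k\le i}S_k$, so consuming an $\Out$-fact or creating a $\K$-fact does not alter their right-hand sides. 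The bulk of the work is then a case distinction on the final real rule, reading the matching process transition off Figure~\ref{fig:transprocess}. The structural cases (translations of $0$, $\mid$, $!$, $\nu a$, if-then-else, event, insert, delete) each produce exactly one process transition and are routine, maintaining~\ref{f:proc} through $\bijp$. For \code{lookup}, the action $\mathrm{IsIn}(t,v)$ (resp.\ $\mathrm{IsNotSet}(t)$) together with $\AssSetIn$ (resp.\ $\AssSetNotIn$) and invariant~\ref{f:storea} forces $\StoreA(t)=_E v$ (resp.\ $\StoreA$ undefined at $t$), which is precisely the side condition enabling the process lookup rule; for an embedded rule $[l]\msrewrite{a}[r]$ the premises lie in $\StoreB=_E S_{n-1}\msetminus\ReservedFacts$ by~\ref{f:storeb}, so the same ground instance fires in the calculus.

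The hard part will be, as in the forward direction, the adversary-facing communication steps and the \code{lock}/\code{unlock} cases. For \apip{out} and \apip{in} I would \emph{not} treat the final rule in isolation: condition~\ref{item:mdin-immeadately} guarantees that the consumed $\In$-fact was produced one step earlier by \textsc{MDIn}, so I group that \textsc{MDIn} with the output/input rule into a single block realising one process transition, applying the induction hypothesis to the earlier normal prefix obtained from~\ref{item:m-a}. The first part of Lemma~\ref{lem:ded} then turns the $\K$-premise of \textsc{MDIn} into a deduction $\nu\Names.\Subst\vdash M$ (and $\vdash N\tau$), which is exactly the side condition of the process rule; the single visible $K(\cdot)$-action of \textsc{MDIn} supplies the process's $K(\cdot)$-label, while the reserved $\ChannelInEvent$-action of the output/input rule is hidden, so that~\ref{f:actions} is preserved with no double-counting. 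A lone \textsc{MDIn} whose $\In$-fact lingers maps instead to the standalone deduction transition $\trans{K(t)}$, again via Lemma~\ref{lem:ded}, and the chain $R_1,R_2,R_3$ maps to a single unlabelled internal-communication step, its transient $\semistate$, $\Msg$ and $\Ack$ facts being created and consumed within the block. For \code{lock} I must show the term is not already in $\ActiveLocks$: otherwise, by invariant~\ref{f:lock}, an open $\mathrm{Lock}$ on the same term with no intervening $\mathrm{Unlock}$ would make the new $\mathrm{Lock}$-action violate $\AssLock$. For \code{unlock} I must re-establish~\ref{f:lock} by matching each $\mathrm{Unlock}(l,t)$ to the unique preceding $\mathrm{Lock}(l,t)$ carrying the same fresh annotation $\mathit{lock}_l$; this reuses the intricate argument of the forward lemma's lock/unlock cases, namely that the definition of $\overline P$ admits no parallel or replication between a matched lock and unlock, so the intervening $\state$-facts are linear and the annotation (fresh names being distinguished by $\ET$) pins down the partner uniquely. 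Surjectivity of $f$ and the trace equality~\ref{f:actions} then follow by collecting the per-case updates.
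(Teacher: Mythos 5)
Your proposal is correct and follows essentially the same route as the paper's own proof: induction on $n$ using the normal-prefix decomposition from conditions~\ref{item:m-a}/\ref{item:m-b}, holding $f$ constant across the action-free \MDOutPubFreshApplFresh{} block, and then a case distinction on the final rule instance that invokes Lemma~\ref{lem:ded} for the \textsc{MDIn}/communication cases, the axioms $\AssSetIn$, $\AssSetNotIn$, $\AssLock$ together with invariants~\ref{f:storea} and~\ref{f:lock} for lookup and locking, and the annotation/linearity argument for unlock. No gaps worth flagging.
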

\begin{proof}
  \renewcommand\itemautorefname{Condition}

We proceed by induction over the number of transitions $n$.
\begin{component}[Base Case]
  A normal msr execution contains at least an application of the init rule,
thereby the shortest normal msr execution is
	\[ \emptyset \trans{~}_{\sem{P}} S_1=\mset{
\state_{[]}()} \]
	We chose $n'=0$ and thus
\[ \apipstate{0}=
(\emptyset,\emptyset,\emptyset,\mset{P},\emptyset,\emptyset). \]
	We define $f: \set{1} \to \set{0}$ such that $f(1)=0$.

 To show that \autoref{f:proc} holds, we have to show that
 $\Processes_0 \bijp
 \mset{\state_{[]}(\mathit{s}:\freshsort)}$.  Note that
 $\Processes_0=\mset{P}$. We choose the bijection such that $P
 \bijp \state_{[]}(\mathit{s}:\freshsort)$.

 By \autoref{def:trans-at-pos},
 $\sem{P}_{=[]}=\sem{P,[],[]}_{=[]}$.  We see
 from Figure~\ref{fig:transatp} that for every $P$ we have that
 $\state_{[]}(\mathit{s}:\freshsort) \in
 \prems(R\theta)$, for $R\in \sem{P,[],[]}_{=[]}$ and
 $\theta=\emptyset$. This induces
 $\tau=\emptyset$ and $\rho=\emptyset$. Since $P|_{[]}\tau\rho=P$, we have
 $P \bijp \state_{[]}()$, and therefore $\Processes_0
\bijp S_1$.

\autoref{f:nonce}, \autoref{e:storea}, \autoref{e:storeb},
\autoref{f:subst}, \autoref{e:lock},
and
\autoref{f:actions}
 hold trivially.
\end{component}
\begin{component}[Inductive step]

Assume the invariant holds for $n-1\geq 1$. We have to show that the lemma
holds for $n$ transitions, \ie, we assume that
\[
  \emptyset \stackrel{E_1}{\longrightarrow}_\sem P S_1
  \stackrel{E_2}{\longrightarrow}_\sem P \ldots
  \stackrel{E_{n}}{\longrightarrow}_\sem P S_{n} \in \execmsr(\sem
  P)
\]
is normal and $[E_1,\ldots,E_{n}]\vDash \alpha$.  Then it is to show that there is
\[
  \apipstate{0}
  \stackrel{F_1}{\longrightarrow}
  \apipstate{1}
  \stackrel{F_2}{\longrightarrow}
  \ldots \stackrel{F_n'}{\longrightarrow}
  \apipstate{n'}
  \]
fulfilling Conditions~\ref{f:nonce} to~\ref{e:actions}.

Assume now for the following argument, that there is not fact with the
symbol \Ack in $S_{n-1} \msetminus S_n$. This is the case for all cases
except for the case where rule instance applied from $S_{n-1}$ to $S_n$ has
the form $\ri= \msr {\semistate_p(\tilde s),\Ack(t_1,t_2)} {} {
\state_{p\cdot 1}(\tilde s) }$. This case will require a similiar, but
different argument, which we will present when we come to this case.

Since
$\emptyset \trans{E_1}_{\sem{P}} \cdots
\trans{E_n}_{\sem{P}} S_n \in \execmsr(\sem{P})$
is normal and $n\ge 2$, there exists $m<n$ such that $S_m \rightarrow^*_R
S_{n}$ for $R=\MDOutPubFreshApplFresh$ and
	$\emptyset \trans{E_1}_{\sem{P}} \cdots
\trans{E_m}_{\sem{P}} S_m \in \execmsr(\sem{P})$ is normal, too.
This allows us to apply the induction hypothesis on
$\emptyset \trans{E_1}_{\sem{P}} \cdots
\trans{E_m}_{\sem{P}} S_m \in \execmsr(\sem{P})$.
Hence there is a monotonically
increasing function from $\setN_{m} \to \setN_{n'}$  and an execution
such that Conditions~\ref{f:nonce} to~\ref{e:actions} hold. Let $f_p$ be
this function and note that $n'=f_p(m)$.

In the following case distinction, we will (unless stated otherwise) extend
the previous execution by one step from $\apipstate{n'}$ to
$\apipstate{n'+1}$, and prove that Conditions~\ref{f:nonce}
to~\ref{f:actions} hold for $n'+1$.  By induction hypothesis, they hold for
all $i\le n'$. We define a function $f \colon \setN_n \to \setN_{n'+1}$ as
follows:
\begin{equation*}
\label{eq:definition-of-f}
 f(i) :=
\begin{cases}
	 f_p(i) & \text{if $i\in\setN_{m}$} \\
	 n' & \text{if $m<i<n$} \\
	 n'+1 & \text{if $i=n$}
\end{cases}
\end{equation*}

Since, $S_m \rightarrow^*_R S_{n}$ for $R=\MDOutPubFreshApplFresh$, only
$S_n \msetminus S_m$ contains only \Fr-facts and \K-facts, and
$S_m \msetminus S_n$ contains only \Fr-facts and \Out-facts. Therefore,
\autoref{f:storeb},\ref{f:proc} and \ref{f:subst} hold for all
$i\le n-1$. Since $E_{m+1},\ldots,E_{n-1}=\emptyset$, \autoref{f:nonce},
\ref{f:storea},\ref{f:lock} and \ref{f:actions} hold for all $i\le n-1$.

Fix a bijection such that
$\Processes_{n'} \bijp S_{m}$. We will abuse notation
by writing $P \bijp \state_p(\tilde t)$, if this bijection maps
$P$ to $\state_p(\tilde t)$.

We now proceed by case distinction over the last type of transition from
$S_{n-1}$ to $S_n$. Let $l_\mathit{linear}=_\ET S_{n-1} \setminus S_n$ and
$r=_\ET S_{n} \setminus S_{n-1}$. $l_\mathit{linear}$ can only contain linear
facts, while $r$ can contain linear as well as persistent facts. The rule
instance $\ri$ used to go from $S_{n-1}$ to $S_n$ has the following form:
\[ [l_\mathit{linear},l_\mathit{persistent}] \msrewrite{E_n} r \]
for some $l_\mathit{persistent} \msubset_\ET S_{n-1}$.

Note that $l_\mathit{linear}$, $E_n$ and $r$ uniquely identify which rule
in $\sem{P,[],[]}$ $\ri$ is an instance of -- with exactly one
exception: $\sem{ [] \msrewrite{a} [];~P, p, \tilde x} = \sem{\code{event
$a$;~$P$},p,\tilde x}$. Luckily, we can treat the last as a special case of the
first.

If $R$ is uniquely determined, we fix some $\ri\in\ginsts(R)$.

\begin{mycase}[$R=\textsc{Init}$ or $R\in \textsc{MD}\setminus%{{{
\set{\textsc{MDIn}}$]
	In this case, $\emptyset \trans{E_1} \ldots \trans{E_n} S_n$ is not a
well-formed msr execution.
\end{mycase}%}}}

\begin{mycase}[$R=\textsc{MDIn}$]%{{{
	Let $t\in\Mess$ such that $\ri=R\tau=\K(t) \msrewrite{K(t)} \In(t)$.

From the induction hypothesis, and since
$E_{m+1},\ldots,E_n = \emptyset$,
we have that
 \[ \Names_{n'} =  \set{ a\in \FN \mid \ProtoNonce(a)\in_\ET\bigcup_{1 \leq j \leq
n} E_j}. \]

From the induction hypothesis, and since no rule producing $\Out$-facts is
applied between step $m$ and step $n$, we have that
\[ \set{x \Subst_{n'} \mid x\in\Dom(\Subst_{n'})}^\#=_\ET\mset{\mathsf{Out}(t)\in \cup_{k
\leq n} S_k}. \]

Let $\tilde r= \set{ a\in\FN\mid \RepNonce(a)\in_\ET\bigcup_{1\leq j \leq
n} F_j}$. Then, by \autoref{lem:ded} and \autoref{lem:ded-eq}, we have that
$\nu \Names_{n'},\tilde r.\Subst_{n'} \vdash t$. Therefore, $\nu
\Names_{n'}.\Subst_{n'} \vdash t$.
This allows us to chose the following transition:
\[
 \cdots \trans{F_{n'}}
  \apipstate{n'}
  \trans{K(t)}
  \apipstate{n'+1}
\]
with \apipstate{n'+1}=\apipstate{n'}.

Conditions~\ref{f:nonce} to~\ref{e:actions} hold trivially.
\end{mycase}%}}}

\newcommand{\inclusionBcaseintroHelper}[5]{%{{{
% #1 - Process #2 - for some bla bla.  #3 - Why is it possible to execution
% this step #4 - the new \Names,\StoreA etc.  #5 - the label F_{n'+1}
By induction hypothesis, we have ${\cal P}_{n'} \bijp S_{m}$, and thus, as
previously established, ${\cal P}_{n'} \bijp S_{n-1}$.
Let $Q\melem \calP_{n'}$ such that $Q \bijp \state_p(\tilde t)$. Let
$\theta$ be a grounding substitution for $\state_p(\tilde
x)\in\prems(\sem{P}_{=p})$ such that $\tilde t=\tilde{x}\theta$. Then
$\theta$ induces a substitution $\tau$ and a bijective renaming $\rho$ for
fresh, but not bound names (in $Q$) such that $P|_{p}\tau\rho=Q$ (see
\autoref{def:second-process-bijection}).

From the form of the rule $R$, and since $Q=P|_{p}\tau\rho$, we can deduce
that $Q=#1$#2.

% Why is it possible to execute this step
#3 We therefore chose the following transition:
\[
  \cdots \trans{F_n'} \apipstate{n'} \trans{#5} \apipstate{n'+1}
\]
with
#4.

We define $f$  as on page~\pageref{eq:definition-of-f}. Therefore,
Conditions~\ref{f:nonce} to \ref{e:actions} hold for $i<n-1$.
It is left to show that Conditions~\ref{f:nonce} to \ref{e:actions} hold for $n$.
}%}}}

\newcommand{\inclusionBcaseintro}[6]{%{{{
% #1 - Process
% #2 - for some bla bla.
% #3 - Why is it possible to execution this step
% #4 - the new \Names,\StoreA etc.
% #5 - the label F_{n'+1}
% #6 - S_n=S_{n-1} \mcup bla etc..
\inclusionBcaseintroHelper{#1}{#2}{#3}{#4}{#5}

By definition of $\sem{P}$ and $\sem{P}_{=p}$, we have that
$Q' \leftrightarrow \state_{p\cdot 1}(\tilde t)$.
Therefore, and since
$#1 \leftrightarrow \state_{p}(\tilde t)$,
$\calP_{n'+1}=\calP_{n'} \msetminus \mset{#1} \mcup \mset{Q'}$,
and #6,
\autoref{f:proc} holds.
}%}}}

\begin{mycase}[ %{{{ zero
$\ri= \msr{\state_p(\tilde t)}{}{}$ (for some $p$ and
$\tilde t$)]
\inclusionBcaseintroHelper
{\code{0}}
{}
{}
{$\Names_{n'+1}=\Names_{n'}$,
$\StoreA_{n'+1}=\StoreA_{n'}$,
$\StoreB_{n'+1}=\StoreB_{n'}$,
$\calP_{n'+1}=\calP_{n'}\msetminus \mset{\code{0}}$,
$\Subst_{n'+1}=\Subst_{n'}$ and
$\ActiveLocks_{n'+1}=\ActiveLocks_{n'}$}
{K(t)}

\autoref{f:proc} holds since $Q \leftrightarrow \state_p(\tilde t)$,
$\calP_{n'+1}=\calP_{n'} \msetminus \mset{\code{0}}$ and $S_{n}=S_{n-1}
\msetminus \mset{ \state_p(\tilde t)}$.
Conditions~\ref{f:nonce},
\ref{f:storea},
\ref{f:storeb},
\ref{f:subst} and
\ref{f:actions}
hold trivially.
\end{mycase}%}}}

\begin{mycase}[$\ri= \msr{\state_p(\tilde t)}{}{\state_{p\cdot%{{{
1}(\tilde t),\state_{p\cdot 2}(\tilde t)}$ (for some $p$ and
$\tilde t$)]
\inclusionBcaseintroHelper
{\code{$Q_1$|$Q_2$}}
{, for some processes $Q_1=P|_{p\cdot 1}\tau \rho$ and $Q_2=P|_{p\cdot
2}\tau \rho$}
{}
{$\Names_{n'+1}=\Names_{n'}$,
$\StoreA_{n'+1}=\StoreA_{n'}$,
$\StoreB_{n'+1}=\StoreB_{n'}$,
$\calP_{n'+1}=\calP_{n'}\msetminus \mset{Q_1\mid Q_2} \mcup
\mset{Q_1,Q_2}$,
$\Subst_{n'+1}=\Subst_{n'}$ and
$\ActiveLocks_{n'+1}=\ActiveLocks_{n'}$}
{}

By definition of $\sem{P}$ and $\sem{P}_{=p}$, we have that
$Q_1 \leftrightarrow \state_{p\cdot 1}(\tilde t)$ and
$Q_2 \leftrightarrow \state_{p\cdot 2}(\tilde t)$.
Therefore, and since $Q \leftrightarrow \state_p(\tilde t)$,
$\calP_{n'+1}=\calP_{n'}\msetminus \mset{Q_1\mid Q_2} \mcup
\mset{Q_1,Q_2}$,
and $S_{n}=S_{n-1} \msetminus \mset{ \state_p{\tilde t} }
\mcup \mset{\state_{p\cdot 1}(\tilde t),\state_{p\cdot 2}(\tilde t)}$,
\autoref{f:proc} holds.

Conditions~\ref{f:nonce},
\ref{f:storea},
\ref{f:storeb},
\ref{f:subst} and
\ref{f:actions}
hold trivially.
\end{mycase}%}}}

\begin{mycase}%{{{ replication
[$\ri=
\msr
{\mathsf{!state}_p(\tilde t)}
{}
{\state_{p\cdot
1}(\tilde t)}$
 (for some $p,\tilde t$)]
\inclusionBcaseintroHelper
{\code{$!Q'$}}
{for a process $Q'=P|_{p\cdot 1}\tau \rho$.}
{}
{$\Names_{n'+1}=\Names_{n'}$,
$\StoreA_{n'+1}=\StoreA_{n'}$,
$\StoreB_{n'+1}=\StoreB_{n'}$,
$\calP_{n'+1}=\calP_{n'}\mcup \mset{Q'}$,
$\Subst_{n'+1}=\Subst_{n'}$ and
$\ActiveLocks_{n'+1}=\ActiveLocks_{n'}$}
{}

By definition of $\sem{P}$ and $\sem{P}_{=p}$, we have that
$Q' \bijp \state_{p\cdot 1}(\tilde t)$.
Therefore, and since
$\calP_{n'+1}=\calP_{n'} \mcup \mset{Q'}$,
while $S_{n}=S_{n-1} \mcup \mset{\state_{p\cdot 1}(\tilde t)}$,
\autoref{f:proc} holds.

Conditions~\ref{f:nonce},
\ref{f:storea},
\ref{f:storeb},
\ref{f:subst} and
\ref{f:actions}
hold trivially.
\end{mycase}%}}}

\begin{mycase}%{{{ new
[$\ri=
\msr
{\state_p(\tilde t), \Fr(a'\colon \freshsort)}
{\ProtoNonce(a'\colon \freshsort)}
{\state_{p\cdot
1}(\tilde t,a' \colon \freshsort)}$
 (for some $p,\tilde t$ and $a'\in FN$)]
\inclusionBcaseintroHelper
{\code{$\nu\,a$; $Q'$}}
{ for a name $a\in FN$ and a process $Q'=P|_{p\cdot 1}\tau \rho$}
{By definition of $\execmsr$, the fact $\Fr(a')$ can only be produced
once. Since this fact is linear it can only be consumed once. Every rule in
$\sem{P}$ that produces a label $\ProtoNonce(x)$ for some $x$ consumes a
fact $\Fr(x)$. Therefore,
	\[ a' \notin  \set{ a\in\FN \mid \ProtoNonce(a)\in_\ET\bigcup_{1 \leq j \leq
n-1}E_j}.\]
The induction hypothesis allows us to conclude that $a'\notin
\Names_{n'}$,\ie, $a'$ is fresh.
}
{$\Names_{n'+1}=\Names_{n'}\cup {a'}$,
$\StoreA_{n'+1}=\StoreA_{n'}$,
$\StoreB_{n'+1}=\StoreB_{n'}$,
$\calP_{n'+1}=\calP_{n'}\msetminus \mset {\nu~a; Q'} \mcup
\mset{Q'\set{^{a}/_{a'}} }$,
$\Subst_{n'+1}=\Subst_{n'}$ and
$\ActiveLocks_{n'+1}=\ActiveLocks_{n'}$}
{~}

By definition of $\sem{P}$,  $state_{p\cdot 1}(\tilde
x,a)\in \prems(R')$ for an $R'\in \sem{P}_{=p\cdot 1}$.
We can choose
$\theta':=\theta[n_a \mapsto a']$ and have $\state_{p\cdot1}(\tilde
t,a')=\state_{p\cdot 1}(\tilde x,a)\theta'$.
Since $Q=P|_{p}\tau\rho$ for $\tau$ and $\rho$ induced by $\theta$,
$Q'\set{^{a'}/_{a}}=P|_{p} \tau' \rho'$ for $\tau'$ and $\rho'$ induced by $\theta'$, \ie,
$\tau'=\tau$ and $\rho'=\rho[a \mapsto a']$.
Therefore, $Q'\set{^{a'}/_{a}} \bijp \state_{p\cdot 1}(\tilde t,a')$.

\autoref{f:proc} holds,
since furthermore
$\code{$\nu~a'$; Q'} \leftrightarrow \state_{p}(\tilde t)$,
$\calP_{n'+1}=\calP_{n'} \msetminus \mset{\code{$\nu~a'$; Q'}} \mcup
\mset{Q'\set{^{a'}/_a}}$,
and
$S_n=S_{n-1} \msetminus \mset{\Fr(a), \state_p(\tilde t)} \mcup {\state_{p\cdot
1}(\tilde t,a \colon \freshsort)}$.

\autoref{f:nonce}, holds since
$\Names_{n'+1}=\Names_{n'}\cup {a'}$, and $E_n=\ProtoNonce(a')$.
\autoref{f:actions} holds since $\ProtoNonce(a)\in\ReservedFacts$.

Conditions
\ref{f:storea},
\ref{f:storeb} and
\ref{f:subst}
hold trivially.
\end{mycase}%}}}

\begin{mycase}%{{{ out
[$\ri=
\msr
{\state_p(\tilde t), \In(t_1)}
{\ChannelInEvent(t_1)}
{\state_{p\cdot
1}(\tilde t),\Out(t_2)}$
 (for some $p,\tilde t$ and $t_1,t_2\in \Mess$)]
Since the msr execution is normal, we have that
$S_{n-2}
\trans{K(t_1)}_\textsc{MDIn} S_{n-1}$. Since
$S_0 \trans{E_1}_{\sem{P}} \ldots \trans{E_n}_\sem{P} S_n$ is normal, so is
$S_0 \trans{E_1}_{\sem{P}} \ldots \trans{E_{n-1}}_\sem{P} S_{n-1}$, and
therefore
$S_0 \trans{E_1}_{\sem{P}} \ldots \trans{E_{n-2}}_\sem{P} S_{n-2}$. Hence
there is an $m< n-2$ such
$S_0 \trans{E_1}_{\sem{P}} \ldots \trans{E_{m}}_\sem{P} S_{m}$ is a normal
trace and $S_m \rightarrow^*_R
S_{n-1}$ for $R=\MDOutPubFreshApplFresh$.

By induction hypothesis, we have ${\cal P}_{n'} \bijp S_{m}$, and thus,
since \MDOutPubFreshAppl and \textsc{Fresh} do not add or remove
\state-facts, ${\cal P}_{n'} \bijp S_{n-2}$.
Let $Q\melem \calP_{n'}$ such that $Q \bijp \state_p(\tilde t)$.  Let
$\theta$ be a grounding substitution for $state(\tilde
x)\in\prems(\sem{P}_{=p})$ such that $\tilde t=\tilde{x}\theta$.  Then
$\theta$ induces a substitution $\tau$ and a bijective renaming $\rho$ for
fresh, but not bound names (in $Q$) such that $P|_{p}\tau\rho=Q$ (see
\autoref{def:second-process-bijection}).

From the form of the rule $R$, and since $Q=P|_{p}\tau\rho$, we can deduce
that $Q=\code{\piout$(t_1,t_2);Q'$}$ for a process $Q'=P|_{p\cdot 1}\tau \rho$.

% Why is it possible to execute this step
From the induction hypothesis, and since
$E_{m+1},\ldots,E_{n-2} = \emptyset$,
we have that
 \[ \Names_{n'} =  \set{ a\in\FN \mid \ProtoNonce(a)\in_\ET\bigcup_{1 \leq j \le
n-2} E_j}. \]

From the induction hypothesis, and since no rule producing $\Out$-facts is
applied between step $m$ and step $n-2$, we have that
\begin{equation}
\label{eq:subst}
 \mset{x \Subst_{n'} \mid x\in\Dom(\Subst_{n'})}=_\ET\mset{\mathsf{Out}(t)\in \cup_{k
\le n-2} S_k}.
\end{equation}

Let $\tilde r= \set{ a:\freshsort \mid \RepNonce(a)\in\bigcup_{1\leq j
\le n-2} F_j}$.
Since $\K(t_1)\in \prems(\textsc{MDIn}\sigma)$ for $\sigma(x)=t_1$, we have
 $\K(t)\in_\ET S_{n-2}$.
By \autoref{lem:ded} and \autoref{lem:ded-eq}, we have $\nu \Names_{n'},\tilde
r.\Subst_{n'} \vdash t$. Therefore, $\nu \Names_{n'}.\Subst_{n'} \vdash t$.
 We chose the following transition:
\[
  \cdots \trans{F_n'} \apipstate{n'} \trans{K(t_1)}
 \apipstate{n'+1}
\]
with
$\Names_{n'+1}=\Names_{n'}$,
$\StoreA_{n'+1}=\StoreA_{n'}$,
$\StoreB_{n'+1}=\StoreB_{n'}$,
$\calP_{n'+1}=\calP_{n'}\msetminus \mset {\code{\piout$(t_1,t_2);Q'$}} \mcup
\mset{Q'}$,
$\Subst_{n'+1}=\Subst_{n'}\cup \set{^{t_2}/ _x}$ and
$\ActiveLocks_{n'+1}=\ActiveLocks_{n'}$ for a fresh $x$.

We define $f$  as follows:
\begin{equation*}
 f(i) :=
\begin{cases}
	 f_p(i) & \text{if $i\in\setN_{m}$} \\
	 n' & \text{if $m<i< n-1$} \\
	 n'+1 & \text{if $i=n$}
\end{cases}
\end{equation*}

Therefore,
Conditions~\ref{f:nonce} to \ref{e:actions} hold for $i<n-1$.
It is left to show that Conditions~\ref{f:nonce} to \ref{e:actions} hold for $n$.

\autoref{f:actions} holds since
$\mathit{hide}([E_1,\ldots,E_m])=_\ET[F_1,\ldots,n']$, and
$[E_{m+1},\ldots,E_{n-1}]=_\ET[F_{n'+1}]$, since $E_{n-1}=K(t_1)$.

\autoref{e:subst} holds since
$\Subst_{n'+1}=\Subst_{n'}\cup \set{^{t_2}/ _x}$, and therefore:
\begin{align*}
 \mset{x \Subst_{n'+1} \mid x\in\Dom(\Subst_{n'+1})}=&
\mset{x \Subst_{n'} \mid x\in\Dom(\Subst_{n'})} \mcup \mset{t_2} \\
=_\ET&\mset{\mathsf{Out}(t)\in \cup_{k \le n-2} S_k} \mcup \mset{t_2} \tag{by
\eqref{eq:subst}}\\
=&\mset{\mathsf{Out}(t)\in \cup_{k \le n} S_k}
\end{align*}

By definition of $\sem{P}$ and $\sem{P}_{=p}$, we have that
$Q' \bijp \state_{p\cdot 1}(\tilde t)$.
Therefore, and since
$\code{\piout$(t_1,t_2);Q'$} \bijp \state_{p}(\tilde t)$,
$\calP_{n'+1}=\calP_{n'} \msetminus \mset{\code{\piout$(t_1,t_2);Q'$}} \mcup
\mset{Q'}$,
and
$S_n=_\ET S_{n-1} \msetminus \mset{\In(a), \state_p(\tilde t)} \mcup \set{\state_{p\cdot
1}(\tilde t),\Out(t_2)}$,
\autoref{f:proc} holds.

Conditions
\autoref{f:nonce},
\ref{f:storea} and
\ref{f:storeb}
hold trivially.
\end{mycase} %}}}

\begin{mycase}%{{{ in
[$\ri=
\msr
{\state_p(\tilde t), \In(<t_1,t_2>)}
{\ChannelInEvent(\pair {t_1} {t_2})}
{\state_{p\cdot
1}(\tilde t, \tilde t')}$
 (for some $p,\tilde t,\tilde t'$ and $t_1,t_2\in \Mess$)]
Since the msr execution is normal, we have that
$S_{n-2}
\trans{K(t_1)}_\textsc{MDIn} S_{n-1}$. Since
$S_0 \trans{E_1}_{\sem{P}} \ldots \trans{E_n}_\sem{P} S_n$ is normal, so is
$S_0 \trans{E_1}_{\sem{P}} \ldots \trans{E_{n-1}}_\sem{P} S_{n-1}$, and
therefore
$S_0 \trans{E_1}_{\sem{P}} \ldots \allowbreak \trans{E_{n-2}}_\sem{P} S_{n-2}$. Hence
there is an $m< n-2$ such
$S_0 \trans{E_1}_{\sem{P}} \ldots \trans{E_{m}}_\sem{P} S_{m}$ is a normal
trace and $S_m \rightarrow^*_R
S_{n-1}$ for $R=\MDOutPubFreshApplFresh$.

By induction hypothesis, we have ${\cal P}_{n'} \bijp S_{m}$. Since
\MDOutPubFreshAppl, \textsc{Fresh} and \textsc{MDIn} do not add or
remove \state-facts, ${\cal P}_{n'} \bijp S_{n-2}$.
Let $Q\melem \calP_{n'}$ such that $Q \bijp \state_p(\tilde t)$.  Let
$\theta$ be a grounding substitution for $\state_p(\tilde
x)\in\prems(\sem{P}_{=p})$ such that $\tilde t=_E\tilde{x}\theta$.  Then
$\theta$ induces a substitution $\tau$ and a bijective renaming $\rho$ for
fresh, but not bound names (in $Q$) such that $P|_{p}\tau\rho=Q$ (see
\autoref{def:second-process-bijection}).
From the form of the rule $R$, and since $Q=P|_{p}\tau\rho$, we can deduce
that $Q=\code{\piin$(t_1,N);Q'$}$,  for $N$ a term that is not necessarily
ground, and a process $Q'=P|_{p\cdot 1}\tau \rho$. Since
$\ri\in_E\ginsts(R)$, we have that there is a substitution $\tau'$ such
that $N\tau'=_E t_2$.

% Why is it possible to execute this step
From the induction hypothesis, and since
$E_{m+1},\ldots,E_{n-2} = \emptyset$,
we have that
 \[ \Names_{n'} =  \set{ a \mid \ProtoNonce(a)\in\bigcup_{1 \leq j \le
n-2} E_j}. \]

From the induction hypothesis, and since no rule producing $\Out$-facts is
applied between step $m$ and step $n-2$, we have that
\begin{equation}
\label{eq:subst}
 \set{x \Subst_{n'} \mid x\in\Dom(\Subst_{n'})}^\#=\mset{\mathsf{Out}(t)\in \cup_{k
\le n-2} S_k}.
\end{equation}

Let $\tilde r= \set{ a:\freshsort \mid \RepNonce(a)\in\bigcup_{1\leq j \le
n-2} F_j}$.  Since $\K(<t_1,t_2>)\in \prems(\textsc{MDIn}\sigma)$ for
$\sigma(x)=<t_1,t_2>$, we have $\K(<t_1,t_2>)_\ET\in S_{n-2}$.  By
\autoref{lem:ded} and \autoref{lem:ded-eq}, we have $\nu \Names_{n'},\tilde
r.\Subst_{n'} \vdash <t_1,t_2>$. Therefore, $\nu \Names_{n'}.\Subst_{n'}
\vdash <t_1,t2>$. Using \textsc{DEq} and \textsc{DAppl} with the function
symbols $\mathit{fst}$ and $\mathit{snd}$, we have $\nu
\Names_{n'}.\Subst_{n'} \vdash t_1$ and $\nu \Names_{n'}.\Subst_{n'} \vdash
t2$. Therefore, we chose the following transition:
\[
  \cdots \trans{F_n'} \apipstate{n'} \trans{K(t_1)}
 \apipstate{n'+1}
\]
with
$\Names_{n'+1}=\Names_{n'}$,
$\StoreA_{n'+1}=\StoreA_{n'}$,
$\StoreB_{n'+1}=\StoreB_{n'}$,
$\calP_{n'+1}=\calP_{n'}\msetminus \mset {\code{\piin$(t_1,N);Q'$}} \mcup
\mset{Q'\tau'}$,
$\Subst_{n'+1}=\Subst_{n'}$ and
$\ActiveLocks_{n'+1}=\ActiveLocks_{n'}$.

We define $f$  as follows:
\begin{equation*}
 f(i) :=
\begin{cases}
	 f_p(i) & \text{if $i\in\setN_{m}$} \\
	 n' & \text{if $m<i< n-1$} \\
	 n'+1 & \text{if $i=n$}
\end{cases}
\end{equation*}

Therefore, Conditions~\ref{f:nonce} to \ref{e:actions} hold for $i<n-1$.
It is left to show that Conditions~\ref{f:nonce} to \ref{e:actions} hold
for $n$.

\autoref{f:actions} holds since
$\mathit{hide}([E_1,\ldots,E_m])=[F_1,\ldots,n']$, and
$[E_{m+1},\ldots,E_{n-1}]=[F_{n'+1}]$, since $E_{n-1}=K(t_1)$.

Let $\theta'$ such that $\ri=\theta'R$.
As established before, we have
$\tau'$ such
that $N\tau'=_E t_2$. By definition of $\sem{P}_{=p}$, we have
that $\state_{p\cdot 1}(\tilde t,\tilde t')\in_\ET\ginsts(P_{=p\cdot 1})$,
and that $\theta'= \theta\cdot \tau'$.
Since $\tau$ and $\rho$ are induced by $\theta$,
$\theta'$ induces $\tau \cdot \tau'$ and the same $\rho$. We have that
$Q'\tau'=(P|_{p\cdot 1}\tau \rho)\tau'=P|_{p}\tau\tau'\rho$ and therefore
$Q'\tau \bijp \state_{p\cdot 1}(\tilde t,\tilde t')$.
Thus, and since
$\code{\piin$(t_1,N);Q'$} \bijp \state_{p}(\tilde t)$,
$\calP_{n'+1}=\calP_{n'}\msetminus \mset {\code{\piin$(t_1,N);Q'$}} \mcup
\mset{Q'\tau'}$
and
$S_n=S_{n-1} \msetminus \mset{\In(<t_1,t_2>), \state_p(\tilde t)} \mcup
\mset{\state_{p\cdot 1}(\tilde t,\tilde t')}$,
\autoref{f:proc} holds.

Conditions
\autoref{f:nonce},
\ref{f:storea},
\ref{f:storeb} and
\ref{f:subst}
hold trivially.
\end{mycase}%}}}

\begin{mycase}%{{{ silent in/out
[$\ri=
\msr
{\semistate_p(\tilde s),\Ack(t_1,t_2)}
{}
{
\state_{p\cdot 1}(\tilde s)
}$
 (for some $p,\tilde t$ and $t_1,t_2\in \Mess$)]
Since the msr execution is normal, we have that there $p$,$q$,$\tilde
x,\tilde y,\tilde y'$ such that:
 \[ S_{n-3} \trans{~}_{R_1}  S_{n-2}
\trans{~}_{R_2}  S_{n-1}
\trans{~}_{R_3}  S_n \quad\text{, where:}\]
		\begin{itemize}
			\item $R_1 = [\state_p(\tilde x)] \to [\Msg(t_1,t_2),
\semistate_{p}(\tilde x)]$
		\item $R_2=[\state_q(\tilde y), \Msg(t_1,t_2)] \to
  [\state_{q\cdot 1}(\tilde y \cup \tilde y'), \Ack(t_1,t_2)]$
		\item $R_3=[\semistate_p(\tilde x),\Ack(t_1,t_2) ] \to [\state_{p\cdot
1}(\tilde x)]$
\end{itemize}.

Since in this case, there is a fact with symbol \Ack removed from $S_{n-1}$
to $S_n$, we have to apply a different argument to apply the induction
hypothesis.

Since
$\emptyset \trans{E_1}_{\sem{P}} \cdots
\trans{E_n}_{\sem{P}} S_n \in \execmsr(\sem{P})$
is normal, $n\ge 2$,
	 and $t_1,t_2\in\Mess$, $\Ack(t_1,t_2) \in (
S_{n-1} \msetminus S_n)$,
 there exists $m\le n-3$ such that $S_m \rightarrow^*_R
S_{n-3}$ for $R=\MDOutPubFreshAppl \cup  \textsc{Fresh}$ and
	$\emptyset \trans{E_1}_{\sem{P}} \cdots
\trans{E_m}_{\sem{P}} S_m \in \execmsr(\sem{P})$ is normal.
This allows us to apply the induction hypothesis on
$\emptyset \trans{E_1}_{\sem{P}} \cdots
\trans{E_m}_{\sem{P}} S_m \in \execmsr(\sem{P})$.
Hence there is a monotonically
increasing function from $\setN_{m} \to \setN_{n'}$  and an execution
such that Conditions~\ref{f:nonce} to~\ref{e:actions} hold. Let $f_p$ be
this function and note that $n'=f_p(m)$.

In the following case distinction, we extend
the previous execution by one step from $\apipstate{n'}$ to
$\apipstate{n'+1}$, and prove that Conditions~\ref{f:nonce}
to~\ref{f:actions} hold for $n'+1$.  By induction hypothesis, they hold for
all $i\le n'$. We define a function $f \colon \setN_n \to \setN_{n'+1}$ as
follows:
\begin{equation*}
\label{eq:definition-of-f}
 f(i) :=
\begin{cases}
	 f_p(i) & \text{if $i\in\setN_{m}$} \\
	 n' & \text{if $m<i\le n-3$} \\
	 n'+1 & \text{if $i=n$}
\end{cases}
\end{equation*}

Since, $S_m \rightarrow^*_R S_{n}$ for $R=\MDOutPubFreshApplFresh$, only
$S_n \msetminus S_m$ contains only \Fr-facts and \K-facts, and
$S_m \msetminus S_n$ contains only \Fr-facts and \Out-facts. Therefore,
\autoref{f:storeb}, \ref{f:proc} and \ref{f:subst} hold for all
$i\le n-3$. Since $E_{m+1},\ldots,E_{n-1}=\emptyset$, \autoref{f:nonce},
\ref{f:storea}, \ref{f:lock} and \ref{f:actions} hold for all $i\le n-3$.

Fix a bijection such that $\Processes_{n'} \bijp S_{m}$. We will abuse
notation by writing $P \bijp \state_p(\tilde t)$, if this bijection maps
$P$ to $\state_p(\tilde t)$.  Since \MDOutPubFreshAppl\  and \textsc{Fresh}
do not add or remove \state-facts, ${\cal P}_{n'} \bijp S_{n-3}$.
Let $P\melem \calP_{n'}$ such that $P \bijp \state_p(\tilde s)$.  Let
$Q\melem \calP_{n'}$ such that $Q \bijp \state_q(\tilde t)$.

  Let $\theta'$ be a grounding substitution for $\state_q(\tilde
y)\in\prems(\sem{P}_{=q})$ such that $\tilde t=_E\tilde{y}\theta'$.  Then
$\theta'$ induces a substitution $\tau'$ and a bijective renaming $\rho'$
for fresh, but not bound names (in $Q$) such that $P|_{q}\tau'\rho'=Q$ (see
\autoref{def:second-process-bijection}).

From the form of the rules $R_1$ and $R_3$,  and since $P=_\ET P|_{p}\tau\rho$,
for $\tau$ and $\rho$ induced by the grounding substitution for
$\state_p(\tilde x)$,
we can deduce that $P=_\ET \code{\piout~$t_1,t_2$; $P'$}$ for a process
$P'=P|_{p\cdot 1}\tau \rho$. Similarly, from
the form of $R_2$, we can deduce
$Q=_E\code{\piin$(t_1,N);Q'$}$,  for $N$ a term that is not necessarily
ground, and a process $Q'=P|_{q\cdot 1}\tau' \rho'$. Since
$S_{n-2} \trans{~}_{R_2} S_{n-1}$, we have that there is a substitution $\tau^*$ such
that $N\tau'\rho'\tau^*=_E t_2$ and $((\tilde y \cup \vars(N)) \setminus
\tilde{ y})\tau^*=_\ET\tilde t'$, where $\tilde t'$ such that $\state_{q\cdot 1}
(\tilde t,\tilde t')\in
S_{n-1} \msetminus S_{n-2}$.

Given that $Q=_E\code{\piin$(t_1,N);Q'$}$ and $P=_\ET
\code{\piout~$t_1,t_2$; $P'$}$, have that $\Processes_{n'}=\Processes' \mcup
\mset{\piout~t_1,t_2; P',\allowbreak \piin~(t_1',N);Q'}$ with $t_1=_E t_1'$ and
$t_2 =_\ET N\tau^*$.  Therefore, we chose the following transition:
\[
  \cdots \trans{F_n'} \apipstate{n'} \trans{K(t_1)}
 \apipstate{n'+1}
\]
with
$\Names_{n'+1}=\Names_{n'}$,
$\StoreA_{n'+1}=\StoreA_{n'}$,
$\StoreB_{n'+1}=\StoreB_{n'}$,
$\calP_{n'+1}=\Processes' \mcup \mset{P',Q'}$,
$\Subst_{n'+1}=\Subst_{n'}$ and
$\ActiveLocks_{n'+1}=\ActiveLocks_{n'}$.

Conditions~\ref{f:nonce} to \ref{e:actions} hold for $i\le n-3$.
It is left to show that Conditions~\ref{f:nonce} to \ref{e:actions} hold
for $n$.

As established before, we have
$\tau^*$ such
that $N\tau'\rho'\tau^* =_E t_2$. Let $\state_q(\tilde t,\tilde t')$ be the
state variable added to $S_{n-1}$. Then, $((\tilde y \cup \vars(N)) \setminus
\tilde y)\tau^*=\tilde t'$.
By definition of $\sem{P}_{=q}$, we have that $\state_{q\cdot 1}(\tilde
t,\tilde t')\in\prems(\ginsts(P_{=p\cdot 1}))$ for a grounding substitution
$\theta_{q\cdot 1}= \theta'\cdot \tau^*$. Since $\tau'$ and $\rho'$ are
induced by $\theta'$, $\theta_{q\cdot 1}$ induces $\tau \cdot \tau'$ and the same
$\rho$. We have that $Q'\tau'=(P|_{q\cdot 1}\tau'
\rho')\tau*=P|_{q\cdot 1}\tau\tau'\rho$ and therefore $Q'\tau^* \bijp \state_{q\cdot
1}(\tilde t,\tilde t')$.
Similarly, we have $P' \bijp \state_{q\cdot
1}(\tilde s)$. We conclude that \autoref{f:proc} holds.

Conditions
\autoref{f:nonce},
\ref{f:storea},
\ref{f:storeb},
\ref{f:subst},
\ref{f:lock}
and
\ref{f:actions}
hold trivially.
\end{mycase} %}}}

\begin{mycase}%{{{ if positive
[$\ri=
\msr
{\state_p(\tilde t)}
{\Eq(t_1,t_2)}
{\state_{p\cdot
1}(\tilde t)}$
 (for some $p,\tilde t$ and $t_1,t_2\in\Mess$)]
\inclusionBcaseintroHelper
{ \mathtt{if}~t_1=t_2~\mathtt{then}~Q_1~\mathtt{else}~ Q_2 }
{ for a process $Q'=P|_{p\cdot 1}\tau \rho$}
{Since, $[E_1,\ldots,E_n\vDash \alpha$, and thus $[E_1,\ldots,E_m\vDash
\AssEq$, we have that $t_1=_\ET t_2$.}
{$\Names_{n'+1}=\Names_{n'}$,
$\StoreA_{n'+1}=\StoreA_{n'}$,
$\StoreB_{n'+1}=\StoreB_{n'}$,
$\calP_{n'+1}=\calP_{n'}
\msetminus \mset{ \mathtt{if}~t_1=t_2 \mathtt{then}~Q_1~\mathtt{else}~ Q_2 }
\mcup \mset{Q_1}$,
$\Subst_{n'+1}=\Subst_{n'}$ and
$\ActiveLocks_{n'+1}=\ActiveLocks_{n'}$}
{{~}}

By definition of $\sem{P}$ and $\sem{P}_{=p}$, we have that
$Q_1 \leftrightarrow \state_{p\cdot 1}(\tilde t)$.
Therefore, and since
$ \mathtt{if}~t_1=t_2~\mathtt{then}~Q_1~\mathtt{else}~ Q_2  \leftrightarrow \state_{p}(\tilde t)$,
$\calP_{n'+1}=\calP_{n'} \msetminus \mset{\mathtt{if}~t_1=t_2~\mathtt{then}~Q_1~\mathtt{else}~ Q_2 } \mcup \mset{Q_1}$,
and $S_n=S_{n-1} \msetminus \mset{\state_p(\tilde t)} \mcup
\mset{\state_{p\cdot 1}(\tilde t)}$,
\autoref{f:proc} holds.
Conditions
\ref{f:nonce},
\ref{f:storea},
\ref{f:storeb},
\ref{f:subst},
\ref{f:lock} and
\ref{f:actions}
hold trivially.
\end{mycase}%}}}

\begin{mycase}%{{{ if negative
[$\ri=
\msr
{\state_p(\tilde t)}
{\mathrm{NotEq}(t_1,t_2)}
{\state_{p\cdot
1}(\tilde t)}$
 (for some $p,\tilde t$ and $t_1,t_2\in\Mess$)]
In this case, the proof is almost the same as in the previous case, except
that $\AssNotEq$ is the relevant axiom, $Q_2$ is chosen instead of
$Q_1$ and
and $S_n=S_{n-1} \msetminus \mset{\state_p(\tilde t)} \mcup
\mset{\state_{p\cdot 2}(\tilde t)}$.
\end{mycase}%}}}

\begin{mycase}%{{{ event
[$\ri=
\msr
{\state_p(\tilde t)}
{\mathrm{F},\Event()}
{\state_{p\cdot
1}(\tilde t)}$
 (for some $p,\tilde t$)]
This is a special case of the case where
$\ri= \msr {\state_p(\tilde t),l} {a} {\state_{p\cdot 1}(\tilde t),r}$ for
$l=r=\emptyset$ and $a=F$.
\end{mycase}%}}}

\begin{mycase}%{{{ insert
[$\ri=
\msr
{\state_p(\tilde t)}
{\mathit{Insert}(t_1,t_2)}
{\state_{p\cdot
1}(\tilde t)}$
 (for some $p,\tilde t$ and $t_1,t_2\in\Mess$)]
\inclusionBcaseintro
{ \mathtt{insert}~t_1,t_2; Q'}
{ for a process $Q'=P|_{p\cdot 1}\tau \rho$}
{}
{$\Names_{n'+1}=\Names_{n'}$,
$\StoreA_{n'+1}=\StoreA_{n'}[t_1 \mapsto t_2]$,
$\StoreB_{n'+1}=\StoreB_{n'}$,
$\calP_{n'+1}=\calP_{n'}
\msetminus \mset{\mathtt{insert}~t_1,t_2; Q'}
\mcup \mset{Q'}$,
$\Subst_{n'+1}=\Subst_{n'}$ and
$\ActiveLocks_{n'+1}=\ActiveLocks_{n'}$}
{{~}}
{ $S_n=S_{n-1} \msetminus \mset{\state_p(\tilde t)} \mcup
\mset{\state_{p\cdot 1}(\tilde t)}$}

\autoref{f:storea} holds, since $E_n=\mathit{Insert}(t_1,t_2)$ is the last
element of the trace.

Conditions
\ref{f:nonce},
\ref{f:storeb},
\ref{f:subst},
\ref{f:lock} and
\ref{f:actions}
hold trivially.
\end{mycase}%}}}

\begin{mycase}%{{{ delete
[$\ri=
\msr
{\state_p(\tilde t)}
{\mathit{Delete}(t_1,t_2)}
{\state_{p\cdot
1}(\tilde t)}$
 (for some $p,\tilde t$ and $t_1,t_2\in\Mess$)]
\inclusionBcaseintro
{ \mathtt{delete}~t_1; Q'}
{ for a process $Q'=P|_{p\cdot 1}\tau \rho$}
{}
{$\Names_{n'+1}=\Names_{n'}$,
$\StoreA_{n'+1}=\StoreA_{n'}[t_1 \mapsto t_2]$,
$\StoreB_{n'+1}=\StoreB_{n'}$,
$\calP_{n'+1}=\calP_{n'}
\msetminus \mset{\mathtt{delete}~t_1; Q'}
\mcup \mset{Q'}$,
$\Subst_{n'+1}=\Subst_{n'}$ and
$\ActiveLocks_{n'+1}=\ActiveLocks_{n'}$}
{{~}}
{ $S_n=S_{n-1} \msetminus \mset{\state_p(\tilde t)} \mcup
\mset{\state_{p\cdot 1}(\tilde t)}$}

\autoref{f:storea} holds, since $E_n=\mathit{Delete}(t_1,t_2)$ is the last
element of the trace.

Conditions
\ref{f:nonce},
\ref{f:storeb},
\ref{f:subst},
\ref{f:lock} and
\ref{f:actions}
hold trivially.
\end{mycase}%}}}

\begin{mycase}%{{{ lookup - positive
[$\ri=
\msr
{\state_p(\tilde t)}
{\mathit{IsIn}(t_1,t_2)}
{\state_{p\cdot
1}(\tilde t, t_2)}$
 (for some $p,\tilde t$ and $t_1,t_2\in\Mess$)]
\inclusionBcaseintroHelper
{ \mathtt{lookup}\allowbreak~t_1~\mathtt{as}~v\allowbreak~\mathtt{in}~Q_1~\mathtt{else}~Q_2 }
{ for some variable $V$, and two processes
$Q_1=P|_{p\cdot 1}\tau\rho$ and $Q_2=P|_{p\cdot 2}\tau\rho$}
{ Since $[E_1,\ldots,E_n]\vDash \AssSetIn$, there is an $i<n$ such that
$\mathit{Insert}(t_1,t_2)\in_\ET E_i$ and there is no $j$ such that $i<j<n$
and $\mathit{Delete}(t_1)\in_\ET E_j$ or
and $\mathit{Insert}(t_1,t_2)\in_ET E_j$. Since $E_m,\ldots,E_n=\emptyset$,
we know that $i<m$. Hence, by induction hypothesis,
	$\StoreA_{n'}(t_1)=t_2$.}
{$\Names_{n'+1}=\Names_{n'}$,
$\StoreA_{n'+1}=\StoreA_{n'}$,
$\StoreB_{n'+1}=\StoreB_{n'}$,
$\calP_{n'+1}=\calP_{n'}
\msetminus \mset{
\mathtt{lookup}~t_1~\mathtt{as}~v~\mathtt{in}~Q_1\allowbreak~\mathtt{else}~Q_2}
\mcup \mset{Q_1\set{^{t_2}/ _{v}} }$,
$\Subst_{n'+1}=\Subst_{n'}$ and
$\ActiveLocks_{n'+1}=\ActiveLocks_{n'}$}
{{~}}

By definition of $\sem{P}$ and $\sem{P}_{=p}$, we have that
$Q_1\set{ ^v / _{t_2}} \leftrightarrow \state_{p\cdot 1}(\tilde t, t_2)$
(for $\tau'=\tau[v \mapsto t_2]$ and $\rho'=\rho$).
Therefore, and since
$ \mathtt{lookup}\allowbreak~t_1~\mathtt{as}~v\allowbreak~\mathtt{in}~Q_1~\mathtt{else}~Q_2 \leftrightarrow \state_{p}(\tilde t)$,
$\calP_{n'+1}=\calP_{n'} \msetminus \mset{ \mathtt{lookup}\allowbreak~t_1~\mathtt{as}~v\allowbreak~\mathtt{in}~Q_1~\mathtt{else}~Q_2} \mcup \mset{Q'}$,
and $S_n=\S_{n-1} \msetminus \mset{\state_p(\tilde t)} \mcup \mset{
\state_{p\cdot 1}(\tilde t, t_2)}$,
\autoref{f:proc} holds.

Conditions
\ref{f:nonce},
\ref{f:storea},
\ref{f:storeb},
\ref{f:subst},
\ref{f:lock} and
\ref{f:actions}
hold trivially.
\end{mycase}%}}}

\begin{mycase}%{{{ lookup - negative
[$\ri=
\msr
{\state_p(\tilde t)}
{\mathit{IsNotSet}(t_1)}
{\state_{p\cdot
2}(\tilde t)}$
 (for some $p,\tilde t$ and $t_1\in\Mess$)]
\inclusionBcaseintroHelper
{ \mathtt{lookup}\allowbreak~t_1~\mathtt{as}~v\allowbreak~\mathtt{in}~Q_1~\mathtt{else}~Q_2 }
{ for a variable $v$ and two processes
$Q_1=P|_{p\cdot 1}\tau\rho$ and $Q_2=P|_{p\cdot 2}\tau\rho$}
{ Since $[E_1,\ldots,E_n]\vDash \AssSetNotIn$, there is no $i<n$ such that
$\mathit{Insert}(t_1,t_2)\in_\ET E_i$ and there is no $j$ such that $i<j<n$
and $\mathit{Delete}(t_1)\in_\ET E_j$ or
and $\mathit{Insert}(t_1,t_2)\in_ET E_j$. Since $E_m,\ldots,E_n=\emptyset$,
we know that holds $j<m$. Hence, by induction hypothesis,
	$\StoreA_{n'}(t_1)$ is undefined.}
{$\Names_{n'+1}=\Names_{n'}$,
$\StoreA_{n'+1}=\StoreA_{n'}$,
$\StoreB_{n'+1}=\StoreB_{n'}$,
$\calP_{n'+1}=\calP_{n'}
\msetminus \mset{
\mathtt{lookup}~t_1~\mathtt{as}~v~\mathtt{in}~Q_1\allowbreak~\mathtt{else}~Q_2}
\mcup \mset{Q_2}$,
$\Subst_{n'+1}=\Subst_{n'}$ and
$\ActiveLocks_{n'+1}=\ActiveLocks_{n'}$}
{{~}}

By definition of $\sem{P}$ and $\sem{P}_{=p}$, we have that
$Q_2 \leftrightarrow \state_{p\cdot 2}(\tilde t)$.
Therefore, and since
$ \mathtt{lookup}\allowbreak~t_1~\mathtt{as}~v\allowbreak~\mathtt{in}~Q_1~\mathtt{else}~Q_2 \leftrightarrow \state_{p}(\tilde t)$,
$\calP_{n'+1}=\calP_{n'} \msetminus \mset{
\mathtt{lookup}\allowbreak~t_1~\mathtt{as}~v\allowbreak~\mathtt{in}~Q_1~\mathtt{else}~Q_2}
\mcup \mset{Q_2}$,
and $S_n=\S_{n-1} \msetminus \mset{\state_p(\tilde t)} \mcup \mset{
\state_{p\cdot 2}(\tilde t)}$,
\autoref{f:proc} holds.

Conditions
\ref{f:nonce},
\ref{f:storea},
\ref{f:storeb},
\ref{f:subst},
\ref{f:lock} and
\ref{f:actions}
hold trivially.
\end{mycase}%}}}

\begin{mycase}%{{{ lock
[$\ri=
\msr
{\state_p(\tilde t),\Fr(\mathit{lock}_l)}
{\mathit{Lock}(\mathit{lock}_l,t)}
{\state_{p\cdot
1}(\tilde t, \mathit{lock}_l)}$
 (for some $p,\tilde t$, $\mathit{lock}_l\in\FN$ and $t\in\Mess$)]
\inclusionBcaseintroHelper
{ \mathtt{lock}^l~t;~Q' }
{ for $Q'=P|_{p\cdot 1}\tau\rho$}
{ Since $[E_1,\ldots,E_n]\vDash \AssLock$, for every $i<n$ such that
$\mathit{Lock}(l_p,t)\in_\ET E_i$, there a $j$ such that $i<j<n$ and
$\mathit{Unlock}(l_p,t)\in_\ET E_j$, and in between $i$ and $j$, there is
no lock or unlock, \ie, for all $k$ such that $i<k<j$, and all $l_i$,
$\mathit{Lock}(l_i,t)\notin_\ET E_k$ and
$\mathit{Unlock}(l_i,t)\notin_\ET E_k$.

Since $E_m,\ldots,E_n=\emptyset$, we know that this holds for $i<m$ and
$j<m$ as well. By induction hypothesis, \autoref{f:lock}, this implies that
$t\not\in_\ET \ActiveLocks_{n'}$.}
{$\Names_{n'+1}=\Names_{n'}$,
$\StoreA_{n'+1}=\StoreA_{n'}$,
$\StoreB_{n'+1}=\StoreB_{n'}$,
$\calP_{n'+1}=\calP_{n'}
\msetminus \mset{
 \mathtt{lock}^l~t;~Q'
} \mcup \mset{Q'}$,
$\Subst_{n'+1}=\Subst_{n'}$ and
$\ActiveLocks_{n'+1}=\ActiveLocks_{n'} \cup \set{t}$}
{{~}}

By definition of $\sem{P}$ and $\sem{P}_{=p}$, we have that
$Q' \leftrightarrow \state_{p\cdot 1}(\tilde t)$.
Therefore, and since
$  \mathtt{lock}^l~t;~Q' \leftrightarrow \state_{p}(\tilde t)$,
$\calP_{n'+1}=\calP_{n'}
\msetminus \mset{
 \mathtt{lock}^l~t;~Q'
} \mcup \mset{Q'}$,
and $S_n=\S_{n-1} \msetminus \mset{\state_p(\tilde t), \Fr(\mathit{lock}_l)} \mcup \mset{
\state_{p\cdot 1}(\tilde t, \mathit{lock}_l)}$,
\autoref{f:proc} holds.

\autoref{f:lock} holds since $E_n=\mset{\mathrm{Lock}(\mathit{lock}_l,t)}$ is added to
the end of the trace.

Conditions
\ref{f:nonce},
\ref{f:storea},
\ref{f:storeb},
\ref{f:subst}
and
\ref{f:actions}
hold trivially.
\end{mycase}%}}}

\begin{mycase}%{{{ unlock
[$\ri=
\msr
{\state_p(\tilde t)}
{\mathit{Unlock}(n_l,t)}
{\state_{p\cdot
1}(\tilde t)}$
 (for some $p,\tilde t$, $n_l\in\FN$ and $t\in\Mess$)]
\inclusionBcaseintroHelper
{ \mathtt{unlock}^l~t;~Q' }
{ for $Q'=P|_{p\cdot 1}\tau\rho$}
{}
{$\Names_{n'+1}=\Names_{n'}$,
$\StoreA_{n'+1}=\StoreA_{n'}$,
$\StoreB_{n'+1}=\StoreB_{n'}$,
$\calP_{n'+1}=\calP_{n'}
\msetminus \mset{
 \mathtt{unlock}^l~t;~Q'
} \mcup \mset{Q'}$,
$\Subst_{n'+1}=\Subst_{n'}$ and
$\ActiveLocks_{n'+1}=\ActiveLocks_{n'} \setminus \set{t}$}
{{~}}

By definition of $\sem{P}$ and $\sem{P}_{=p}$, we have that
$Q' \leftrightarrow \state_{p\cdot 1}(\tilde t)$.
Therefore, and since
$  \mathtt{unlock}^l~t;~Q' \leftrightarrow \state_{p}(\tilde t)$,
$\calP_{n'+1}=\calP_{n'}
\msetminus \mset{
 \mathtt{unlock}^l~t;~Q'
} \mcup \mset{Q'}$,
and $S_n=\S_{n-1} \msetminus \mset{\state_p(\tilde t)} \mcup \mset{
\state_{p\cdot 1}(\tilde t)}$,
\autoref{f:proc} holds.

We show that \autoref{f:lock} holds for
$\ActiveLocks_{n'+1}=\ActiveLocks_{n'} \setminus \set{t}$: For all
$t' \neq_\ET t$, $t' \in_E \ActiveLocks_{n'} \Leftrightarrow t'\in_E
\ActiveLocks_{n'+1}$ by induction hypothesis. If $t \not\in_\ET
\ActiveLocks_{n'}$, then
$\forall j\leq m,u .  \mathrm{Lock}(u,t) \in_\ET E_j
    \rightarrow \exists j < k \leq n. \mathrm{Unlock}(u,t) \in_\ET E_k$.
Since we have
$E_m,\ldots,E_{n-1}=\emptyset$ and $E_n= \mset{
\mathrm{Unlock}(n_l,t)}$, we can strengthen this to
$\forall j\leq n,u .  \mathrm{Lock}(u,t) \in_\ET E_j
    \rightarrow \exists j < k \leq n. \mathrm{Unlock}(u,t) \in_\ET E_k$,
which means that the condition holds in this case.
If $t \in_\ET \ActiveLocks_{n'}$, then
$\exists j\leq n,u .
    \mathrm{Lock}(u,t) \in_\ET E_j
    \wedge \forall j < k \leq n. \mathrm{Unlock}(u,t) \not \in_\ET E_k$
and since $E_m,\ldots,E_{n-1}=\emptyset$ and $E_n= \mset{
\mathrm{Unlock}(n_l,t)}$, a
contradiction to \autoref{e:lock} would constitute of $j$ and $u\neq_\ET
n_l$ such that
    $\mathrm{Lock}(u,t) \in_\ET E_j$ and
    $ \forall j < k \leq n. \mathrm{Unlock}(u,t) \not \in_\ET E_k$.

We will show that this leads to a contradiction with
$[E_1,\ldots,E_n]\vDash \alpha$.
Fix $j$
and $u$.
By definition of
$\sem{P}$ and well-formedness of $P$, there is a $p_l$ that is a prefix of
$p$ such that $P|_{q_l}=\mathtt{lock}^l t; Q''$ for the same annotation $l$
and parameter $t$. The form of the translation guarantees that if
$\state_{p}(\tilde t)\in S_n$, then for some $\tilde t'$ there is $i\le n$
such that
$\state_{p'}(\tilde t')\in S_i$, if $p'$ is a prefix of $p$. We therefore
have that there is $i<n$ such that $E_i=_E\mset{ \mathrm{Lock}(n_l,t)}$.
We proceed by case distinction:

\noindent \underline{Case 1:} $j<i$ (see
\autoref{fig:visualisation-b-of-case-1-}). Since
    $ \forall j < k \leq n. \mathrm{Unlock}(u,t) \not \in_\ET E_k$,
 $[E_1,\dots,E_n] \not\vDash \AssLock$.
\begin{figure}[ht] % (fold)%{{{
\centering
\begin{tikzpicture}[scale=3 ]

%draw horizontal line
\draw[snake] (0,0) -- (1,0);
\draw[snake] (1,0) -- (2,0);
\draw[snake] (2,0) -- (3,0);
\draw[snake] (3,0) -- (4,0);

%draw vertical lines
\foreach \x in {1,2,3}
   \draw (\x cm,2pt) -- (\x cm,-2pt);

%draw nodes
\draw (1,0) node[below=3pt] {$ j $} node[above=3pt] {$\mathrm{Lock}(u,t)$};
\draw (2,0) node[below=3pt] {$ i $} node[above=3pt] {$\mathrm{Lock}(n_l,t)$};
\draw (3,0) node[below=3pt] {$ n $} node[above=3pt] {$\mathrm{Unlock}(n_l,t)$};
\end{tikzpicture}
\caption{ Visualisation of Case 1. }
\label{fig:visualisation-b-of-case-1-}
\end{figure}%}}}

\noindent \underline{Case 2:} $i<j$ (see
\autoref{fig:visualisation-b-of-case-2-}).
By definition of $\overline P$, there is no parallel and
no replication between $p_l$ and $p$. Note that
any rule in $\sem{P}$ that produces a state named
$\state_q$ for a non-empty $q$ is such that it requires a fact with name
$\state_{q'}$ for $q=q'\cdot 1$ or $q=q'\cdot 2$ (in case of the
translation of \code{out}, it might require $\semistate_{q'}$, which in
turn requires $\state_{q'}$). Therefore, there cannot be a second
$k\neq n$ such that
$\mathrm{Unlock}(n_l,t)\in_\ET E_k$ (since $n_l$ was added in a
\Fr-fact in to $S_i$).
This means in particular that there is not $k$ such that $i<k<n$ and
$\mathrm{Unlock}(n_l,t)\in_\ET E_k$. Therefore,
 $[E_1,\dots,E_n] \not\vDash \AssLock$.
\begin{figure}[ht] % (fold)%{{{
\centering
\begin{tikzpicture}[scale=3
]

%draw horizontal line
\draw[snake] (0,0) -- (1,0);
\draw[snake] (1,0) -- (2,0);
\draw[snake] (2,0) -- (3,0);
\draw[snake] (3,0) -- (4,0);

%draw vertical lines
\foreach \x in {1,2,3}
   \draw (\x cm,2pt) -- (\x cm,-2pt);

%draw nodes
\draw (1,0) node[below=3pt] {$ i $} node[above=3pt] {$\mathrm{Lock}(n_l,t)$};
\draw (2,0) node[below=3pt] {$ j $} node[above=3pt] {$\mathrm{Lock}(u,t)$};
\draw (3,0) node[below=3pt] {$ n $} node[above=3pt] {$\mathrm{Unlock}(n_l,t)$};
\end{tikzpicture}
\caption{ Visualisation of Case 2. }
\label{fig:visualisation-b-of-case-2-}
\end{figure}%}}}

Conditions
\ref{f:nonce},
\ref{f:storea},
\ref{f:storeb},
\ref{f:subst}
and
\ref{f:actions}
hold trivially.
\end{mycase}%}}}

\begin{mycase}%{{{ embedded msr
[$\ri=
\msr
{\state_p(\tilde t),l'}
{a',\Event()}
{\state_{p\cdot
1}(\tilde t,\tilde t'),r'}$
 (for some $p,\tilde t,\tilde t'$ and $l',r',a'\in\GroundFacts^*$)]
\inclusionBcaseintroHelper
{ l \msrewrite a r; Q'}
{ for a process $Q'=P|_{p\cdot 1}\tau \rho$}
{ Since $\ri\in_E\ginsts(R)$, we have that there is a substitution $\tau^*$ such
that $(l \msrewrite a r)\tau^*=l' \msrewrite a' r'$, $\lfacts(l')\msubset S_{n-1}$, $\pfacts(l')\subset_E
S_{n-1}$ and, from the definition of $\sem{P}$
for embedded msr rules, $\vars(l)\tau^*=\tilde t'$. Since $P$ is
well-formed, no fact in \ReservedFacts appears in neither $l$ nor $r$, so from
\autoref{e:storeb} in the induction hypothesis, we have that
$\lfacts(l')\msubset \StoreB_m=\StoreB_{n'}$ and
$\pfacts(l')\subset \StoreB_m=\StoreB_{n'}$.}
{$\Names_{n'+1}=\Names_{n'}$,
$\StoreA_{n'+1}=\StoreA_{n'}$,
$\StoreB_{n'+1}=\StoreB_{n'}\setminus \lfacts(l') \mcup r' $,
$\calP_{n'+1}=\calP_{n'}
\msetminus \mset{ l \msrewrite a r; Q'}
\mcup \mset{Q'\tau^*}$,
$\Subst_{n'+1}=\Subst_{n'}$ and
$\ActiveLocks_{n'+1}=\ActiveLocks_{n'}$}
{a'}

\autoref{f:actions} holds since
$\mathit{hide}([E_1,\ldots,E_m])=[F_1,\ldots,n']$, and
$E_{m+1},\ldots,E_{n-1}=\emptyset$, while $E_n=F_n'\setminus \Event()=a'$
(note that $\Event() \in \ReservedFacts$).

As established before, we have
$\tau^*$ such
that $(l\msrewrite a r )\tau^*=_E l \msrewrite a r$. By the definition of $\sem{P}_{=p}$, we have
that $\state_{p\cdot 1}(\tilde t,\tilde t')\in_\ET\ginsts(P_{=p\cdot 1})$,
and a $\theta'= \theta\cdot\tau^*$ that is grounding for $\state_{p\cdot
1}(\tilde t,\tilde t')$. Since $\tau$ and $\rho$ are induced by $\theta$,
$\theta'$ induces $\tau \cdot \tau^*$ and the same $\rho$. We have that
$Q'\tau^*=(P|_{p\cdot 1}\tau \rho)\tau^*=P|_{p}\tau\tau^*\rho$ and therefore
$Q'\tau \bijp \state_{p\cdot 1}(\tilde t,\tilde t')$.
Thus, and since
$l \msrewrite a r; Q' \bijp \state_{p}(\tilde t)$,
$\calP_{n'+1}=\calP_{n'}\msetminus \mset {l \msrewrite a r; Q'} \mcup
\mset{Q'\tau^*}$
and
$S_n=S_{n-1} \msetminus \lfacts(l') \mcup r' \msetminus \mset{\state_{p}(\tilde t)}
\mcup \mset{\state_{p\cdot 1}(\tilde t,\tilde t')}
$,
\autoref{f:proc} holds.

\autoref{f:storeb}, holds since
\begin{align*}
S_n \msetminus \ReservedFacts =&
(S_{n-1} \msetminus \lfacts(l') \mcup r' \msetminus \mset{\state_{p}(\tilde t)}
\mcup \mset{\state_{p\cdot 1}(\tilde t,\tilde t')} ) \msetminus
\ReservedFacts \\
=& (S_{n-1} \msetminus \lfacts(l') \mcup r' ) \msetminus \ReservedFacts \\
=& S_{n-1} \msetminus \ReservedFacts  \msetminus \lfacts(l') \mcup r' \\
=& \StoreB_{n'}  \msetminus \lfacts(l') \mcup r' \\
=& \StoreB_{n'+1}
\end{align*}

Conditions
\ref{f:nonce},
\ref{f:storea},
\ref{f:subst},
\ref{f:lock} and
\ref{f:actions}
hold trivially.
\end{mycase}%}}}

\end{component}
\end{proof}

\traceequivalence*
%TODO maybe elaborate more?
\begin{proof}
  From \autoref{lem:inclusion-apip-in-msr}, we can conclude that
  \[ \tracespi(P) \subseteq \set{\hide(t) | \tr\in\tracesmsr(\sem{P}) \text{ and } \tr
    \vDash \alpha}=\hide(\filter(\tracesmsr(\sem{P}))). \]
  From \autoref{lem:msr-normalisation}, we have that  
  \[ \hide(\filter(\tracesmsr(\sem{P}))) = \set{ \tr \in
    \hide(\filter(\tracesmsr(\sem{P}))) \mid \tr\text{
      is normal}}. \]
  From \autoref{lem:inclusion-msr-in-apip}, we can conclude that 
  \[ \set{ \tr \in
    \hide(\filter(\tracesmsr(\sem{P}))) \mid \tr\text{
      is normal}} \subseteq \tracespi(P).\]
  Hence 
  \[ \hide(\filter(\tracesmsr(\sem{P}))) \subseteq \tracespi(P). \]
\end{proof}

% vim:fdm=marker tw=70

\fi

\end{document}
% vim:fdm=marker